\documentclass[12pt]{article}
\usepackage{amsmath}
\usepackage{graphicx}
\usepackage{enumerate}
\usepackage{natbib}
\usepackage{url} 
\usepackage{amsthm,amsfonts,amssymb}
\usepackage{cases}
\usepackage{algorithm} 
\usepackage{algorithmic}  
\usepackage{diagbox}
\usepackage{enumerate}
\usepackage{bm}
\usepackage{multirow}
\usepackage{subcaption}
\usepackage{colortbl}
\usepackage{booktabs}
\usepackage{bbm}
\usepackage[colorlinks,citecolor=blue,urlcolor=blue]{hyperref}
\usepackage{makecell}
\usepackage{setspace}

\newcommand{\blind}{1}

\theoremstyle{plain}
\newtheorem{axiom}{Axiom}
\newtheorem{claim}[axiom]{Claim}
\newtheorem{theorem}{Theorem}[section]
\newtheorem{lemma}[theorem]{Lemma}
\newtheorem{proposition}{Proposition}
\newtheorem{corollary}{Corollary}
\theoremstyle{remark}

\newtheorem{remark}{Remark}
\graphicspath{ {fig/}}

\begin{document}
\def\spacingset#1{\renewcommand{\baselinestretch}%
{#1}\small\normalsize} \spacingset{1}


\if1\blind
{
  \title{\bf Expected Shortfall Regression via Optimization}
  \author{Yuanzhi Li$^\dagger$, Shushu Zhang$^\dagger$, Xuming He$^\ddag$\\ 
  University of Michigan$^{\dagger}$,  Washington University in St. Louis$^\ddag$} 
  \maketitle
  \begingroup
\renewcommand\thefootnote{}\footnotetext{
\textit{The first two authors contributed equally to this work.}
}
\endgroup
} \fi

\if0\blind
{
 \bigskip
  \bigskip
  \bigskip
  \begin{center}
    {\Large\bf Expected Shortfall Regression via Optimization}
\end{center}
  \medskip
} \fi

\bigskip

\begin{abstract}
To provide a comprehensive summary of the tail distribution, the expected shortfall is defined as the average over the tail above (or below) a certain quantile of the distribution. The expected shortfall regression captures the heterogeneous covariate-response relationship and describes the covariate effects on the tail of the response distribution. Based on a critical observation that the superquantile regression from the operations research literature does not coincide with the expected shortfall regression, we propose and validate a novel optimization-based approach for the linear expected shortfall regression, without additional assumptions on the conditional quantile models. While the proposed loss function is implicitly defined, we provide a prototype implementation of the proposed approach with some initial expected shortfall estimators based on binning techniques. With practically feasible initial estimators, we establish the consistency and the asymptotic normality of the proposed estimator. The proposed approach achieves heterogeneity-adaptive weights and therefore often offers efficiency gain over existing linear expected shortfall regression approaches in the literature, as demonstrated through simulation studies. 
\end{abstract}

\noindent%
{\it Keywords:} Conditional value-at-risk, superquantile regression, data heterogeneity, quantile regression
\vfill

\newpage
\spacingset{1.9} 


\newtheorem{innercustomcond}{Condition}
\newenvironment{customcond}[1]
  {\renewcommand\theinnercustomcond{#1}\innercustomcond}
  {\endinnercustomcond}

\def\E{E}
\def\var{\text{var}}
\def\Pr{\text{Pr}}
\def\P{\text{P}}
\def\convergeas{\stackrel{a.s.}{\longrightarrow}}
\def\convergeip{\stackrel{P^*}{\longrightarrow}}
\def\converged{\stackrel{d}{\longrightarrow}}
\def\d{\text{d}}
\def\Op{O_{\P}}
\def\op{o_{\P}}
\def\sumn{\sum_{i=1}^n}
\def\En{\mathbb{E}_n}
\def\Gn{\mathbb{G}_n}
\def\argmin{\mathop{\mathrm{argmin}}}

\def\ddfrac#1#2{\displaystyle\frac{\displaystyle #1}{\displaystyle #2}}
\newcommand{\ubar}[1]{\mkern1.5mu\underline{\mkern-1.5mu #1\mkern-1.5mu}\mkern1.5mu}
\def\hgamma{\hat{\gamma}}
\def\bh{\bar{h}}
\def\ubh{\ubar{h}}
\newcommand{\itemEq}[1]{%
         \begingroup%
         \setlength{\abovedisplayskip}{0pt}%
         \setlength{\belowdisplayskip}{0pt}%
         \parbox[c]{\linewidth}{\begin{flalign*}#1&&\end{flalign*}}%
         \endgroup}
\def\sumM{\sum_{m=1}^M}
\def\sumgammaM{\sum_{m=1}^M\hgamma_m}
\def\hxi{\hat{\xi}}
\def\supM{\sup_{m=1,\ldots,M}}
\def\sumd{\sum_{m=1}^M\sum_{j=1}^{n/M}}
\def\bxm{\bar{x}_m}
\def\txm{\tilde{x}_m}
\def\tX{\tilde{X}}
\def\supMS{\sup_{\substack{m=1,\ldots,M\\|s-\tau|\leq B\cdot r_n}}}
\def\CXk{X_i - \tilde{x}_m}

\section{Introduction}
The expected shortfall (ES), also known as the conditional value-at-risk (CVaR), measures the conditional mean of an outcome above (or below) a certain quantile. 
Formally, the  $\tau$-th (upper tail) ES of a random variable $Y$ with $\E(|Y|) < \infty$ is defined here as
\begin{equation}
\label{eq::def-SQ}
v_{[Y]}(\tau) = \E[Y\mid Y \geq q_{[Y]}(\tau)], \quad 0 \leq \tau < 1,
\end{equation}
where $q_{[Y]}(\tau)$ is the $\tau$-th quantile of $Y$. We consider the distribution of $Y$ to be continuous throughout the paper. Then, $v_{[Y]}(\tau)$ can be equivalently expressed as
\begin{equation}
\label{eq::def-SQ2}
v_{[Y]}(\tau) = \frac{1}{1-\tau}\int_\tau^1 q_{[Y]}(\alpha) \,\mathrm{d}\alpha.
\end{equation}

 The ES provides a useful summary of the tail distribution and has become a powerful metric in a wide range of applications. 
In finance, the ES is a popular risk measure
as it is coherent \citep{artzner1999coherent, acerbi2002coherence} and quantifies the expected loss of a portfolio under adverse scenarios \citep{jorion2003financial,Topaloglou2022,kang2019data,Barendse2021backtesting}. 
The ES has replaced the quantile as the official risk metric for determining the regulatory minimum capital requirements \citep{B2016}.
Beyond financial applications, the ES is useful in investigating heterogeneous treatment effects when the two potential outcomes differ mainly in the tail \citep{he2010detection,zhang2023high,Chen2024estimation}. As an example, we investigate the health disparities in low birth weight among different ethnic groups, where babies with low birth weight tend to have a higher risk of mortality and long-term health problems \citep{hughes20172500}. 
The ES is also widely used in
supply chain management \citep{soleimani2014reverse,sawik2013selection},
robust machine learning \citep{laguel2021superquantile,laguel2021superquantiles}, and reliability engineering \citep{rockafellar2010buffered}.

In the presence of covariates, we are often interested in studying the relationship between a set of covariates $X$ and the tail of a response variable $Y$.
In this paper, we model the conditional ES of $Y$ given $X$ in a linear framework. Specifically, we assume that the cumulative distribution function of $Y \mid X$ is continuous. At a fixed quantile level $\tau$ ($0 < \tau < 1$), we assume that 
\begin{equation}
v_{[Y\mid X]}(\tau,x) = \E\left[Y\mid Y  \geq  q_{[Y\mid X]}(\tau,x),X=x\right] = x^T\beta,
\label{eq::SQ-model-linear}
\end{equation}
where $q_{[Y\mid X]}(\tau,x)$ is the $\tau$-th quantile of $Y$ given $X = x$, and $\beta$ is the $\tau$-th ES regression coefficient of interest. We suppress the dependence on $\tau$ in $\beta$ here for notational simplicity.

While the estimation and inference for the ES have been well understood in the one-sample case \citep{chen2007nonparametric,zwingmann2016asymptotics,sun2018bootstrapping}, the literature on the linear ES regression model in~\eqref{eq::SQ-model-linear} has gained more attention in recent years. 
Due to the \textit{unelicitability} of the ES \citep{gneiting2011making}, i.e., there does not exist a loss function such that the ES is the minimizer of the expected loss, 
existing approaches for the linear ES regression often rely on quantile estimations or cumulative distribution function estimations that typically require additional modeling assumptions beyond~\eqref{eq::SQ-model-linear}. 
We classify common linear ES regression methods into two categories and provide a brief review of each category.

The first category of approaches uses the equivalent definition of the ES in~\eqref{eq::def-SQ2} and aggregates the conditional quantile or distribution functions in the tail. 
For example, \cite{peracchi2008estimating} and \cite{leorato2012asymptotically} propose to use a weighted average of the quantile regression estimators over a grid of quantile levels. 
More recently, \cite{chetverikov2022weighted} proposes to use an integration of non-parametric estimators of the conditional cumulative distribution function, which does not require parametric models for the quantile regression by leveraging the technique of debiased machine learning \citep{chernozhukov2018double}.

The second category of approaches adopts a joint model for the $\tau$-th quantile and the $\tau$-th ES.  \cite{fissler2016higher} recognizes that the quantile and ES are jointly \textit{elicitable} as a pair. With this fact in mind, \cite{dimitriadis2019joint} and \cite{patton2019dynamic} develop a joint regression framework that estimates the quantile and ES regression simultaneously. However, 
the associated optimization problem is neither smooth nor convex, which poses significant computational challenges.
Under the same joint regression models for the quantile and the ES, 
two-step procedures have been proposed that estimate the $\tau$-th quantile and ES regression sequentially, which are computationally more convenient \citep{barendse2020efficiently,peng2022advances,HTZ2022,zhang2023high,barendse2023expected}.


Additionally, non-parametric estimation of the ES regression is available in the literature. In addition to the work of \cite{peracchi2008estimating} and \cite{leorato2012asymptotically},  
\cite{cai2008nonparametric} and \cite{kato2012weighted} consider kernel-based approaches.
\cite{xiao2014right} uses the connection between the ES and the check-loss function in the quantile regression, and \cite{martins2018nonparametric} uses the extreme value theory. 
More recently, \cite{olma2021nonparametric} considers local-linear estimation based on Neyman-orthogonalized score functions.
Nonetheless, linear or parametric ES regression methods are often preferred for interpretability and statistical inference. In fact, we will show later in the paper that our proposed approach works in the spirit of interpretable machine learning and aims to improve statistical efficiency and interpretability by projecting a nonparametric ES function into a linear model.  

In this paper, we propose a novel optimization formulation for the linear ES regression model in~\eqref{eq::SQ-model-linear} without additional modeling assumptions on the quantile functions. 
This approach is inspired by \cite{rockafellar2013fundamental} and \cite{rockafellar2014random}, which estimate the ES by the superquantile defined as the minimizer of an implicitly defined loss function. 
While the superquantile 
coincides with the ES in the one-sample case in~\eqref{eq::def-SQ}, we make it clear in Section~\ref{sec::2} that the superquantile regression is not the same as the ES regression in general.
Our main contributions can be summarized as follows.
\begin{enumerate}
	\item 
 We propose and validate a new optimization framework for the ES regression that is motivated by but distinctive from \cite{rockafellar2014superquantile}. 
 We show that the ES regression coefficient is the unique minimizer of a convex loss function that depends on some unknown distributional quantities.
	\item 
  Based on this new formulation, we propose a practically feasible ES regression method, 
 and show the consistency and asymptotic normality of the proposed ES estimator.
 \item 
 Under appropriate conditions, we show that the proposed approach is 
 automatically adaptive to data heterogeneity, which explains its superior statistical efficiency over existing ES regression approaches under a wide range of models. 
\end{enumerate}

The remainder of the paper is organized as follows. We provide a new characterization for the ES regression in Section~\ref{sec::2}, which lays the foundation for our proposed estimation method called i-Rock. Then we describe and explain how i-Rock works in problems with discrete covariates in Section~\ref{sec::discrete}, where the operation and the underlying statistical theory are cleaner and easier to understand. We conduct further investigations of the i-Rock method for continuous covariates in Section~\ref{sec::continuous}, where some nonparametric initial estimates of the ES regression functions are used. The asymptotic analysis of the i-Rock estimator shows how it leans automatically towards good weighting schemes in the presence of data heterogeneity.  We demonstrate the practicality and numerical performance of the i-Rock method through numerical investigations in Section~\ref{sec::numerical}   and data applications on the health disparity of low birth weight in Section~\ref{sec::data_application}, followed by some concluding remarks in Section~\ref{sec::conclusion}.

\section{An optimization-based approach for ES regression}\label{sec::2}

Let $Y$ be a random variable, with $\tau \in (0,1)$ as the quantile level of interest, and $X = (1,\tX^T)^T \in \mathbb{R}^{p+1}$ as the covariate vector that includes an intercept term. 
In the one-sample case, \cite{rockafellar2013fundamental} and \cite{rockafellar2014random} propose an optimization-based formulation for the $\tau$-th superquantile of $Y$ that coincides with the $\tau$-th ES of $Y$ in~\eqref{eq::def-SQ2}, i.e., 
\begin{eqnarray}
v_{[Y]}(\tau) &=& \argmin_C\; C + \frac{1}{1-\tau}\int_0^1\max\{0, v_{[Y]}(\alpha) - C\}\,\mathrm{d}\alpha .
\label{eq::RRM-optim-univariate}
\end{eqnarray} 
However,  we find that a direct generalization to superquantile regression  does not generally agree with the ES regression coefficients $\beta$ in \eqref{eq::SQ-model-linear}, even at the population level, i.e., 
\begin{equation}
\label{eq::SQR}
\beta \neq \argmin_{{\theta}}\left[ \E(X^T{\theta}) + \frac{1}{1-\tau}\int_{0}^1\max\{0,v_{[Y-X^T{\theta}]}(\alpha)\}\;\mathrm{d}\alpha\right];
\end{equation}
see Appendix A for a review of the optimization formulation proposed in \cite{rockafellar2013fundamental} and \cite{rockafellar2014random}, and a counterexample that proves~\eqref{eq::SQR}. 

Given that superquantile regression 
is not the solution to expected shortfall, 
we develop a new optimization formulation targeted for the ES regression under linear model~\eqref{eq::SQ-model-linear}. 
We propose a population-level loss function that can identify the $\tau$-th conditional ES of $Y$ given $X$, namely, 
\begin{eqnarray}
L(\theta) 
&=& \E_X \,\left[ \int_{0}^1 \,\rho_\tau\left(v_{[Y\mid X]}(\alpha,X) - X^T\theta\right) \,\mathrm{d}\alpha\right]\label{eq::RRM-correct},
\end{eqnarray}
where $\rho_\tau(u) = \{ \tau - \mathbbm{1} (u<0) \} u$ is the quantile loss function \citep{koenker_2005}, 
and $\E_{X}$ denotes the expectation with respect to the random variable $X$.
We refer to $L(\theta)$ as the improved Rockafellar (i-Rock) loss function, and refer to the process of minimizing the i-Rock loss function as the i-Rock approach. In the case where $|L(\theta)|$ is infinite, our analysis for the ES regression hereafter still  holds by considering the loss function 
$\E_{X}\,\left[  \int_{0}^1 \,\rho_\tau\left(v_{[Y\mid X]}(\alpha,X) - X^T\theta\right) - \rho_\tau\left(v_{[Y\mid X]}(\alpha,X) - X^T\beta^*\right)\,\mathrm{d}\alpha \right]$ for any given $\beta^*$, which is guaranteed to take finite values under the conditions of Theorem \ref{thm::RRM-correct} below.
The i-Rock loss function in~\eqref{eq::RRM-correct} is different from the loss function in \eqref{eq::SQR}, and we will show in Theorem~\ref{thm::RRM-correct} that the i-Rock approach yields the correct coefficients for the ES regression. 
The extension from the one-sample ES loss function in \eqref{eq::RRM-optim-univariate} to the new ES regression loss function in (\ref{eq::RRM-correct}) stands as a substantial improvement that ensures the validity of the optimization-based approach for the ES regression. 

\begin{theorem}\label{thm::RRM-correct}
{\color{black} Suppose the cumulative distribution function of $Y \mid X=x$ is continuous and strictly increasing in a neighborhood of $q_{Y|X} (\tau, x)$}, and
the matrix 
\begin{equation}\label{eq::D1}
    D_1 = \E_X\left[\frac{XX^T}{v_{[Y \mid X]}(\tau,X) - q_{[Y \mid X]}(\tau,X)}\right]
\end{equation}
is positive definite. 
Then, under linear ES regression model \eqref{eq::SQ-model-linear},
\begin{equation}\label{eq::iRock}
    \beta = \argmin_{\theta}  L(\theta),
\end{equation}
where $\beta$ is the true ES regression coefficient and the minimizer is uniquely identified. 
\end{theorem}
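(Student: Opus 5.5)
The plan is to show three things: that $L$ is convex, that its gradient vanishes at $\beta$, and that its Hessian at $\beta$ equals $(1-\tau)D_1$, which is positive definite. Convexity then turns the strict local minimum at $\beta$ into the unique global minimizer.

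\emph{Convexity.} For each fixed $(\alpha,x)$, the map $\theta\mapsto \rho_\tau(v_{[Y\mid X]}(\alpha,x)-x^T\theta)$ is convex. Integrating over $\alpha$ and taking $\E_X$ preserves convexity. If $|L|$ may be infinite, I will work throughout with the centered version $L(\theta)-L(\beta)$ described before the theorem. Its integrand is bounded by $\max(\tau,1-\tau)|X^T(\theta-\beta)|$, which is integrable, so all differences of $L$ below are finite.

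\emph{First-order condition.} The plan is to differentiate under the integral and the expectation, justified by dominated convergence since the subgradient of the integrand is bounded by $|X|$. This gives
\[
\nabla L(\theta) = -\E_X\Big[X\Big(\tau - A(X^T\theta,X)\Big)\Big],\qquad A(c,x)=\int_0^1 \mathbbm{1}\{v_{[Y\mid X]}(\alpha,x)<c\}\,\mathrm{d}\alpha .
\]
The key structural fact is the monotonicity of $\alpha\mapsto v_{[Y\mid X]}(\alpha,x)$. From \eqref{eq::def-SQ2},
\[
\partial_\alpha v_{[Y\mid X]}(\alpha,x)=\frac{v_{[Y\mid X]}(\alpha,x)-q_{[Y\mid X]}(\alpha,x)}{1-\alpha}\ge 0,
\]
and this derivative is strictly positive wherever $v>q$. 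Near $\alpha=\tau$ the conditional quantile is continuous, because the cdf is strictly increasing there, and $v(\tau,x)>q(\tau,x)$, because the conditional distribution is continuous and hence not degenerate above $q(\tau,x)$. So $v(\cdot,x)$ is $C^1$ and strictly increasing near $\tau$. Consequently, for $c$ near $v(\tau,x)$, $A(c,x)$ is the unique $\alpha^*$ with $v(\alpha^*,x)=c$. Under model \eqref{eq::SQ-model-linear}, $X^T\beta=v_{[Y\mid X]}(\tau,X)$, so $A(X^T\beta,X)=\tau$ and therefore $\nabla L(\beta)=0$.

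\emph{Second-order condition.} By the inverse function theorem, $\partial_c A(c,x)=1/\partial_\alpha v(\alpha^*,x)$. At $c=x^T\beta$ this equals $(1-\tau)/\{v(\tau,x)-q(\tau,x)\}$. Differentiating the gradient formula once more gives $\nabla^2L(\beta)=(1-\tau)D_1$, which is positive definite by assumption.

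\emph{Conclusion.} A convex function with zero gradient and a positive definite Hessian at $\beta$ has a strict local minimum at $\beta$. Convexity then makes $\beta$ the unique global minimizer: if some $\theta\neq\beta$ had $L(\theta)\le L(\beta)$, then $L$ would be no larger than $L(\beta)$ along the segment from $\beta$ to $\theta$, including at points arbitrarily close to $\beta$, contradicting strictness.

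\emph{Main obstacle.} The delicate step is justifying the second differentiation under $\E_X$. The neighborhood of $\tau$ on which $v(\cdot,x)$ is strictly increasing, and the size of $v-q$, may depend on $x$, so uniform domination is not automatic. The first route I would try avoids the Hessian altogether and uses a Knight-type identity,
\[
\rho_\tau(u-t)-\rho_\tau(u) = -t\{\tau-\mathbbm{1}(u<0)\}+\int_0^t\{\mathbbm{1}(u\le s)-\mathbbm{1}(u\le 0)\}\,\mathrm{d}s ,
\]
applied with $u=v(\alpha,X)-X^T\beta$ and $t=X^T(\theta-\beta)$. After integrating over $\alpha$, the linear term vanishes because $A(X^T\beta,X)=\tau$. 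The remainder is $\E_X\int_0^{t}\{A(X^T\beta+s,X)-\tau\}\,\mathrm{d}s$, which is nonnegative. I expect it to be strictly positive whenever $P(X^T(\theta-\beta)\neq 0)>0$, because $A(\cdot,X)$ is strictly increasing near $X^T\beta$. The positive definiteness of $D_1$ ensures exactly this: it rules out $X^T(\theta-\beta)=0$ almost surely for $\theta\neq\beta$. This route gives uniqueness directly, without the uniform Hessian bounds.
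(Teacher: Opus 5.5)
Your main line (convexity, $\nabla L(\beta)=0$ from $A(X^T\beta,X)=\tau$, and $\nabla^2 L(\beta)=(1-\tau)D_1$ from the derivative of the inverse ES function) is essentially the paper's three-step proof. The paper simply asserts that $L$ is twice differentiable. It never justifies exchanging the second derivative with $\E_X$, which is the gap you flag.

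Your Knight-identity fallback is a genuinely different and more elementary argument. Because the linear term vanishes for each fixed $X$, it shows a pointwise fact: $x^T\beta=v(\tau,x)$, the $\tau$-th quantile of $v(\xi,x)$ with $\xi\sim U(0,1)$, minimizes $c\mapsto\int_0^1\rho_\tau(v(\alpha,x)-c)\,\mathrm{d}\alpha$ for every $x$. The excess is strictly positive whenever $c\neq x^T\beta$. This holds because $v(\cdot,x)$ is continuous at $\tau$, so $\int_0^1\mathbbm{1}\{v(\alpha,x)\le c\}\,\mathrm{d}\alpha-\tau$ has the strict sign of $c-x^T\beta$ near $x^T\beta$.

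Global uniqueness then needs only $\Pr(X^Tw\neq 0)>0$ for every $w\neq 0$. That is the only consequence of $D_1\succ 0$ that matters. The route uses no convexity and no differentiation of $L$. If you center at $\beta$, so that the loss is the expectation of a nonnegative, pointwise-finite excess, it does not even need $\E\|X\|<\infty$. Both your first route and the paper's tacitly need that moment for dominated convergence and finiteness.

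What the paper's route gives in exchange is the explicit local curvature $(1-\tau)D_1$. That matrix reappears as the Hessian in the Bahadur-type representations of Theorems 3.1 and 4.1. The pointwise argument gives identification but not this quantitative information.
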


Theorem \ref{thm::RRM-correct} shows that the ES regression coefficients are correctly and uniquely identified by minimizing the i-Rock loss function under weak conditions.
{\color{black} A sufficient condition for $D_1$ to be positive definite is that the components of $X$ are linearly independent with probability one. }
The i-Rock formulation allows us to interpret the ES regression problem of $Y$ on $X$ as a quantile regression problem of $Z$ on $X$, where $Z$ is an auxiliary response variable distributed as $Z \mid X \sim  v_{[Y\mid X]}(\xi,X)$ where $\xi\sim U(0,1)$ and $\xi$ is independent of $X$. 
This interpretation coincides with the fact that the $\tau$-th quantile of the ES process at all levels from 0 to 1 is exactly the $\tau$-th ES, due to the monotonicity of $v_{[Y\mid X]}(\alpha, X)$ over $\alpha \in (0,1)$. This serves as an intuitive validation of the proposed i-Rock formulation for the ES regression.


Similar to the formulation in~\eqref{eq::RRM-optim-univariate}, the i-Rock formulation in~\eqref{eq::iRock} is not directly feasible for empirical estimation since the i-Rock loss function \eqref{eq::RRM-correct} involves unknown conditional ES process at all quantile levels from 0 to 1. 
In the remainder of the paper, we formally propose and study a new estimation approach for the ES regression based on the i-Rock formulation, which, in essence, substitutes
$v_{[Y\mid X]}(\alpha, X)$ in \eqref{eq::RRM-correct} with appropriate initial estimators.
Typically, estimation of the ES would have high variability at extreme quantile levels due to small effective sample sizes. As a corollary from Theorem \ref{thm::RRM-correct}, we show that the i-Rock formulation does not necessarily involve the conditional ES at extreme tails.
\begin{corollary}
For any $0<\delta\leq 1$, define
\[
L^{(\delta)}(\theta) = \E_X \left[\int_{\tau- \delta\tau }^{\tau + \delta(1-\tau)}\,\rho_\tau\left(v_{[Y\mid X]}(\alpha,X) - X^T\theta\right)\,\mathrm{d}\alpha\right].
\]
Under the same conditions of Theorem \ref{thm::RRM-correct}, we have
$\beta = \argmin_{\theta}  L^{(\delta)}(\theta)$,
and the minimizer is uniquely identified.
\label{coro::truncation}
\end{corollary}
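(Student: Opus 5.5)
We prove the corollary by redoing the first-order argument behind Theorem~\ref{thm::RRM-correct} on the truncated window $[\tau_1,\tau_2]$, where $\tau_1=\tau-\delta\tau$ and $\tau_2=\tau+\delta(1-\tau)$. The key fact is that the window is positioned so the weights match the quantile level: $(\tau-\tau_1)/(\tau_2-\tau_1)=\delta\tau/\delta=\tau$. So, conditionally on $X=x$, $L^{(\delta)}$ is $(\tau_2-\tau_1)=\delta$ times the check loss of an auxiliary variable $Z=v_{[Y\mid X]}(\xi,x)$ with $\xi\sim U(\tau_1,\tau_2)$. The $\tau$-th quantile of $Z$ is $v_{[Y\mid X]}(\tau,x)=x^T\beta$. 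This holds because $\alpha\mapsto v_{[Y\mid X]}(\alpha,x)$ is nondecreasing, so $\Pr(Z<x^T\beta)\le \Pr(\xi<\tau)=\tau$ and $\Pr(Z\le x^T\beta)\ge\tau$.

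\textbf{Step 1: finiteness and convexity.} $L^{(\delta)}$ is convex in $\theta$ because $\rho_\tau$ is convex. We may work with the difference $L^{(\delta)}(\theta)-L^{(\delta)}(\beta)$, as the paper does for $L$. Since $|\rho_\tau(a)-\rho_\tau(b)|\le|a-b|$, this difference is bounded by $\delta\,\E|X^T(\theta-\beta)|$, which is finite under the moment conditions implicit in Theorem~\ref{thm::RRM-correct}.

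\textbf{Step 2: $\beta$ is a minimizer.} Conditionally on $X=x$, the function $c\mapsto\int_{\tau_1}^{\tau_2}\rho_\tau(v(\alpha,x)-c)\,d\alpha$ is minimized at any $\tau$-quantile of $Z$. By the observation above, $c=x^T\beta$ is such a quantile. Hence the integrand difference is pointwise nonnegative in $x$ for every $\theta$, and taking expectations gives $L^{(\delta)}(\theta)\ge L^{(\delta)}(\beta)$.

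\textbf{Step 3: uniqueness.} This is the main obstacle. We use the strict monotonicity of $v$ near $\tau$. Write $v(\alpha,x)=(1-\alpha)^{-1}\int_\alpha^1 q(s)\,ds$. Then
\[
\partial_\alpha v(\alpha,x)=\frac{v(\alpha,x)-q(\alpha,x)}{1-\alpha}.
\]
This derivative is strictly positive, because $q$ is strictly increasing near $\tau$. At $\alpha=\tau$ it equals $(v-q)/(1-\tau)$, which is where $D_1$ comes from.

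Consequently, the density of $Z$ at $x^T\beta$ is
\[
f_Z(x^T\beta\mid x)=\frac{1}{\delta}\cdot\frac{1}{\partial_\alpha v(\tau,x)}=\frac{1-\tau}{\delta\,(v(\tau,x)-q(\tau,x))}.
\]
This density is finite and positive. Moreover, $Z$ has a density near $x^T\beta$, because $v$ is continuous and strictly increasing on a neighborhood of $\tau$ contained in $(\tau_1,\tau_2)$; this uses $\delta>0$.

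The standard quantile-regression expansion (Knight's identity) then gives
\[
L^{(\delta)}(\beta+h)-L^{(\delta)}(\beta)=\tfrac12 h^T\Big[(1-\tau)D_1\Big]h+o(\|h\|^2).
\]
In this expansion the factor $\delta$ cancels against the $1/\delta$ in $f_Z$. Since $D_1$ is positive definite, $\beta$ is a strict local minimizer of the convex function $L^{(\delta)}$. A convex function with a strict local minimum has that point as its unique global minimizer.

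Alternatively, we can avoid the Taylor expansion. Suppose $L^{(\delta)}(\theta)=L^{(\delta)}(\beta)$ for some $\theta\ne\beta$. Then $x^T\theta$ must be a $\tau$-quantile of $Z$ for almost every $x$. Strict increase of $v$ at $\tau$ makes that quantile unique, forcing $X^T(\theta-\beta)=0$ almost surely. This contradicts positive definiteness of $D_1$, since $D_1$ positive definite implies $\E[XX^T]$-type nondegeneracy along every direction $h$.

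The delicate points in Step 3 are justifying differentiation under $\E_X$ and the local monotonicity of $v$. The second route handles both cleanly, so I would use it.
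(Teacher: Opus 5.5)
Your proposal is correct, and its core is the paper's argument. The paper proves the corollary by rerunning the proof of Theorem~\ref{thm::RRM-correct} with $\xi\sim U(\tau-\delta\tau,\tau+\delta(1-\tau))$ and noting $\Pr(\xi<\tau)=\tau$; this is exactly your observation that $x^T\beta$ remains a $\tau$-quantile of $Z=v(\xi,x)$. Your first uniqueness route is also the paper's Step~3: a second-order expansion with curvature $(1-\tau)D_1$, where the window length $\delta$ cancels the density $1/\delta$ of $\xi$ (a cancellation the paper leaves implicit).

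The route you say you would actually use is genuinely different. The paper combines the first-order condition with a positive definite Hessian. You instead use that the conditional integrand difference is nonnegative for every $x$. Then any second minimizer $\theta$ forces $x^T\theta$ to be a $\tau$-quantile of $Z$ for almost every $x$. Because $v(\cdot,x)$ is continuous and strictly increasing near $\tau$, and $\tau_1=\tau-\delta\tau<\tau<\tau+\delta(1-\tau)=\tau_2$, that quantile is unique. Hence $X^T(\theta-\beta)=0$ almost surely, so $(\theta-\beta)^TD_1(\theta-\beta)=\E[\{X^T(\theta-\beta)\}^2/\{v(\tau,X)-q(\tau,X)\}]=0$, contradicting $D_1\succ0$.

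Your route buys robustness. It needs no differentiation under $\E_X$ and no twice-differentiability of $L^{(\delta)}$. It does not even need the finiteness claim of your Step~1, since only the expectation of a nonnegative quantity is involved. In exchange, the paper's Hessian route yields the curvature matrix $(1-\tau)D_1$ itself. That matrix reappears in the representations of Theorems~\ref{thm::RRM-modified} and \ref{thm::asymptotic-RRM-bin} and underlies the implicit weights $\{v(\tau,x)-q(\tau,x)\}^{-1}$. Your identification argument establishes uniqueness but gives no such quantitative information.
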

Taking a small positive value $\delta$, Corollary~\ref{coro::truncation} shows that the domain of integration in the i-Rock loss function can be much shortened without jeopardizing identification, validating a class of truncated i-Rock loss functions for the ES regression. Thus, only the conditional ES at quantile levels near $\tau$ are relevant for the i-Rock estimation of the $\tau$-th ES regression. 

In the rest of the paper, all  derivations and expositions consider the case of $\delta =1$, but no essential changes are needed to cover the case of $\delta \in (0,1)$. In our numerical studies, we use $\delta =0.5$ throughout the paper.
\section{The i-Rock approach for discrete covariates} \label{sec::discrete} 
To demonstrate the i-Rock approach for ES regression, we start with the simple setting with discrete covariates and study its theoretical properties. 
We consider covariates that take only $M$ distinct values, namely, $\{x_1,\ldots,x_M\}$, where $x_m \in \mathbb{R}^{p+1},m=1,\ldots,M,$ including an intercept term, and $M$ is a fixed number that does not depend on the sample size. To simplify notations, in this section, we assume an equal number of $i.i.d.$ observations at each covariate value with a total sample size of $n$, i.e., 
\begin{equation}\label{eq::fix_design}
    \{(x_m,Y_{mj}):\; m = 1,\ldots,M;\, j = 1\ldots,n/M\},
\end{equation}
and we write $v_{[Y \mid X]}(s,x_m)$ as $v_m(s)$ and $q_{[Y \mid X]}(s,x_m)$ as $q_m(s)$. A simple initial ES estimator is the empirical ES at each distinct value of the covariates, namely, 
\begin{equation}
\hat{v}_m(s) = \ddfrac{\sum_{j=1}^{n/M} Y_{mj}\bm{1}\{Y_{mj} \geq \hat{q}_m(s)\}}{(1-s)n/M},  m=1,\ldots,M,
\label{eq::empirical-SQ}
\end{equation}
where $\hat{q}_m(s)$ is the empirical $s$-th quantile of the 
responses at $x_m$.
Under Model~\eqref{eq::SQ-model-linear}, the i-Rock estimator with discrete covariates is obtained through
\begin{eqnarray}
\widehat{\beta}
&=& \argmin_{\theta} \; 
\sum_{m=1}^M\sum_{t=1}^T \; \rho_\tau\left(\hat{v}_m(\alpha_t) - x_m^T{\theta}\right)\label{eq::mRock-est-discrete},
\end{eqnarray}
where $\alpha_1,\ldots,\alpha_T$ is an equally-spaced grid over the interval $(0,1)$ for a sufficiently large $T$, with a pre-specified $\delta \in (0,1)$.  Computationally, $\widehat{\beta}$ in~\eqref{eq::mRock-est-discrete} can be obtained by the $\tau$-th quantile regression of $\{\hat{v}_m(\alpha_t); t=1,\ldots,T, m=1,\ldots,M\}$ on $\{x_m;t=1,\ldots,T,m=1,\ldots,M\}$, for which efficient numerical algorithms exist \citep[Section 6]{koenker_2005}.

\subsection{Asymptotic results}
To study the statistical properties of the i-Rock estimator in~\eqref{eq::mRock-est-discrete}, we start with several technical conditions on the data-generating mechanism. 

\begin{customcond}{R-X}
The Gram matrix $D_0 = \sumM x_mx_m^T/M$ is positive definite. 
\label{cond::R-X}
\end{customcond}

\begin{customcond}{R-Y1}
At each $x_m$, the cumulative distribution function of $Y_{mj}$ is continuous. Furthermore, the density function of $Y_{mj}$, denoted as $f_m(y)$, is bounded away from zero and
continuous on the interval $[q_m(\tau)-\varepsilon, q_m(\tau) + \varepsilon]$ for some $\varepsilon>0$.
\label{cond::R-Y-density}
\end{customcond}

\begin{customcond}{R-Y2}
 At each $x_m$, 
 $\E[(Y_{mj}^{+})^2] < +\infty$, where $Y_{mj}^{+} = \max\{Y_{mj},0\}$.
 \label{cond::R-Y-moment}
\end{customcond}
Condition \ref{cond::R-X} ensures the $M$ distinct covariate values are non-degenerate. 
Conditions \ref{cond::R-Y-density} and \ref{cond::R-Y-moment} are relatively weak for the response distribution. 
Under Conditions \ref{cond::R-Y-density} and \ref{cond::R-Y-moment}, $v_m(\alpha)$ is increasing and continuously differentiable with respect to $\alpha$ for $\alpha \in [\tau - \epsilon,\tau + \epsilon]$ for some $\epsilon>0$. 
With the inverse empirical ES estimator denoted as $\hat{h}_m(z) = \inf\{s\in[0,1]:\hat{v}_m(s) \geq z\}$ where $\hat{v}_m(s)$ is the empirical ES estimator in~\eqref{eq::empirical-SQ}, we present the main result for the i-Rock ES estimator in Theorem~\ref{thm::RRM-modified}.

\begin{theorem}
\label{thm::RRM-modified}
Under a fixed discrete design in~\eqref{eq::fix_design} and Conditions \ref{cond::R-X}, \ref{cond::R-Y-density}, and \ref{cond::R-Y-moment}, and suppose
$
D_1 = M^{-1}\sumM \{v_m(\tau) - q_m(\tau)\}^{-1} x_mx_m^T
$
is positive definite, we have
\[
(1-\tau)D_1\left(\widehat{\beta} - \beta\right) = M^{-1}\sumM x_m\left[\tau - \hat{h}_m\{v_m(\tau)\}\right] + \op\left(\frac{1}{\sqrt{n}}\right).
\]
In particular, the i-Rock estimator $\widehat{\beta}$ is consistent for $\beta$ in~\eqref{eq::SQ-model-linear}, and 
\[
\sqrt{n}\left(\widehat{\beta} - \beta\right)\converged \mathrm{N}\left(0,\;\;D_1^{-1}\Omega_1D_1^{-1}
\right),
\]
where 
$
\Omega_1 = M^{-1}\sumM\left[ \sigma_m^2(\tau)\{v_m(\tau) - q_m(\tau)\}^{-2}x_mx_m^T\right],
$
and $(1-\tau)\sigma_m^2(\tau) = \text{var}[Y_m\mid Y_m\geq q_m(\tau)] + \tau[v_m(\tau) - q_m(\tau)]^2$.
\end{theorem}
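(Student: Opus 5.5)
The plan is to treat $\widehat{\beta}$ as an M-estimator defined by a convex, piecewise-linear objective. I would linearize its subgradient condition around $\beta$, using the convexity argument familiar from quantile regression asymptotics \citep{koenker_2005}.

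\textbf{Reduction to a subgradient equation.} Since $T$ is large, I write the objective in \eqref{eq::mRock-est-discrete} (divided by $MT$) as the integral $\hat L_n(\theta)=M^{-1}\sumM\int_0^1\rho_\tau(\hat v_m(\alpha)-x_m^T\theta)\,\mathrm{d}\alpha$; the discretization error is negligible for $T\to\infty$. Its subgradient is
\[
\hat S_n(\theta)=M^{-1}\sumM x_m\big[\hat h_m(x_m^T\theta)-\tau\big],
\]
because $\int_0^1\mathbbm{1}\{\hat v_m(\alpha)<z\}\,\mathrm{d}\alpha=\hat h_m(z)$ by monotonicity of $\hat v_m$. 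The population version is $S(\theta)=M^{-1}\sumM x_m[h_m(x_m^T\theta)-\tau]$, where $h_m=v_m^{-1}$ and $h_m(x_m^T\beta)=\tau$, consistent with Theorem~\ref{thm::RRM-correct}. Near $\tau$ we have $v_m'(\tau)=\{v_m(\tau)-q_m(\tau)\}/(1-\tau)$. Hence $h_m'(v_m(\tau))=(1-\tau)/\{v_m(\tau)-q_m(\tau)\}$, so $\nabla S(\beta)=(1-\tau)D_1$, which is positive definite by assumption.

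\textbf{Consistency and expansion.} First I show $\sup_m\sup_{s\in[\tau-\epsilon,\tau+\epsilon]}|\hat v_m(s)-v_m(s)|=\op(1)$. This follows from Conditions \ref{cond::R-Y-density}--\ref{cond::R-Y-moment} and the consistency of the empirical quantile. Combined with convexity and uniqueness of the population minimizer, this gives $\widehat\beta\to\beta$ in probability. For the rate, the key ingredient is a stochastic equicontinuity statement: uniformly over $|z-v_m(\tau)|\le Cn^{-1/2}$,
\[
\hat h_m(z)-h_m(z)=\hat h_m\{v_m(\tau)\}-\tau+\op(n^{-1/2}).
\]
To prove it, I write $\hat v_m(s)-v_m(s)$ as an empirical process in $s$. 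Using the representation $\hat v_m(s)=\hat q_m(s)+\{(1-s)n/M\}^{-1}\sum_j(Y_{mj}-\hat q_m(s))^+$ together with the Bahadur representation of $\hat q_m(s)$, we get
\[
\hat v_m(s)-v_m(s)=\frac{M}{n(1-s)}\sum_j\Big[(Y_{mj}-q_m(s))^+-\E(Y-q_m(s))^+\Big]+\op(n^{-1/2}),
\]
and this holds uniformly near $\tau$; the first-order effect of $\hat q_m$ cancels because $q_m$ minimizes the Rockafellar--Uryasev objective. The modulus of continuity of this process in $s$ is $\op(n^{-1/2})$ over shrinking neighborhoods. Inverting then gives $\hat h_m(z)-h_m(z)\approx-h_m'(z)\{\hat v_m-v_m\}(h_m(z))$, which is uniformly stable in $z$. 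With this equicontinuity, a standard argument applies: $\hat S_n(\widehat\beta)=O(T^{-1})+O(\max_m$ jump of $\hat h_m)=\op(n^{-1/2})$, the jumps being $O(n^{-1})$. Expanding then yields
\[
0=\hat S_n(\beta)+(1-\tau)D_1(\widehat\beta-\beta)+\op(n^{-1/2}+\|\widehat\beta-\beta\|),
\]
which gives $\sqrt n$-consistency and the stated representation. I expect this equicontinuity and inversion step to be the main obstacle; in particular, the ties and flat pieces of $\hat v_m$ need care when defining $\hat h_m$.

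\textbf{Asymptotic normality.} Finally, $\tau-\hat h_m\{v_m(\tau)\}\approx h_m'(v_m(\tau))\{\hat v_m(\tau)-v_m(\tau)\}$. By the linear representation above, $\sqrt{n/M}\,\{\hat v_m(\tau)-v_m(\tau)\}$ is asymptotically normal with variance
\[
\frac{\var\{(Y-q_m)^+\}}{(1-\tau)^2}=\frac{(1-\tau)\var(Y\mid Y\ge q_m)+\tau(1-\tau)(v_m-q_m)^2}{(1-\tau)^2}=\frac{\sigma_m^2(\tau)}{1-\tau}.
\]
Multiplying by $h_m'^2=(1-\tau)^2/(v_m-q_m)^2$ gives variance $(1-\tau)\sigma_m^2(v_m-q_m)^{-2}$ for $\sqrt{n/M}\,[\tau-\hat h_m\{v_m(\tau)\}]$. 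The groups are independent, so the right-hand side of the representation satisfies $\sqrt n\,M^{-1}\sumM x_m[\cdot]\to\mathrm N(0,(1-\tau)\Omega_1)$, up to the matching of the $(1-\tau)$ factors against the normalization in $\sigma_m^2$. Solving for $\widehat\beta-\beta$ then gives the sandwich covariance $D_1^{-1}\Omega_1D_1^{-1}$ after tracking those factors.
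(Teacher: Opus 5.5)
Your route is essentially the paper's. Both use the subgradient condition $M^{-1}\sum_m x_m[\tau-\hat h_m(x_m^T\widehat\beta)]=O(M/n)$, consistency via convexity, and a uniform Bahadur representation of the empirical ES process in which the first-order effect of $\hat q_m$ cancels. Both then use equicontinuity of $\hat h_m$ obtained by inverting that representation, a Taylor expansion with $h_m'\{v_m(\tau)\}=(1-\tau)/\{v_m(\tau)-q_m(\tau)\}$, and a CLT over the $M$ independent groups. Using the Rockafellar--Uryasev form to see the cancellation, and inverting by hand instead of via the Hadamard derivative of the inverse map, are cosmetic differences. Note that the Rockafellar--Uryasev expression equals the estimator in \eqref{eq::empirical-SQ} only up to $O_{\mathrm{P}}(M/n)$ uniformly in $s$, which is harmless. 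For consistency, uniform control of $\hat v_m$ holds only near $\tau$. So run the convexity argument on $\hat L_n(\theta)-\hat L_n(\beta)$, or on directional derivatives, over a small sphere around $\beta$; by Knight's identity these involve $\hat h_m$ only between $x_m^T\beta$ and $x_m^T\theta$, which is effectively what the paper does.

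The step that is actually wrong is the final variance computation. It is an arithmetic slip, not a normalization issue, so ``tracking the $(1-\tau)$ factors'' will not repair it. Since $(1-\tau)\sigma_m^2(\tau)=\mathrm{var}(Y_m\mid Y_m\ge q_m)+\tau(v_m-q_m)^2$, you have
\[
\frac{\mathrm{var}\{(Y_m-q_m)^+\}}{(1-\tau)^2}=\frac{\mathrm{var}(Y_m\mid Y_m\ge q_m)+\tau(v_m-q_m)^2}{1-\tau}=\sigma_m^2(\tau),
\]
not $\sigma_m^2(\tau)/(1-\tau)$. This is the one-sample limit $\sqrt{n/M}\{\hat v_m(\tau)-v_m(\tau)\}\to\mathrm{N}(0,\sigma_m^2(\tau))$. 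Consequently $\sqrt{n/M}\,[\tau-\hat h_m\{v_m(\tau)\}]$ has limiting variance $(1-\tau)^2\sigma_m^2(\tau)\{v_m(\tau)-q_m(\tau)\}^{-2}$. The right-hand side then satisfies $\sqrt{n}\,M^{-1}\sum_m x_m[\tau-\hat h_m\{v_m(\tau)\}]\to\mathrm{N}(0,(1-\tau)^2\Omega_1)$. Dividing by $(1-\tau)D_1$ gives exactly $D_1^{-1}\Omega_1D_1^{-1}$ with nothing left over. Followed literally, your numbers would give $(1-\tau)^{-1}D_1^{-1}\Omega_1D_1^{-1}$, which contradicts the statement.
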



Theorem \ref{thm::RRM-modified} uncovers the main statistical properties of the i-Rock estimator, namely, consistency and asymptotic normality. 
It also gives an explicit connection between $\widehat{\beta}$ and $\hat{h}_m\{v_m(\tau)\}$, the inverse function of $\hat{v}_m$, via a Bahadur-type representation. 
While the implementation of the i-Rock approach only depends on $\hat{v}_m$ as in~\eqref{eq::mRock-est-discrete}, the first-order asymptotic property of $\widehat{\beta}$ depends directly and only on $\hat{h}_m\{v_m(\tau)\}$. 

Note that our consistency result in Theorem~\ref{thm::RRM-modified} is different from Theorem 3 in \cite{rockafellar2014superquantile} in the sense that the convergence limit for the Rockefellar's estimator is not the ES regression coefficient, as demonstrated in Section~\ref{sec::2}. This underscores 
the improvements achieved through our proposed i-Rock approach for the ES regression.

\subsection{Comparisons with other ES regression methods}\label{subsec::efficiency}
To further understand the performance of the proposed i-Rock approach, we compare its asymptotic efficiency with the following four existing ES regression approaches. Note that our proposed i-Rock approach only assumes the linear ES regression model in~\eqref{eq::SQ-model-linear}, 
while all except for the linearization approach below
assume linearity on both quantile and ES, i.e., $ q_{[Y\mid X]}(\tau) = X^T \eta$, and~\eqref{eq::SQ-model-linear}. 
\begin{enumerate}
    \item The linearization approach (LN) is to linearize the initial ES estimators, i.e., 
    $\hat \beta = \argmin_{u} \sum_{m=1}^M (\hat v_m(\tau) - x_m^T u)^2$, 
where $\hat v_m(\tau)$ is defined in~\eqref{eq::empirical-SQ}. 
    \item 
    The joint approaches (J1 and J2), proposed in \cite{fissler2016higher}, consider modeling the quantile and the ES jointly by minimizing a joint loss function. Here, we compare two different specifications, J1 and J2, used by \cite{dimitriadis2019joint} and \cite{patton2019dynamic}, respectively.
    \item The two-step approach (TS), proposed in \cite{barendse2020efficiently}, is formulated as
    \begin{gather*}
\hat{\eta}  =  \argmin_\eta \sumd \rho_\tau(Y_{mj} - x_m^T\eta),\quad
\hat{\beta}  =  \argmin_\beta \sumd \left[Z_{mj}(\hat{\eta}) - x_m^T\beta\right]^2,
\end{gather*}
where $Z_{mj}(\eta) =  (1-\tau)^{-1}(Y_{mj} - x_m^T \eta)\mathbbm{1}(Y_{mj}\geq x_m^T \eta) + x_m^T \eta$. 
    \item The two-step least squares approach (TSLS) is formulated as 
    \begin{equation*}
        \hat{\eta}  =  \argmin_\eta \sumd \rho_\tau(Y_{mj} - x_m^T\eta),\quad
\widehat{\beta} = \argmin_\theta \sumd \left[(Y_{mj} - x_m^T\theta)^2\cdot \bm{1}\{Y_{mj}\geq x_m^T\hat{\eta}\}\right].
\end{equation*}
\end{enumerate}
We relegate the detailed formulations for these approaches and their asymptotic variance to Appendix C.2
of the Supplementary Material. We shall examine the asymptotic efficiencies of these methods in the following two examples.

\paragraph{Case 3.1} We consider the linear homoscedastic model, namely, $Y_{mj} = x_m^T\eta + \varepsilon_{mj},  (m=1,\ldots,M;\,j=1,\ldots,n_0)$,
where 
$\sum_{m=1}^M x_m x_m^T/M$ is positive definite, and $\varepsilon_{mj}$'s are $i.i.d.$ with continuous density and finite second moment. Then, we have
$\text{AVar}_{iRock} = \text{AVar}_{TS} =\text{AVar}_{LN} =\text{AVar}_{TSLS} \leq (\text{AVar}_{J1}\wedge\text{AVar}_{J2})$,
where $\text{AVar}_{[\cdot]}$ denotes the asymptotic variance for each of the methods. 
Here, the i-Rock, the two-step, the linearization, and the two-step least squares estimators achieve the same efficiency. All of these approaches are at least as efficient as the two joint approaches, and they achieve equal efficiency if and only if $x_m^T\eta$ are constant over $m$. 
In fact, TS, J1, J2, and i-Rock estimators are asymptotically equivalent to the weighted least squares estimator
\begin{equation}\label{eq::WLS-linearize}
\argmin_{{u}}\sum_{m=1}^M w_m\left(\hat{v}_m(\tau) - x_m^T{u}\right)^2,
\end{equation}
with weights $w_m$ proportional to $1$, $\{v_m(\tau)\}^{-2}$,  $\{v_m(\tau)\}^{-3/2}$, 
and $\{v_m(\tau) - q_m(\tau)\}^{-1}$, respectively.
In this case, the joint approaches lose efficiency by incorporating non-constant weights in homoscedastic models, while the i-Rock approach is efficient since the weight $v_m(\tau) - q_m(\tau)$ remains constant over $m$ in homoscedastic models. 

\paragraph{Case 3.2} 
\begin{figure}[tb]
\begin{center}
\includegraphics[scale=0.4]{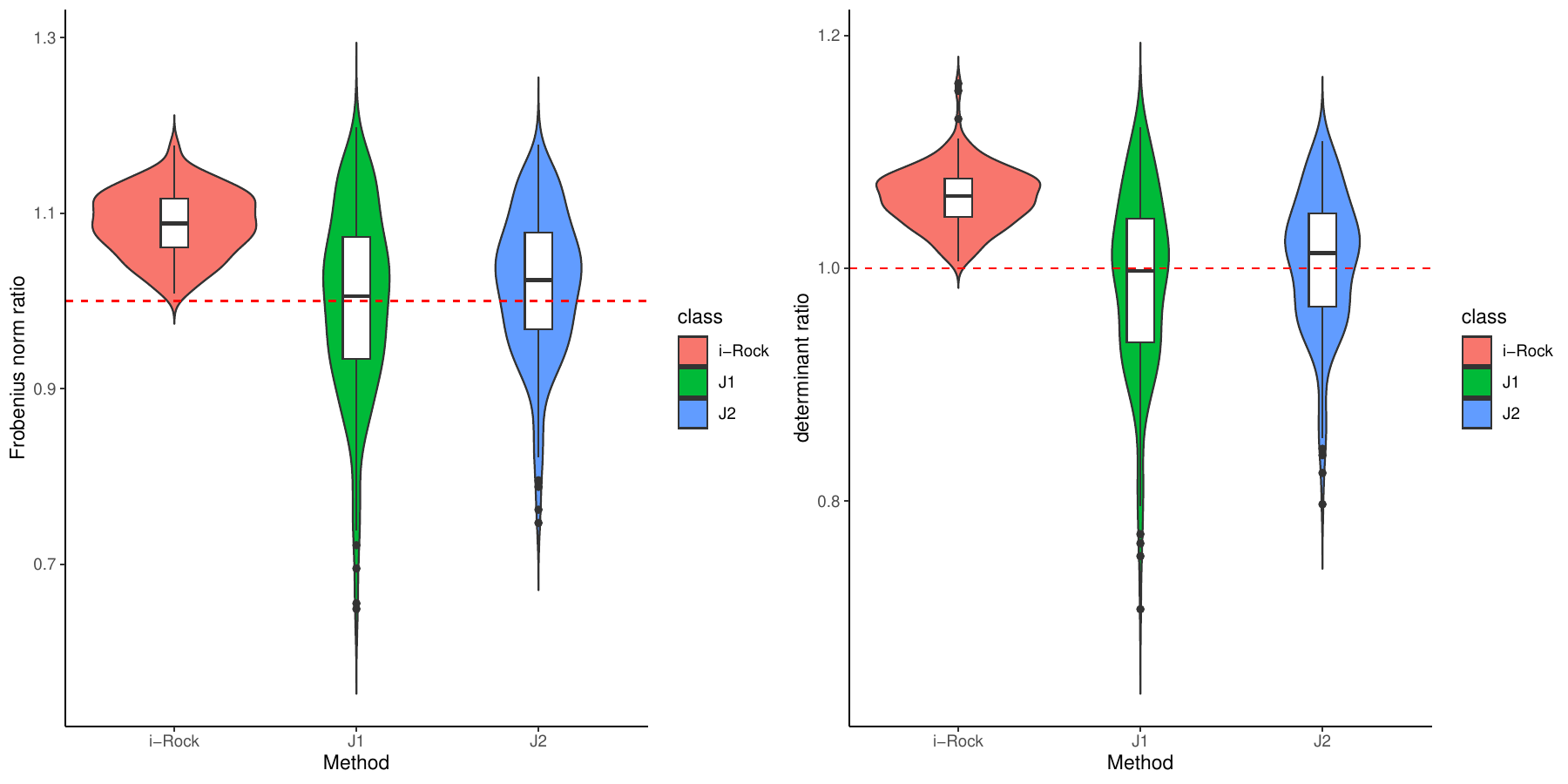}
\end{center}
\caption{\setstretch{1} The violin plot of ARE of the i-Rock and the joint approaches relative to the TS approach, 
at $\tau = 0.9$ under Model~\eqref{eq::comparison-simple} with $p = 3$. Here,
the ARE of one estimator $\widehat{\beta}_1$ relative to the other estimator $\widehat{\beta}_2$ is defined as $\lVert \text{AVar}(\widehat{\beta}_2) \rVert /\lVert \text{AVar}(\widehat{\beta}_1) \rVert$, where $\lVert\cdot \rVert$ can be Frobenius norm (on the left) and determinant (on the right). 
Each element of the covariates $\tX$ takes values independently from $\{i/10;~i=0,1,\ldots,10\}$ with equal probability. 
For $\gamma_1 = (\gamma_{10},\gamma_{11}^T)^T$ and $\gamma_2 = (\gamma_{20},\gamma_{21}^T)^T$,
we fix $\gamma_{10} = \gamma_{20} = 3$ and randomly sample $200$ different values of $\gamma_{11}$ and $\gamma_{21}$ independently and uniformly in the cube $[-1,3]^{3}$. 
}
\label{fig::asymp_effi_nid1}
\end{figure}
Next, we consider the heteroscedastic location-scale shift model, namely,
\begin{equation}
Y_m = x_m^T \gamma_1 + (x_m^T \gamma_2)\cdot \varepsilon_{mj}\quad (m=1,\ldots,M;\,j=1,\ldots,n_0),
\label{eq::comparison-simple}
\end{equation}
where 
$x_m^T \gamma_2>0$, $\varepsilon_{mj}$'s are $i.i.d.$ and follow a (scaled) normal distribution with $v_{[\varepsilon_{mj}]}(\tau) = 0$ and $\var\{\varepsilon_{mj} \mid \varepsilon_{mj} > q_{[\varepsilon_{mj}]}(\tau)\} = 1$. 
Under this more complicated model, we present empirical results of the asymptotic relative efficiency (ARE) 
under $200$ randomly sampled model parameter values in Figure~\ref{fig::asymp_effi_nid1}. 
The i-Rock approach is more efficient than the TS approach in almost all cases, even though the latter requires an additional modeling assumption on the quantile function.
The key to this improvement is the implicit weighting induced by the i-Rock loss function, as will be discussed in Section~\ref{subsec::asy_linear}. 
Under Model~\eqref{eq::comparison-simple}, the i-Rock approach is asymptotically equivalent to~\eqref{eq::WLS-linearize}
with $w_m \propto (x_m^T \gamma_2)^{-1}$.
While $w_m$ are not optimal, they are beneficial for efficiency since $w_m$ reflects the dispersion of $Y_m$ on the right tail and down-weights data with higher conditional tail variance. 
On the other hand, the performance of the joint approaches varies heavily with $\gamma_1$ and $\gamma_2$ in~\eqref{eq::comparison-simple},
resulting in up to a $20\%$ increase or decrease in efficiency compared to the TS approach.

In summary, none of the approaches considered here are universally the most efficient since none can achieve optimal asymptotic weights. However, the i-Rock approach is more efficient in most cases since its asymptotic weight $w_m \propto \{v_m(\tau) - q_m(\tau)\}^{-1}$ characterizes the data heterogeneity and often aligns closer to the optimal weights.

\section{The i-Rock approach for continuous covariates}\label{sec::continuous}


To provide general guidance and further theoretical investigations of the proposed i-Rock approach, we now consider the i-Rock approach with continuous covariates, applicable also to a mix of continuous and discrete covariates. While the same principles given in Section~\ref{sec::discrete} apply, 
the theoretical analysis in this section is much more involved. The key challenge arises from using initial estimators for the conditional ES under minimal assumptions. 
We start with a general analysis for a broad class of non-parametric initial ES estimators and show that the i-Rock approach is asymptotically equivalent to a weighted linearization of those initial estimators. To be specific, we also consider an example of the initial estimator, namely, the Neyman-orthogonalized locally linear ES estimator. 
The proof of the three theorems in this section can be found in Appendix D
of the Supplementary Material.

We formulate our i-Rock approach for continuous covariates with the idea of \textit{binning} to simplify the subsequent theoretical analysis. 
Suppose $\{(X_i,Y_i):i=1,\ldots,n\}$ is a random sample from the distribution $(X,Y) \sim \Pr$.
In the remainder of the paper, we shall write 
the conditional ES and quantile of $Y\mid X=x$ as 
$v(\tau,x)$ and $q(\tau,x)$, respectively,
omitting the subscript $[Y\mid X]$.
Let $\mathcal{X}\subset\mathbb{R}^{p+1}$ be the sample space of the covariate, and we partition $
\mathcal{X} = \bigcup_{m=1}^M A_m$,
where $A_1,\ldots,A_M$ are non-stochastic, disjoint bins, and the number of bins $M$ may grow with sample size $n$. Within each bin $A_m$, 
let $\bar{x}_m\in\mathbb{R}^{p+1}$ be the geometric center of $A_m$, i.e.,
$\bar{x}_m = |A_m|^{-1}\int x\bm{1}\{x\in A_m\}\,\d x$ with $|A_m|$ being the volume of the bin, and let $\hat{v}(\alpha,\bar{x}_m)$ be an initial estimator of $v(\alpha,\bar{x}_m)$, for $\alpha\in(0,1)$. 
Under Model~\eqref{eq::SQ-model-linear}, the i-Rock estimator is formulated as
\begin{equation}
\widehat{\beta} = \argmin_{{u}\in\mathbb{R}^{p+1}} \;\; \sumM \hgamma_m \int_{0}^1\; \rho_\tau\left(\hat{v}(\alpha,\bar{x}_m) - \bar{x}_m^T{u}\right)\,\d\alpha\label{eq::estimator-rock-binning},
\end{equation}
where $\hgamma_m$ is a user-specified weight for each bin $A_m$.
While we allow many choices of $\hgamma_m$, we would normally set $\hgamma_m$ to be 
the bin size of $A_m$, to adjust for the differences in sample sizes across bins. 
In practice, we may approximate the integration in (\ref{eq::estimator-rock-binning}) by a fine grid over $\alpha\in(0,1)$, which turns~\eqref{eq::estimator-rock-binning}
into a quantile regression problem in terms of computation.

\subsection{Asymptotic linearization of the i-Rock estimator}\label{subsec::asy_linear}
Let $f_{Y\mid X}(y;x)$ be the conditional density function of $Y\mid X = x$.
We further impose the following regularity conditions. 

\begin{customcond}{G-X}\label{cond::bin-covar}
The covariates have bounded support $\mathcal{X}\subset\mathbb{R}^{p+1}$, and have a density function $f_X(x)$ that is uniformly bounded away from $0$ and $+\infty$ on $\mathcal{X}$. Furthermore, the matrix $D_1$ defined in~\eqref{eq::D1} is positive definite.
\end{customcond}

\begin{customcond}{G-Y1}\label{cond::R-bin-Y-density}
At each $x$, $f_{Y\mid X}(y; x)$ is continuous over $y$. There exist constants $\underline{f}$, $\overline{f}$, and $\varepsilon_0 > 0$, such that
$
0 < \;\underline{f} \; \leq \inf_{\substack{(x,y):x\in\mathcal{X}\\|y-q(\tau,x)|\leq \varepsilon_0}}f_{Y\mid X}(y; x) \leq \sup_{\substack{(x,y):x\in\mathcal{X}\\|y-q(\tau,x)|\leq \varepsilon_0}}f_{Y\mid X}(y; x) \leq \;\overline{f}$.
\end{customcond}

\begin{customcond}{G-Y2}\label{cond::bin-SQ}
For each $x$, $q(s,x)$ and $v(s,x)$ are strictly increasing and continuous over $s\in(0,1)$. Furthermore, $q(\tau,x)$ and $v(\tau,x)$ are Lipschitz continuous as a function of $x$. 
\end{customcond}

\begin{customcond}{G-V1}\label{cond::vhat1}
For any $m = 1,\ldots,M$, the initial ES estimator $\hat{v}(s,\bar{x}_m)$ is left-continuous and non-decreasing in $s\in(0,1)$. Furthermore, for any constant $B > 0$ and some sequence $r_n = o(n^{-1/4})$, the initial estimator satisfies
\begin{enumerate}[(i)]
    \item \itemEq{
    \sup_{\substack{m=1,\ldots,M\\s:|s-\tau|\leq B\cdot (r_n + n^{-1/2}  )}} \left|\hat{v}(s,\bar{x}_m) - v(s,\bar{x}_m)\right| = \Op(r_n),
    }
    \item \itemEq{
    \sup_{\substack{m=1,\ldots,M\\s:|s-\tau|\leq B\cdot (r_n + n^{-1/2})}} \left|[\hat{v}(s ,\bar{x}_m) - v(s ,\bar{x}_m)] - [\hat{v}(\tau,\bar{x}_m) - v(\tau,\bar{x}_m)]\right| = \op\left(n^{-1/2}\right).
    }
\end{enumerate} 
\end{customcond}

\begin{customcond}{G-V2}\label{cond::bin-linear}
The weighted aggregation of the initial ES estimators satisfies
\[
\sumM \left[\frac{\hgamma_m \bxm }{v(\tau,\bar{x}_m)-q(\tau,\bar{x}_m)}\{\hat{v}(\tau,\bar{x}_m) - v(\tau,\bar{x}_m)\}\right] = \Op\left(n^{-1/2}\right).
\]
\end{customcond} 
Conditions \ref{cond::bin-covar}, \ref{cond::R-bin-Y-density} and \ref{cond::bin-SQ} are standard in the literature of quantile and ES regression; see, e.g., \cite[Section 4]{koenker_2005}, and \cite{dimitriadis2019joint}.
{\color{black} The bounded support assumption in Condition~\ref{cond::bin-covar} is mathematically necessary if we allow data heterogeneity and are interested in linear ES functions at two different quantile levels. In practical problems, we usually work with covariates with bounded ranges. }
Conditions~\ref{cond::bin-linear} and~\ref{cond::vhat1} impose generally weak assumptions on the initial ES estimators that allow for a broad class of non-parametric ES estimations. 
Condition~\ref{cond::vhat1} requires uniform consistency with a convergence rate slightly better than $n^{1/4}$ and asymptotic equicontinuity of the estimated ES process. 
Condition~\ref{cond::bin-linear} requires that a weighted aggregation of the initial ES estimators over bins enjoys a $n^{1/2}$-rate of convergence.
As discussed in Corollary~\ref{coro::truncation}, our analysis would only concern $\hat{v}(s,x)$ for those $s$ in a local neighborhood of $\tau$, but not at extreme quantile levels, as long as $\hat{v}(s,x)$ is monotonic in $s$. We relegate further discussion of the initial ES estimator to Appendix F of the Supplementary Material. 


Recall from~\eqref{eq::estimator-rock-binning} that $\hgamma_m$ is a weight for each bin in the i-Rock approach, and that the number of bins $M$ may depend on the sample size.
Let diam$(\cdot)$ be the diameter of a set in $\mathbb{R}^{p+1}$, and let $\hat{\pi}_m = n^{-1}\sumn \bm{1}[X_i\in A_m]$ be the proportion of data that fall into the bin $A_m$. 
In the following theorem, we establish the asymptotic relationship between the proposed i-Rock estimator and its initial ES estimators. 

\begin{theorem}
\label{thm::asymptotic-RRM-bin}
Suppose Conditions \ref{cond::bin-covar}, \ref{cond::R-bin-Y-density} and \ref{cond::bin-SQ} hold. In addition, suppose that the binning mechanism and the chosen weights satisfy
$\sup_{m=1,\ldots,M} \text{diam}(A_m) = o(1)$ and $\sup_{m=1,\ldots,M}\left| \frac{\hgamma_m}{\hat{\pi}_m} - 1\right| = \op(1)$.
Given any initial ES estimator that satisfies Conditions \ref{cond::vhat1} and \ref{cond::bin-linear}, the i-Rock estimator in~\eqref{eq::estimator-rock-binning} satisfies
\[
\left(\widehat{\beta} - \beta\right) = D_1^{-1}\sumM \left[\frac{\hgamma_m \bxm }{v(\tau,\bar{x}_m)-q(\tau,\bar{x}_m)}\{\hat{v}(\tau,\bar{x}_m) - v(\tau,\bar{x}_m)\}\right]  + \op(n^{-1/2}),
\]
where $D_1$ is defined in \eqref{eq::D1}. 
In particular, $\widehat{\beta}$ is $\sqrt{n}$-consistent for $\beta$.
\end{theorem}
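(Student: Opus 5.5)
The plan is to treat \eqref{eq::estimator-rock-binning} as a weighted quantile regression of the ``pseudo-responses'' $\hat v(\alpha,\bar x_m)$, $\alpha\sim U(0,1)$, on $\bar x_m$, and to analyse it through its subgradient (estimating) equation. The $\alpha$-integral can be computed in closed form through the inverse map $\hat h_m(z)=\inf\{s:\hat v(s,\bar x_m)\ge z\}$. Since $\hat v(\cdot,\bar x_m)$ is non-decreasing and left-continuous, $\int_0^1\mathbbm{1}\{\hat v(\alpha,\bar x_m)<z\}\,\d\alpha=\hat h_m(z)$. The directional derivative of the objective at $u$ is therefore governed by
\[
\hat\Psi_n(u)=\sumM\hgamma_m\bar x_m\bigl[\tau-\hat h_m(\bar x_m^Tu)\bigr],
\]
and the minimizer satisfies $\hat\Psi_n(\widehat\beta)=\op(n^{-1/2})$, up to the usual quantile-regression subgradient slack. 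That slack is controlled because the grid/integral makes the objective piecewise smooth and the jumps of $\hat h_m$ are small. This is exactly the continuous analogue of the Bahadur representation in Theorem~\ref{thm::RRM-modified}.

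\textbf{Consistency and rate.} The objective is convex in $u$, so I would use the convexity (Pollard/Hjort) argument. Compare the empirical objective with the population i-Rock loss restricted to bins, $\sumM\hat\pi_m\int_0^1\rho_\tau(v(\alpha,\bar x_m)-\bar x_m^Tu)\,\d\alpha$. Because $\sup_m\mathrm{diam}(A_m)=o(1)$, $\hgamma_m/\hat\pi_m\to1$ uniformly, and Condition~\ref{cond::bin-SQ} gives Lipschitz continuity in $x$, this quantity converges to $L(u)$ locally uniformly. Theorem~\ref{thm::RRM-correct} (with Corollary~\ref{coro::truncation}, since only $\alpha$ near $\tau$ matters) identifies $\beta$ uniquely, which gives consistency. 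To get the rate, substitute $u=\beta+t$ with $\|t\|$ small. Under the model $v(\tau,x)=x^T\beta$, so $\bar x_m^Tu=v(\tau,\bar x_m)+\bar x_m^Tt$.

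\textbf{Local expansion of $\hat h_m$.} This step carries the main work. For $z=v(\tau,\bar x_m)+\bar x_m^Tt$ with $\|t\|=O(r_n+n^{-1/2})$, Condition~\ref{cond::vhat1}(i) together with the strict monotonicity of $v$ places $\hat h_m(z)$ within $B(r_n+n^{-1/2})$ of $\tau$, uniformly in $m$. Write $\hat v(s)=v(s)+[\hat v(\tau)-v(\tau)]+\op(n^{-1/2})$ on that window, using Condition~\ref{cond::vhat1}(ii). Inverting then gives
\[
\hat h_m(z)=h_m\bigl(z-\{\hat v(\tau,\bar x_m)-v(\tau,\bar x_m)\}\bigr)+\op(n^{-1/2}),
\]
where $h_m$ is the inverse of $v(\cdot,\bar x_m)$, with derivative at $\tau$ equal to $(1-\tau)/\{v(\tau,\bar x_m)-q(\tau,\bar x_m)\}$ (from $\partial_s v=(v-q)/(1-s)$). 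A second-order Taylor step then yields
\[
\tau-\hat h_m(z)=\frac{(1-\tau)\bigl[\hat v(\tau,\bar x_m)-v(\tau,\bar x_m)-\bar x_m^Tt\bigr]}{v(\tau,\bar x_m)-q(\tau,\bar x_m)}+O(\|t\|^2+r_n^2)+\op(n^{-1/2}).
\]
The remainder is $\op(n^{-1/2})$ because $r_n=o(n^{-1/4})$; this is why that rate is assumed. I expect the main obstacle to be making this uniform over $m$ (with $M\to\infty$) and over $t$ in a shrinking ball. The plan is to use monotonicity of $\hat h_m$ in $z$ to reduce the supremum to finitely many grid points, and to use the density bounds of Condition~\ref{cond::R-bin-Y-density} to keep $v-q$ and $h_m'$ uniformly bounded near $\tau$.

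\textbf{Conclusion.} Summing with weights $\hgamma_m\bar x_m$ gives
\[
\hat\Psi_n(\beta+t)=(1-\tau)\Bigl\{\sumM\frac{\hgamma_m\bar x_m[\hat v(\tau,\bar x_m)-v(\tau,\bar x_m)]}{v(\tau,\bar x_m)-q(\tau,\bar x_m)}-\hat D_1 t\Bigr\}+\op(n^{-1/2}+\|t\|),
\]
where $\hat D_1=\sumM\hgamma_m\bar x_m\bar x_m^T/\{v(\tau,\bar x_m)-q(\tau,\bar x_m)\}\to D_1$. This is a Riemann-sum argument using the bin conditions and Lipschitz continuity, and $D_1$ is positive definite by Condition~\ref{cond::bin-covar}. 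Condition~\ref{cond::bin-linear} makes the first term $\Op(n^{-1/2})$. The convexity argument (the objective is locally quadratic with positive definite Hessian $(1-\tau)D_1$) first yields $\|\widehat\beta-\beta\|=\Op(n^{-1/2})$, because on the sphere $\|t\|=Cn^{-1/2}$ the directional derivative points outward. Plugging this back into $\hat\Psi_n(\widehat\beta)=\op(n^{-1/2})$ and solving for $t$ gives the stated representation.
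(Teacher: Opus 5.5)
Your argument is correct, but it takes a different route from the paper's, and one step should be removed.

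\textbf{The consistency step.} That paragraph is not supported by the hypotheses. Condition~\ref{cond::vhat1} controls $\hat v(s,\bar x_m)$ only on a window of width $O(r_n+n^{-1/2})$ around $\tau$. Nothing therefore forces the empirical objective toward $L(u)$, which need not even be finite. Also, Condition~\ref{cond::bin-SQ} gives Lipschitz continuity in $x$ only at level $\tau$.

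You do not need this step. By convexity, a positive directional derivative in direction $t/\|t\|$ at every point of the sphere $\|t\|=Cn^{-1/2}$ already forces every minimizer into that ball, with no prior consistency. The nonnegative flat-piece term in the one-sided directional derivative only helps there.

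\textbf{The subgradient slack.} It then needs controlling only at $\widehat\beta$. The reason is not that the integral makes the objective piecewise smooth. It is your own inversion $\hat h_m(z)=h_m(z-\{\hat v(\tau,\bar x_m)-v(\tau,\bar x_m)\})+o_P(n^{-1/2})$, uniform on the window. This shows that any flat piece of $\hat v(\cdot,\bar x_m)$ at a level within $O_P(n^{-1/2})$ of $v(\tau,\bar x_m)$ has length $o_P(n^{-1/2})$, uniformly in $m$. The same inversion makes uniformity over $t$ automatic, because Condition~\ref{cond::vhat1} is already a supremum over the window, so the grid reduction you planned is unnecessary.

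\textbf{Comparison with the paper.} The paper works with the objective rather than the estimating equation. It writes the loss shifted by $\delta=\sqrt n(u-\beta)$ and uses Knight's identity to split it into three parts:
\begin{itemize}
\item a linear term, controlled by $\hat h_m\{\hat v(\tau,\bar x_m)\}\approx\tau$, which is the counterpart of your slack bound;
\item a deterministic term from Taylor-expanding $h_m$, which produces the quadratic in $\delta$;
\item a remainder, killed by asymptotic equicontinuity of $\hat h_m-h_m$ (Lemma~\ref{lemma::bin-technical-h}).
\end{itemize}
It then applies the Pollard/Hjort--Pollard convexity lemma. That route needs only pointwise-in-$\delta$ approximations and never touches subgradients. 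Yours needs the expansion uniformly on a sphere plus explicit slack control at $\widehat\beta$. In exchange, yours is the natural continuation of the paper's own Z-estimation proof of Theorem~\ref{thm::RRM-modified}. Your direct inversion of $\hat v_m=v_m+[\hat v_m(\tau)-v_m(\tau)]+o_P(n^{-1/2})$ is also a shorter derivation of the content of Lemma~\ref{lemma::bin-technical-h} than the paper's Bahadur-type argument.
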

Theorem~\ref{thm::asymptotic-RRM-bin} implies that $\widehat{\beta}$ is asymptotically equivalent to a weighted linearization of $\hat{v}(\tau,\bar{x}_m)$ over each bin. The i-Rock approach turns a set of non-parametric initial estimators into a parametric estimator via the i-Rock loss function in~\eqref{eq::estimator-rock-binning}. 
Specifically, consider the following weighted least squares (WLS) of the initial $\tau$-th ES estimator on the covariates
\begin{equation}
\widetilde{\beta} = \min_{u\in\mathbb{R}^{p+1}} \sumM \hgamma_mw_m\left(\hat{v}(\tau,\bar{x}_m) - \bxm ^Tu\right)^2,
\label{eq::bin-weighted-linearization}
\end{equation}
where $w_m$ is a set of known weights. With simple linear algebra,
$\widetilde{\beta}$ satisfies
\begin{equation}
    \widetilde{\beta} - \beta
    = \left(\sumM \hgamma_mw_m\bxm \bxm ^T\right)^{-1}\sumM \left[\hgamma_mw_m\bxm\{\hat{v}(\tau,\bar{x}_m) - v(\tau,\bar{x}_m)\} \right]\label{eq::WLS}.
\end{equation}
Combining \eqref{eq::WLS} and Theorem~\ref{thm::asymptotic-RRM-bin}, we note that the i-Rock estimator $\widehat{\beta}$ in~\eqref{eq::estimator-rock-binning} is asymptotically equivalent to $\widetilde{\beta}$ in \eqref{eq::bin-weighted-linearization} with $w_m = [v(\tau,\bar{x}_m) - q(\tau,\bar{x}_m)]^{-1}$.
One key feature of the i-Rock approach is that the weights are implicit and automatic. Direct calculation of the WLS would require estimating the unknown weights $w_m$. Plugging in estimated weights $\hat{w}_m$ may lead to unstable WLS estimates where there is not a sufficient amount of data. 

Another feature of the i-Rock approach is that the weights are adaptive to heterogeneity in a good way. 
Among the class of WLS estimators in~\eqref{eq::bin-weighted-linearization}, the optimal weight $w^*_m\propto \{\var[\hat{v}(\tau,\bar{x}_m)]\}^{-1}$ reflects the heterogeneity of the initial estimators (see, e.g., Section 7 of \cite{wooldridge2010econometric}). 
Although the effective weights for the i-Rock approach may not be optimal, they tend to be well correlated with $w_m^*$. 
\cite{olma2021nonparametric} shows that non-parametric conditional ES estimators often have asymptotic variance in the form of
\begin{equation}
\begin{aligned}
a_n\var[\hat{v}(\tau,\bar{x}_m) \mid X=\bar{x}_m] = &\rho_1\var[Y\mid X=\bar{x}_m, Y\geq q(\tau,\bar{x}_m)] \\
&\;\;+ \rho_2[v(\tau,\bar{x}_m)-q(\tau,\bar{x}_m)]^2 + \op(1),
\end{aligned}
\label{eq::variance-components}
\end{equation}
where $a_n$ is a scaling factor, and $\rho_1$, $\rho_2$ are two constants depending on the construction of $\hat{v}(\tau,x)$.
The weight $w_m = [v(\tau,\bar{x}_m)-q(\tau,\bar{x}_m)]^{-1}$ captures part of the variance in \eqref{eq::variance-components}, and in many cases, the two additive components in \eqref{eq::variance-components} are often well correlated across $x_m$ because they both capture the spread of the conditional distribution on the right tail. 
Therefore, the i-Rock approach can often be more efficient than a simple linearization approach that does not adapt well to the heterogeneity in the data. 

\subsection{Asymptotic normality}\label{subsec::examples}
In this section, we provide a concrete example of the initial ES estimator that satisfies the technical conditions of Theorem~\ref{thm::asymptotic-RRM-bin}, and establish the asymptotic normality of the i-Rock estimator with such an initial ES estimator. 
In particular, 
we fit a bin-wise linear ES regression with a Neyman-orthogonalized score function in \cite{barendse2020efficiently}, namely, 
\begin{equation}\label{eq::ll-SQ}
\begin{aligned}
(\hat{c}_0,\hat{c}_1) = \argmin_{\substack{c_0\in\mathbb{R}\\c_1\in\mathbb{R}^{p}}} \sum_{\substack{X_i\in A_m}} \left[\hat{Z}_i(s) - c_0 - c_1^T(\tilde{X}_i - \tilde{x}_m)\right]^2,\quad
\hat{v}(s,\bxm) = \hat{c}_0,
\end{aligned}
\end{equation}
where $\hat{Z}_i(s) = (1-s)^{-1} \{Y_i - \hat{q}(s,X_i)\}\bm{1}\{Y_i\geq \hat{q}(s,X_i)\} + \hat{q}(s,X_i)$, $X_i = (1,\tX_i^T)^T$, and $\bar{x}_m = (1,\tilde{x}_m^T)^T$.
With this initial ES estimator, we obtain the i-Rock estimator in \eqref{eq::estimator-rock-binning} with
\begin{equation}
\label{eq::ll-gweights}
\hgamma_m = S_{0m} - S_{1m}^T \bm{S}_{2m}^{-1} S_{1m},
\quad 
n^{-1} \bm{X}_m^T \bm{W}_m \bm{X}_m =
\begin{bmatrix}
S_{0m} & S_{1m}^T\\
S_{1m} & \bm{S}_{2m}
\end{bmatrix}.
\end{equation}
This ES estimation approach is similar to that in \cite{olma2021nonparametric} with a key difference that we do not require any specific form of the quantile estimator $\hat{q}(s,x)$, as long as it satisfies Condition G-Q in Appendix D.1 of the Supplementary Material. For models with linear quantile functions, this condition is readily satisfied. 

\begin{theorem}
\label{thm::example-RRM-SQ}
Under Conditions \ref{cond::bin-covar}, \ref{cond::bin-SQ}, and G-Y1', G-A1, G-A2, G-Q in Appendix D.1, 
the initial ES estimator constructed in~\eqref{eq::ll-SQ} satisfies Conditions \ref{cond::vhat1} and \ref{cond::bin-linear}, thus the conclusion of Theorem~\ref{thm::asymptotic-RRM-bin} holds. 
In particular, the i-Rock estimator \eqref{eq::estimator-rock-binning} with initial estimator in~\eqref{eq::ll-SQ} and $\hat \gamma_m$ in~\eqref{eq::ll-gweights} satisfies
\begin{equation}\label{eq::asy_normal}
    \sqrt{n}\left(\widehat{\beta} - \beta\right) \converged \mathrm{N}(0,\; D_{1}^{-1}\Omega_1D_1^{-1}),
\end{equation}
where $\Omega_1 = \E\left[\sigma^2_\tau(X)[v(\tau,X) - q(\tau,X)]^{-2}XX^T\right]$, 
and $(1-\tau)\sigma^2_\tau(x) = \var(Y\mid X = x , \,Y\geq q(\tau,x)) + \tau[v(\tau,x) - q(\tau,x)]^2$.
\end{theorem}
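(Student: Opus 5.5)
The plan has two parts. First we verify Conditions \ref{cond::vhat1} and \ref{cond::bin-linear} for the bin-wise Neyman-orthogonalized local linear estimator \eqref{eq::ll-SQ}; Theorem~\ref{thm::asymptotic-RRM-bin} then gives the linear representation. Second, we prove a central limit theorem for the leading term of that representation.

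\textbf{Step 1: decompose the initial estimator.} Within bin $A_m$, write $\hat v(s,\bar x_m)-v(s,\bar x_m)$, using the partitioned-regression formula for the intercept $\hat c_0$, as a weighted average over $X_i\in A_m$ of $\hat Z_i(s)-v(s,\bar x_m)$. The weights come from the Schur complement, so their normalizing constant is $\hat\gamma_m$ in \eqref{eq::ll-gweights}. Then split
\[
\hat Z_i(s)-v(s,\bar x_m)=\{Z_i(s)-v(s,X_i)\}+\{\hat Z_i(s)-Z_i(s)\}+\{v(s,X_i)-v(s,\bar x_m)\},
\]
where $Z_i(s)$ uses the true quantile $q(s,X_i)$. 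This gives three terms.
\begin{enumerate}[(a)]
\item The noise term has conditional mean zero.
\item The quantile-estimation term is handled by Neyman orthogonality. The derivative of $E[Z(s)\mid X]$ with respect to $q$ vanishes at the true quantile, so a Taylor expansion under G-Y1' leaves a remainder of order $\sup|\hat q-q|^2$ plus an empirical-process fluctuation. Condition G-Q, which I take to give $\sup|\hat q-q|=\op(n^{-1/4})$ with a Donsker/VC-type class, makes both pieces $\op(n^{-1/2})$ after aggregation.
\item The smoothing bias comes from the local linear fit. It is of order $\mathrm{diam}(A_m)^2$ and is $o(n^{-1/2})$ under the bandwidth/bin condition G-A1.
\end{enumerate}

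\textbf{Step 2: check Conditions \ref{cond::vhat1} and \ref{cond::bin-linear}.} For \ref{cond::vhat1}(i), take a union bound over the $M$ bins of a Bernstein-type inequality for term (a). Uniformity in $s$ follows from a bracketing argument, since $Z_i(s)$ is monotone-structured in $s$. Each bin holds about $n/M$ points, so the rate is $r_n\asymp\sqrt{M\log M/n}$, and G-A2 makes this $o(n^{-1/4})$. For \ref{cond::vhat1}(ii), we need stochastic equicontinuity of $s\mapsto$ term (a) on a shrinking neighbourhood of $\tau$. The conditional variance of $Z_i(s)-Z_i(\tau)$ is $O(|s-\tau|)$, and a maximal inequality then gives a bound of $\op(n^{-1/2})$; this requires care with the $\sqrt{M}$ factor lost to the supremum over bins. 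Monotonicity and left-continuity of $\hat v$ in $s$ do not hold automatically for the local linear fit. If needed, I would rearrange the estimator monotonically, which does not worsen the uniform error. For \ref{cond::bin-linear}, summing over bins cancels the $\hat\gamma_m$ normalization. The weighted sum then becomes $n^{-1}\sum_i X_i\{v(\tau,X_i)-q(\tau,X_i)\}^{-1}\{Z_i(\tau)-v(\tau,X_i)\}+\op(n^{-1/2})$, up to bin-centering errors that vanish as $\mathrm{diam}(A_m)\to0$. This sum is $\Op(n^{-1/2})$.

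\textbf{Step 3: CLT and variance.} Theorem~\ref{thm::asymptotic-RRM-bin} together with Step 2 gives
\[
\sqrt n(\hat\beta-\beta)=D_1^{-1}n^{-1/2}\sum_i \frac{X_i\{Z_i(\tau)-v(\tau,X_i)\}}{v(\tau,X_i)-q(\tau,X_i)}+\op(1).
\]
The summands are i.i.d. with mean zero. A direct computation gives $(1-\tau)^{2}\var(Z\mid X)=(1-\tau)\var(Y\mid Y\ge q,X)+\tau(1-\tau)(v-q)^2$, which is $(1-\tau)\sigma_\tau^2(X)$ up to the normalization used in the statement. The Lindeberg CLT with the second-moment condition then yields \eqref{eq::asy_normal}.

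The main obstacle is Condition \ref{cond::vhat1}(ii) combined with the orthogonality remainder in Step 1(b). Both require uniform control over a growing number of bins with only about $n/M$ observations each, while keeping everything at order $o(n^{-1/2})$.
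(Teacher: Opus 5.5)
Your overall architecture (verify G-V1 and G-V2, invoke Theorem~\ref{thm::asymptotic-RRM-bin}, then a Lindeberg CLT for the leading term) matches the paper's. However, your treatment of the binning bias, term (c), fails.

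You claim the local-linear bias is $O(\bar h^{2})$ and hence $o(n^{-1/2})$ under G-A1. But G-A1 implies $\underline h^{p}\gg \log n/\sqrt n$. For $p\ge 2$ this forces $\sqrt n\,\bar h^{2}\to\infty$, and for $p=1$ it gives no such bound either. A bias of that size would contribute a term generically of order $\sqrt n\,\bar h^{2}$ to $\sqrt n\sum_m \hat\gamma_m\bar x_m B_m/\{v(\tau,\bar x_m)-q(\tau,\bar x_m)\}$. That destroys both G-V2 and the CLT, and it would spoil G-V1 as well.

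The missing idea is that at level $\tau$ the bias is \emph{identically zero}. By the model, $v(\tau,\cdot)$ is linear in $x$, and a bin-wise local-linear fit reproduces linear functions exactly. So $\hat v(\tau,\bar x_m)-v(\tau,\bar x_m)$ consists only of the noise term and the quantile-estimation term.

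For $s\neq\tau$ the bias $B_{np}(s,m)$ is nonzero, but it enters only through G-V1, and controlling it there is the role of G-A2.
\begin{itemize}
\item Under G-A2(i), $\partial^2_{xx}v(\tau,\cdot)\equiv 0$ and the Hessian is Lipschitz in $s$. This gives $B_{np}(s,m)=O(|s-\tau|\,\bar h^{2})=O(r_n\bar h^{2})$, which is $o(n^{-1/2})$ precisely because $\bar h^{4-p}=o(1/\log n)$.
\item Under G-A2(ii), the bias vanishes identically.
\end{itemize}
You cite G-A2 only to get $r_n=o(n^{-1/4})$, which already follows from G-A1. As written, the hypothesis that actually controls the bias is never used.

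A second, smaller problem is in G-V1(ii). A conditional variance of order $|s-\tau|$ for $Z_i(s)-Z_i(\tau)$ is too weak. With about $n\underline h^{p}$ points per bin and $|s-\tau|\lesssim r_n$, it gives a per-bin fluctuation of order $\{r_n/(n\underline h^{p})\}^{1/2}$. That is $o(n^{-1/2})$ only if $r_n=o(\underline h^{p})$, which G-A1 does not guarantee.

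The correct order is $|s-\tau|^{2}$, because $(Y-q)\mathbf{1}\{Y\ge q\}$ is continuous at the threshold. The paper exploits this by writing $(1-s)\{Z_i(s)-v(s,X_i)\}-(1-\tau)\{Z_i(\tau)-v(\tau,X_i)\}$ as the sum of two pieces:
\begin{itemize}
\item the centered version of $u_{4i}(s)=\{q(s,X_i)-Y_i\}\mathbf{1}\{q(\tau,X_i)\le Y_i\le q(s,X_i)\}$, which is $O(|s-\tau|)$ in size and monotone in $s$ (handled by a sandwich bound, Hoeffding, and a union bound over bins);
\item $u_{5i}(s)=\{q(\tau,X_i)-q(s,X_i)\}\{\tau-\mathbf{1}(Y_i\le q(\tau,X_i))\}$, which is $O(|s-\tau|)$ and Lipschitz in $s$ (handled by a grid of mesh $n^{-1}$ and Bernstein).
\end{itemize}
Both pieces are then $o_P(n^{-1/2})$ uniformly in $m$ and $s$ under G-A1.

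Finally, G-Q is not a Donsker-type condition. G-Q(ii) directly assumes that the bin-wise first-order term in $\hat q-q$ is $o_P(n^{-1/2})$ uniformly. The two quadratic remainders are $O_P(g_{1n}^{2})$ by Lemma~\ref{lemma::proof-U13}. So no empirical-process argument over the class containing $\hat q$ is needed.
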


Theorem \ref{thm::example-RRM-SQ} shows that our 
locally linear ES estimator in~\eqref{eq::ll-SQ} satisfies Conditions \ref{cond::vhat1} and \ref{cond::bin-linear} required by Theorem \ref{thm::asymptotic-RRM-bin}, and therefore can be used as an initial estimator for the i-Rock approach. Furthermore, the resulting asymptotic variance-covariance matrix in Theorem \ref{thm::example-RRM-SQ} has the same form as that in Theorem \ref{thm::RRM-modified} for discrete covariates.

Compared to the case with discrete covariates, the main technical challenges behind Theorem \ref{thm::example-RRM-SQ} can be summarized as follows. 
First, the number of bins $M = M_n$ increases with the sample size. Therefore, the uniform convergence rate of the initial estimators over the bins needs to be carefully investigated.
Second, we need to establish the process convergence of $\hat{v}(s,\bxm)$ for a continuum of $s$. Standard empirical process tools do not apply directly to the binned data. Moreover, we also need to explicitly analyze the bias in $\hat{v}(s,\bxm)$ attributed to binning and quantile estimation.

\subsection{Optimal weighting and efficiency comparison}
To achieve optimal weights under the weighted least-squares framework in~\eqref{eq::WLS}, we take it further with additional weights $\omega_m$ on the i-Rock loss function, namely, 
\begin{equation}
\widehat{\beta} = \argmin_{{u}\in\mathbb{R}^{p+1}} \;\; \sumM \hgamma_m \omega_m \int_{0}^1\; \rho_\tau\left(\hat{v}(\alpha,\bar{x}_m) - \bar{x}_m^T{u}\right)\,\d\alpha\label{eq::weighted-iRock}.
\end{equation}
We show that 
  \begin{equation}\label{eq::weight-iRock}
    \omega_m = \frac{v(\tau,x_m) - q(\tau,x_m)}{\sigma^2_\tau(x_m)},
\end{equation}
where $\sigma^2_\tau(\cdot)$ is defined in Theorem~\ref{thm::example-RRM-SQ}, is the theoretically optimal weights in this class. 
\begin{theorem}
    Under the same assumptions of Theorem~\ref{thm::example-RRM-SQ}, the weighted i-Rock estimator \eqref{eq::weighted-iRock}--\eqref{eq::weight-iRock} achieves the minimum asymptotic variance attainable under the weighted least-squares framework~\eqref{eq::bin-weighted-linearization}, with the asymptotic variance
 $    \left[E\left\{\frac{XX^T}{\sigma^2_\tau(X)}\right\}\right]^{-1}.$
\end{theorem}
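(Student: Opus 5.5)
The plan is to rerun the proof of Theorem~\ref{thm::asymptotic-RRM-bin} with the bin weights $\hgamma_m$ replaced by $\hgamma_m\omega_m$. Then Theorem~\ref{thm::example-RRM-SQ} supplies the sandwich variance, and a Gauss--Markov / Cauchy--Schwarz argument shows the resulting weight is optimal within the class~\eqref{eq::bin-weighted-linearization}. The $\omega_m$ are deterministic functions of $\bar{x}_m$ and are bounded away from $0$ and $\infty$, by Conditions~\ref{cond::bin-covar}--\ref{cond::bin-SQ} and the boundedness of $\sigma^2_\tau$ implied by the moment conditions in Appendix D.1. So every uniform-in-$m$ bound used for Theorem~\ref{thm::asymptotic-RRM-bin} carries over with constants multiplied by $\sup_m\omega_m$.

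\textbf{Step 1 (linearization).} The first-order condition of the weighted loss, localized in $u$ near $\beta$, gives
\[
\widehat\beta-\beta = D_\omega^{-1}\sumM \frac{\hgamma_m\omega_m\bar{x}_m}{v(\tau,\bar{x}_m)-q(\tau,\bar{x}_m)}\{\hat v(\tau,\bar{x}_m)-v(\tau,\bar{x}_m)\}+\op(n^{-1/2}),
\qquad
D_\omega=\E\left[\frac{\omega(X)XX^T}{v(\tau,X)-q(\tau,X)}\right].
\]
With $\omega(x) = \{v(\tau,x)-q(\tau,x)\}/\sigma^2_\tau(x)$ this simplifies to $D_\omega = \E[XX^T/\sigma_\tau^2(X)]$. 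Condition~\ref{cond::bin-linear} for the $\omega$-weighted sum follows from the same argument that Theorem~\ref{thm::example-RRM-SQ} used for the unweighted one, since $\omega$ is Lipschitz in $x$.

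\textbf{Step 2 (CLT).} The bin-wise Bahadur expansion of $\hat v(\tau,\bar{x}_m)$ from the proof of Theorem~\ref{thm::example-RRM-SQ} (Neyman-orthogonal scores, so quantile estimation error is negligible) gives
\[
\sumM \hgamma_m g(\bar x_m)\bar{x}_m\{\hat v-v\} = n^{-1}\sumn g(X_i)X_i\{Z_i(\tau)-v(\tau,X_i)\}+\op(n^{-1/2})
\]
for any bounded Lipschitz $g$. Here $Z_i(\tau)$ is the oracle transformed response, and $\var(Z(\tau)\mid X=x)=\sigma^2_\tau(x)$. Taking $g=\omega/(v-q)=1/\sigma^2_\tau$, the variance of the leading term is $\E[XX^T/\sigma^2_\tau(X)]$. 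Hence the asymptotic variance is $D_\omega^{-1}\E[XX^T/\sigma_\tau^2]D_\omega^{-1}=\{\E[XX^T/\sigma^2_\tau(X)]\}^{-1}$.

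\textbf{Step 3 (optimality).} By the same representation, any WLS estimator~\eqref{eq::bin-weighted-linearization} with weight function $w(x)$ has asymptotic variance $A_w^{-1}B_wA_w^{-1}$, where $A_w=\E[wXX^T]$ and $B_w=\E[w^2\sigma^2_\tau XX^T]$. Let $U=w\sigma_\tau X$ and $V=X/\sigma_\tau$. The matrix Cauchy--Schwarz inequality $\E[UV^T]\{\E[VV^T]\}^{-1}\E[VU^T]\preceq\E[UU^T]$ gives $A_w\{\E[XX^T/\sigma^2_\tau]\}^{-1}A_w\preceq B_w$. Equivalently $A_w^{-1}B_wA_w^{-1}\succeq\{\E[XX^T/\sigma_\tau^2]\}^{-1}$, with equality at $w\propto1/\sigma^2_\tau$. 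This is exactly the effective weight $\omega_m/(v-q)$ attained by the weighted i-Rock estimator.

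The main obstacle is Step 1. We need the uniform remainder control of Theorem~\ref{thm::asymptotic-RRM-bin}, namely the local convexity argument and the inversion via the ES process near $\tau$, to hold with the extra weights. This is routine only if $\omega$ is bounded and smooth, so I would first verify those properties from the conditions in Appendix D.1. If $\omega_m$ must be estimated, I would additionally show that $\sup_m|\hat\omega_m-\omega_m|=\op(1)$ suffices, because the weights enter only through $D_\omega$ at first order, and the score term is centered.
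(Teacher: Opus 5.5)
Your proposal is correct and follows the paper's proof. The paper likewise reruns the convexity/quadratic-approximation argument of Theorem~\ref{thm::asymptotic-RRM-bin} with the bounded deterministic weights $\hgamma_m\omega_m$ (Proposition~\ref{prop::weighted_iRock_linearization}), reduces the Hessian to $D_2=\E\{XX^T/\sigma^2_\tau(X)\}$, and proves the weighted CLT of Proposition~\ref{prop::condL_weighted} with limit $\mathrm{N}(0,D_2)$, so the sandwich collapses to $D_2^{-1}$.

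Your matrix Cauchy--Schwarz step only makes explicit the Gauss--Markov optimality over \eqref{eq::bin-weighted-linearization} that the paper takes as standard. The one property you cannot ``verify from the conditions in Appendix D.1'' is continuity of $x\mapsto\sigma^2_\tau(x)$, since it depends on the whole upper tail of $Y\mid X=x$. Your Step~1 and Step~2 limits need it, and the paper's passage from bin-centre sums to $D_2$ also uses it tacitly, so it has to be added as an assumption rather than derived.
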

We also note that the optimally weighted i-Rock estimator is asymptotically equivalent to the optimal joint M-estimator in~\cite{dimitriadis2022efficiency} as well as the optimally weighted two-step approach in~\cite{barendse2020efficiently}. We defer the detailed comparison of the three optimally weighted approaches to Appendix D.7.1. 
In practice, the optimal weights of all these approaches need to be estimated from data. An interesting feature of the (unweighted) i-Rock approach is that it corresponds to implicit weights in the framework of~\eqref{eq::bin-weighted-linearization} that adapt well to data heterogeneity automatically and are usually highly correlated with the optimal weights as discussed at the end of Section \ref{subsec::asy_linear}. On the other hand, for general model classes, the optimally weighted ES regression estimators given here may still have an efficiency gap with the semiparametric efficiency given in~\cite{dimitriadis2022efficiency}.

The comparisons above assume that we have linear (or parametric) quantile function specifications, which is not needed for the
the i-Rock approach. In other words, the i-Rock estimators tend to be more robust against deviations from linear quantile models.

In our empirical work, we find that the (unweighted) i-Rock estimators can be comparable to the optimally weighted two-step estimators when the weights are estimated from the data; see Appendix D.7.2. In those cases, the estimation of the weights leads to some loss of finite-sample efficiency, further arguing for the value of the automatic i-Rock approach without weight estimation for data of moderate sizes.





\section{Numerical investigations}\label{sec::numerical}
In this section, we demonstrate the practical applicability of the i-Rock estimator and
investigate its numerical performance through  simulation studies. 
For discrete covariates, we implement the i-Rock approach in Algorithm 1 from Appendix G.1 of the Supplementary Material based on Section~\ref{sec::discrete} and Corollary~\ref{coro::truncation}.
For continuous or mixed covariates, we adopt a variant of~\eqref{eq::estimator-rock-binning} summarized in Algorithm 2 of the Supplementary Material, which uses the bin-wise local-linear initial ES estimator introduced in Section~\ref{subsec::examples}. 
In particular, we partition the covariate space by binning each covariate. Discrete covariates are naturally partitioned according to their distinct values, while continuous covariates are divided using breakpoints at equally spaced quantiles. For subsequent experiments, we set the number of bins for each continuous covariate as $k = \lceil 1.6 \sqrt{p} \times \{\sqrt{n}/\log(n)\}^{1/p}\rceil$, where $p$ refers to the number of continuous covariates with a slight abuse of notation.

We report simulation studies to compare the performance of the i-Rock approach with that of the two-step approach discussed in Section~\ref{subsec::efficiency}, and to check the approximate normality of the i-Rock estimator in finite-samples. 
We defer the comparison with the ``quantile average method," which involves averaging over a series of linear quantile estimators from quantile levels $\tau$ to $1$, to Appendix G.3 of the Supplementary Material.
In our studies with continuous covariates, 
we consider two types of quantile function estimators in~\eqref{eq::ll-SQ},
namely, the (global) linear quantile and the B-spline quantile estimators. 
The linear quantile estimation uses all the available data to fit a linear quantile regression.
The B-spline quantile estimation uses additive piecewise linear quantile function where two internal knots are placed at the $1/3$ and $2/3$ quantiles of each observed covariate.     
We compare our proposed i-Rock implementation with the two-step 
estimator proposed by \cite{barendse2020efficiently} in terms of the relative bias and the Root Mean Squared Error (RMSE). More specifically, we use 500 replications to
calculate  (1) the relative bias
$(\bar \beta_i - \beta_i)/SD( \hat \beta_i^{(j)} )$, 
where $\beta_i$ is the $i$-th component of $\beta$, $\hat \beta_i^{(j)}$ is the estimator for $\beta_i$ in the $j$-th replicate, and $\bar \beta_i = \frac{1}{500}\sum_{j=1}^{500} \hat \beta_i^{(j)}$; and (2) the Root Mean Squared Error (RMSE) ratio of the two-step approach over the i-Rock approach, with a ratio greater than 1 indicating better efficiency for the i-Rock estimator.  For the remainder of this section, we denote $X_{i,j}$ as the $j$-th component of $X_i$, and consider bounded covariates as in our theory. We add that the proposed estimator still performs well if we are only interested in one quantile level and the linear model holds over an unbounded covariate, or correlated covariates, as demonstrated in Appendix G.4 of the Supplementary Material. 

\paragraph{Case 5.1}
We generate data as a random sample from a linear heteroscedastic model with two-dimensional covariates, namely,
\begin{equation}\label{eq::2d_cont_disc}
    Y_i = \{1 + U \} + (2 + 2 U) X_{i,1} + \{3 + 3 U\} X_{i,2}, \quad i=1,\ldots,n,
\end{equation}
where 
$U$ follows $U(0,1)$, $X_{i,1}$ follows $U(0,4)$,  
and $X_{i,2}$ are independently distributed from $\{0,1\}$ with equal probability.
Figure~\ref{fig::2d_cont_disc} shows the relative bias and the RMSE ratio comparisons for $\tau \in \{0.8,0.9\}$ and $n\in\{5000,10000\}$. The i-Rock estimators with both the linear and the B-spline quantile regression estimation outperform the two-step estimator in bias and RMSE in all settings. This is consistent with the theoretical finding in Section~\ref{subsec::efficiency} that the i-Rock approach offers a better adaptation to heterogeneity. 
At these relatively large sample sizes, we verify that the sampling distribution of the i-Rock estimator $(\hat \beta_0,\hat \beta_1,\hat \beta_2)$ follows a normal distribution with the theoretical asymptotic variances in~\eqref{eq::asy_normal} very closely by Kolmogorov–Smirnov test; see Appendix G.2 of the Supplementary Material. 
\begin{figure}[tb]
\centering 
\begin{subfigure}[b]{0.48\textwidth}
         \centering
         \includegraphics[width = \textwidth]{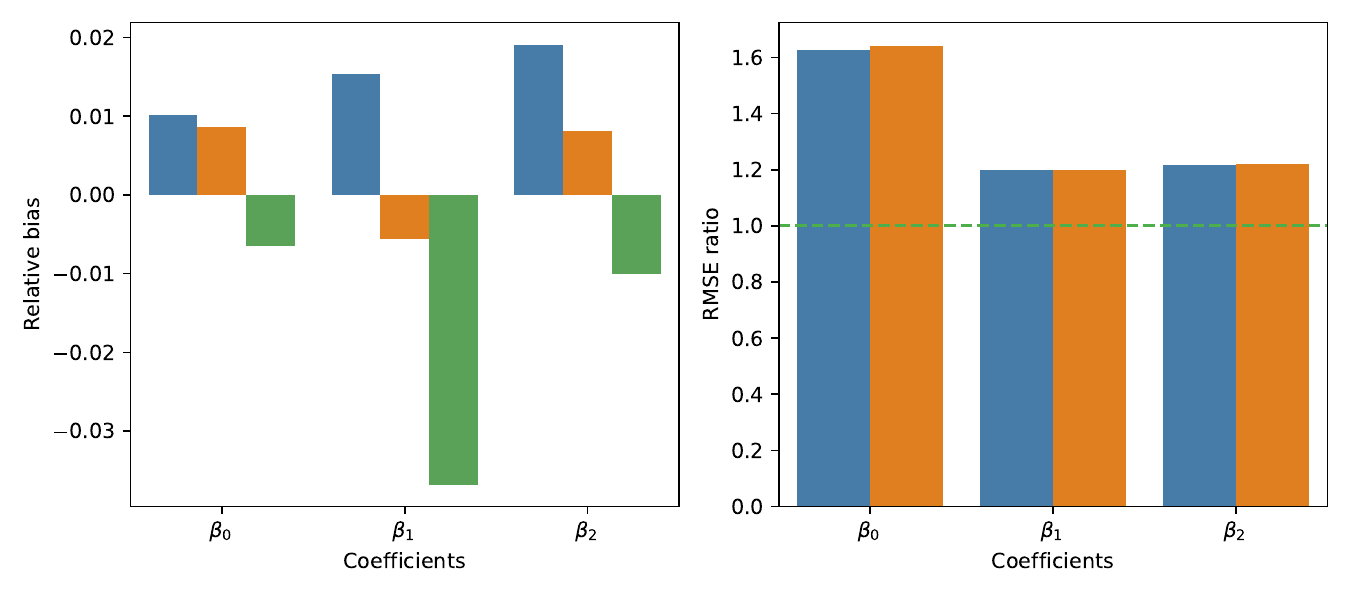}
         \caption{$n = 5000, \tau = 0.8$}
     \end{subfigure}
     \begin{subfigure}[b]{0.48\textwidth}
         \centering
         \includegraphics[width = \textwidth]{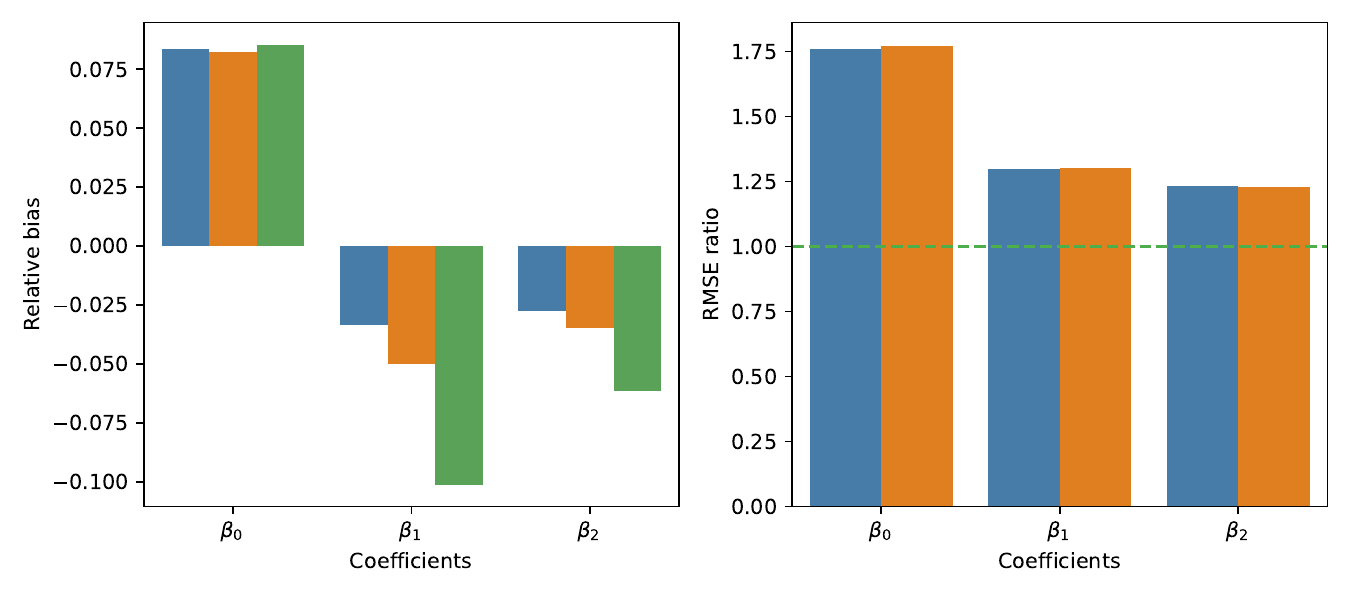}
         \caption{$n = 10000, \tau = 0.8$}
     \end{subfigure}
     \begin{subfigure}[b]{0.48\textwidth}
         \centering
         \includegraphics[width = \textwidth]{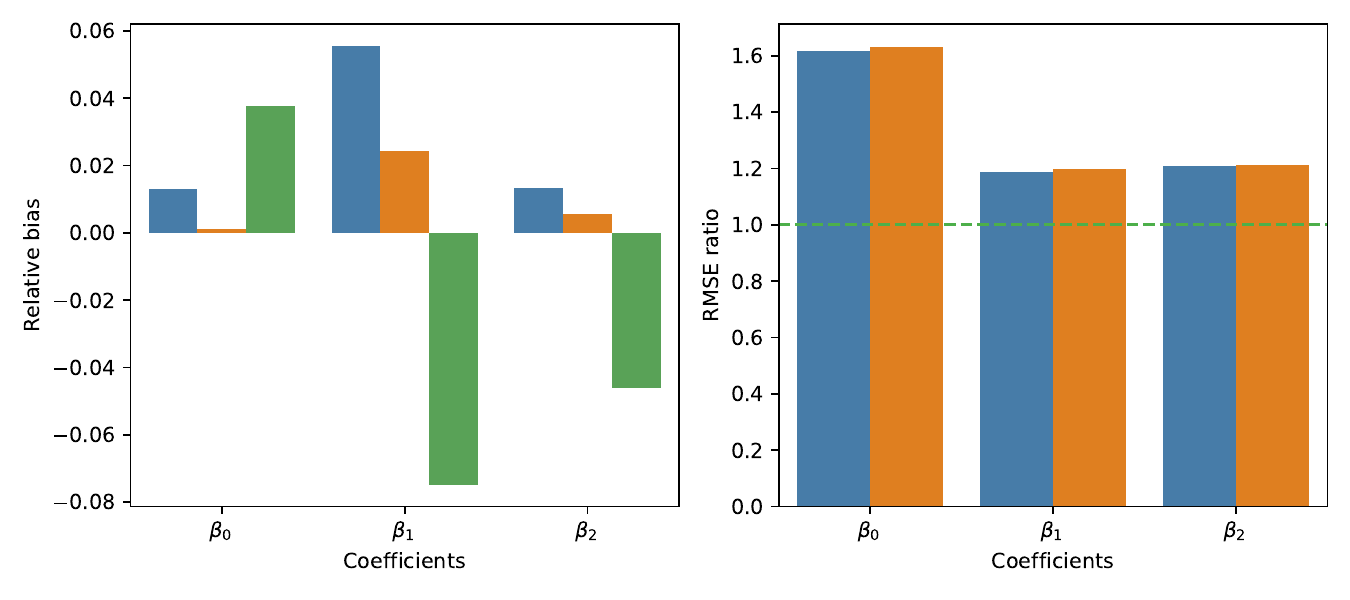}
         \caption{$n = 5000, \tau = 0.9$}
     \end{subfigure}
     \begin{subfigure}[b]{0.48\textwidth}
         \centering
         \includegraphics[width = \textwidth]{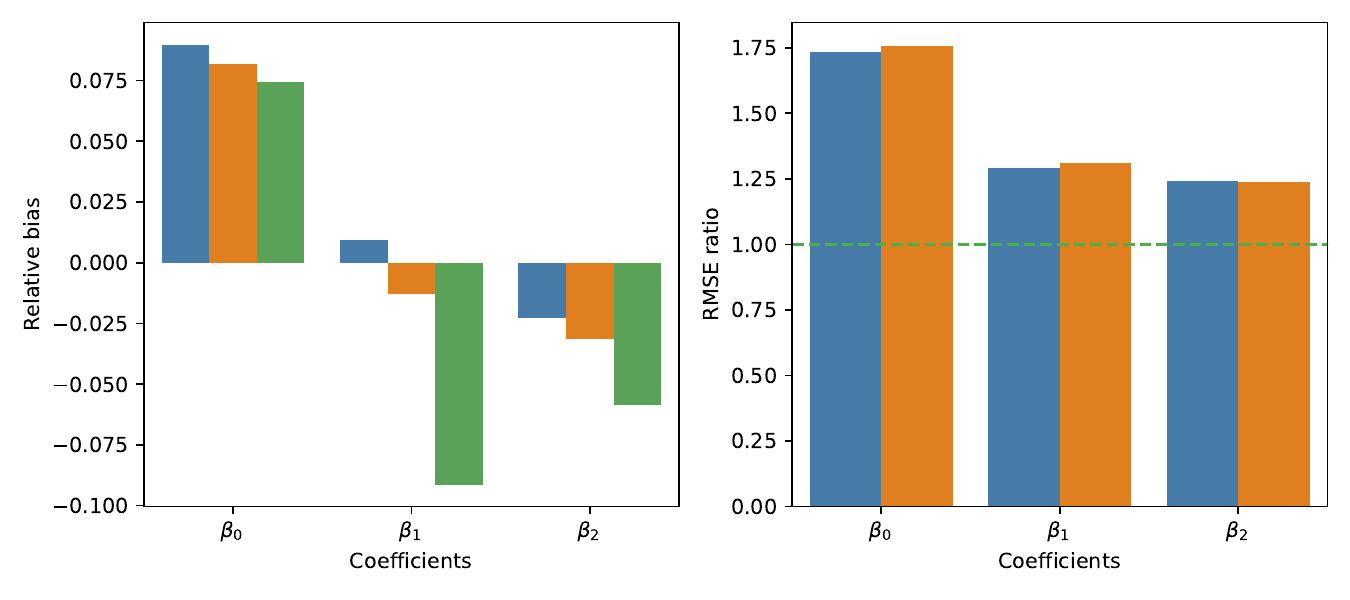}
         \caption{$n = 10000, \tau = 0.9$}
     \end{subfigure}
     \begin{subfigure}[b]{\textwidth}
         \centering
         \includegraphics[width = \textwidth]{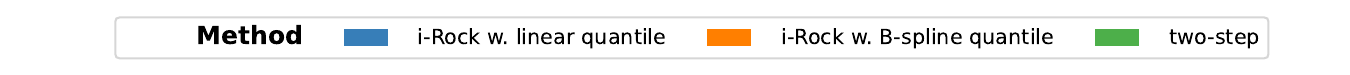}
     \end{subfigure}
\caption{Numerical comparisons of the i-Rock approach (with linear or B-spline quantile function estimation) and two-step estimator under linear heteroscedastic model~\eqref{eq::2d_cont_disc} at various quantile levels and sample sizes.}
\label{fig::2d_cont_disc}
\end{figure}
\paragraph{Case 5.2}
\begin{figure}[tb]
    \centering
     \includegraphics[scale=0.5]{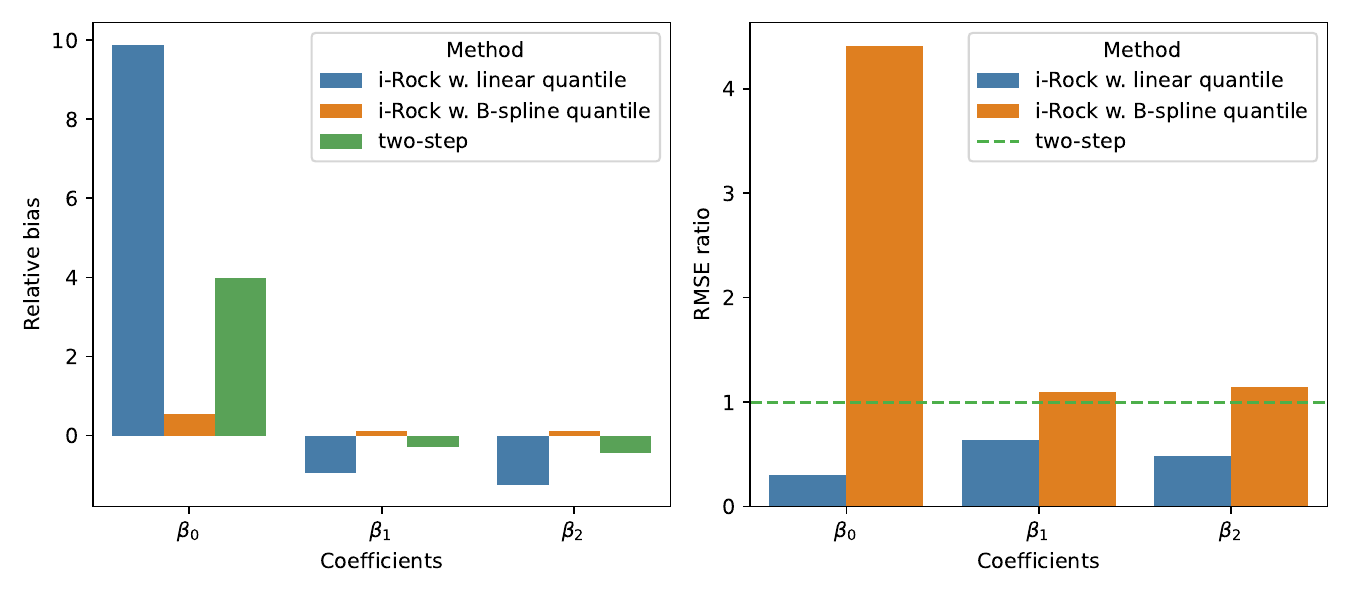}
    \caption{Numerical comparisons of the i-Rock approach (with linear or   B-spline quantile function estimation) and two-step approach under Model~\eqref{eq::sim_2d_nonlinear} at $\tau=0.9$, $n = 10000$.
    }
\label{fig:2d_nl}
\end{figure}
To further illustrate the flexibility of the i-Rock approach under non-linear quantile models, we generate data as a random sample from the following model, where the conditional expected shortfall at level 0.9, but not other levels, is linear in the covariates, namely, 
\begin{equation}\label{eq::sim_2d_nonlinear}
Y_i = -1 + 2 X_{i,1} -3 X_{i,2} + \left(24 X_{i,1}^2 + 12 X_{i,2}^2 + 5\right) (\epsilon_i-\nu_0), \quad i=1,\ldots,n,
\end{equation}
where $(X_{i,1}, X_{i,2})$ is uniformly distributed in a two-dimensional square $[-1,2]^2$,
$\epsilon_i$ follows the skewed-$t_5$ distribution with skewness 2 \citep{theodossiou1998financial} that is independent of the covariates, and $\nu_0$ is the $0.9$-th ES of the distribution of $\epsilon_i$. 
Figure~\ref{fig:2d_nl} shows the results for $\tau=0.9$, $n = 10000$. 
In this case, both the i-Rock with linear quantile function estimation and the two-step estimator suffer from the misspecification of the quantile function, while the i-Rock with the B-spline quantile function estimation is significantly less biased and more efficient than the two-step estimator (with the RMSE ratios as high as 5.2). The asymptotic normality for the i-Rock estimators with B-spline quantile function is numerically confirmed with the Q-Q plots in Figure 4 of the Supplementary Material. 

\paragraph{Case 5.3}
To further evaluate the superior adaptation to data heterogeneity of the i-Rock approach, we generate data from a two-dimensional model where the heterogeneity of the conditional distribution is more evident than in the earlier cases, namely,
\begin{equation}\label{eq:2d}
    Y_i = \{1 - \log(1 - U)\} + (2 + 2 U) X_{i,1} + \{3 - 30 \log(1-U)\} X_{i,2}, \quad i=1,\ldots,n,
\end{equation}
where $U$ is uniformly distributed on $(0,1)$, and $(X_{i,1},X_{i,2})$ are independently distributed from $\text{binomial}(2, 0.5)$. 
The RMSE ratios are summarized in Table~\ref{tab::2dim_RMSE} at several sample sizes.
In this case, the i-Rock estimator 
is significantly more efficient than the two-step estimator, due to the automatic effective weighting schemes of the former. 

\begin{table}[b]
    \centering
    \begin{tabular}{c|ccc}
n &  $\beta_0$ &  $\beta_1$ &  $\beta_2$ \\
\hline
1000 &   7.19 &   7.18 &   1.61 \\
2000 &   9.23 &   7.69 &   1.50 \\
5000 &  10.41 &   8.64 &   1.63 \\
    \end{tabular}
    \caption{RMSE ratio of the two-step estimator over the i-Rock estimator under model~\eqref{eq:2d}.}
    \label{tab::2dim_RMSE}
\end{table}
\section{Data Application}\label{sec::data_application}
Low birth weight is long-known to be associated with increased infant mortality risk and long-term health issues; See \citet{hughes20172500} for a recent review. 
The health disparity for infants with low birth weights among different ethnic groups has drawn increasing attention from policy makers, such as the National Institutes of Health (NIH) and the Centers for Disease Control and Prevention (CDC) \citep{osterman2024births}, among other researchers \citep{burris2017birth,su2021racial,pollock2021trends}. 
In this example, we use ES regression to investigate the possible contributing factors of racial disparities for low birth weight. 
We naturally focus on the lower (left-tail) ES of the birth weight distribution conditional on other factors, and measure the racial disparity as the differences in terms of the lower ES at a given quantile level (e.g., $\tau=0.05$) between two ethnic groups. 
Let $Y$ denote the birth weight, and $R=1$ and $0$ represent one of the disadvantaged group (e.g., Black, Asian, or Hispanic) and the majority group (White in this example), respectively, with $X$ as other factors under consideration. 
For each disadvantaged group, we define the health disparity function at a given covariate $X=x$ and a pre-specified quantile level $\tau$ as
\begin{equation}\label{eq::health_disparity}
\begin{aligned}
    d(\tau,x) 
    = v_{[-Y\mid (R,X)]}\{{\color{blue}1-\tau},(1,x)\} - v_{[-Y\mid (R,X)]}\{{\color{blue}1-\tau},(0,x)\},\\
\end{aligned}
\end{equation}
where the upper tail ES of $-Y$ is used here to replicate the lower tail of $Y$ and be consistent with our theory in this paper.
 
  



We use the 2022 U.S. birth-weight dataset, which is available online at the National Center for Health Statistics (\url{https://www.cdc.gov/nchs/data_access/vitalstatsonline.htm}).
In this example, we focus on male singleton births only, and include the following potential risk factors: mother's race, age, education level, presence of gestational diabetes or hypertension, smoking during the third trimester, number of prenatal visits, receipt of a Special Supplemental Nutrition Program for Women, Infants, and Children (WIC), and parents' marital status. In particular, the WIC is a program designed to help low-income pregnant women, infants, and children up to age 5 receive proper nutrition by providing vouchers for food, nutrition counseling, health care screenings and referrals (\url{https://www.fns.usda.gov/wic/about-wic-glance}).
After combining both the U.S. data and U.S. Territories Data, 
and removing entries with missing values for the variables of interest, we have a total of $n=1,534,031$ observations, including $ 826,367$ as White, $226,327$ as Black, $774,24$ as Asian, and $ 359,118$ as Hispanic individuals. 
  Due to the potential non-linear relationship between birth weight and mother's age and the number of prenatal visits, we discretize age (in years) into three categories: $<20$, $[20,34]$ (as baseline) and $>34$; and we discretize the number of prenatal visits into three categories: $\leq 5$ (as baseline), $[6,10]$, and $> 10$. 

\begin{table}[tb]
\centering
\caption{{ 
Estimated coefficients for the lower ES regression of birth weight at the quantile level $\tau = 0.05$ using the i-Rock approach. 
The numbers in the parenthesis show the standard errors.}}
\label{tab::BW-mRock}
\renewcommand{\arraystretch}{0.9} 
\resizebox{\textwidth}{!}{
\begin{tabular}{lc|lc}
  \toprule
  Covariates & Coefficients &Covariates & Coefficients\\
  \hline
\multirow{2}{*}{\shortstack[l]{\textbf{ (Intercept) }}} &
\multirow{2}{*}{\shortstack{1627.34\\{\tiny( 5.37 )}}}  & \multirow{2}{*}{\shortstack[l]{\textbf{Gestational diabetes}\\{\tiny(baseline: no gestational diabetes)}} } & \multirow{2}{*}{\shortstack{-34.02\\{\tiny( 6.20 )}}}\\
\\ \hline 
\multirow{2}{*}{\shortstack[l]{\textbf{Race $=$ black}\\{\tiny(baseline: white)}} } & \multirow{2}{*}{\shortstack{-249.97\\{\tiny( 7.98 )}}} & \multirow{2}{*}{\shortstack[l]{\textbf{Gestational hypertension}\\{\tiny(baseline: no gestational hypertension)}} } & \multirow{2}{*}{\shortstack{-447.88\\{\tiny( 4.57 )}}}
\\
\\ \hline 
\multirow{2}{*}{\shortstack[l]{\textbf{Race $=$ asian}\\{\tiny(baseline: white)}}} & \multirow{2}{*}{\shortstack{-193.55\\{\tiny( 6.27 )}}} & \multirow{2}{*}{\shortstack[l]{\textbf{Cigarettes at 3rd trimester}\\{\tiny(baseline: no cigarette use at 3rd trimester)}} } & \multirow{2}{*}{\shortstack{-174.59\\{\tiny( 5.68 )}}} \\
\\ \hline 
\multirow{2}{*}{\shortstack[l]{\textbf{Race $=$ hispanic}\\{\tiny(baseline: white)}} }  & \multirow{2}{*}{\shortstack{-61.77\\{\tiny( 1.12 )}}} & \multirow{2}{*}{\shortstack[l]{\textbf{ Mother's age $<20$ }\\{\tiny(baseline: age $[20,34]$)}} } & \multirow{2}{*}{\shortstack{-19.98\\{\tiny( 14.25 )}}} \\
\\ \hline
\multirow{2}{*}{\shortstack[l]{\textbf{Prenatal visits $\in [6,10]$ }\\{\tiny(baseline: $[0,5]$)}} } & \multirow{2}{*}{\shortstack{437.05\\{\tiny( 6.24 )}}}
& \multirow{2}{*}{\shortstack[l]{\textbf{ Mother's age $>34$ }\\{\tiny(baseline: age $[20,34]$)}} } & \multirow{2}{*}{\shortstack{-94.45\\{\tiny( 4.31 )}}} \\
\\ \hline
\multirow{2}{*}{\shortstack[l]{\textbf{Prenatal visits $> 10$ }\\{\tiny(baseline: $[0,5]$)}} } & \multirow{2}{*}{\shortstack{832.55\\{\tiny( 5.85 )}}} & \multirow{2}{*}{\shortstack[l]{\textbf{Receipt of WIC}\\{\tiny(baseline: not receipt of WIC)}}} & \multirow{2}{*}{\shortstack{8.25\\{\tiny( 3.11 )}}}
\\
\\ \hline
\multirow{2}{*}{\shortstack[l]{\textbf{Education: $\geq$ college}\\{\tiny(baseline: high school and below)}}}
&
\multirow{2}{*}{\shortstack{63.21\\{\tiny( 4.36 )}}} 
& \multirow{2}{*}{\shortstack[l]{\textbf{Unmarried}\\{\tiny(baseline: married)}} } & \multirow{2}{*}{\shortstack{-77.56\\{\tiny( 5.99 )}}} \\
\\ \hline
\multirow{2}{*}{\shortstack[l]{\textbf{Education: some college}\\{\tiny(baseline: high school and below)}}}
& \multirow{2}{*}{\shortstack{-4.25\\{\tiny( 4.72 )}}}  \\ \\
\bottomrule
\end{tabular}
}
\end{table}
To estimate the health disparity function \eqref{eq::health_disparity} at $\tau=0.05$, we start with a linear model for the lower ES of the birth weight against the dummy variables for all four races and the other potential risk factors. 
The regression coefficients and their standard errors calculated by the bootstrap
are summarized in Table~\ref{tab::BW-mRock}. With the additive model, the
health disparity in~\eqref{eq::health_disparity} for each race is constant across the other factors $X$ and equals the corresponding ES regression coefficient. For example, the results show that the average lowest 5\% of the birth weight of the Black population is lower (than that of the White population) by around 250 grams. The difference is around 193 grams for Asians and 45 grams for Hispanics.
 


\begin{figure}[tb]
\centering
\includegraphics[scale=0.3]{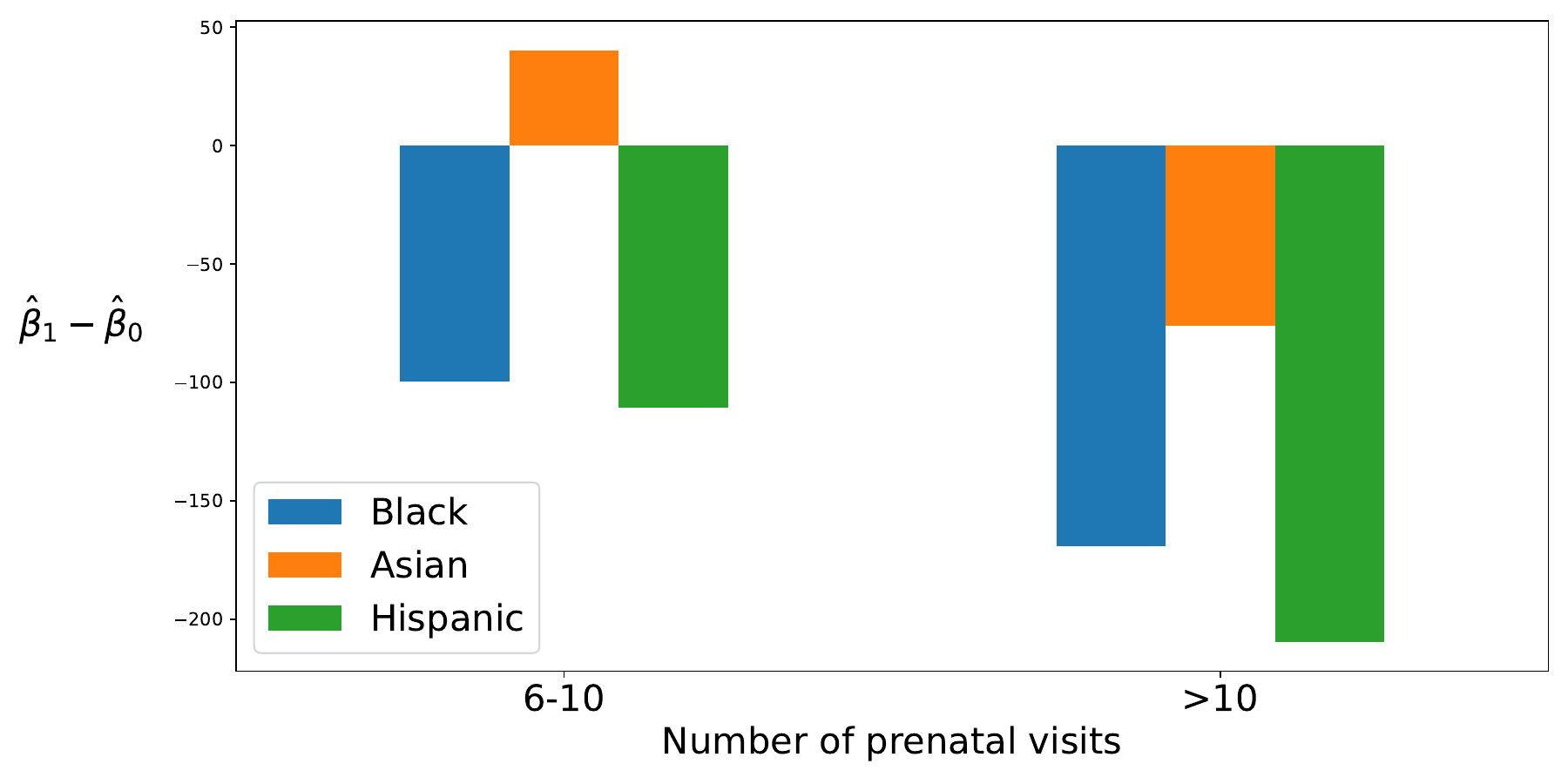}
    \caption{  The quantities shown from part of $\beta_1 - \beta_0$ associated with Equation~\eqref{eq::disparity_driver} are the birth weight disparities  (the lower $0.05$ ES) of the disadvantaged groups for subgroups defined by the number  of prenatal visits: $[6, 10]$ and $>10$, with the subgroup of  $ \leq 5$ prenatal  visits serving as the baseline.  
    } 
\label{fig::prenatal_visit_difference_in_difference}
\end{figure}

Since the additive linear model across all races is quite tentative, we now fit the linear ES regression separately for each race, which amounts to allowing two-way interactions between race and other factors $X$. Under this model,
the health disparity function \eqref{eq::health_disparity} can be written as
\begin{equation}\label{eq::disparity_driver}
    d(\tau,x) = x^T (\beta_1 - \beta_0), 
\end{equation}
where $\beta_1$ and $\beta_0$ represent the ES regression coefficient at level $\tau$ for a disadvantaged racial group and the majority group (i.e., White), respectively. 
  Conditioning on all the other factors, the prenatal visits and gestational hypertension are two noteworthy factors to the health disparity across all disadvantaged races, according to the estimates for $\beta_1-\beta_0$ presented in Figure 13 in Appendix G.5 of the online Supplementary Materials. 
In particular, Figure~\ref{fig::prenatal_visit_difference_in_difference} shows the estimated components of $\beta_1-\beta_0$ corresponding to prenatal visits, representing the differences in the disparity between each disadvantaged racial groups and White individuals when the number of prenatal visits is in $[6,10]$ or $>10$, as compared to the subgroup of no more than 5 prenatal visits.  
\begin{table}[tb]
\centering
\caption{
The percentages of participants receiving WIC and unmarried, respectively, for different ethnic groups, stratified by  the number of prenatal visits. The numbers in parenthesis show the percentages of individuals in the categories of prenatal visits within each race group.  
}
\label{tab::BW-precentage}
\resizebox{\textwidth}{!}{  
\begin{tabular}{c|ccc|ccc|ccc}
  \toprule
    \textbf{Race}&\multicolumn{3}{c|}{\textbf{White}} & \multicolumn{3}{c|}{\textbf{Black}}& \multicolumn{3}{c}{\textbf{Hispanic}}\\
      \textbf{ \# of prenatal visits } & $[0,5]$ & $[6,10]$ & $>10$&$[0,5]$ & $[6,10]$ & $>10$&$[0,5]$ & $[6,10]$ & $>10$\\
       & {\tiny (5.96\%)} & {\tiny(31.19\%)}& {\tiny(62.85\%)} & {\tiny(13.78\%)} & {\tiny(38.74\%)} & {\tiny(47.48\%)}
       &{\tiny(11.87\%)}&{\tiny(39.63\%)}&{\tiny(48.50\%)}\\
       \hline
       Receipt of WIC(\%) & 24.9 & 20.4 & 17.5& 40.2 & 45.8 & 45.5
       &44.71&46.79&45.86\\
      Unmarried(\%) & 44.5 & 28.9 &24.9 &77.7 & 70.0 & 66.6 
&64.52&54.81&49.92\\
\bottomrule
\end{tabular}
}
\end{table}

It is interesting to note that the disparities of Black and Hispanic relative to White are most evident when the number of prenatal visits is high ($>10$ times). To elucidate possible explanations for this phenomenon, we perform a logistic regression for the binary indicator for prenatal visits ($>10$ times versus $\leq5$ times) 
on the other covariates separately for each race and compare the coefficients across races. We find that how WIC enrollment and marital status associate with the frequency of prenatal visits differs across racial groups. In Table~\ref{tab::BW-precentage}, we present the percentages of WIC receipt and unmarried women, stratified by race and number of prenatal visits. The results show that the percentage of the receipts of the WIC {\it decreases} as the number of prenatal visits increases for White women, whereas the percentage actually increases among Black and Hispanic women. 
Despite potentially benefiting from the WIC program interventions, 
Black and Hispanic women with more frequent prenatal visits
may have lower socioeconomic status and wealth compared to their White counterparts. This socioeconomic gap may exacerbate disparities for the subgroup with more than 10 prenatal visits. Marital status may also play a role, as the unmarried percentage decreases drastically with more prenatal visits for White women, but not as evident for Black and Hispanic women. 
In this application, combining ES regression with mean and quantile regression provides additional insights, as detailed in Appendix G.5.
\section{Conclusions and discussions}\label{sec::conclusion}
In this paper, we propose a new optimization-based approach to the ES regression estimation. In contrast to the superquantile regression estimator of \cite{rockafellar2014superquantile},
the proposed estimator is consistent for the ES regression coefficient under heterogeneity. 

 Relative to other methods of ES regression estimation, the i-Rock approach has interesting and desirable properties and has unique challenges of its own. Compared with the computationally simpler two-step approach \citep{barendse2020efficiently}, the i-Rock estimator has several advantages: it automatically incorporates heterogeneity-adaptive weights and does not require a linear assumption in the quantile model. If optimal weights are estimated from the data and then used in these methods, both estimators achieve the same asymptotic efficiency. The joint estimation of quantile and ES regressions offers additional bandwidth for the construction of estimators, and data-adaptively chosen loss functions can lead to the same asymptotic efficiency as the optimally weighted i-Rock method. However, such joint estimation methods have to find solutions that optimize non-convex and non-smooth loss functions, while the i-Rock estimator uses convex optimization. 

The major challenge in the i-Rock approach is the need to rely on initial ES regression estimates at a grid of quantile levels near the target quantile level $\tau$.  This increases the computational burden as well as the theoretical complexity in analyzing the resulting ES regression estimators. On the other hand, if we operate under the framework with linear quantile functions, as is assumed for the two-step or the joint estimation methods, the computational and the theoretical complexities of the i-Rock estimators are greatly reduced.

Intuitively, the proposed i-Rock approach enhances the interpretability of non-linear machine learning models by projecting a nonparametric model to a linear function in the spirit of interpretable machine learning.
We hope that the i-Rock approach proposed in the paper opens a new window of opportunities for the expected shortfall regression modeling and interpretable machine learning associated with expected shortfall analysis both in theory and in practice.

\section{Acknowledgment}
The research was supported in part by the National Science Foundation Awards DMS-2345035 and DMS-1951980. The authors report there are no competing interests to declare. They also extend their gratitude to the anonymous
Associate Editor and referees for their constructive comments, which contributed to
improvements in the paper.
\clearpage
\appendix
\renewcommand{\thesection}{\Alph{section}}
\numberwithin{equation}{section}

\begin{center}
    {\LARGE \bf Supplement to ``Expected Shortfall Regression via Optimization"}
\end{center}

\medskip

\begin{quote}
In this supplement, we provide the proofs, additional numerical results and discussions, and the code to the manuscript ``Expected Shortfall Regression via Optimization". We continue to use the notations and numbered equations from the main manuscript.  
\end{quote}

\tableofcontents
\newpage

{\color{black}
\section{Rockafellar's superquantile regression}
In this section, we start with a brief review of the optimization formulation for the superquantile estimation proposed in \cite{rockafellar2013fundamental} and \cite{rockafellar2014random}. In contrast to the   one-sample case, we demonstrate through a counterexample that the ES regression coefficient in (\ref{eq::SQ-model-linear}) is not a solution to the superquantile regression in \citet{rockafellar2014superquantile}. 



\subsection{The Rockafellar formulation revisited}
Let $Y$ be a random variable and $\tau \in (0,1)$ be the quantile level of interest. 
The $\tau$-th superquantile of $Y$, denoted as $\tilde v_{[Y]}(\tau)$, is defined exactly the same as the $\tau$-th ES of $Y$ in~\eqref{eq::def-SQ2}. 
By Theorem 1 of \cite{rockafellar2014superquantile}, the $\tau$-th superquantile of $Y$ is the minimizer of a loss function with unknown population quantities $\tilde v_{[Y]}(\alpha)$ for $\alpha \in (0,1)$, i.e., 
\begin{eqnarray}
\tilde v_{[Y]}(\tau) &=& \argmin_C\; C + \frac{1}{1-\tau}\int_0^1\max\{0,\tilde v_{[Y]}(\alpha) - C\}\,\mathrm{d}\alpha
\label{eq::RRM-optim-univariate}
\end{eqnarray} 
Given a finite sample, one can substitute the function $\tilde v_{[Y]}(\alpha)$ in \eqref{eq::RRM-optim-univariate} by an empirical estimator to obtain a feasible convex optimization formulation for the superquantile. \cite{rockafellar2018superquantile} proposes efficient numerical algorithms for (\ref{eq::RRM-optim-univariate}) via a dual method that does not require an explicit estimation of $\tilde v_{[Y]}(\alpha)$ in advance. 
Since $\tilde v_{[Y]}(\tau) = v_{[Y]}(\tau)$ in the one-sample case, 
this optimization-based formulation provides a conceptually valuable alternative to the ES estimation.

\subsection{A counterexample on the regression formulation}\label{subsec::SQ_not_ES}
Let $X = (1,\tX^T)^T \in \mathbb{R}^{p+1}$ be the covariate vector that includes an intercept term. 
By Section 3.1 of~\cite{rockafellar2014superquantile}, the superquantile regression coefficient 
is defined as the minimizer of a function directly extended from~\eqref{eq::RRM-optim-univariate}, namely, 
\begin{equation}
\label{eq::SQR}
\tilde \beta = \argmin_{{\theta}}\left[ \E(X^T{\theta}) + \frac{1}{1-\tau}\int_{0}^1\max\{0,\tilde{v}_{[Y-X^T{\theta}]}(\alpha)\}\;\mathrm{d}\alpha\right].
\end{equation}
However, with a simple counterexample, we illustrate that the superquantile regression coefficient $\tilde \beta$ in~\eqref{eq::SQR} does not coincide with the ES regression coefficients $\beta$ from \eqref{eq::SQ-model-linear}, even at the population level.
Consider the following model with $p=1$
\begin{equation}
Y = 1 + \tX\,\varepsilon,\label{eq::toy-RRM}
\end{equation}
where $\tX \sim \Gamma(2,1)$ with $\E(\tX) = 2$, and $\varepsilon\sim U(-1,1)$ that is independent of $\tX$. 
We aim to estimate the $0.5$ ES regression, where the true ES regression coefficients are $\beta_0 = 1$, $\beta_1 = 0.5$.

To find the minimizer to the population level loss function in~\eqref{eq::SQR} at $\tau = 0.5$, we note from Proposition 3 of \citet{rockafellar2014superquantile} that solving (\ref{eq::SQR}) is equivalent to the following two-step procedure:
\begin{eqnarray}
\theta_1^* &\leftarrow& \arg\min_{\theta_1}\underbrace{\left\{\E(\tX^T\theta_1) + \frac{1}{1-\tau}\int_{\tau}^1v_{[Y-\tX^T\theta_1]}(\alpha)\;\mathrm{d}\alpha\right\}}_{ L_1(\theta_1)}, \label{eq::SRQ-1}\\
\theta_0^* &\leftarrow& v_{[Y-\tX^T\theta_1^*]}(\tau)\label{eq::SRQ-2},
\end{eqnarray}
where $\theta_0^*$ and $\theta_1^*$ are the population-level minimizers, and the covariates are $X^T = (1,\tX^T)$. 
We compute the analytical expression for the marginal superquantile of $Z(\theta_1)=Y-\tX^T\theta_1$ for any $\theta_1$ and any quantile level. 
For any $\theta_1\in(-1,1)$, $Z(\theta_1)$ follows tilted double exponential distribution with density function,
\begin{equation*}
f_{Z(\theta_1)}(z;\theta_1) = \begin{cases}
\frac{1}{2}\exp\left\{\frac{z}{1+\theta_1}\right\},\quad &z<0, \\
\\
\frac{1}{2}\exp\left\{\frac{-z}{1-\theta_1}\right\},\quad &z\geq 0.\\
\end{cases}
\end{equation*}
Straightforward probabilistic calculation shows that the marginal superquantile of $Z(\theta_1)=Y-\theta_1 \tX$ is
\begin{equation}
v_{[Z(\theta_1)]}(\alpha) = \begin{cases}
1 + \frac{1}{1-\alpha}\left(\alpha(1+\theta_1)(1-\log\left[\frac{2\alpha}{1+\theta_1}\right]) - 2\theta_1\right), \quad & 0 \leq \alpha \leq \frac{1 + \theta_1}{2},\\
\\
1 + (\theta_1-1)\left(\log\left[ \frac{2(1-\alpha)}{1-\theta_1} \right] - 1\right), \quad & \frac{1 + \theta_1}{2} < \alpha < 1.
\end{cases}
\label{eq::SQ_alpha}
\end{equation}
Substituting Equation (\ref{eq::SQ_alpha}) into the loss function in~\eqref{eq::SQR}, we can obtain an analytical expression for the loss function $L_1(\theta_1)$. 
Moreover, we can compute the first-order derivative to the loss function $L_1(\theta_1)$ as:
\begin{eqnarray*}
\frac{\partial L_1(\theta_1)}{\partial \theta_1} &=& E(\tX) + \frac{1}{1-1/2}\int_{1/2}^1\frac{\partial v_{[Y-\tX^T\theta_1]}(\alpha)}{\partial\theta_1}\;d\alpha\\
&=&\begin{cases}
 - 1 - \log(1-\theta_1),&\quad -1\leq \theta_1 \leq 0, \\
 \\
 2\left\{-\frac{1}{2} - \mathrm{Li}_2(\frac{1}{2}) + \mathrm{Li}_2(\frac{\theta_1+1}{2}) + (\frac{1}{2} - \log2)\log(1+\theta_1)\right\},&\quad 0<\theta_1\leq1,
\end{cases}
\end{eqnarray*}
where $\mathrm{Li}_2(x) = -\int_0^x \log(1-z)/z\,dz$. 

Figure \ref{fig::analytic-loss} below shows the loss function in (\ref{eq::SQR}) and its derivative under Model (\ref{eq::toy-RRM}). 
Since the loss function in (\ref{eq::SQR}) is convex and differentiable \citep{rockafellar2008risk,rockafellar2013fundamental}, we can use a first-order method, e.g., the Newton-Raphson method, to solve the minimization problem (\ref{eq::SQR}). We use the convex optimization toolbox in MATLAB and obtain the population-level minimizer to \eqref{eq::SQR} as $\tilde \beta_1 = 0.7041$, marked by the red line in Figure \ref{fig::analytic-loss}, while the true ES regression coefficient is $\beta_1 = 0.5$, marked by the blue line. Note we focus on the population level loss function, therefore the clear discrepancy between $\tilde \beta_1$ and $\beta_1$ shows the superquantile regression approach proposed by \cite{rockafellar2014superquantile} fails to give the correct coefficients for the ES regression. 
\footnote{
Although the foregoing derivation for $L_1(\theta_1)$ is only valid for $\theta\in(0,1)$, it does not affect our conclusion. This is due to the global convexity of the loss function in \eqref{eq::SQR} \citep{rockafellar2008risk}: A local minimizer within $[-1,1]$ must also be the global minimizer. }

\begin{figure}[ht]
\centering
\begin{minipage}{0.48\textwidth}
\includegraphics[width =\textwidth]{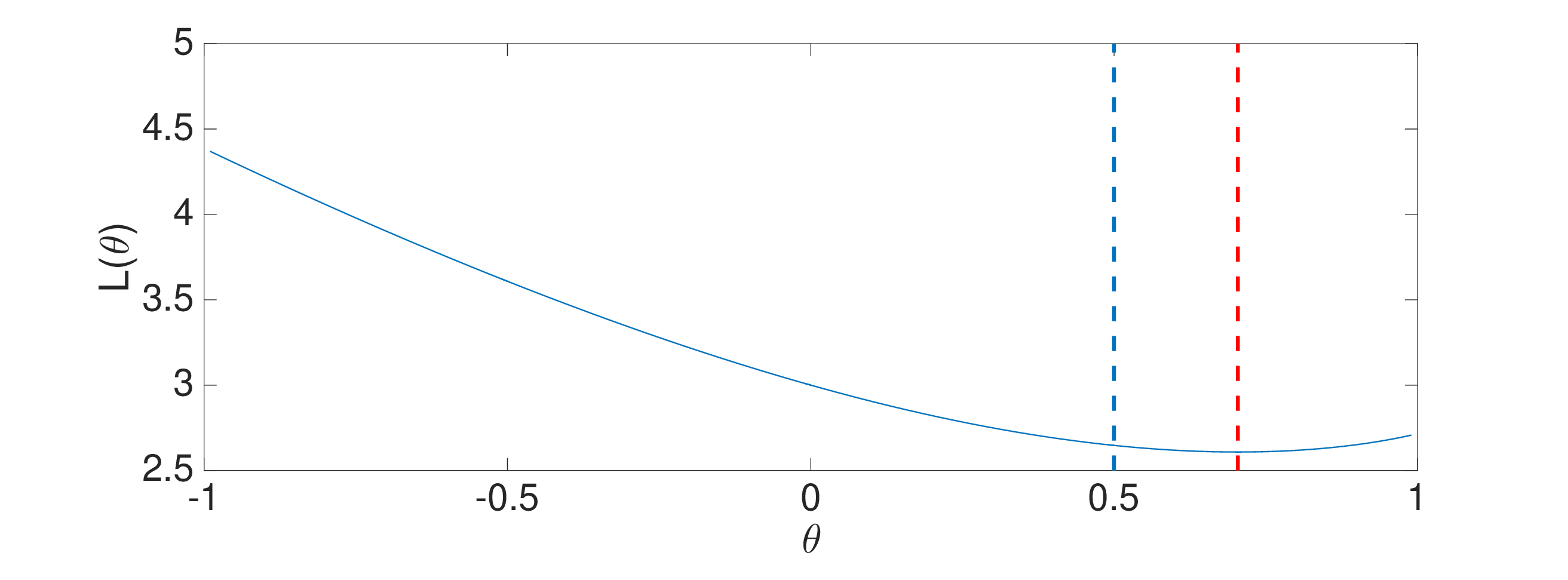}
\end{minipage}
\hfill
\begin{minipage}{0.48\textwidth}
\includegraphics[width = \textwidth]{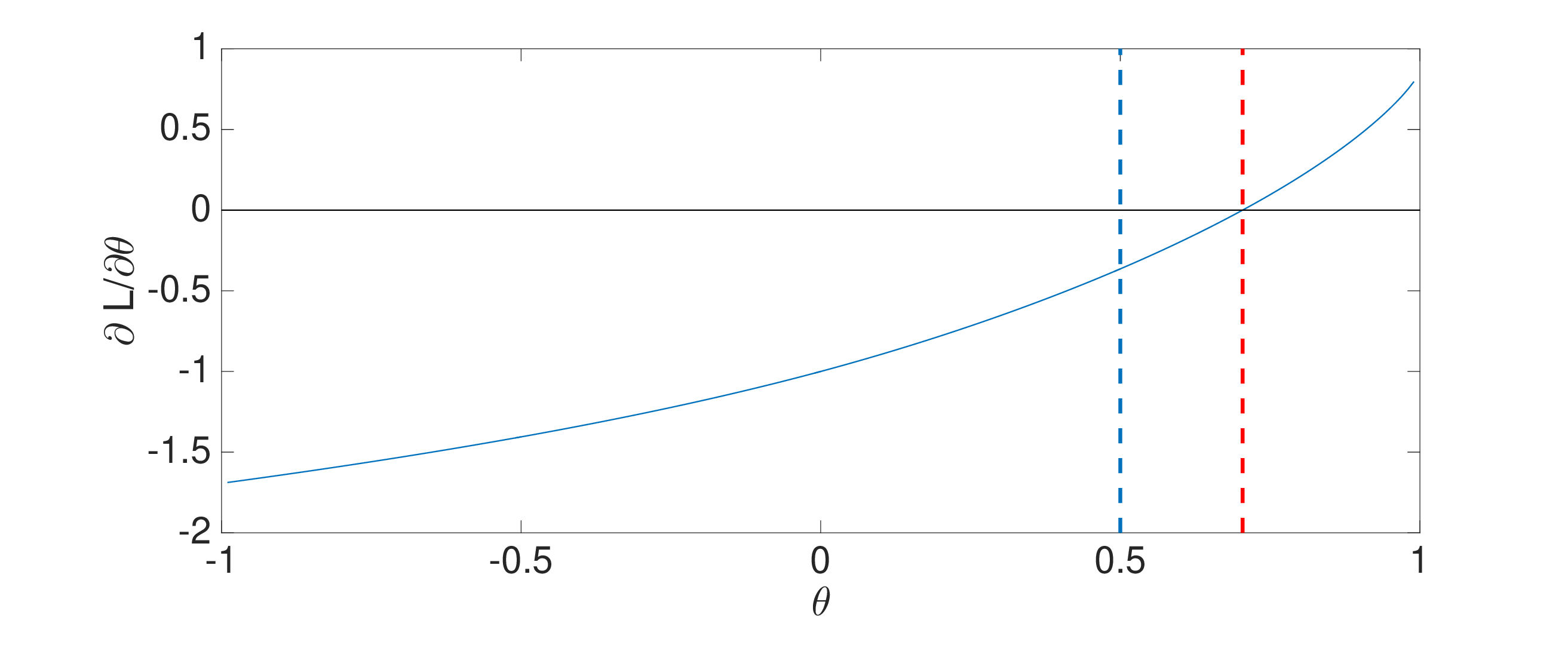}
\end{minipage}
\caption{The population level loss function $L_1(\theta_1)$ (left panel) and its derivative (right panel). The blue dashed line marks the true ES regression coefficient $\beta_1$, while the red one marks the minimizer of $L_1(\theta_1)$.}
\label{fig::analytic-loss}
\end{figure}

We further demonstrate the inconsistency of the superquantile regression approach by \cite{rockafellar2014superquantile} using a numerical experiment. We generate $200$ Monte Carlo datasets from Model (\ref{eq::toy-RRM}), and we consider sample sizes at $n=100$ or $n=1000$. 
Setting $\tau = 0.5$, Figure \ref{fig::numeric-loss} shows the histogram of the estimated slope term $\hat{\beta}_1$ among the $200$ Monte Carlo datasets, solved by the numerical integration method in Section 5.2 of \cite{rockafellar2014superquantile} with $100$ grid points. We can see the histograms are clearly concentrating toward $\tilde \beta_1 = 0.704$, instead of the true ES regression coefficient $\beta_1 = 0.5$.

\begin{figure}[hbt]
\centering
\begin{minipage}{0.49\textwidth}
\includegraphics[width = \textwidth]{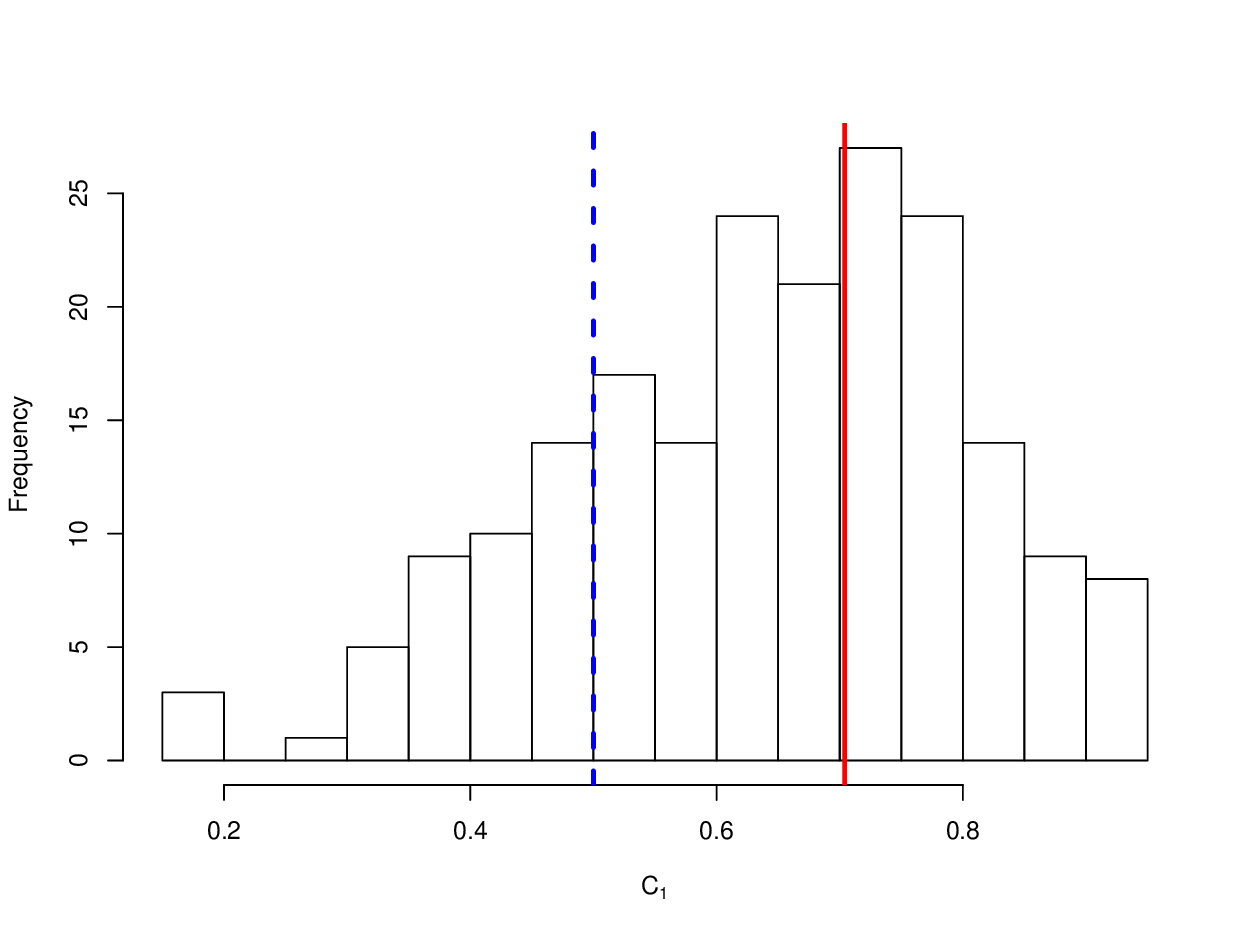}
\end{minipage}
\hfill
\begin{minipage}{0.49\textwidth}
\includegraphics[width = \textwidth]{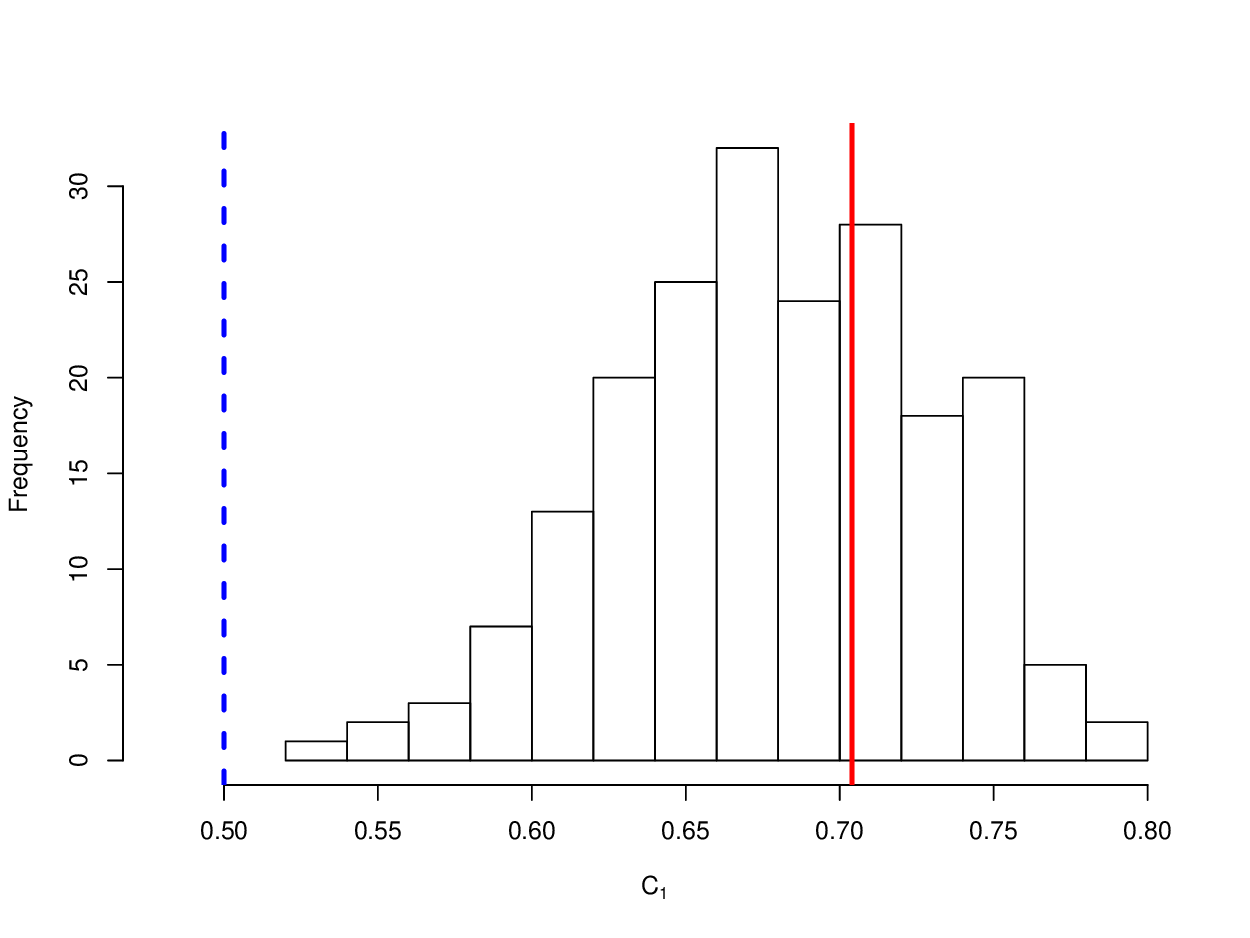}
\end{minipage}
\caption{The empirical distribution of the estimator $\hat{\beta}_1$ following \cite{rockafellar2014superquantile} at sample size $n=100$ (left) and $n=1000$ (right). The blue line marks the true ES regression coefficient $\beta_1 = 0.5$, while the red one marks the population-level solution to~\eqref{eq::SQR}, namely, $\tilde \beta_1 = 0.704$.}
\label{fig::numeric-loss}
\end{figure}

While the superquantile regression proposed in \cite{rockafellar2014superquantile} is not valid for the ES regression, it can still be valuable as a generalized regression technique.
As shown in  \cite{rockafellar2013fundamental}, the superquantile regression approach finds the best linear approximation to the response $Y$ using the covariates $X$, in the sense that the residuals minimize the superquantile loss function (\ref{eq::SQR}). 
Furthermore, \cite{rockafellar2018superquantile} shows that the superquantile regression coefficient in~\eqref{eq::SQR} is consistent for the ES regression in homoscedastic linear models; and \cite{golodnikov2019cvar} shows that the superquantile regression approach is equivalent to a composite quantile regression under certain scenarios.
Therefore, the superquantile regression approach can be useful for risk tuning and optimization that incorporates covariate information \citep{miranda2014superquantile}.
}

\section{Proof of theoretical results in Section 2}\label{append::sec2}
Recall that we omit the subscript $[Y\mid X]$ of $v$ and $\hat v$, $q$, and $\hat q$, and simply write as $v(s,x)$, $\hat v(s,x)$, $q(s,x)$, $\hat q(s,x)$, respectively, throughout the proof. 
\subsection{Proof of Theorem 2.1}\label{proof::thm::RRM-correct}
\begin{proof}
In order to show that the true ES coefficient $\beta$ is the unique minimizer of the loss function \begin{eqnarray}\label{eq::loss}
L(\theta) 
&=& \E_X \,\left[ \int_{0}^1 \,\rho_\tau\left(v_{[Y\mid X]}(\alpha,X) - X^T\theta\right) \,\mathrm{d}\alpha\right]\label{eq::RRM-correct},
\end{eqnarray}
where $\rho_\tau(u) = \{ \tau - \mathbbm{1} (u<0) \} u$ is the quantile loss function, we break the proof into the following three steps: (1) we show that the function $v(u,x)$ is increasing in $u \in (0,1)$ and strictly increasing in $u \in [\tau-\epsilon,\tau+\epsilon]$ for each possible value of $x$; (2) we verify the first-order optimality condition; (3) we check the positive definiteness of the Hessian matrix at the minimum. 
From the property of the quantile loss function, (see, e.g., \citet[Chapter 1.3]{koenker_2005}), the loss function in~\eqref{eq::loss} is \textbf{convex} in $\theta$.
Combined with (2) and (3), it follows that $\beta$ is the unique global minimizer of the loss function. 

\paragraph{Step 1}
For each $x$, we check the monotonicity of $v(u,x)$ over $u$. The conditional quantile of $Y \mid X=x$, i.e., $q(u,x)$, is increasing in $u\in(0,1)$ by definition. 
Then, by definition of $v(u,x)$, i.e., 
\begin{equation}
    v(u,x) = (1-u)^{-1} \int_{u}^1 q(s,x) \mathrm{d}s,
\end{equation}
we have that $v(u,x)$ is increasing in $u\in(0,1)$ since it is a tail-average of $q(u,x)$.
Since the cumulative distribution function of $Y \mid X=x$ is continuous and strictly increasing in a neighborhood of $q(\tau, x)$, $q(u, x)$ is 
continuous and 
strictly increasing in $u\in [\tau-\epsilon,\tau+\epsilon]$ for some $\epsilon>0$. 
For $u \in [\tau-\epsilon,\tau+\epsilon]$ and for all $x$,  
\begin{eqnarray*}
    v(u,x) = (1-u)^{-1} \Bigg\{(1-\tau) v(\tau,x) + \int_u^\tau q(s,x) \,\mathrm{d}s \Bigg\}
\end{eqnarray*}
is finite since $v(\tau,x)$ is finite for all $x$ and $q(s,x)$ are bounded on a closed interval due to continuity. 
Therefore, for $u \in [\tau-\epsilon,\tau+\epsilon]$,
\begin{eqnarray}
\dfrac{\partial v(u,x)}{\partial u} 
&=& \dfrac{\partial \{(1-u)^{-1} \int_{u}^1 q(s,x) \mathrm{d}s\}}{\partial u}\nonumber\\
&=& \dfrac{\int_{u}^1 q(s,x) \mathrm{d}s}{(1-u)^2} - \dfrac{q(u,x)}{1-u}\nonumber\\
&=& \dfrac{v(u,x) - q(u,x)}{1-u}\label{eq::derivative-SQ-proof}\\
 &=&\frac{1}{(1-u)^2}\int_{u}^1[q(s,x) - q(u,x)]\,\mathrm{d}s\nonumber > 0\nonumber.
\end{eqnarray} 
Hence $v(u,x)$ is strictly increasing in $u \in [\tau-\epsilon,\tau+\epsilon]$. 

\paragraph{Step 2}
Next, we check the first-order optimality condition of $\beta$ for the loss function $L(\theta)$ in~\eqref{eq::loss}. By defining $\xi\sim U(0,1)$, we equivalently write $L(
\theta)$ as
\begin{eqnarray}
L(\theta) 
&=& E_{(X,\,\xi)}\,\left[ \rho_\tau\left(v(\xi,X) - X^T\theta\right)\right].
\end{eqnarray} 
From the property of the check-loss function $\rho$, it follows that the function $L(\theta)$ is convex and differentiable in $\theta$. 
Then, 
\begin{align*}
\frac{\partial L(\theta)}{\partial \theta}\Bigg|_{\theta=\beta} \;
&= \;E_{(X,\xi)}\left(\left[\tau - \bm{1}\left\{v(\xi,X) < X^T\beta \right\}\right] \cdot (-X)\right)\\
&= \;E_{X}\left(\left[\tau - \Pr_{\xi|X}\left\{v(\xi,X) < X^T\beta \mid X\right\}\right] \cdot (-X)\right)\\
&=\;E_{X}\left(\left[\tau - \Pr_{\xi|X}\left\{v(\xi,X) < v(\tau,X) \mid X\right\}\right] \cdot (-X)\right)\\
&=\;E_{X}\left(\left[\tau - \Pr(\xi < \tau)\right] \cdot (-X)\right)\\
&=\;0,
\end{align*}
where the second equality follows by first conditioning on $X$, the third equality follows from $v(\tau,X) = X^T\beta$, the fourth equality follows from the monotonicity of $v(u,x)$ in step 1, and the last equality follows from $\xi \sim U(0,1)$ and independent of $X$. 
The first-order optimality condition, combined with the convexity of $L(\theta)$, implies that $\beta$ is the global minimum. 

\paragraph{Step 3}
Lastly, we show the minimizer of $L(\theta)$ is unique. Since $L(\theta)$ is convex, it suffices to show that the Hessian matrix at the minimum is positive definite. 
Since $v(\cdot,x)$ is strictly monotone, let $h(z,x)$ be the inverse of $v(\cdot,x)$, such that $v\circ(h(z,x),x) = z$.
By the conditions in Theorem 2.1, 
we have 
\begin{equation}\label{eq::derivative_v}
    \begin{aligned}
        \frac{\partial \Pr(v(\xi,x) \leq z)}{\partial z}\Bigg|_{z = v(\tau,x)}  &= \frac{\partial \Pr(\xi \leq h(z,x))}{\partial z}\Bigg|_{z = v(\tau,x)}\\
    & = \frac{\partial  h(z,x)}{\partial z}\Bigg|_{z = v(\tau,x)}\\
    & = \frac{\partial s}{\partial  v(s,x)}\Bigg|_{s = \tau}\\
    & = \left\{\frac{\partial  v(s,x)}{\partial s}\Bigg|_{s = \tau}\right\}^{-1}\\
    & = \dfrac{1-\tau}{v(\tau,x) - q(\tau,x)}.
    \end{aligned}
\end{equation}

Therefore, $L(\theta)$ is twice differentiable, and its second derivative at $\theta = \beta$ satisfies
\begin{align*}
    \frac{\partial^2 L}{\partial \theta\partial\theta^T}\Bigg|_{\theta = \beta}\; &=\; \E_{X}\left[ \frac{\partial \Pr_{\xi|X}(v(\xi,X) \leq z)}{\partial z}\Bigg|_{z = X^T\beta}\cdot XX^T \right] \\
    &= \E_{X}\left[ \frac{\partial \Pr_{\xi}(v(\xi,x) \leq z)}{\partial z}\Bigg|_{(z = X^T\beta,x=X)}\cdot XX^T \right] \\
    &= (1-\tau) \E_{X}\left[\frac{XX^T}{v(\tau,X) - q(\tau,X)} \right] \\
    & = (1-\tau) D_1\succ 0,
\end{align*}
where the second equality follows since $\xi$ is independent of $X$, and the third equality follows from~\eqref{eq::derivative_v} and $v(\tau) = x^T\beta$.
Therefore, the Hessian matrix of $L(\cdot)$ evaluated at $\beta$ 
is positive definite, establishing the uniqueness of the minimizer $\beta$.

\end{proof}

\subsection{Proof of Corollary 1}
\begin{proof}
    It follows closely from the proof to Theorem 2.1 in Appendix~\ref{proof::thm::RRM-correct} by simply replacing the distribution of $\xi$ to $\xi\sim U(\tau- \delta\tau,\tau + \delta(1-\tau))$ and noting that $\Pr(\xi < \tau) = \tau$. 
\end{proof}

\section{Proof and additional details of Section 3}\label{append::sec3}
\subsection{Proof of Theorem 3.1}\label{proof::thm::RRM-modified}

In order to show the consistency and the asymptotic normality of the i-Rock estimator with discrete covariates in Theorem 3.1, we 
start with one-sample case without covariates in~\ref{subsec::one-sample-SQ} and~\ref{subsec::proof-one-sample}, and generalize to the case with discrete covariates in~\ref{subsec::proof-mRock-discrete}. 

\subsubsection{Auxiliary results for the one-sample ES process}
\label{subsec::one-sample-SQ}

We first present asymptotic results in the one-sample case without any covariate. These results also apply to the empirical ES estimators at each covariate value in our regression setting.

We fix some notations for the discussion of the one-sample problem. Suppose the data $Y_1,\ldots,Y_n$ are $i.i.d.$ observations with a common distribution function $F(y)$. For any $0<s<1$, let $\hat{q}(s)$ be the sample quantile from the $n$ observations, we define the empirical ES estimator as:
\begin{equation}
\hat{v}(s) = \dfrac{\sumn Y_i\cdot\bm{1}\{ Y_i\geq \hat{q}(s)\}}{\sumn \bm{1}\{ Y_i\geq \hat{q}(s)\}}.
\label{eq::sampleSQ}
\end{equation}
While the parameter of interest is the $\tau$-level ES, here we consider the empirical ES process, which is the stochastic process given by $\{\hat{v}(s):s\in[\tau_L,\tau_U]\}$, where $0 < \tau_L  < \tau < \tau_U < 1$. Let $\ell^{\infty}[a,\,b]$ be the set of all uniformly bounded functions on the interval $[a,b]$. To further simplify notations, in the remainder of this subsection, we shall write $\hat{v}_s = \hat{v}(s)$ and $\hat{q}_s = \hat{q}(s)$, and  we define $q_L$ and $q_U$ as the $\tau_L$-th and $\tau_U$-th quantile, respectively. In the one-sample case, the notations here may be different than those in the regression setting.

We need the following technical condition, which is the one-sample counterpart for Conditions R-Y1 and R-Y2. 
\begin{customcond}{U}
The distribution function $F(y)$ is continuously differentiable on the interval $[q_L-\varepsilon, q_U + \varepsilon]$ for some $\varepsilon>0$; the density function $f(y)$ is bounded away from zero and above on the same interval. Furthermore, we have $\E[Y^2\cdot \bm{1}\{Y\geq 0\}]< + \infty$.
\label{cond::univariate}
\end{customcond}

Now we present the first main result in the one-sample case, which concerns the weak convergence of the empirical ES as a stochastic process indexed by the quantile level. Not only is the result an important technical tool for subsequent analysis, but it also is of interest on its own. 

\begin{theorem}
Suppose Condition \ref{cond::univariate} holds, then we have
\[
\hat{v}_s - v_s = \frac{1}{n}\sum_{i=1}^n \left\{\frac{(Y_i - q_s)\cdot \bm{1}(Y_i\geq q_s)}{1-s}
 - (v_s-q_s)
\right\} + o_P\left(n^{-1/2}\right),
\]
uniformly in $s\in[\tau_L,\tau_U]$. Furthermore, the centered empirical ES process converges weakly:
\[
\sqrt{n}\left[\hat{v}(\cdot) - v(\cdot)\right] \mathrel{\leadsto} \mathbb{G}(\cdot) \quad \text{in}\;\;\ell^{\infty}[\tau_L,\,\tau_U], 
\]
where $\mathbb{G}(\cdot)$ is a mean zero Gaussian Process.
\label{thm::univariate-process}
\end{theorem}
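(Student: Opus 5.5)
The plan is to write the empirical ES as a ratio of empirical-process functionals evaluated at the random point $\hat q_s$, and then to linearize. Since $F$ is continuous, $\sum_{i=1}^n \bm{1}\{Y_i\geq \hat q_s\} = n(1-s) + O(1)$ uniformly in $s$. Up to an $O_P(n^{-1}\max_i|Y_i|\bm 1\{Y_i\ge q_L-\varepsilon\})=o_P(n^{-1/2})$ correction, we can therefore write
\[
\hat v_s = \frac{1}{1-s}\,\mathbb{P}_n\big[Y\bm{1}\{Y\geq \hat q_s\}\big].
\]
The correction is $o_P(n^{-1/2})$ because $E[(Y^+)^2]<\infty$ gives $\max_i Y_i^+=o_P(n^{1/2})$. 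Next we decompose
\[
(1-s)(\hat v_s - v_s) = (\mathbb{P}_n-P)\big[Y\bm 1\{Y\ge \hat q_s\}\big] + \Big(P\big[Y\bm 1\{Y\ge t\}\big]\big|_{t=\hat q_s} - P\big[Y\bm 1\{Y\ge q_s\}\big]\Big).
\]

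\textbf{Step 1 (stochastic equicontinuity).} Consider the class $\mathcal F=\{y\mapsto y\bm 1\{y\ge t\}: t\in[q_L-\varepsilon,q_U+\varepsilon]\}$. It is a VC-type class, and it has the square-integrable envelope $|y|\bm 1\{y\ge q_L-\varepsilon\}$; this envelope is square integrable exactly because $E[Y^2\bm 1\{Y\ge 0\}]<\infty$. Hence $\mathcal F$ is $P$-Donsker. Moreover, $E[Y^2|\bm 1\{Y\ge t\}-\bm 1\{Y\ge t'\}|]\le C|t-t'|$ because the density is bounded on the interval. By the Bahadur representation for sample quantiles (Condition U), $\sup_s|\hat q_s-q_s|=O_P(n^{-1/2})$. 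Asymptotic equicontinuity then lets us replace $\hat q_s$ by $q_s$ in the first term, uniformly in $s$, at cost $o_P(n^{-1/2})$.

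\textbf{Step 2 (deterministic part).} Let $g(t)=P[Y\bm 1\{Y\ge t\}]$. Then $g'(t)=-tf(t)$, and a Taylor expansion gives $g(\hat q_s)-g(q_s)=-q_sf(q_s)(\hat q_s-q_s)+o_P(n^{-1/2})$ uniformly, since $f$ is continuous on a compact interval. Now substitute the uniform Bahadur representation
\[
\hat q_s-q_s=\frac{1}{f(q_s)}\,\mathbb{P}_n\big[s-\bm 1\{Y<q_s\}\big]+o_P(n^{-1/2}),
\]
which gives
\[
g(\hat q_s)-g(q_s) = -q_s(\mathbb P_n-P)\big[\bm 1\{Y\ge q_s\}\big]+o_P(n^{-1/2}).
\]
Combining this with Step 1 and using $P[(Y-q_s)\bm 1\{Y\ge q_s\}]=(1-s)(v_s-q_s)$, we get
\[
(1-s)(\hat v_s-v_s)=(\mathbb P_n-P)\big[(Y-q_s)\bm 1\{Y\ge q_s\}\big]+o_P(n^{-1/2}).
\]
Dividing by $1-s\ge 1-\tau_U>0$ yields the stated representation.

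\textbf{Step 3 (weak convergence).} The leading term is an empirical process indexed by $\{(1-s)^{-1}(y-q_s)\bm 1\{y\ge q_s\}: s\in[\tau_L,\tau_U]\}$. This class is a Lipschitz-in-$s$ transformation of the Donsker class $\mathcal F$ together with indicator functions, with coefficients $q_s$ and $(1-s)^{-1}$ that are continuous and bounded. It is therefore Donsker, and the $o_P(n^{-1/2})$ remainder is uniform. Hence $\sqrt n(\hat v-v)$ converges weakly in $\ell^\infty[\tau_L,\tau_U]$ to a tight mean-zero Gaussian process $\mathbb G$. Its covariance kernel is the covariance of the influence functions at $s$ and $s'$.

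The main obstacle is Step 1. The envelope is unbounded, and only the one-sided second moment condition is available. So one must check carefully that the relevant bracketing integral is finite, with brackets $[y\bm1\{y\ge t_{k+1}\}, y\bm1\{y\ge t_k\}]$ built over a grid in $t$ together with the truncated envelope, and that the tie and rounding corrections in the denominator are negligible uniformly in $s$.
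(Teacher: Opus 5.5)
Your proposal is correct and follows essentially the same route as the paper. It has an empirical-process term controlled by asymptotic equicontinuity of a Donsker class with square-integrable envelope (the entropy check you flag as the main obstacle is routine), a drift term Taylor-expanded in the quantile and combined with the uniform Bahadur representation of $\hat q_s$, and a Donsker argument for the weak limit. The only difference is bookkeeping: you remove the random denominator at the outset via the exact count $n(1-s)+O(1)$ and use the uncentered class $y\bm{1}\{y\geq t\}$ (drift derivative $-tf(t)$). The paper instead treats $\hat v_s$ as a Z-estimator for a general uniformly $\sqrt{n}$-consistent $\tilde q_s$, with $\psi(y;\theta,s)=(y-v_s)\bm{1}\{y\geq\theta\}$ (derivative $(v_s-\theta)f(\theta)$); this needs a preliminary $\sqrt{n}$-consistency step but gives a representation valid for any such quantile estimator.
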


Theorem \ref{thm::univariate-process} gives the uniform (weak) Bahadur representation for the empirical ES process. To the best of our knowledge, the uniformity of the result is new. Restricting to a single quantile level $\tau$, \citet{chen2007nonparametric} and \citet{zwingmann2016asymptotics} study the asymptotic properties of the ES estimator $\hat{v}(\tau)$ under more general conditions; on the other hand we discuss process convergence.
Practically, Theorem \ref{thm::univariate-process} is a technical tool for simultaneous statistical inference for a range of expected shortfalls.


As a simple corollary of Theorem \ref{thm::univariate-process}, we can obtain the asymptotic distribution for the $\tau$-level empirical ES, which is also known from, e.g., \citet{chen2007nonparametric} and \citet{zwingmann2016asymptotics}. 
\begin{corollary}
\label{coro::SQ-fixed-tau}
Under Condition \ref{cond::univariate}, we have
\[
\sqrt{n}(\hat{v}_\tau - v_\tau) \converged \mathrm{N}(0, \sigma_\tau^2),
\]
with $(1-\tau)\sigma_\tau^2 = \var(Y\mid Y\geq q_\tau) + \tau(v_\tau - q_\tau)^2$.
\end{corollary}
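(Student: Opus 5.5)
The plan is to read Corollary~\ref{coro::SQ-fixed-tau} directly off the Bahadur representation in Theorem~\ref{thm::univariate-process}, specialized to $s=\tau$, and then finish with the classical central limit theorem and Slutsky's lemma. Since $\tau\in[\tau_L,\tau_U]$, the uniform representation holds in particular at $s=\tau$:
\[
\sqrt{n}(\hat{v}_\tau - v_\tau) = \frac{1}{\sqrt{n}}\sum_{i=1}^n \psi(Y_i) + \op(1),\qquad
\psi(y) = \frac{(y - q_\tau)\bm{1}(y\geq q_\tau)}{1-\tau} - (v_\tau - q_\tau).
\]
The summands $\psi(Y_i)$ are i.i.d. Their mean is zero, because $E[(Y-q_\tau)\bm{1}(Y\geq q_\tau)] = (1-\tau)(v_\tau-q_\tau)$ by continuity of $F$ and the definition~\eqref{eq::def-SQ}.

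Next I would check that $\psi(Y_i)$ has finite variance. This follows from the moment part of Condition~\ref{cond::univariate}. On $\{Y\geq q_\tau\}$, $(Y-q_\tau)^2 \leq 2(Y^+)^2 + 2q_\tau^2$ when $q_\tau\ge 0$, and the case $q_\tau<0$ is handled similarly, with $Y$ bounded below by $q_\tau$ on the event. Since $E[(Y^+)^2]<\infty$, we get $E[\psi(Y)^2]<\infty$. The Lindeberg--L\'evy CLT plus Slutsky then give $\sqrt{n}(\hat v_\tau - v_\tau)\converged \mathrm{N}(0,\var\,\psi(Y))$.

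The only real work is computing the variance. Write $W=(Y-q_\tau)\bm{1}(Y\ge q_\tau)$ and $p=\Pr(Y\ge q_\tau)=1-\tau$. Conditional on $Y\ge q_\tau$, $W$ has mean $v_\tau-q_\tau$ and variance $\var(Y\mid Y\ge q_\tau)$, and $W=0$ otherwise. By the law of total variance,
\[
\var(W) = p\,\var(Y\mid Y\ge q_\tau) + p(1-p)(v_\tau-q_\tau)^2 .
\]
Dividing by $(1-\tau)^2$ gives
\[
(1-\tau)\var\,\psi(Y) = \var(Y\mid Y\ge q_\tau) + \tau (v_\tau - q_\tau)^2,
\]
which is exactly $(1-\tau)\sigma_\tau^2$. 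This computation is the only step requiring care, mainly in tracking the factors of $1-\tau$. Alternatively, one could get the result by evaluating the covariance kernel of $\mathbb{G}$ at $(\tau,\tau)$, but the direct route above avoids needing that kernel explicitly.
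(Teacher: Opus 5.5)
Your proposal is correct and follows the paper's proof: evaluate the uniform Bahadur representation of Theorem~\ref{thm::univariate-process} at $s=\tau$, apply the CLT, and compute the variance of the influence function. The differences are cosmetic. You get the variance from the law of total variance, conditioning on $\bm{1}(Y\ge q_\tau)$, whereas the paper expands the second moment directly. You also write out the mean-zero and finite-variance checks that the paper leaves implicit.
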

\begin{proof}
    Combining the Central Limit Theorem with the Bahadur representation in Theorem \ref{thm::univariate-process} concludes the results with asymptotic variance 
    \begin{equation*}
        \begin{aligned}
            \sigma_\tau^2 & = \var\left\{\frac{(Y - q_\tau) \cdot \bm{1}(Y\geq q_\tau)}{1-\tau} - (v_\tau-q_\tau)\right\}\\
            & = \E\left\{\frac{(Y - q_\tau)^2 \cdot \bm{1}(Y\geq q_\tau)}{(1-\tau)^2} + (v_\tau-q_\tau)^2 -\frac{2(v_\tau - q_\tau)(Y-q_\tau)\cdot \bm{1}(Y\geq q_\tau)}{1-\tau}\right\}\\
            & = \frac{\E\{(Y - v_\tau)^2 + (v_\tau - q_\tau)^2 + 2(Y - v_\tau)(v_\tau - q_\tau) \mid Y\geq q_\tau\} P(Y\geq q_\tau)}{(1-\tau)^2} \\
            &~~+ (v_\tau-q_\tau)^2 -\frac{2(v_\tau - q_\tau)\E(Y-q_\tau \mid Y\geq q_\tau)P(Y\geq q_\tau)}{1-\tau}\\
            & = \frac{\E\{(Y - v_\tau)^2 \mid Y\geq q_\tau\}+(v_\tau - q_\tau)^2 }{1-\tau} - (v_\tau-q_\tau)^2\\
            & = \frac{1}{1-\tau}\{\var(Y \mid Y \geq q_\tau) + \tau (v_\tau-q_\tau)^2\}
        \end{aligned}
    \end{equation*}
\end{proof}

The asymptotic variance $\sigma_\tau^2$ consists of two parts. The first part is the variance in estimating $v_\tau$ when $q_\tau$ is known, whereas the second part is attributable to quantile estimation \citep{zwingmann2016asymptotics}. 

Next, we proceed to the study of the inverse ES function, which we define below:
\[
h(z) = \{s:v_s = z\}\quad \text{and}\quad \hat{h}(z) = \inf\{s\in[0,1]:\hat{v}_s\geq z\},
\]
for any $z\in[v_{\tau_L},v_{\tau_U}]$.
Note $v_s$ is strictly increasing in $s\in[\tau_L,\tau_U]$, and we show in Lemma \ref{lemma::SQ-monotonicity} that $\hat{v}(z)$ is also non-decreasing in $s\in[\tau_L,\tau_U]$; therefore the definitions above are well-defined.
The following Lemma shows that $\hat{h}(z)$, the empirical inverse ES, is also asymptotically Gaussian.

\begin{lemma}
Under Condition \ref{cond::univariate}, the inverse ES process satisfies:
\[
\hat{h}(z) - h(z) = -\frac{1}{n}\sum_{i=1}^n \left[\frac{(Y_i - q_{h(z)})\cdot \bm{1}\{Y_i\geq q_{h(z)}\}}{z - q_{h(z)}}
 - (1-h(z))
\right] + o_P\left(n^{-1/2}\right),
\]
uniformly in $z\in[v_\tau - \varepsilon',\,v_\tau + \varepsilon']$ for some $\varepsilon'>0$. In particular, we have:
\[
\sqrt{n}\left(\hat{h}(v_\tau) - \tau\right) \converged \mathrm{N}\left(0,\;\frac{(1-\tau)^2\sigma_\tau^2}{(v_\tau - q_\tau)^2}\right),
\]
with $\sigma_\tau^2$ defined in Corollary \ref{coro::SQ-fixed-tau}. Furthermore, the process $n^{1/2}[\hat{h}(z) - h(z)]$ is asymptotically equi-continuous over $z\in[v_\tau - \varepsilon',\,v_\tau + \varepsilon']$ with respect to the Euclidean distance.
\label{lemma::inverseSQ}
\end{lemma}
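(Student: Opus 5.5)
The plan is to invert the uniform Bahadur representation of Theorem~\ref{thm::univariate-process}, using the monotonicity of $\hat v_s$ (Lemma~\ref{lemma::SQ-monotonicity}) and the smoothness of $s\mapsto v_s$ near $\tau$.

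\textbf{Derivative of $v$.} By \eqref{eq::derivative-SQ-proof}, $\partial v_s/\partial s=(v_s-q_s)/(1-s)$. This is continuous and bounded away from zero on $[\tau_L,\tau_U]$ under Condition~\ref{cond::univariate}. Hence $h$ is differentiable on $[v_{\tau_L},v_{\tau_U}]$ with
\[
h'(z)=\frac{1-h(z)}{z-q_{h(z)}} .
\]
Fix $\varepsilon'>0$ small enough that $h(z)$ lies in a compact subinterval of $(\tau_L,\tau_U)$ for all $|z-v_\tau|\le\varepsilon'$.

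\textbf{Consistency of $\hat h$.} Theorem~\ref{thm::univariate-process} gives $\sup_s|\hat v_s-v_s|=\Op(n^{-1/2})$. Combined with the strict increase of $v$ and the monotonicity of $\hat v$, this yields $\sup_z|\hat h(z)-h(z)|=\Op(n^{-1/2})$. Concretely, for a large constant $B$, with probability approaching one we have
\[
\hat v_{h(z)+Bn^{-1/2}}>z \quad\text{and}\quad \hat v_{h(z)-Bn^{-1/2}}<z
\]
uniformly in $z$. Therefore $\hat h(z)\in[h(z)-Bn^{-1/2},\,h(z)+Bn^{-1/2}]$.

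\textbf{Inversion.} Write $\hat s=\hat h(z)$. The empirical ES can jump, so I will show $\hat v_{\hat s}=z+\op(n^{-1/2})$ uniformly. The jumps of $s\mapsto \hat v_s$ occur only when $\hat q_s$ crosses an order statistic. Near $q_\tau$ the jump sizes are bounded by $\max_i |Y_i|\,n^{-1}/(1-s)$ restricted to the relevant range, which should be $\op(n^{-1/2})$ because of the second moment condition on $Y^+$ (each jump is of order $|Y_{(k)}-\hat v_s|/(n(1-s))$). Then I expand:
\[
z-v_{h(z)} \approx \hat v_{\hat s}-v_{h(z)}=(\hat v_{\hat s}-v_{\hat s})+(v_{\hat s}-v_{h(z)}).
\]
The second term equals $v'_{h(z)}(\hat s-h(z))+o(n^{-1/2})$ by differentiability. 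For the first term, I use the asymptotic equicontinuity of the process $\sqrt n(\hat v-v)$, which follows from the weak convergence in Theorem~\ref{thm::univariate-process} together with $|\hat s-h(z)|=\Op(n^{-1/2})$. This replaces $\hat v_{\hat s}-v_{\hat s}$ by $\hat v_{h(z)}-v_{h(z)}+\op(n^{-1/2})$ uniformly in $z$. Substituting the Bahadur representation at $s=h(z)$ and dividing by $v'_{h(z)}=(z-q_{h(z)})/(1-h(z))$ produces exactly the stated expansion. The constant term works out because $(1-h)(v_h-q_h)/(z-q_h)=1-h(z)$ at $v_{h(z)}=z$.

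\textbf{Conclusions.} At $z=v_\tau$ the summands are i.i.d. with variance $(1-\tau)^2\sigma_\tau^2/(v_\tau-q_\tau)^2$, using the computation in Corollary~\ref{coro::SQ-fixed-tau}, so the CLT gives the normal limit. Equicontinuity of $n^{1/2}(\hat h-h)$ follows because the leading term is the empirical process indexed by the class
\[
\bigl\{y\mapsto (y-q_{h(z)})\mathbf 1\{y\ge q_{h(z)}\}/(z-q_{h(z)})\bigr\}.
\]
This class is Donsker: it is a Lipschitz-in-parameter VC-type family with square-integrable envelope, by the second moment of $Y^+$. Its $L_2$ modulus is continuous in $z$.

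\textbf{Main obstacle.} I expect the hardest step to be controlling the jumps of $\hat v$ so that $\hat v_{\hat h(z)}=z+\op(n^{-1/2})$ holds uniformly. I would first try bounding the jumps via $n^{-1}\max_{i}|Y_i|\mathbf 1\{Y_i\ge q_L-\varepsilon\}$, using $\max_i|Y_i^+|=o_P(n^{1/2})$ under the finite second moment. The lower tail is irrelevant, since only observations above $q_L$ enter.
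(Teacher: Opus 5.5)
Your proof is correct, but it takes a different route from the paper. The paper does not invert by hand. It writes $\hat h=\phi(\hat v)$ for the inverse map $\phi(F)(z)=\inf\{s:F(s)\ge z\}$ and invokes Hadamard differentiability of $\phi$ at $v$, tangentially to continuous functions (Lemma 21.4 of \citet{van2000asymptotic}). It then applies the functional delta method (Theorem 20.8) to $\sqrt n(\hat v-v)\mathrel{\leadsto}\mathbb{G}$ from Theorem~\ref{thm::univariate-process}. That single step gives the uniform linearization and the convergence of $\sqrt n(\hat h-h)$ to a Gaussian process with continuous paths. Equicontinuity is then read off from the continuity of the limit, with no separate Donsker argument for the class $\{f_z\}$.

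Your argument re-derives, along this particular sequence, what that lemma packages:
\begin{itemize}
\item the $\Op(n^{-1/2})$ rate for $\hat h$;
\item the bound $|\hat v_{\hat h(z)}-z|\le$ (jump of $\hat v$ at $\hat h(z)$), which plays the role of the $o(t)$ slack in the proof of Lemma 21.4;
\item asymptotic equicontinuity of $\sqrt n(\hat v-v)$ at the random point $\hat h(z)$;
\item a mean-value expansion of $v$.
\end{itemize}
The delta-method route is shorter and yields process-level weak convergence of $\sqrt n(\hat h-h)$ directly. Yours is self-contained, and it makes explicit something the paper glosses over: $\hat v$ is a step function in $s$. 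The paper asserts $\hat v\in\mathbb{D}_1$, the space of continuous nondecreasing functions, which is not literally true. The delta method still applies, since Lemma 21.4 only requires monotone elements of the domain.

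On your self-identified main obstacle, the jump bound is easier than you expect. When $\hat q$ moves from $Y_{(k)}$ to $Y_{(k+1)}$, the jump of $\hat v$ equals $(\hat v_{s'}-Y_{(k)})/(n-k+1)$ for a neighbouring level $s'$. By uniform consistency of $\hat v$ and of $\hat q$ on $[\tau_L,\tau_U]$, this is $\Op(n^{-1})$ uniformly. Your cruder bound through $\max_i Y_i^+=\op(n^{1/2})$ already suffices, though.
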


The asymptotic property of $\hat{h}(z)$ is essential for the analysis of the i-Rock regression approach. 
The proof of Lemma \ref{lemma::inverseSQ} builds upon the uniform representation in Theorem \ref{thm::univariate-process}, as well as the functional Delta method; See, e.g., Theorem 20.8 in \citet{van2000asymptotic}. Note, the asymptotic normality of $\hat{v}_\tau$ at a single level \citep{chen2007nonparametric,zwingmann2016asymptotics}, is not sufficient to establish the result in Lemma \ref{lemma::inverseSQ}.

\subsubsection{Proof for the one-sample case}
\label{subsec::proof-one-sample}
The proofs in this subsection rely on standard empirical process tools in, e.g., \citet{van1996weak}, and we adopt the same notations therein. Let $Y_1,\ldots,Y_n$ be $i.i.d.$ observations from the same population. 
For a class of function $y\mapsto f(y;\theta)$ indexed by $\theta \in \mathbb{R}^q$, let $\mathbb{E}_n[f(Y^* ;\theta)] = \sumn f(Y_i;\theta)/n$, $\E[f(Y^*;\theta)] = \E[f(Y_i;\theta)]$ and $\Gn[f(Y^*;\theta)] = {n}^{1/2}\{\mathbb{E}_n[f(Y^* ;\theta)] - \E[f(Y^* ;\theta)]\}$. We sometimes use the subscript and write $\mathbb{E}_n[f_\theta]$ instead of $\mathbb{E}_n[f(Y^* ;\theta)]$ for further simplicity. For a semi-metric space $\mathbb{T}$, we use $\ell^{\infty}(\mathbb{T})$ to denote the functional space that consists all bounded functions of $\mathbb{T}\mapsto \mathbb{R}$. To prove Theorem~\ref{thm::univariate-process}, we need the following technical lemmas. 

\begin{lemma}
\label{lemma::SQ-monotonicity}
Under Condition \ref{cond::univariate}, for any $s,t$ such that $\tau_L\leq s<t\leq \tau_U$, we have $\hat{v}_s \leq \hat{v}_t$ and $v_s < v_t$. Furthermore, as a function of $s$, $\hat{v}_s$ is left continuous whose right limit exists everywhere.
\end{lemma}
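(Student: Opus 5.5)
The population part follows from Step 1 of the proof of Theorem~\ref{thm::RRM-correct}. By Condition~\ref{cond::univariate}, $F$ has a density bounded away from zero on $[q_L-\varepsilon,q_U+\varepsilon]$, so $q_s$ is continuous and strictly increasing on $[\tau_L,\tau_U]$. Write $v_s=(1-s)^{-1}\int_s^1 q_u\,\d u$, which is finite because $\E[Y^2\bm{1}\{Y\ge0\}]<\infty$. Its derivative is $(v_s-q_s)/(1-s)=(1-s)^{-2}\int_s^1(q_u-q_s)\,\d u$. This is strictly positive, since $q_u>q_s$ for $u$ slightly above $s$. Hence $v_s<v_t$ whenever $s<t$.

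For the empirical part, let $Y_{(1)}\le\cdots\le Y_{(n)}$ be the order statistics; ties occur with probability zero. Take $\hat q_s=Y_{(\lceil ns\rceil)}$ (I use $\lceil ns\rceil$ as the index, consistent with the standard left-continuous empirical quantile). Then $\hat v_s$ is the plain average of the top $k(s)=n-\lceil ns\rceil+1$ order statistics. The key observation is an averaging fact: moving from $s$ to $t>s$ either leaves $k$ unchanged or removes the smallest values $Y_{(\lceil ns\rceil)},\ldots$ from the set being averaged. Removing the smallest element of a finite set cannot decrease its mean. Concretely, $\bar a_{k}\le\bar a_{k-1}$ for the mean of the top $k$ values, because $\bar a_k=\{(k-1)\bar a_{k-1}+a_{\min}\}/k$ and $a_{\min}\le\bar a_{k-1}$. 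Iterating this gives $\hat v_s\le\hat v_t$.

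For the path properties, note that $s\mapsto\lceil ns\rceil$ is a left-continuous step function that is constant on the intervals $((j-1)/n,\,j/n]$. Consequently $\hat v_s$ is piecewise constant on these same intervals, so it is left-continuous with right limits everywhere. If $\hat q_s$ is instead defined as the minimizer of the check loss, or $\hat v_s$ uses the denominator $(1-s)n$ as in~\eqref{eq::empirical-SQ}, I would verify monotonicity directly. The main point to check there is that the version of the empirical quantile is left-continuous. Apart from that the argument is routine, and the only real obstacle is fixing this convention consistently with~\eqref{eq::sampleSQ}.
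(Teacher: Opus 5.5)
Your proof is correct and follows the paper's own argument essentially step for step. It obtains the population part from $\partial v_s/\partial s=(v_s-q_s)/(1-s)>0$. It obtains empirical monotonicity from the fact that raising the threshold $\hat q_s$ only discards the smallest retained observations; the paper does this directly with the counts $m_1\le m_2$. It obtains left-continuity from the left-continuous step behaviour of $\hat q_s$.

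One caution on your closing aside: with the denominator $(1-s)n$ of \eqref{eq::empirical-SQ}, monotonicity cannot be ``verified directly'', because that version is generally not monotone. The numerator is frozen on each $((j-1)/n,\,j/n]$ while $1-s$ shrinks, and the value jumps by $-Y_{(j)}/(n-j)$ at $s=j/n$, which is a drop whenever $Y_{(j)}>0$. This does not affect the lemma, which concerns the ratio estimator \eqref{eq::sampleSQ}.
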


\begin{lemma}
\label{lemma::sto-diff-univariate}
Let $\psi(y;\theta,s) = (y - v_s)\bm{1}\{y\geq \theta\}$. Under Condition \ref{cond::univariate}, and suppose that $|\tilde{q}_s- q_s| = \op(1)$ uniformly over $s\in[\tau_L,\tau_U]$, then we have
\[
\sup_{s\in[\tau_L,\tau_U]}\left|\Gn[\psi_{(\tilde{q}_s,s)}] - \Gn[\psi_{(q_s,s)}]\right| = \op(1).
\]
Furthermore, the function class $\mathcal{F} = \{y\mapsto \psi(y,\theta,s) : \theta\in [q_L,q_U], s \in [\tau_L,\tau_U] \}$ is Donsker.
\end{lemma}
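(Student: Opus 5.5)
The plan is to establish the Donsker property first, on a slightly enlarged parameter set, and then deduce the uniform stochastic-difference statement from asymptotic equicontinuity of the empirical process $\Gn$.

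\textbf{Donsker property.} Since $\tilde q_s$ may fall slightly outside $[q_L,q_U]$, I will work with the enlarged class
\[
\mathcal{F}_\varepsilon=\{y\mapsto \psi(y;\theta,s):\theta\in[q_L-\varepsilon,q_U+\varepsilon],\ s\in[\tau_L,\tau_U]\},
\]
where $\varepsilon$ is as in Condition~\ref{cond::univariate}. Since $\mathcal{F}\subset\mathcal{F}_\varepsilon$, proving $\mathcal{F}_\varepsilon$ Donsker also gives the claim for $\mathcal{F}$.

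Write $\psi(y;\theta,s)=y\bm{1}\{y\geq\theta\}-v_s\bm{1}\{y\geq\theta\}$ and treat the two pieces separately.
\begin{enumerate}[(i)]
\item The indicators $\{\bm{1}\{y\geq\theta\}\}$ form a VC class. Multiplying by the fixed function $y\mapsto y$ keeps the VC-subgraph property (Lemma 2.6.18 of \citet{van1996weak}).
\item By Lemma~\ref{lemma::SQ-monotonicity} and continuity, $s\mapsto v_s$ is monotone and bounded on $[\tau_L,\tau_U]$, say $|v_s|\le C$. So $\{v_s\bm{1}\{y\geq\theta\}\}$ is a bounded-coefficient multiple of a VC class, and its uniform covering numbers are polynomial.
\end{enumerate}
Hence $\mathcal{F}_\varepsilon$ has polynomial uniform entropy, since covering numbers of sums multiply.

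\textbf{Envelope.} This is the point that needs care, because only $\E[Y^2\bm{1}\{Y\ge 0\}]<\infty$ is assumed. On $\{y\geq\theta\}$ with $\theta\geq q_L-\varepsilon$ we have $|y|\le y^+ + |q_L-\varepsilon|$. So
\[
F(y)=y^+ + |q_L|+\varepsilon+C
\]
is an envelope, and it is square integrable. The uniform entropy Donsker theorem (Theorem 2.5.2 of \citet{van1996weak}) then applies. Measurability is not an issue, because the class is pointwise-measurable: rational $(\theta,s)$ suffice, using left-continuity in $\theta$ and monotonicity of $v_s$.

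\textbf{Equicontinuity step.} Let $\rho_P$ be the $L_2(P)$ semimetric. For $\theta_1<\theta_2$, both in $[q_L-\varepsilon,q_U+\varepsilon]$,
\[
\rho_P^2\bigl(\psi_{(\theta_1,s)},\psi_{(\theta_2,s)}\bigr)=\E\bigl[(Y-v_s)^2\bm{1}\{\theta_1\le Y<\theta_2\}\bigr]\le K|\theta_1-\theta_2|,
\]
with
\[
K=\sup_{y\in[q_L-\varepsilon,q_U+\varepsilon]}(|y|+C)^2 f(y)<\infty.
\]
The constant $K$ is finite because $f$ is bounded above on that interval by Condition~\ref{cond::univariate}.

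Since $\sup_s|\tilde q_s-q_s|=\op(1)$, with probability tending to one every $\tilde q_s$ lies in the enlarged interval, and then $\sup_s\rho_P(\psi_{(\tilde q_s,s)},\psi_{(q_s,s)})=\op(1)$. Fix $\eta,\delta>0$ and bound
\[
\Pr\Bigl(\sup_s|\Gn\psi_{(\tilde q_s,s)}-\Gn\psi_{(q_s,s)}|>\eta\Bigr)\le \Pr\Bigl(\sup_s\rho_P>\delta\Bigr)+\Pr^*\Bigl(\sup_{f,g\in\mathcal{F}_\varepsilon,\ \rho_P(f,g)<\delta}|\Gn f-\Gn g|>\eta\Bigr)+o(1).
\]
The first term vanishes by the display above. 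The second can be made arbitrarily small by choosing $\delta$ small, using asymptotic $\rho_P$-equicontinuity of the Donsker class $\mathcal{F}_\varepsilon$. Taking the supremum over the whole class removes any difficulty from the randomness of $\tilde q_s$.

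\textbf{Main obstacle.} I expect the only delicate point to be the envelope and moment bookkeeping: $Y$ need not have a finite negative second moment. This is exactly resolved by the truncation $\theta\ge q_L-\varepsilon$, which restricts attention to the region where $y$ is bounded below.
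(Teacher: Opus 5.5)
Your proposal is correct and follows essentially the paper's route: VC/uniform-entropy bounds plus a square-integrable envelope give the Donsker property, and the uniform difference bound then follows from asymptotic equicontinuity (the paper phrases this step as a continuous-mapping argument in the style of Lemma 19.24 of \citet{van2000asymptotic}). Two of your steps fill details the paper leaves implicit. Enlarging the index set to $[q_L-\varepsilon,q_U+\varepsilon]$ handles the fact that $\tilde q_s$ need not lie in $[q_L,q_U]$ near $s=\tau_L,\tau_U$, and the explicit modulus $\rho_P^2\le K|\theta_1-\theta_2|$ is what makes sup-norm closeness of $\tilde q_s$ to $q_s$ imply $\rho_P$-closeness.
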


\begin{proof}[Proof of Theorem~\ref{thm::univariate-process}]
We first prove the Bahadur representation for a broader class of ES estimator. Consider any estimator $\tilde{v}_s$ that solves the following estimating equation:
\begin{equation}
0 = \sumn (Y_i - \tilde{v}_s)\bm{1}\{Y_i \geq \tilde{q}_s\},
\label{eq::score-univariate}
\end{equation}
where $\tilde{q}_s$ is any estimator for the $q_s$ that satisfies (i) $\sup_{s\in[\tau_L,\tau_U]}|\tilde{q}_s - q_s| = \Op(n^{-1/2})$; and (ii) $\tilde{q}_s\in\ell^{\infty}([\tau_L,\tau_U])$ as a stochastic process indexed by $s$. Choosing $\tilde{q}_s$ as the sample quantile in (\ref{eq::score-univariate}) recovers the empirical ES estimator.

Let $\psi(y;\theta,s) = (y - v_s)\bm{1}\{y\geq \theta\}$. Given the quantile estimators $\tilde{q}_s$ $(s\in[\tau_L,\tau_U])$, the estimating equation (\ref{eq::score-univariate}) for $\tilde{v}_s$ solves $\En[(Y^* - \tilde{v}_s)\bm{1}\{Y^*\geq \tilde{q}_s\}] =\En[\psi(Y^*;\tilde{q}_s,s)] + (v_s - \tilde{v}_s)\En[\bm{1}\{Y^*\geq \tilde{q}_s\}] = 0$. Hence the estimator $\tilde{v}_s$ satisfies:
\begin{eqnarray}
 &&\sqrt{n}(\tilde{v}_s - v_s )\En[\bm{1}\{Y^*\geq \tilde{q}_s\}]\nonumber\\
 &= & \sqrt{n}\En[\psi(Y^*;\tilde{q}_s,s)] \label{eq::stardardized1-univariate}\\
 &= & \sqrt{n}\left\{\E[\psi(Y^*;\tilde{q}_s,s)] - \E[\psi(Y^*;q_s,s)]\right\}\label{eq::intermediate1-univariate}\\
 &&\; + \;\underbrace{\Gn[\psi(Y^*;\tilde{q}_s,s)] - \Gn[\psi(Y^*;q_s,s)]}_{R_1(s)} \nonumber\\
 && \; + \;\Gn[\psi(Y^*;q_s,s)]\nonumber\\
 &=& \frac{\partial \E[\psi(Y^*;q_s,s)]}{\partial q_s}[\sqrt{n}\left( \tilde{q}_s - q_s \right)]+ \Gn[\psi(Y^*;q_s,s)] +  R_1(s)\nonumber\\
 &&\;+\;
 \underbrace{\sqrt{n}\left\{\E[\psi(Y^*;\tilde{q}_s,s)] - \E[\psi(Y^*;q_s,s)] -\frac{\partial \E[\psi(Y^*;q_s,s)]}{\partial q_s}\left( \tilde{q}_s - q_s \right)  \right\}}_{R_2(s)} \nonumber\\
 &=  & (v_s - q_s)f_Y(q_s)[\sqrt{n}\left( \tilde{q}_s - q_s \right)]+
 \Gn[\psi(Y^*;q_s,s)] +  R_1(s) + R_2(s),\nonumber
\end{eqnarray}
where (\ref{eq::intermediate1-univariate}) holds since $\E[\psi(Y^*;q_s,s)] = 0$ for all $s\in[\tau_L,\tau_U]$, and the last equality follows since $\partial \E[\psi(Y^*;\theta,s)]/\partial \theta =( v_s - \theta)f_Y(\theta)$.

Now we show that both $R_1(s)$ and $R_2(s)$ are negligible uniformly in $s$.
By Lemma \ref{lemma::sto-diff-univariate}, we immediately obtain $R_1(s) = \op(1)$ uniformly over $s\in[\tau_L,\tau_U]$. For $R_2$, we first re-write
\[
\E[\psi(Y^*;\theta,s)] = \int_{\theta}^{+\infty} yf_Y(y)\,\mathrm{d}y - v_s [1-F_Y(\theta)] \triangleq I_1(\theta) - v_s\times I_2(\theta),
\]
and hence by Taylor expansion with respect to $\theta$, and letting $\Delta_s =\tilde{q}_s - q_s $, we have 
\begin{eqnarray*}
&&\sup_{s\in[\tau_L,\tau_U]}|R_2(s)|\\ 
&\leq&\sup_{s\in[\tau_L,\tau_U]} |\sqrt{n} \Delta_s| \times \sup_{\substack{\theta\in[q_L,q_U]\\\\|\theta-\theta'|\leq |\Delta_s|}}\left| [I_1^{\prime}(\theta') - I_1^{\prime}(\theta)] - v_s\times [ I_2^{\prime}(\theta') - I_2^{\prime}(\theta)] \right| \\
&=&\sup_{s\in[\tau_L,\tau_U]} |\sqrt{n} \Delta_s| \times \sup_{\substack{\theta\in[q_L,q_U]\\\\|\theta-\theta'|\leq |\Delta_s|}}\left| \{I_1^{\prime}(\tilde \theta) - v_s I_2^{\prime}(\tilde \theta')\} (\theta'-\theta) \right| \\
&=&\op(1),
\end{eqnarray*}
where $\tilde \theta$ and $\tilde \theta'$ are between $\theta$ and $\theta'$, the second to last equality follows from mean value theorem and that both $I_1'(\theta) = -\theta f_Y(\theta)$ and $I_2'(\theta) = - f_Y(\theta)$ are continuous on $\theta\in[q_L ,q_U]$ under Condition \ref{cond::univariate}, and the last equality follows from (i) $|\Delta_s| = O_p(n^{-1/2})$, (ii) $v_s$ is uniformly bounded over $s\in[\tau_L,\tau_U]$, and (iii) $I_1'(\theta)$ and $I_2'(\theta)$ are uniformly bounded over $\theta\in[q_L ,q_U]$.

Combining the results for $R_1(s)$ and $R_2(s)$ with Equation (\ref{eq::stardardized1-univariate}), we have
\begin{equation}
\sqrt{n}(\tilde{v}_s - v_s)\En[\bm{1}\{Y^*\geq \tilde{q}_s\}] =   (v_s - q_s)f_Y(q_s)[\sqrt{n}\left( \tilde{q}_s - q_s \right)]+
 \Gn[\psi(Y^*;q_s,s)] + \op(1),
 \label{eq::pre-Bahadur-univariate}
\end{equation}
where the $\op(1)$ term is uniform in $s\in [\tau_L,\tau_U]$. From here, we can deduce the $n^{1/2}$-uniform consistency of $\tilde{v}_s$ as follows.
From the Lemma \ref{lemma::sto-diff-univariate}, the function class $\{y\mapsto \psi(y,\theta,s);s\in[\tau_L,\tau_U],\theta\in[q_L,q_U]\}$ is Donsker, therefore
\begin{equation}\label{eq::Gn}
    \sup_{s\in[\tau_L,\tau_U]}|\Gn[\psi(Y^*;q_s,s)]|
\leq
\sup_{\substack{s\in[\tau_L,\tau_U]\\\theta\in[q_L,q_U]}}|\Gn[\psi(Y^*;\theta,s)]| = \Op(1).
\end{equation}
Furthermore, the assumptions on $\tilde{q}_s$ at the beginning of the proof implies
\begin{equation}\label{eq::qs}
    \sup_{s\in[\tau_L,\tau_U]}\sqrt{n}|\tilde{q}_s - q_s| = \Op(1),\quad\text{and}\quad\En[\bm{1}\{Y^*\geq \tilde{q}_s\}] = 1-s +\op(1).
\end{equation}
Combining~\eqref{eq::pre-Bahadur-univariate},~\eqref{eq::Gn},~\eqref{eq::qs}, and that $v_s$ and $f_Y(q_s)$ are bounded, we have
\[
\sup_{s\in[\tau_L,\tau_U]}|\sqrt{n}(\tilde{v}_s - v_s)| = \Op(1).
\] 
From here, we can obtain the uniform Bahadur representation of $\tilde{v}_s - v_s$. Dividing both sides of (\ref{eq::pre-Bahadur-univariate}) by $(1-s)$, we obtain, since $\sqrt{n}(\tilde{v}_s - v_s)$ is asymptotically tight in $\ell^{\infty}([\tau_L,\tau_U])$, that
\begin{equation}
\sqrt{n}(\tilde{v}_s - v_s) = \frac{1}{1-s}\left\{\sqrt{n}\left( \tilde{q}_s - q_s \right)(v_s - q_s)f_Y(q_s)+
 \Gn[(Y^* - v_s)\bm{1}\{Y^*\geq q_s\}]\right\} + \op(1),
 \label{eq::Bahadur-general-univariate}
\end{equation}
uniformly over $s\in[\tau_L,\tau_U]$.

In particular, if we choose $\tilde{q}_s$ to be the sample quantile that satisfies $\En[\bm{1}\{Y^*\leq \tilde{q}_s\}] = s$, then for sufficiently large $n$, the estimator obtained from (\ref{eq::score-univariate}) 
is asymptotically equivalent to the empirical ES estimator $\hat{v}_s$ defined in (\ref{eq::sampleSQ}).
Since the sample quantile satisfies
\[
\sqrt{n}(\hat{q}_s - q_s) = \sumn \frac{s - \bm{1}\{Y_i \leq q_s\}}{f_Y(q_s)} + \op(1),
\]
uniformly in $s$ (see, e.g., Corollary 21.5 of \citet{van2000asymptotic}). Combining the above displayed equation with (\ref{eq::Bahadur-general-univariate}), we have
\[
\sqrt{n}(\hat{v}_s - v_s) = \frac{1}{1-s}\Gn[(Y^* - q_s)\bm{1}\{Y^*\geq q_s\}] + \op(1),
\]
uniformly over $s\in[\tau_L,\tau_U]$.

Finally, we show that the empirical ES process $\sqrt{n}(\hat{v}_s - v_s)$ converges towards a Gaussian Process in $\ell^{\infty}[\tau_L,\tau_U]$. In view of the uniform Bahadur representation, it suffices to consider the process $\Gn[(Y^*-q_s)\bm{1}\{Y^*\geq q_s\}]$.
Since $q_s$ is uniformly Lipschitz continuous in $s\in[\tau_L,\tau_U]$, it follows from Example 19.19 of \citet{van2000asymptotic} that the function class $\{y\mapsto (y-q_s)\bm{1}[y\geq q_s]: s\in[\tau_L,\tau_U]\}$ is Donsker. Therefore $\Gn[(Y^*-q_s)\bm{1}\{Y^*\geq q_s\}]\converged \mathbb{G}_{\infty}(s)$ as a function of $s$ on the space $\ell^{\infty}([\tau_L,\tau_U])$; here $\mathbb{G}_{\infty}(s)$ is a zero-mean Gaussian process with continuous sample path with respect to the semi-metric 
\[
\rho(s,t) = \left(\E\{(Y^*-q_s)\bm{1}[Y^*\geq q_s] - (Y^*-q_t)\bm{1}[Y^*\geq q_t]\}^2\right)^{1/2},\quad s,t\in[\tau_L,\tau_U].
\]
Since $\rho(s,t) \leq |q_s - q_t| \lesssim |s-t|$, the sample path of $\mathbb{G}_{\infty}(\cdot)$ is also continuous with respect to the Euclidean distance. This concludes the proof.
\end{proof}

\begin{proof}[Proof of Lemma~\ref{lemma::inverseSQ}]
Let $a = v_\tau - \varepsilon_0$ and $b = v_\tau + \varepsilon_0$ for some constant $\varepsilon_0$ such that $v_{\tau_U} - 2\varepsilon_0 \geq v_\tau \geq v_{\tau_L} +2\varepsilon_0$.
Define the function space $\mathbb{D}_1$ as the space of all non-decreasing, continuous function on $[\tau_L,\tau_U]$.
For any function $F\in \mathbb{D}_1$, we define the inverse map $\phi(\cdot): \mathbb{D}_1\mapsto \ell^{\infty}([a,b])$ such that $\phi(F)(z) = \inf\{s\in[\tau_L,\tau_U]:F(s)\geq z\}$ for $z\in[a,b]$ \footnote{We define $\phi(F)(z) = \tau_U$ if $\sup_{s\in[\tau_L,\tau_U]}F(s) < z$, so that $\phi(F)\in\ell^{\infty}([a,b])$.}. Note that $v_s$ as a function of $s\in[\tau_L,\tau_U]$ is continuously differentiable with $\partial v_s/\partial s = (v_s - q_s)/(1-s) > 0$. Following Lemma 21.4 in \citet{van2000asymptotic}, the map $\phi(\cdot)$ is Hadamard-differentiable at $v_s\in \mathbb{D}_1$, tangentially to the set of all continuous (with respect to the Euclidean distance) functions on $[\tau_L,\tau_U]$. The Hadamard-derivative of the inverse map $\phi$ at $v_s$ is $\phi'_{v}(h) = -h(v^{-1})/v'(v^{-1})$, for any continuous function $h$.

Next we apply the functional Delta method. Note $v_s,\hat{v}_s \in \mathbb{D}_1$, and $h(\cdot) = \phi\circ v_s(\cdot)\in \ell^{\infty}([a,b])$; since $\hat{v}_{\tau_L} \convergeip v_{\tau_L} < a$, $\hat{v}_{\tau_U} \convergeip v_{\tau_U} > b$, the inverse ES process $\hat{h}(\cdot) = \phi\circ \hat{v}_s(\cdot)$ with probability going to $1$ \footnote{Since the infimum in the definition of $\phi$ is taken only within $[\tau_L,\tau_U]$.}. 
Therefore, applying the functional Delta method (Theorem 20.8 in \citet{van2000asymptotic}) towards the inverse map $\phi$ gives
\begin{eqnarray*}
\sqrt{n}[\hat{h}(z) - h(z)] &=& -\left[\frac{\sqrt{n}(\hat{v}_s - v_s)}{v'(s)}\right]{\Bigg|}_{s = v^{-1}(z)} + \op(1)\\
&=& -\left[\frac{\Gn[(Y^* - q_s)\bm{1}\{Y^*\geq q_s\}]}{v_s - q_s}\right]{\Bigg|}_{s = v^{-1}(z)} + \op(1)
,
\end{eqnarray*}
in $\ell^{\infty}[a,b]$, which shows the first part of the Lemma. Since $n^{1/2}(\hat{v}_s - v_s)\converged \mathbb{G}_{\infty}(s)$, it follows that $n^{1/2}[\hat{h}(z) - h(z)]$ also converges towards a Gaussian process with continuous sample path (with respect to the Euclidean distance), since $v_s$ is continuously differentiable with respect to $s$. Asymptotic equi-continuity of $n^{1/2}[\hat{h}(z) - h(z)]$ is then a consequence of its convergence towards a continuous stochastic process.

For the second part of the Lemma, taking $z = v_\tau$ in the above displayed equation, and recalling $\sqrt{n}(\hat{v}_\tau - v_\tau)\converged \mathrm{N}(0,\sigma_\tau^2)$ from Theorem \ref{thm::univariate-process} concludes the proof.
\end{proof}

\subsubsection{Proof for the i-Rock approach with discrete covariates}
\label{subsec::proof-mRock-discrete}
With the results of one-sample ES process, we now establish the main results for the i-Rock estimator. 
We begin by giving some finite-sample properties of the empirical i-Rock loss function, defined as
\begin{equation}
L_n(\theta) = \frac{1}{M}\sum_{m=1}^M\left[\int_{0}^1\,\rho_\tau\left(\hat{v}_m(\alpha) - x_m^T\theta\right)\,\mathrm{d}\alpha\right].\label{eq::mRock-loss-discrete}
\end{equation} 
We suppose that $|L_n(\theta)| < +\infty$ for all $\theta$; otherwise we can restrict the domain of interest to $\{\theta: |L_n(\theta)| < +\infty\}$.
\begin{proposition} 
\label{prop::m-Rock-loss-discrete}
Under Condition R-X, the following holds true for the empirical i-Rock loss function $L_n(\theta)$:
\begin{enumerate}
  \item $L_n(\theta)$ is convex and Lipshitz continuous.
  \item The directional derivative of $L_n(\theta)$ exists at any $\theta$ and along any direction.
  \item Suppose there are no ties among the response at each covariate value, then any minimizer of $L_n(\theta)$, denoted by $\widehat{\beta}$, satisfies:
\begin{equation}
\label{eq::m-Rock-optimality}
\left\lVert M^{-1}\sum_{m=1}^M x_m \left[\tau - \hat{h}_m(x_m^T\widehat{\beta})\right]\right\rVert \leq \frac{C_1M}{n},
\end{equation}
for some universal constant $C_1 > 0$, where $\hat{h}_m(z)$ is the empirical inverse\footnote{The inverse is well-defined as $\hat{v}_m(\alpha)$ is monotonically increasing in $\alpha$; see Lemma \ref{lemma::SQ-monotonicity} in Section \ref{subsec::proof-one-sample}.} of the ES function:
\[
\hat{h}_m(z) = \inf\{s\in[0,1]:\hat{v}_m(s) \geq z\}.
\]
\end{enumerate}
\end{proposition}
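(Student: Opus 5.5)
The plan is to handle the three claims in order, using only the structure of $\hat{v}_m(\cdot)$ as a monotone, piecewise explicit function of $\alpha$ (Lemma~\ref{lemma::SQ-monotonicity}).

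\textbf{Claims 1 and 2.} For each fixed $\alpha$, the map $\theta\mapsto\rho_\tau(\hat{v}_m(\alpha)-x_m^T\theta)$ is the convex check loss composed with an affine map. It is therefore convex, and it is Lipschitz with constant $\max(\tau,1-\tau)\lVert x_m\rVert\le\lVert x_m\rVert$. Integrating over $\alpha$ and averaging over $m$ preserves both properties. This gives
\[
|L_n(\theta)-L_n(\theta')|\le M^{-1}\textstyle\sum_m\lVert x_m\rVert\,\lVert\theta-\theta'\rVert,
\]
which proves 1, using the standing assumption $|L_n|<\infty$. Claim 2 then follows because a finite convex function on $\mathbb{R}^{p+1}$ has one-sided directional derivatives everywhere. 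I will also record the explicit formula, which I need for claim 3. Set $z_m=x_m^T\theta$. By dominated convergence (the difference quotients are bounded by $\lVert x_m\rVert\lVert u\rVert$), the derivative of $L_n$ at $\theta$ in direction $u$ is
\[
M^{-1}\sum_m\Big\{x_m^Tu\,\big(|\{\alpha:\hat{v}_m(\alpha)<z_m\}|-\tau\big)\mathbf{1}_{\{\hat{v}_m\neq z_m\}}\text{-part}+\int_{\{\alpha:\hat{v}_m(\alpha)=z_m\}}\rho_\tau(-x_m^Tu)\,\mathrm{d}\alpha\Big\}.
\]
More precisely, on $\{\hat{v}_m(\alpha)\ne z_m\}$ the integrand is differentiable with derivative $-x_m^Tu\,[\tau-\mathbf{1}\{\hat{v}_m(\alpha)<z_m\}]$, and on the level set it contributes $\rho_\tau(-x_m^Tu)$. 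Since $\hat{v}_m$ is nondecreasing and left-continuous, $\{\alpha:\hat{v}_m(\alpha)<z\}$ is an interval of length $\hat{h}_m(z)$. The derivative therefore equals
\[
M^{-1}\sum_m x_m^Tu\,[\hat{h}_m(z_m)-\tau]+R(u),\qquad |R(u)|\le M^{-1}\sum_m\lVert x_m\rVert\lVert u\rVert\,\lambda_m,
\]
where $\lambda_m$ is the Lebesgue measure of the level set $\{\alpha:\hat{v}_m(\alpha)=z_m\}$.

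\textbf{Claim 3.} At a minimizer $\widehat\beta$, every directional derivative is nonnegative. I apply this with $u=\pm e_k$ for each coordinate $k$. The sign flip in the main term then gives
\[
\Big|M^{-1}\sum_m x_{mk}[\tau-\hat{h}_m(x_m^T\widehat\beta)]\Big|\le M^{-1}\sum_m|x_{mk}|\lambda_m.
\]
Summing over coordinates gives the norm bound, provided $\lambda_m\le M/n$. The constant $C_1$ then depends only on $\max_m\lVert x_m\rVert$ and $p$.

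\textbf{Main obstacle: the level-set bound.} The step I expect to be hardest is showing $\lambda_m\le M/n=1/n_0$, which is where the no-ties hypothesis enters. On each interval $s\in((j-1)/n_0,j/n_0]$ the empirical quantile $\hat{q}_m(s)$ is a fixed order statistic. There, by \eqref{eq::empirical-SQ}, $\hat{v}_m(s)=S_j/\{(1-s)n_0\}$ with $S_j$ a fixed sum of top order statistics. If $S_j\ne0$, this is strictly monotone in $s$, so each value is attained at most once inside the interval; if $S_j=0$, $\hat{v}_m$ is constant there. Across consecutive intervals, the no-ties assumption should force a strict jump, because adding or removing a distinct order statistic changes the tail average. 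Hence any level set lies within a single interval, so $\lambda_m\le 1/n_0$. I will check this jump argument carefully at the boundary points, using left continuity from Lemma~\ref{lemma::SQ-monotonicity}.
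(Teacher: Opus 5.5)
Your proofs of parts 1 and 2, and your reduction in part 3 to $\bigl|M^{-1}\sum_m x_{mk}[\tau-\hat h_m(x_m^T\widehat\beta)]\bigr|\le M^{-1}\sum_m|x_{mk}|\lambda_m$, follow the paper's argument. Using nonnegativity of the one-sided derivatives in the directions $\pm u$ is in fact more accurate than the paper's claim that all directional derivatives vanish at $\widehat\beta$, and it gives the same bound. The step that fails as written is the level-set bound $\lambda_m\le M/n$.

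\textbf{Where it breaks.} You analyze \eqref{eq::empirical-SQ} literally, with denominator $(1-s)n_0$, and that function is not monotone.
\begin{enumerate}
\item On the cell $((j-1)/n_0,j/n_0]$ it equals $S_j/\{(1-s)n_0\}$, which decreases in $s$ when $S_j<0$.
\item At $s=j/n_0$ it jumps from $S_j/(n_0-j)$ to $S_{j+1}/(n_0-j)$, a jump of $-Y_{(j)}/(n_0-j)$. This is downward whenever $Y_{(j)}>0$, and it is zero when $Y_{(j)}=0$, which the no-ties assumption does not exclude.
\end{enumerate}
So ``no ties forces a strict jump'' is false for the formula you wrote. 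More importantly, the identity $\mathrm{Leb}\{\alpha:\hat v_m(\alpha)<z\}=\hat h_m(z)$ fails, and you need it to pass from the directional derivative to $\hat h_m$. For example, $n_0=2$, $Y=\{1,2\}$, $z=2.5$ gives a Lebesgue measure of $0.5$ but $\hat h_m(z)=0.4$. Your inference ``a strict jump at every boundary, hence each level set lies in one cell'' also presupposes monotonicity; without it a level can recur in distant cells.

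\textbf{The fix.} The monotone estimator the statement's footnote refers to (Lemma~\ref{lemma::SQ-monotonicity}) is the count-normalised tail mean \eqref{eq::sampleSQ}. On each cell $((j-1)/n_0,j/n_0]$ it is constant, equal to $\bar S_j=(n_0-j+1)^{-1}\sum_{k\ge j}Y_{(k)}$. Under no ties,
\[
\bar S_{j+1}-\bar S_j=\frac{\sum_{k>j}\{Y_{(k)}-Y_{(j)}\}}{(n_0-j)(n_0-j+1)}>0 .
\]
Hence $\hat v_m$ is a strictly increasing left-continuous step function. There is nothing to check inside a cell, and no boundary subtlety arises. Every level set lies in a single cell, so $\lambda_m\le 1/n_0=M/n$. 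This is exactly the fact the paper's proof invokes, without proof, when it writes ``since there are no ties''. With this replacement your argument goes through.
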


Proposition \ref{prop::m-Rock-loss-discrete} shows that the function $L_n(\theta)$ enjoys some desirable properties. Therefore, theoretical and computational tools from convex optimization apply to the analysis of the i-Rock approach. With convexity, (\ref{eq::m-Rock-optimality}) gives the necessary first-order optimality condition for the i-Rock estimator. Though $L_n(\theta)$ is not everywhere differentiable, optimality requires all the directional derivatives of $L_n(\theta)$ to be non-negative at $\widehat{\beta}$, which leads to (\ref{eq::m-Rock-optimality}).

\begin{remark}
Proposition \ref{prop::m-Rock-loss-discrete} is a general result that does not depend on Conditions R-Y1 and R-Y2; nor does it depend on the choice of $\hat{v}_m(\alpha)$ in the loss function. The conclusions take hold for any estimator $\hat{v}_m(\alpha)$ that is (i) monotonic in $\alpha$, and (ii) not flat over $\alpha$ for any interval of length $M/n$. However, our subsequent asymptotic analysis may depend on the sampling properties of $\hat{v}_m(\alpha)$.
\end{remark}


\begin{proof}[Proof of Proposition~\ref{prop::m-Rock-loss-discrete}]
Part 1 of the Proposition follows from the properties of the check loss $\rho_\tau(\cdot)$ function. Note that
\begin{eqnarray}
L_n({u}_1) + L_n({u}_2) 
&=& \sumM\left[\int_{0}^1 \rho_\tau\left(\hat{v}_{m}(\alpha) - x_m^T{u}_1\right) + \rho_\tau\left(\hat{v}_{m}(\alpha) - x_m^T{u}_2\right) \,\mathrm{d}\alpha\right]\nonumber\\
& \geq &  \sumM\int_{0}^1 \rho_\tau\left(\hat{v}_{m}(\alpha) - x_m^T({u}_1+{u}_2)/2\right) \,\mathrm{d}\alpha\nonumber\\
& \geq &  L_n\left(\frac{{u}_1+{u}_2}{2}\right)\label{eq::RRM-convex-discrete}.
\end{eqnarray}
Moreover, 
\begin{eqnarray*}
| L_n({u}_1) - L_n({u}_2) |
&=& \left|\sumM\int_{0}^1 \rho_\tau\left(\hat{v}_{m}(\alpha) - x_m^T{u}_1\right) - \rho_\tau\left(\hat{v}_{m}(\alpha) - x_m^T{u}_2\right)\,\mathrm{d}\alpha\right|\\
& \leq &  \sumM\int_{0}^1 \left|\rho_\tau\left(\hat{v}_{m}(\alpha) - x_m^T{u}_1\right) - \rho_\tau\left(\hat{v}_{m}(\alpha) - x_m^T{u}_2\right)\right|\,\mathrm{d}\alpha \\
& \leq &  \sumM \int_{0}^1 \left| x_m^T({u}_1-{u}_2)\right|\,\mathrm{d}\alpha\\
& \leq & \sumM\left\lVert x_m\right\rVert \cdot \lVert {u}_1-{u}_2\rVert.
\end{eqnarray*}
Thus, the convexity and Lipschitz continuity of $L_n(\theta)$ follows.

For Part 2 of the Proposition, the previous Lipschitz continuity implies we can exchange the order of integration and differentiability.  Therefore
\begin{eqnarray}
\nabla_{{w}} L_n({u}) &=& \lim_{t\to 0+}\frac{L_n({u} + t{w}) - L_n({u})}{t}\nonumber\\
&=&\sumM \int_{0}^1 \nabla_{{w}} \rho_\tau(\hat{v}_m(\alpha) - x_m^T{u})\,\mathrm{d}\alpha\nonumber
\\
&=&- \sumM x_m^T{w}\int_{0}^1 \psi^*_\tau\left(\hat{v}_m(\alpha)  - x_m^T{u},-x_m^T{w}\right)  \,\mathrm{d}\alpha,\label{eq::derivative-discrete}
\end{eqnarray}
where $\psi_\tau^*$ originates from the gradient condition of the check loss function, as in \citet[page 33]{koenker_2005}:
\begin{eqnarray}
\psi_\tau^*(u,v) &=&
 \begin{cases}
\tau - \bm{1}\{u<0\},&\text{if }u\neq 0,\\
\tau - \bm{1}\{v<0\},&\text{if }u = 0.
\end{cases}
\nonumber\\
&=& \tau - \bm{1}\{u < 0\} - \bm{1}\{u =  0, v < 0\}\nonumber.
\end{eqnarray}

We now prove Part 3, i.e., the optimality condition for the i-Rock estimator. By the convexity of $L_n$, any minimizer $\hat{\beta}$ of $L_n$ must satisfy: $\nabla_{{w}}L_n(\hat{\beta})  = 0$,
for all ${w} \in \mathbb{R}^p$, $\lVert {w}\rVert = 1$. Using the expression in (\ref{eq::derivative-discrete}), we can re-write the optimality condition as
\begin{eqnarray}
0 &=&
\sumM x_m^T{w}\int_{0}^1 \psi^*_\tau\left(\hat{v}_m(\alpha) - x_m^T\hat{\beta},-x_m^T{w}\right)  \,\mathrm{d}\alpha \nonumber\\
& = &
\sumM x_m^T{w}\left(\tau -  \int_{0}^1 \bm{1}\{\hat{v}_m(\alpha) < x_m^T\hat{\beta}\} \,\mathrm{d}\alpha - \bm{1}\{x_m^T{w} > 0 \}\int_{0}^1 \bm{1}\{\hat{v}_m(\alpha) = x_m^T\hat{\beta}\} \,\mathrm{d}\alpha\right).\nonumber\\
\label{eq::optimality-discrete-1}
\end{eqnarray}
By the monotonicity of $\hat{v}_m(\alpha)$, each of the set $\{\alpha:\hat{v}_m(\alpha) < x_m^T\hat{\beta}\}$ is an interval on $[0,1]$. By relating the integration to Lebesgue measure, we have
\begin{eqnarray*}
\int_{0}^1 \bm{1}\{\hat{v}_m(\alpha) < x_m^T\hat{\beta}\} \,\mathrm{d}\alpha 
&=& 
1 - \int_{0}^1 \bm{1}\{\hat{v}_m(\alpha) \geq x_m^T\hat{\beta}\} \,\mathrm{d}\alpha \\
&=&
1 - Leb\left(\{\alpha\in(0,1):\hat{v}_m(\alpha) \geq x_m^T\hat{\beta}\}\right)\\
&=&\hat{h}_m(x_m^T\hat{\beta}),
\end{eqnarray*}
where $Leb(\cdot)$ is the Lebesgue measure on $\mathbb{R}$, and the last inequality follows from the definition of $\hat{h}_m(\cdot)$. 
Therefore, (\ref{eq::optimality-discrete-1}) implies that
\begin{eqnarray}
\left\lVert \sumM x_m\left[\tau - \hat{h}_m(x_m^T\hat{\beta})\right] \right\rVert_2
&=&
\sup_{\lVert {w}\rVert = 1}\left[\sumM x_m^T{w}\left(\tau - \hat{h}_m(x_m^T\hat{\beta})\right) \right]
\nonumber\\
&=&
\sup_{\lVert{w}\rVert = 1}\left[
\sumM x_m^T{w}\bm{1}\{x_m^T{w} > 0\}\cdot \int_{0}^1 \bm{1}\{\hat{v}_m(\alpha) = x_m^T\hat{\beta}\} \,\mathrm{d}\alpha\right]\nonumber\\
&\leq&
\sumM \lVert x_m\rVert \cdot Leb\{\alpha\in[0,1]: \hat{v}_m(\alpha) = x_m^T\hat{\beta}\} \nonumber\\
&\lesssim&
\frac{M^2}{n},\label{eq::lebesgue-discrete}
\end{eqnarray}
almost surely since the covariates are bounded; the last inequality follows since there are no ties among $Y_1,\ldots,Y_n$, and hence $Leb\{\alpha\in[0,1]: \hat{v}_m(\alpha) = x_m^T\hat{\beta}\} \leq  M/n$. The proof is now complete.
\end{proof}

Our subsequent theoretical analysis builds upon the generalized Z-estimation framework\footnote{We use the word `generalized' because the estimating equation (\ref{eq::m-Rock-optimality}) is not an empirical average over each data point.} from Proposition \ref{prop::m-Rock-loss-discrete}. In particular, Equation~\eqref{eq::m-Rock-optimality} suggests that the property of the i-Rock estimator is closely tied to that of $\hat{h}_m(\cdot)$, the inverse empirical ES. Lemma~\ref{lemma::inverseSQ-main} serves as an important technical tool to understand the i-Rock estimator via (\ref{eq::optimality-discrete-1}). 

\begin{lemma}
\label{lemma::inverseSQ-main}
Under a fixed discrete design in~\eqref{eq::fix_design} and Conditions R-Y1 and R-Y2, the inverse empirical ES satisfies
\[
\sqrt{\frac{n}{M}}\left[\hat{h}_m\{v_m(\tau)\} - \tau\right] \converged \mathrm{N}\left(0,\;\;\frac{(1-\tau)^2\sigma_m^2(\tau)}{\{v_m(\tau) - q_m(\tau)\}^2}\right),
\]
for each $m = 1,\ldots, M$, with $(1-\tau)\sigma_m^2(\tau) = \text{var}[Y_m\mid Y_m\geq q_m(\tau)] + \tau[v_m(\tau) - q_m(\tau)]^2$.
\end{lemma}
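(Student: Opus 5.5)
This is essentially Lemma~\ref{lemma::inverseSQ} applied bin by bin, so the plan is to reduce to the one-sample setting. For each fixed $m$, the responses $\{Y_{mj}: j=1,\ldots,n/M\}$ are i.i.d.\ with a common distribution function. The number of bins $M$ is fixed, so the per-bin sample size $n_0 = n/M\to\infty$. The empirical ES $\hat v_m(s)$ in~\eqref{eq::empirical-SQ} coincides with the one-sample estimator~\eqref{eq::sampleSQ} computed from these $n_0$ observations, up to the harmless normalization $(1-s)n_0$ versus $\sum_j \bm{1}\{Y_{mj}\geq \hat q_m(s)\}$. These two differ by at most one observation, i.e.\ by $O(1/n_0)$ relative error uniformly in $s$. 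This term is $\op(n_0^{-1/2})$ because $\hat v_m(s)$ is $\Op(1)$ and the largest order statistic is $\op(n_0^{1/2})$ under a finite second moment of $Y^+$. Likewise, $\hat h_m$ is the one-sample empirical inverse ES.

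The main step is to verify Condition~\ref{cond::univariate} for the distribution at $x_m$, on some interval $[\tau_L,\tau_U]\ni\tau$.
\begin{itemize}
\item By Condition R-Y1, $f_m$ is continuous and bounded away from zero on $[q_m(\tau)-\varepsilon,q_m(\tau)+\varepsilon]$. Continuity on this compact interval also gives an upper bound.
\item Choose $\tau_L<\tau<\tau_U$ close enough to $\tau$ that $q_m(\tau_L)-\varepsilon/2$ and $q_m(\tau_U)+\varepsilon/2$ lie inside that interval. This is possible since $q_m$ is continuous at $\tau$, as $F_m$ is strictly increasing there.
\item Condition~\ref{cond::univariate} then holds with $\varepsilon/2$ in place of $\varepsilon$.
\item The moment requirement $\E[Y^2\bm{1}\{Y\geq 0\}]<\infty$ is exactly Condition R-Y2.
\end{itemize}
I expect this bookkeeping (shrinking the quantile window so that the density conditions cover $[q_L-\varepsilon',q_U+\varepsilon']$) to be the only delicate point. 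Everything else is inherited.

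With Condition~\ref{cond::univariate} in hand, apply the second conclusion of Lemma~\ref{lemma::inverseSQ} with sample size $n_0=n/M$ and $z=v_m(\tau)$, noting $h_m(v_m(\tau))=\tau$. This gives $\sqrt{n_0}\,[\hat h_m\{v_m(\tau)\}-\tau]\converged \mathrm{N}\bigl(0,(1-\tau)^2\sigma_m^2(\tau)/\{v_m(\tau)-q_m(\tau)\}^2\bigr)$, with $\sigma_m^2(\tau)$ given by Corollary~\ref{coro::SQ-fixed-tau}, namely $(1-\tau)\sigma_m^2(\tau)=\var[Y_m\mid Y_m\geq q_m(\tau)]+\tau[v_m(\tau)-q_m(\tau)]^2$. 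This is the claimed statement. Because $\hat h_m$ is defined through an infimum over $s\in[0,1]$ rather than over $[\tau_L,\tau_U]$, I will also note, as in the proof of Lemma~\ref{lemma::inverseSQ}, that the two coincide with probability tending to one. The reason is that $\hat v_m$ is monotone (Lemma~\ref{lemma::SQ-monotonicity}) and $\hat v_m(\tau_L)\convergeip v_m(\tau_L)<v_m(\tau)<v_m(\tau_U)\xleftarrow{P}\hat v_m(\tau_U)$.
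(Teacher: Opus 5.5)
Your proposal is correct and follows the paper's route: the paper treats this lemma as a direct corollary of the one-sample Lemma~\ref{lemma::inverseSQ}, applied to the $n/M$ i.i.d.\ responses at each fixed $x_m$ with $M$ fixed. Your additional bookkeeping fills in details the paper leaves implicit: you verify Condition~\ref{cond::univariate} on a shrunken window around $\tau$, handle the infimum over $[0,1]$, and reconcile the $(1-s)n/M$ normalization. That normalized estimator is not exactly monotone, so the transfer to $\hat h_m$ is cleanest by sandwiching $\hat h_m$ between the count-normalized inverse evaluated at $v_m(\tau)\pm\epsilon_n$.
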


Lemma \ref{lemma::inverseSQ-main} is a simple corollary from Lemma \ref{lemma::inverseSQ} therein. It establishes a key asymptotic normality result of the inverse ES estimator $\hat{h}_m$, which serves as an important technical tool to understand the i-Rock estimator. 
While the asymptotic properties of the empirical ES estimator $\hat{v}_m(\tau)$ has been well studied in the literature \citep{chen2007nonparametric,nadarajah2014estimation,zwingmann2016asymptotics}, Lemma~\ref{lemma::inverseSQ-main} gives the first asymptotic analysis of the inverse empirical ES estimator $\hat{h}_m$. 
With the help of Proposition~\ref{prop::m-Rock-loss-discrete} and Lemma~\ref{lemma::inverseSQ-main}, we are now ready to prove the main result for the i-Rock estimator with discrete covariates.

\begin{proof}[Proof of Theorem 3.1]
First, we prove the consistency part. 
To this end, we start by showing that for any $\varepsilon_0 > 0$, $\lVert \widehat{\beta} - \beta \rVert\geq 2\varepsilon_0$ implies that $\inf_{w:\|w\|=1}\nabla_{{w}} L_n(u)|_{u=\beta + \varepsilon_0 {w}} \leq 0$. Let $\tilde w = \frac{\widehat{\beta} - \beta}{\lVert \widehat{\beta} - \beta \rVert}$, and let $f(t) = L_n(\beta + \varepsilon_0 \tilde w + t \tilde w)$. Due to the convexity of $L_n$, $f(t)$ is a convex function since the second derivative of f, namely, $f^{''}(t) = \tilde w ^T \nabla^2 L_n(\beta + \varepsilon_0 \tilde w + t \tilde w)\tilde w \geq 0$. Since $L_n$ is minimized at $\widehat{\beta}$, $f(t)$ is minimized at $t = \lVert \widehat{\beta} - \beta \rVert - \varepsilon_0$. If $\lVert \widehat{\beta} - \beta \rVert\geq 2\varepsilon_0$, then
\begin{equation*}
\begin{aligned}
     \inf_{w:\|w\|=1}\nabla_{{w}} L_n(u)|_{u=\beta + \varepsilon_0 {w}} &\leq  \nabla_{\tilde{w}} L_n(u)|_{u=\beta + \varepsilon_0 \tilde{w}} \\
     &= \lim_{t \to 0^+} \frac{ L_n(\beta + \varepsilon_0 \tilde w + t \tilde w) -  L_n(\beta + \varepsilon_0 \tilde w)}{t}\\
     &= f'(t)|_{t=0}\\
     &\leq f'(t)|_{t=\lVert \widehat{\beta} - \beta \rVert - \varepsilon_0}=0
\end{aligned}
\end{equation*}
where, for the last line, the inequality follows from the convexity of $f$ and the fact that $\lVert \widehat{\beta} - \beta \rVert - \varepsilon_0 \geq \varepsilon_0 >0$, and the equality holds since $t=\lVert \widehat{\beta} - \beta \rVert - \varepsilon_0$ is the minimizer to $f(t)$. 
Using (\ref{eq::derivative-discrete}) and (\ref{eq::lebesgue-discrete}) in the proof of Proposition \ref{prop::m-Rock-loss-discrete}, we have
\begin{equation} 
    \begin{aligned}
        0 &\geq \inf_{\lVert {w}\rVert = 1} \nabla_{{w}} L_n({\beta + \varepsilon_0{w}}) \\
   &=\inf_{\lVert {w}\rVert = 1}\Bigg[\sumM x_m^T{w}\left\{\hat{h}_m[x_m^T(\beta + \varepsilon_0{w})] - \tau\right\} \\
    &~~+ \sumM x_m^T{w}\bm{1}\{x_m^T{w} > 0 \}\int_{0}^1 \bm{1}\{\hat{v}_m(\alpha) = x_m^T(\beta + \varepsilon_0{w})\} \,\mathrm{d}\alpha \Bigg]\label{eq::direction_derivative}. 
    \end{aligned}
\end{equation}
The negativity of the directional derivative and the positivity of the last term in~\eqref{eq::direction_derivative} further implies that $\inf_{\lVert {w}\rVert = 1}\sumM x_m^T{w}\left\{\hat{h}_m[x_m^T(\beta + \varepsilon_0{w})] - \tau\right\} \leq 0 $.


For small enough $\varepsilon_0$, let 
\[
R_1(w) = \hat{h}_m[x_m^T(\beta + \varepsilon_0{w})] - h_m[x_m^T(\beta + \varepsilon_0{w})].
\]
Lemma \ref{lemma::inverseSQ} shows that $R_1(w) = \op(1)$ uniformly over $\lVert {w}\rVert = 1$.
Furthermore, since $h_m(z)$ is continuously differentiable and $h_m(x_m^T\beta) = \tau$, we have
\[
h_m(z) - \tau = (z-x_m^T\beta)h_m'(x_m^T\beta) + o(|z-x_m^T\beta|).
\]

Therefore, for sufficiently small $\varepsilon_0 > 0$, assuming $\sup_{m=1,\ldots,M}\|x_m\| \leq C$ since the covariates are bounded, we have 
\begin{eqnarray*}
&&\P\left(\lVert \widehat{\beta} - \beta \rVert \geq 2\varepsilon_0\right)\\
&\leq&
\P\left(
\inf_{\lVert {w}\rVert = 1}\left\{\sumM x_m^T{w}\left[\hat{h}_m(x_m^T\beta + \varepsilon_0 x_m^T{w}) - \tau\right]\right\} \leq 0
\right)\\
&\leq&
\P\Bigg(
\inf_{\lVert {w}\rVert = 1}\sumM x_m^T{w} R_1(w) + \varepsilon_0\inf_{\lVert {w}\rVert = 1} {w}^T\left[\sumM x_mx_m^Th_m'(x_m^T\beta)\right]{w} \\
&&~~+ \inf_{\lVert {w}\rVert = 1} o\bigg(\varepsilon_0 \sumM x_m^T{w} |x_m^T{w}|\bigg) \leq 0 \Bigg)\\
&\leq&
\P\Bigg(
\varepsilon_0\inf_{\lVert {w}\rVert = 1} {w}^T\left[\sumM x_mx_m^T\frac{1-\tau}{v_m(\tau) - q_m(\tau)}\right]{w} - o\bigg(\varepsilon_0\sumM \|x_m\|^2\bigg) \\
&&~~\leq \sumM \|x_m\| \cdot \sup_{\lVert {w}\rVert = 1} |R_1(w)|
\Bigg)\\
&=&
\P\left(
\varepsilon_0 (1-\tau)\lambda_{\min}(D_1) - o(\varepsilon_0 C^2) \leq C \sup_{\lVert {w}\rVert = 1} |R_1(w)|
\right)\\
&\to&0,
\end{eqnarray*}
since $h_m'(x_m^T\beta) = (v_m - q_m)^{-1}(1-\tau)$, and $D_1$ is positive definite, and $\sup_{\lVert {w}\rVert = 1} R_1(w) = \op(1)$; this concludes the consistency of $\widehat{\beta}$.


Next, we derive the Bahadur-type representation of $\widehat{\beta}$. From Proposition \ref{prop::m-Rock-loss-discrete}, we have
\begin{eqnarray}
\Op\left(\frac{1}{n}\right)
&=& 
\sumM x_m \left[\tau - \hat{h}_m(x_m^T\widehat{\beta})\right]\nonumber\\
&=&
\underbrace{\sumM x_m \left[\tau- \hat{h}_m(x_m^T\beta)\right]}_{R_1} + 
\underbrace{\sumM x_m \left[
h_m(x_m^T\beta) - h_m(x_m^T\widehat{\beta})\right]}_{R_2}\nonumber\\
&&\;+\;\underbrace{\sumM x_m \left\{
[\hat{h}_m(x_m^T\beta) - h_m(x_m^T\beta)] - [\hat{h}_m(x_m^T\widehat{\beta}) - h_m(x_m^T\widehat{\beta})]\right\}}_{R_3}.\label{eq::ASN-discrete}
\end{eqnarray}
Below we consider the three terms $R_1$ through $R_3$ separately.
For the term $R_2$, Taylor expansion of $h_m$ gives
\begin{eqnarray*}
\frac{R_2}{M} &=& -\frac{1}{M}\sumM x_m\left[ x_m^T(\widehat{\beta} - \beta)h_m'(x_m^T\beta) + \op(\lVert\widehat{\beta} - \beta\rVert )\right]\\
&=&-\left[(1-\tau)D_1 + \op(1)\right](\widehat{\beta} - \beta),
\end{eqnarray*}
since $h_m(\cdot)$ is continuously differentiable. 
For $R_3$, the asymptotic equi-continuity in Lemma \ref{lemma::inverseSQ} shows that
\[
\frac{R_3}{M} = \frac{1}{M}\sumM \left[x_m \op\left(\sqrt{\frac{M}{n}}\right)\right] = \op\left(\frac{1}{\sqrt{n}}\right),
\]
since we've shown that $\widehat{\beta}$ is consistent for $\beta$.
Therefore from (\ref{eq::ASN-discrete}) we have
\[
[(1-\tau)D_1 + \op(1)](\widehat{\beta} - \beta) = \frac{1}{M}R_1 + \op\left(\frac{1}{\sqrt{n}}\right),
\]
which proves the asymptotic representation of $\widehat{\beta}$.

Now for the term $R_1$, Lemma \ref{lemma::inverseSQ} shows $\hat{h}_m(z)$ is asymptotically Gaussian at $ z = v_m(\tau) = x_m^T\beta$ for each covariate value $x_m$, therefore:
\[
\sqrt{\frac{n}{M}}\left[\hat{h}_m(x_m^T\beta) - h_m(x_m^T\beta)\right] \converged \mathrm{N}\left(0,\; \frac{(1-\tau)^2\sigma_m^2}{(v_m - q_m)^2}\right).
\]
Summing the above equation over $m$ gives
\[
\frac{\sqrt{n}}{M}R_1 \converged \mathrm{N}\left[0,\;(1-\tau)^2\Omega_1\right],
\]
where $\Omega_1$ is defined in Theorem 3.1. 
Substituting $R_1$, $R_2$ and $R_3$ back into (\ref{eq::ASN-discrete}) gives
\[
\sqrt{n}\left[(1-\tau)D_1 + o(1)\right](\widehat{\beta} - \beta) \converged \mathrm{N}\left[0,\;(1-\tau)^2\Omega_1\right],
\]
which implies
\[
\sqrt{n}(\widehat{\beta} - \beta) \converged \mathrm{N}\left(0,\;D_1^{-1}\Omega_1D_1^{-1}\right). 
\]
The proof is now complete.

\end{proof}

\subsection{Other ES regression approaches in Section 3.2}\label{append::subsec::compete}
In this section, we give additional details to the competing approaches for the i-Rock approach in Section 3.2, but for general covariates. 
Note that our proposed i-Rock approach only assumes linear ES regression model in (3), while all of the other ES regression approaches except linearize (LN) approach assume linearity on both quantile and ES, namely, 
\begin{equation}\label{eq::joint_linear}
    q_{[Y\mid X]}(\tau) = X^T \eta,
    \quad v_{[Y\mid X]}(\tau) = X^T \beta.
\end{equation}
\subsubsection{Description of the approaches}
\begin{enumerate}
    \item 
The first approach to compare with is to linearize (LN) the initial ES estimators, namely, 
\begin{equation*}
    \hat \beta = \arg\min_{u} \sumn (\hat v_i(\tau) - x_i^T u)^2
\end{equation*}
where $\hat v_i(\tau)$ is the $\tau$-level initial ES estimator for $Y_i$ given $X_i$. 
 Motivated by the possible heteroscedasticity, the weighted least squares (WLS) method can be used as an alternative approach to the linearization method, and WLS solves
\begin{equation*}
\min_{{u}}\sumn w_i\left(\hat{v}_i(\tau) - x_i^T{u}\right)^2,
\end{equation*}
where $w_i$ is the weight attached to each covariate value.

\item Second, we consider the joint quantile and expected-shortfall approach proposed in \cite{dimitriadis2019joint,patton2019dynamic}. Given a non-decreasing function $G_1(\cdot)$ and a concave increasing function $G_2(\cdot)$, under \eqref{eq::joint_linear}, the joint approach minimizes a joint loss function
\[
 (\hat{\eta},\hat{\beta})  =  \min_{\eta,\beta} \sumn \ell_i(\eta,\beta; G_1,G_2),
  \]
where the loss function is
  \begin{equation}
  \ell_i(\eta,\beta; G_1,G_2) = \rho_{\tau}\left(G_1(Y_i) - G_1(x_i^T\eta)\right) + G'_2(x_i^T\beta)\left[Z_i(\eta) - x_i^T\beta\right] + G_2(x_i^T\beta),
  \label{eq::joint-loss}
  \end{equation} 
  and 
  \begin{equation}
      Z_i(\eta) = (1-\tau)^{-1}(Y_i - x_i^T \eta)\mathbbm{1}(Y_i\geq x_i^T \eta) + x_i^T \eta.
  \label{eq::def-Zi}
  \end{equation} 
  We set $G_1(u) = u$ and focus on the following two options for $G_2$ advocated by \cite{dimitriadis2019joint} and \cite{patton2019dynamic},
  \[
G_{2}(u) = \log(u),\quad\text{and}\quad G_{2}(u) = \sqrt{u}.
  \]
  which we name J1 and J2, respectively.

\item As an extension of the joint approach, we consider the third approach, Neyman-orthogonalized least squares (TSN), proposed in \cite{barendse2020efficiently}, where a two-step estimation procedure is proposed with a quantile regression followed by a least squares estimation, namely,
\begin{gather*}
\hat{\eta}  =  \min_\eta \sumn \rho_\tau(Y_i - x_i^T\eta),\\
\hat{\beta}  =  \min_\beta \sumn \left[Z_i(\hat{\eta}) - x_i^T\beta\right]^2,
\end{gather*}
where $Z_i(\eta)$ is defined in~\eqref{eq::def-Zi}.


\item The last approach to compare with involves a straightforward two-step least squares (TSLS) approach, which is natural yet scarce in the literature. Here, we first estimate the quantile and then average the responses above the estimated quantile. Specifically, under~\eqref{eq::joint_linear}, the two-step procedure is
\begin{equation}
\begin{aligned}
\hat{\eta} & =  \min_\eta \sumn \rho_\tau(Y_i - x_i^T\eta),\\
\widehat{\beta} &= \argmin_\theta \sumn \left[(Y_i - x_i^T\theta)^2\cdot \bm{1}\{Y_i\geq x_i^T\hat{\eta}\}\right].
\end{aligned}
\end{equation}
\end{enumerate}

\subsubsection{Asymptotic variance for the competing approaches}
Here, we compute the asymptotic variance-covariance matrices for the competing approaches under Model~\eqref{eq::joint_linear}. Let 
\[
m_1(x) = \var(Y\mid Y\geq q(\tau,x),X=x),\quad m_2(x) = [v(\tau,x) - q(\tau,x)].
\]

\begin{enumerate}
    \item If the initial estimator $\hat v(\tau,x)$ has conditional variance $m_1(x) + \tau m^2_2(x)$,
    then the WLS estimator has the following asymptotic variance-covariance matrix
    \[
\text{AVar}_{WLS} = \frac{1}{1-\tau}[\E\{w(X) XX^T\}]^{-1}\E\left[w^2(X) XX^T\{m_1(X) + \tau m^2_2(X)\} \right][\E\{w(X) XX^T\}]^{-1}.
\]
\item From Theorem 2.4 of \cite{dimitriadis2019joint}, the joint estimators have the following asymptotic variance-covariance matrix
\begin{align*}
\text{AVar}_{J1} &= \frac{1}{1-\tau}\left[\E\left\{\frac{XX^T}{v^2(\tau,X)}\right\}\right]^{-1}\E\left[XX^T\left\{\frac{m_1(X) + \tau m^2_2(X)}{v^4(\tau,X)}\right\} \right]\left[\E\left\{\frac{XX^T}{v^2(\tau,X)}\right\}\right]^{-1},\\
\text{AVar}_{J2} &= \frac{1}{1-\tau}\left\{\E\left\{\frac{XX^T}{v^{3/2}(\tau,X)}\right\}\right]^{-1}\E\left[XX^T\left\{\frac{m_1(X) + \tau m^2_2(X)}{v^3(\tau,X)}\right\} \right]\left[\E\left\{\frac{XX^T}{v^{3/2}(\tau,X)}\right\}\right]^{-1},
\end{align*}
for $G_2(u) = \log u$ and $G_2(u) = \sqrt{u}$, respectively.
\item From Theorem 1 of \cite{barendse2020efficiently},
the asymptotic variance-covariance matrix for TSN approach is
\[
\text{AVar}_{TSN} = \frac{1}{1-\tau}\{\E(XX^T)\}^{-1}\E\left[XX^T\{m_1(X) + \tau m^2_2(X)\} \right]\{\E(XX^T)\}^{-1}.
\]
\item Finally, recalling from Theorem \ref{thm::example-RRM-SQ} that the i-Rock estimator with bin-wise linear initial estimator has the following asymptotic variance-covariance matrix
\[
\text{AVar}_{i-Rock} = \frac{1}{1-\tau}\left[\E\left\{\frac{XX^T}{m_2(X)}\right\}\right]^{-1}\E\left[XX^T\left\{\frac{m_1(X)}{m_2^2(X)} + \tau\right\}\right]\left[\E\left\{\frac{XX^T}{m_2(X)}\right\}\right]^{-1}.
\]
\end{enumerate}

The matrices $\text{AVar}_{TSN}$, $\text{AVar}_{J1}$, $\text{AVar}_{J2}$, and $\text{AVar}_{i-Rock}$ all involve the weighted expectations of $m_1(X) + \tau m_2^2(X)$,
which are asymptotically equivalent to WLS in the form of (\ref{eq::bin-weighted-linearization}) where the response has conditional variance $m_1(x) + \tau m_2^2(x)$.
The key difference between these approaches lies in the weighting scheme. For $\text{AVar}_{TSN}$, $\text{AVar}_{J1}$, $\text{AVar}_{J2}$ and $\text{AVar}_{i-Rock}$, the weights are proportional to $1$, $[v(\tau,x)]^{-2}$, $[v(\tau,x)]^{-3/2}$, and $[m_2(x)]^{-1}$ respectively. 
Under the class of WLS formulation in~\eqref{eq::bin-weighted-linearization} when the initial estimators have conditional variance $m_1(x) + \tau m^2_2(x)$, the optimal weight is proportional to $\{m_1(x) + \tau m^2_2(x)\}^{-1}$, which is equivalent to the efficient weight in \cite{barendse2020efficiently}.

\subsubsection{Additional empirical results}
\begin{figure}[tb]
\centering
\begin{subfigure}[b]{0.43\textwidth}
         \centering
         \includegraphics[width=\textwidth]{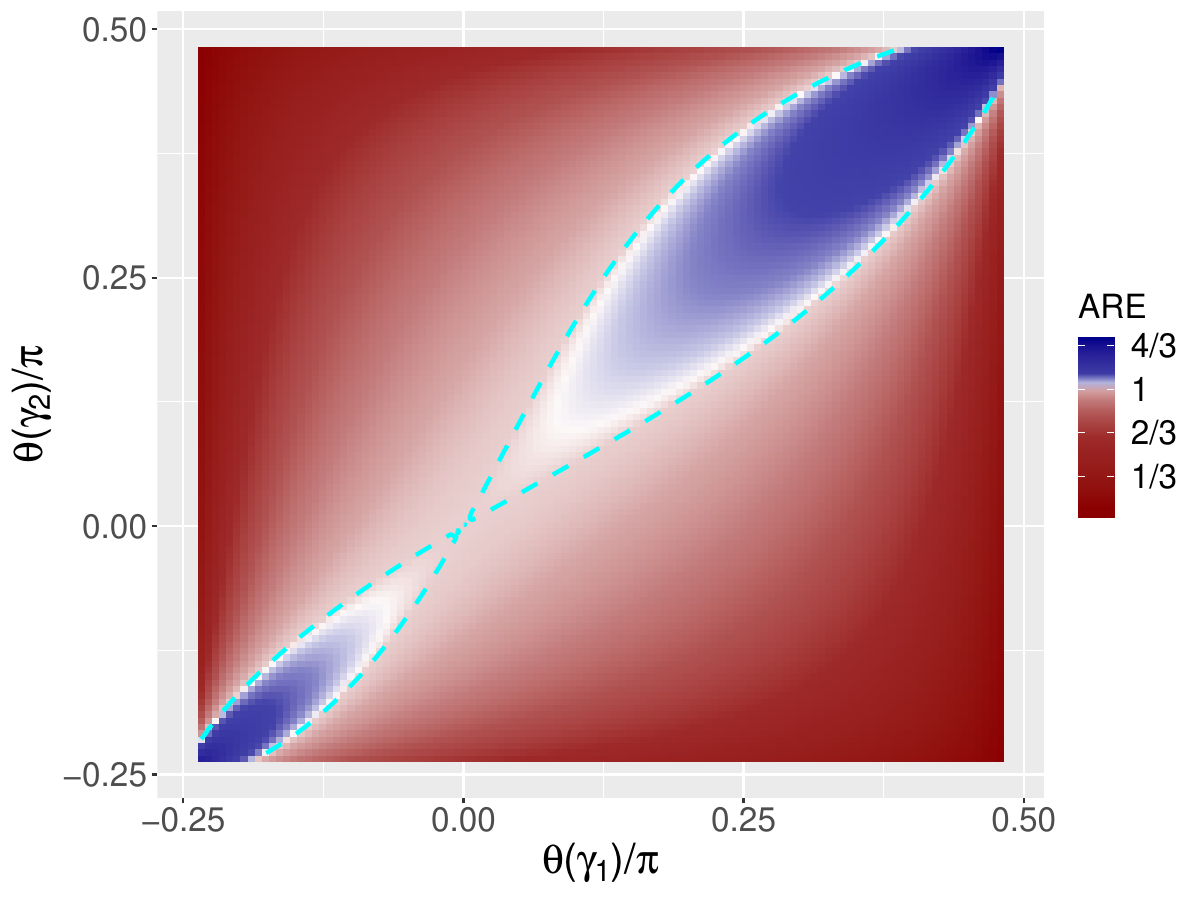}
         \caption{Method J1}
         \label{fig::J1-nid}
     \end{subfigure}
     \hfill
\begin{subfigure}[b]{0.43\textwidth}
         \centering
         \includegraphics[width=\textwidth]{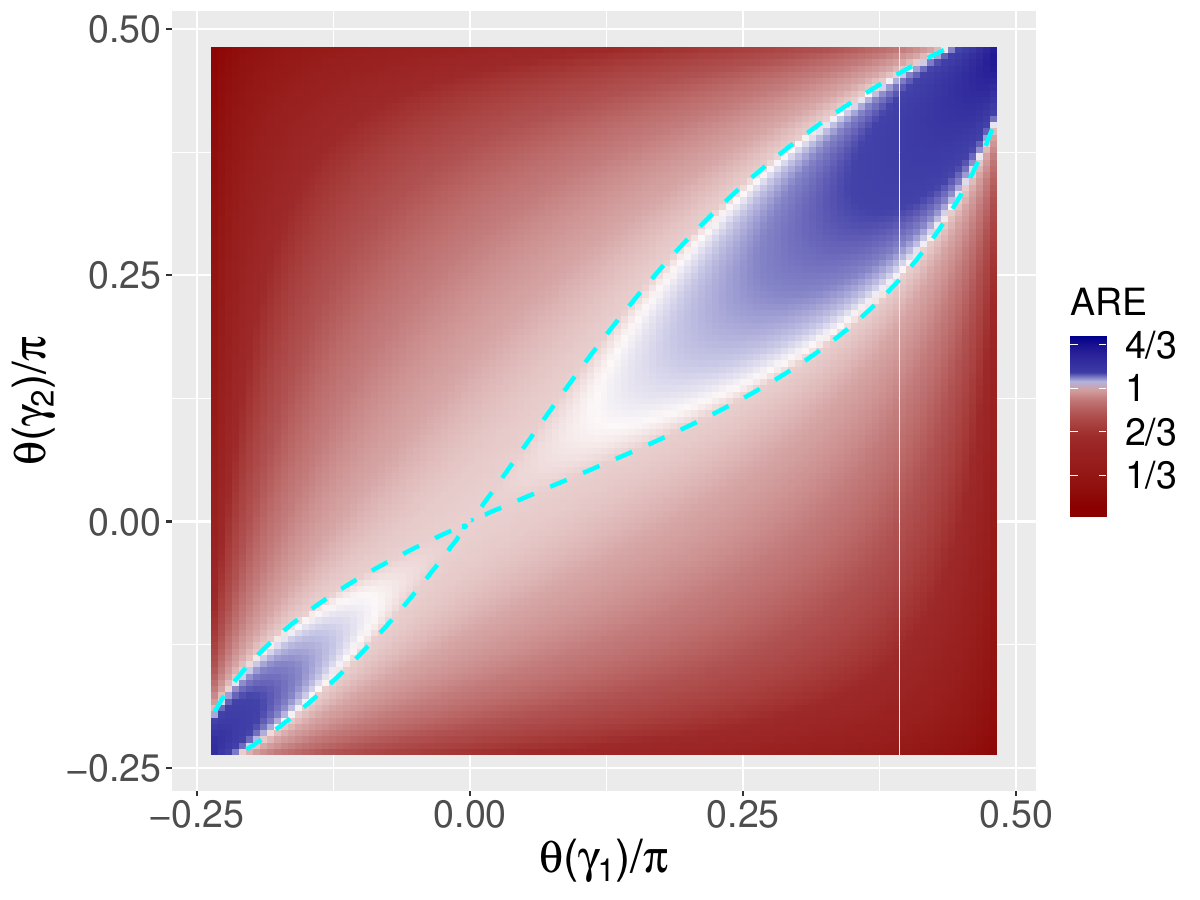}
         \caption{Method J2}
         \label{fig:J2-nid}
     \end{subfigure}
     \caption{
     The heatmap of ARE of the joint approaches relative to the i-Rock approach under Model~\eqref{eq::comparison-simple} for each value of $\gamma_1$ and $\gamma_2$ on the unit circle at $\tau = 0.9$.
     Here $\theta(\gamma)$ represents the angular coordinate of $\gamma$, and the $x$ and $y$ axes represent the angular coordinates of $\gamma_1$ and $\gamma_2$ varying between $-\pi/4$ and $\pi/2$ in the coordinate system, respectively. 
     The ARE is measured by the Frobenius norm of the asymptotic variance-covariance matrix.
     }
        \label{fig::asymp_effi_Joints}
\end{figure}

To better elucidate the circumstances in which the i-Rock approach is less or more efficient than the joint approach,
we present ARE of the joint approaches relative to the i-Rock approach under the linear location-scale model~\eqref{eq::comparison-simple} for various choices of $\gamma_1$ and $\gamma_2$ when $p=1$ in Figure~\ref{fig::asymp_effi_Joints}. 
Here, ARE less than one indicates that the i-Rock approach is more efficient.
In most cases, the i-Rock approach is more efficient than the joint approaches, with efficiency improvements reaching up to $30\%$.
Only when $\gamma_1$ is approximately parallel to $\gamma_2$, the joint approaches can be slightly more efficient than the i-Rock approach. 
This aligns with the theoretical interpretation that the joint approach is more efficient only when its asymptotic weight aligns more closely with the direction of the optimal weight; Specifically under Model~\eqref{eq::comparison-simple}, the optimal weight is $w_m \propto (x_m^T\gamma_2)^{-2}$, while the implicit weights for the two joint approaches are $w_m \propto (x_m^T \gamma_1)^{-t}$, where $t=2$ for J1 and $t=3/2$ for J2. 

\section{Proof and additional details of Section~\ref{sec::continuous}}\label{append::sec4}
\subsection{An initial ES estimator}\label{subsec::examples}
In this section, we present the exact formulation of the initial ES estimator in Section 4.2 of main manuscript and technical conditions for the asymptotic distribution of the resulting i-Rock estimator in Theorem 4.2.  


To fit a bin-wise linear ES regression, we first separate the intercept term from the covariates by writing $X_i = (1,\tX_i^T)^T$. Within each bin $A_m$, we center the covariates by subtracting off the bin center $\bar{x}_m = (1,\tilde{x}_m^T)^T$, and then fit the following two-step ES regression to obtain the initial estimator at quantile level $s\in(0,1)$:
\begin{eqnarray*}
(\hat{c}_0,\hat{c}_1) &=& \argmin_{\substack{c_0\in\mathbb{R}\\c_1\in\mathbb{R}^{p}}} \sum_{\substack{X_i\in A_m}} \left[\hat{Z}_i(s) - c_0 - c_1^T(\tilde{X}_i - \tilde{x}_m)\right]^2,\nonumber \\ 
\hat{v}(s,\bxm) &=& \hat{c}_0.
\end{eqnarray*}
where $\hat{Z}_i(s) = (1-s)^{-1} \{Y_i - \hat{q}(s,X_i)\}\bm{1}\{Y_i\geq \hat{q}(s,X_i)\} + \hat{q}(s,X_i)$ and $\hat{q}(s,x)$ can be any parametric or non-parametric estimator of the conditional quantile function of $Y\mid X=x$ satisfying Condition~\ref{cond::bin-Qt}. The solution to the bin-wise ES regression has the following closed-form
\begin{equation}\label{eq::ll-SQ}
\hat{v}(s,\bxm) 
= 
e_1^T\left[{\bm{X}}_m^T\bm{{W}}_m{\bm{X}}_m\right]^{-1}\;
\left[
\begin{array}{cc}
\sum_{i=1}^nw_{im}\hat{Z}_i(s)\vspace{0.8em}\\
\sum_{i=1}^n(\tilde{X}_i - \tilde{x}_m)w_{im}\hat{Z}_i(s)
\end{array}\right],
\end{equation}
where $e_1 = (1,0,\ldots,0)$ is a unit vector in $\mathbb{R}^{p+1}$,  
$
\bm{{X}}_m = \left[
\bm{1}_n,
\begin{array}{c}
(\tX_1 - \txm)^T\\
\vdots\\
(\tX_n - \txm)^T
\end{array}
\right] \in \mathbb{R}^{n\times (p+1)}$, and $w_{im} = \bm{1}(X_i\in A_m)$, and $
\bm{{W}}_m = \text{diag}(w_{1m},\ldots, w_{nm})\in\mathbb{R}^{n\times n}
$. 
Here, if ${\bm{X}}_m^T\bm{{W}}_m{\bm{X}}_m$ is not invertible, $({\bm{X}}_m^T\bm{{W}}_m{\bm{X}}_m)^{-1}$ is defined as the Moore–Penrose pseudo-inverse.
Furthermore, we use the following weights in the i-Rock estimating equation \eqref{eq::estimator-rock-binning},
\begin{equation}
\label{eq::ll-gweights}
\hgamma_m = (S_{0m} - S_{1m}^T\bm{S}_{2m}^{-1}S_{1m}),
\end{equation}
where the quantities $\{S_{jm}:j=0,1,2\}$ are defined through the partition
\begin{equation}\label{eq::ll-Sm}
    n^{-1} \left[{\bm{X}}_m^T\bm{{W}}_m{\bm{X}}_m\right] = 
\left[\begin{array}{cc}
S_{0m} & S_{1m}^T\\
S_{1m} & \bm{S}_{2m}
\end{array}
\right].
\end{equation}




To study the theoretical behavior of the i-Rock estimator in \eqref{eq::estimator-rock-binning} with weights~\eqref{eq::ll-gweights} and initial ES estimator~\eqref{eq::ll-SQ}, 
we further impose some regularity conditions on the data-generating process, the binning mechanism, the quantile estimator for the initial ES estimator, as well as a stronger version of Condition \ref{cond::R-bin-Y-density}.

\begin{customcond}{G-Y1'}
\label{cond::bin-Y-new}
All requirements in Condition \ref{cond::R-bin-Y-density} hold. In addition, we have
\begin{enumerate}[(i)]
\item For some constant $L_1 > 0$,
\[
\sup_{x\in\mathcal{X}}|f_{Y\mid X}(y_1; x) - f_{Y\mid X}(y_2; x)| \leq L_1|y_1 - y_2|.
\]
\item For some $\delta_0 > 0$,
\[
\sup_{x\in\mathcal{X}}\, \E\left[ (Y^{+})^{2+\delta_0}\Big| X=x\right] < + \infty,
\]
where $Y^{+} = \max\{Y,0\}$.
\end{enumerate}
\end{customcond}

For each bin $A_m$, let 
\[
\bh_m = \sup_{x\in A_m} \lVert x - \bxm \rVert_2,\quad \ubh_m = \inf_{x\notin A_m} \lVert x - \bxm\rVert_2,
\]
where $\bh_m$ is the radius of the bin, and $\ubh_m$ is the separation between bins. We further define $\bh = \max_m\{\bh_m\}$ and $\ubh = \min_m\{\ubh_m\}$. Both $\bh$ and $\ubh$ depend on the sample size $n$.
\begin{customcond}{G-A1}
\label{cond::bin-bandwidth}
There exists a constant $\varepsilon_1 > 0$, such that
\[
\bh \to 0, \qquad \ubh^p \gg \frac{\log(n)}{n^{\min\left\{1/2,\;1-2/(2+\delta_0) - \varepsilon_1\right\}}},
\]
where $1-2/(2+\delta_0) - \varepsilon_1>0$ and $\delta_0$ is in Condition \ref{cond::bin-Y-new}.
Furthermore, for some constants $0 < m_h < M_h < +\infty$,
\[
m_h \leq \liminf_{n\to\infty} (\bh^{-1}{\ubh} )\leq \limsup_{n\to\infty}(\bh^{-1}{\ubh} ) \leq M_h.
\]
\end{customcond}

\begin{customcond}{G-A2}
\label{cond::bin-Aux}
At least one of the following conditions holds.
\begin{enumerate}[(i)]
\item The covariate-dimension $p < 4$ and $\bh^{4-p} = o(\{\log(n)\}^{-1})$; furthermore, $v(s,x)$ is twice continuously differentiable with respect to $x$, and for some $\varepsilon_1 > 0$ and $L_2 > 0$, 
\[
\sup_{x\in\mathcal{X}}\left\lVert \frac{\partial^2 v(s,x)}{\partial x\partial x^T} - \frac{\partial^2 v(\tau,x)}{\partial x\partial x^T}\right\rVert_2 \leq L_2|s-\tau|,
\]
for all $|s-\tau| \leq \varepsilon_1$.
\item 
For sufficiently large $n$, there exists $\beta^{(m)}_{n}(s)$ such that
\[
v(s,x) = x^T\beta^{(m)}_{n}(s),\quad |s-\tau| \leq \varepsilon_2 n^{-1/4}\,;\,x\in A_m,
\]
for each $m = 1,\ldots,M$, where $\varepsilon_2 > 0$ is a universal constant.
\end{enumerate}
\end{customcond}
Condition~\ref{cond::bin-Y-new} is a standard assumption to ensure uniform consistency in the initial ES estimator (see, e.g., \cite{mack1982weak}).
Condition~\ref{cond::bin-bandwidth} ensures that each bin has an appropriate size, which is similar to the general bandwidth conditions for kernel ES estimation \citep{olma2021nonparametric}. Moreover, Condition~\ref{cond::bin-bandwidth} ensures that $\bh_m$ and $\ubh_m$ are of the same order, which holds if, for example, all the bins are hyperspheres or hypercubes and the density of $X$ is positive everywhere on a compact set. 
{\color{black} In Condition~\ref{cond::bin-bandwidth}, $\bar h^p$ relates to the volume of the bins. With a slight abuse of notation, the $p$ in this and the subsequent conditions refers to the number of continuous variables when both continuous and discrete variables are present.}
Condition \ref{cond::bin-Aux} requires either a low-dimensional model with smooth ES functions over $x\in\mathcal{X}$, or piece-wise linear ES functions near level $\tau$. We note the bandwidth condition in item (i) of Condition~\ref{cond::bin-Aux} is compatible with Condition \ref{cond::bin-bandwidth} when $p < 4$. 

In addition, we  require the following technical Condition \ref{cond::bin-Qt}  on the quantile regression estimator $\hat{q}(s,x)$ involved in our initial ES estimator (\ref{eq::ll-SQ}).
Condition \ref{cond::bin-Qt} is relatively weak in the sense that $\hat{q}(s,x)$ does not necessarily have to be based on either a parametric quantile regression model or the same binning mechanism as $\hat{v}(s,x)$. To provide a specific choice of such quantile estimator, we consider the local-linear quantile estimators 
\begin{eqnarray}\label{eq::ll_quantile}
     \hat q(s,x) &=& \hat \eta_m(s)^T x, \quad x\in A_m, \quad |s-\tau|\leq c_0 n^{-1/4},
\end{eqnarray}
where \begin{eqnarray*}
\hat \eta_m(s) &=& \argmin_{\substack{c\in\mathbb{R}^{p+1}}} \sum_{\substack{X_i\in A_m}} \rho_s(Y_i - c^TX_i).
\end{eqnarray*}

The quantile estimator in~\eqref{eq::ll_quantile} satisfies Condition G-Q below when the true quantile function is piecewise-linear, i.e., $q(s,x) = \eta_m(s)^Tx$ when $x\in A_m$. 
\\
\begin{customcond}{G-Q}
\label{cond::bin-Qt}
For some sequence 
$g_{1n} =o(n^{-1/4}) $,
the conditional quantile estimator $\hat{q}(s,x)$ satisfies
\begin{enumerate}[(i)]
\item 
\itemEq{
\sup_{\substack{x\in\mathcal{X}\\s:|s-\tau|\leq  n^{-1/4}  }} |\hat{q}(s,x) - q(s,x)| = O_p(g_{1n}).
}
\item 
For each $j = 0,1$, 
\small\[
 \sup_{\substack{m = 1,\ldots,M\\s:|s-\tau|\leq  n^{-1/4}}} \left\lvert \ddfrac{\sum_{\substack{i=1\\X_i\in A_m}}^n \left[\frac{\tilde{X}_i - \txm}{\bh_m}\right]^j[\hat{q}(s,X_i) - q(s,X_i)] [s - \bm{1}\{Y_i \leq q(s,X_i)\}]}{\sumn \bm{1}\{X_i\in A_m\}}\right\rvert =
 \op\left(n^{-1/2}\right).
\]
\end{enumerate}
\end{customcond}
With these technical conditions, we establish the asymptotic normality for the i-Rock estimator with the locally linear ES initial estimator in Theorem 4.2 in the main manuscript. 

\subsection{Some technical lemmas}
\label{subsec::lemmas-SQ-comt}
Here we collect some technical lemmas that are useful for our proofs later. These lemmas do not depend on the specific construction of an initial ES estimator, and hence are applicable for the results in both Sections \ref{subsec::asy_linear} and \ref{subsec::examples}. The proof of these lemmas can be found in Appendix~\ref{append::lemma}.

We fix some notations here. Recall $A_1,\ldots,A_M$ are the bins. For each bin, $\bxm$ is its geometric center, and $\hgamma_m$ is a weight (that only depends on the covariates) in the i-Rock estimation procedure (\ref{eq::estimator-rock-binning}). Let $\hat{v}(s,x)$ and $\hat{q}(s,x)$ be the initial binning ES and quantile estimators, respectively. The lemmas below do not depend on any particular choice of $\hgamma_m$, $\hat{v}$ or $\hat{q}$. The total number of bins $M$ is allowed to increase with the sample size. For the binning mechanism, let $w_{im} = \bm{1}\{X_i\in A_m\}$, and let
\[
\hat{\pi}_m = n^{-1}\sumn w_{im},
\]
be the proportion of data that falls into bin $A_m$. For each bin, we write $diam(A_m) = \bh_m = \sup_{x\in A_m}\lVert x - \bxm\rVert$ and $\ubh_m = \inf_{x\notin A_m}\lVert x - \bxm\rVert$.
We further define the inverse ES function as\footnote{Without loss of generality, we assume $\hat{v}(s,x)$ and $v(s,x)$ are (weakly) increasing in $s$.}:
\begin{equation}
\begin{gathered}
h(z,x)  :=  \int_0^1\bm{1}\{v(s,x) \leq z\}\,\d s = \sup\{s\in[0,1]: v(s,x) \leq z\},\\
\hat{h}(z,x) := \int_0^1\bm{1}\{\hat{v}(s,x) \leq z\}\,\d s = \sup\{s\in[0,1]: \hat{v}(s,x) \leq z\}.
\end{gathered}
\label{eq::def-h}
\end{equation}
In the Operations Research literature, these functions are called the `superdistribution' functions, in duality to the superquantile functions \citep{rockafellar2013superquantiles,rockafellar2013fundamental}.

We also use the following set of notations. 
For a vector $v$, let $\lVert v \rVert$ be its $\ell_2$ norm; for a matrix $A$, let $\lVert A \rVert$ be its operator norm. For two deterministic sequences $a_n$ and $b_n$, we write $a_n \ll b_n$ if $a_n=o(b_n)$ and $a_n \lesssim b_n$ if there exists a universal constant $C^* > 0$ such that $a_n \leq C^*b_n$; we define $a_n \asymp b_n$ if both $a_n = O(b_n)$ and $b_n = O(a_n)$ hold.
For stochastic sequences $A_n$ and $B_n$, we use the notations $A_n \ll_{\P} B_n$ and $A_n \lesssim_{\P}B_n$ to denote $A_n = \op(B_n)$ and $A_n  = \Op(B_n)$, respectively.

\begin{lemma}\label{lemma::bin-approx-moment-X}
Suppose the bins $A_m$ and the associated weights $\hgamma_m$ satisfy:
\[
\sup_{m=1,\ldots,M} \text{diam}(A_m)\convergeip 0,\quad \supM\left| \frac{\hgamma_m}{\hat{\pi}_m} - 1\right| \convergeip 0.
\]
Let $g(\cdot):\mathcal{X}\mapsto \mathbb{R}^m$ be a bounded and Lipschitz continuous function over $\mathcal{X}$, then we have
\[
\sumM\hgamma_mg(\bar{x}_m)\; \convergeip \;\E[g(X)].
\]
In addition, if $h(\cdot):\mathcal{X}\mapsto \mathbb{R}$ is a function such that $\E[|h(X)|] < \infty$, then we have
\[
\E\left[\sumM \bm{1}_{\{X\in A_m\}}g(\bar{x}_m)h(X)\right]\to \E[h(X)g(X)],
\]
as $n\to\infty$.
\end{lemma}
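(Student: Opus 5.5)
The first claim will be reduced to a comparison between a Riemann-type sum and an integral. I will write
\[
\sumM\hgamma_m g(\bxm) - \E[g(X)] = \sumM(\hgamma_m - \hat\pi_m)g(\bxm) + \Big\{\sumM \hat\pi_m g(\bxm) - \En[g(X^*)]\Big\} + \{\En[g(X^*)] - \E[g(X)]\},
\]
where $\En[g(X^*)] = n^{-1}\sumn g(X_i)$, and then bound the three terms separately.

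The first term is at most $\supM|\hgamma_m/\hat\pi_m - 1|\cdot\sup_x\|g(x)\|\cdot\sumM\hat\pi_m$. Since the bins partition $\mathcal{X}$, we have $\sumM\hat\pi_m = 1$, so this term is $\op(1)$.

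For the second term, I will use $\sumM\hat\pi_m g(\bxm) = n^{-1}\sumn\sumM \bm{1}\{X_i\in A_m\}g(\bxm)$. Each $X_i$ lies in exactly one bin, so the difference from $\En[g(X^*)]$ is at most $n^{-1}\sumn\sumM\bm{1}\{X_i\in A_m\}\|g(\bxm)-g(X_i)\|$. By the Lipschitz property this is bounded by $L_g\supM \text{diam}(A_m)$, which tends to $0$ in probability.

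The third term is $\op(1)$ by the weak law of large numbers, because $g$ is bounded.

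I do not expect a real obstacle in this part. The only subtlety is that $M$ and the bins change with $n$. This causes no problem, because the bound in the second term is uniform over the partition and the first term only uses $\sumM\hat\pi_m=1$. If the diameter condition is deterministic, "$\convergeip$" simply means ordinary convergence.

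For the second claim, everything is deterministic. Define $g_n(x) = \sumM\bm{1}\{x\in A_m\}g(\bxm)$; this is a piecewise-constant approximation of $g$. Then
\[
\big\|\E[g_n(X)h(X)] - \E[g(X)h(X)]\big\| \le \E\big[\|g_n(X) - g(X)\|\,|h(X)|\big] \le L_g \supM\text{diam}(A_m)\,\E|h(X)|,
\]
because $X$ lies in exactly one bin and $\|g(\bxm)-g(X)\|\le L_g\,\text{diam}(A_m)$ there. The right-hand side goes to $0$, since $\E|h(X)|<\infty$ and the diameters shrink.

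If one only wanted continuity of $g$ instead of Lipschitz continuity, I would instead use dominated convergence. The integrand is dominated by $2\sup\|g\|\,|h|$, and $g_n\to g$ pointwise. The Lipschitz assumption makes this step unnecessary.

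The diameter used here should be the one that controls $\|x-\bxm\|$ for $x\in A_m$, namely the radius $\bh_m$. It is comparable to $\text{diam}(A_m)$ (specifically $\bh_m\le\text{diam}(A_m)$ when $\bxm\in\overline{A_m}$). If the geometric center falls outside the bin, I would use $\|x-\bxm\|\le \text{diam}(\overline{\text{conv}A_m})$ instead. I regard reconciling these two notions of size as the only point needing care.
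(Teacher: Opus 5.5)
Your proof is correct. The second claim is handled exactly as in the paper: a pointwise Lipschitz bound integrated against $|h(X)|$. The replacement of $\hat{\gamma}_m$ by $\hat{\pi}_m$ using $\sum_m\hat{\pi}_m=1$ is also the paper's step.

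The first claim differs only in where the binning bias is removed.
\begin{itemize}
\item The paper writes $\sum_m\hat{\pi}_m g(\bar{x}_m)=n^{-1}\sum_i U_i^{(n)}$ with $U_i^{(n)}=\sum_m\bm{1}\{X_i\in A_m\}g(\bar{x}_m)$. It applies a law of large numbers to this $n$-dependent array, then gets $E[U_i^{(n)}]\to E[g(X)]$ from the second claim with $h\equiv 1$.
\item You instead remove the bias pathwise at the sample level, by comparing with $\mathbb{E}_n[g(X)]$. After that you only need the ordinary weak law for the fixed function $g$.
\end{itemize}
Your ordering avoids the triangular-array law of large numbers, which the paper invokes without comment. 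That step is immediate from Chebyshev, since $\|U_i^{(n)}\|\le\sup\|g\|$. The paper's ordering instead reuses the second claim, so the Lipschitz approximation is done only once.

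Your concern about diameter versus radius does not arise. Where these lemmas are set up, the paper defines $\mathrm{diam}(A_m)$ to be $\bar{h}_m=\sup_{x\in A_m}\|x-\bar{x}_m\|$. Even under the usual definition, the centroid always lies in $\overline{\mathrm{conv}}\,A_m$, whose diameter equals that of $A_m$. Hence $\|x-\bar{x}_m\|\le\mathrm{diam}(A_m)$ for every $x\in A_m$, and your fallback bound is just $\mathrm{diam}(A_m)$ itself.
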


\begin{lemma}\label{lemma::bin-cond-implications}
Under Conditions \ref{cond::bin-covar}, \ref{cond::R-bin-Y-density} and \ref{cond::bin-SQ}, there is a constant $c_1 > 0$ such that the following results hold:
\begin{enumerate}
\item For some constants $0 < \underline{m}_1 < \overline{m}_1 < +\infty$, we have
\[
\underline{m}_1 \;\leq \inf_{\substack{x\in\mathcal{X}\\s: |s-\tau| \leq c_1}}|v(s,x) - q(s,x)| \leq \sup_{\substack{x\in\mathcal{X}\\s: |s-\tau| \leq c_1}}|v(s,x) - q(s,x)| \leq\; \overline{m}_1 .
\]
\item Both $q(s,x)$ and $v(s,x)$ are differentiable with respect to $s$ when $|s - \tau| \leq c_1$, and there exist constants $0< \underline{m}_2 < \overline{m}_2 < +\infty$ such that
\begin{eqnarray*}
\underline{m}_2\; \leq \inf_{\substack{x\in\mathcal{X}\\s: |s-\tau| \leq c_1}}\left|\frac{\partial q}{\partial s}(s,x)\right|  
&\leq&
\sup_{\substack{x\in\mathcal{X}\\s: |s-\tau| \leq c_1}}\left|\frac{\partial q}{\partial s}(s,x)\right| \leq
\;\overline{m}_2 ,\\
\underline{m}_2\;\leq \inf_{\substack{x\in\mathcal{X}\\s: |s-\tau| \leq c_1}}\left|\frac{\partial v}{\partial s}(s,x)\right| 
&\leq&
 \sup_{\substack{x\in\mathcal{X}\\s: |s-\tau| \leq c_1}}\left|\frac{\partial v}{\partial s}(s,x)\right| \leq \;\overline{m}_2 .
\end{eqnarray*}

\item There exists a constant $L > 0$ such that 
\begin{align*}
  \sup_{x\in\mathcal{X}}\left|\frac{\partial v}{\partial s}(s_1,x) - \frac{\partial v}{\partial s}(s_2,x)\right| &\leq L|s_1 - s_2|,\\
  \sup_{x\in\mathcal{X}}\left|\left[\frac{\partial v}{\partial s}(s_1,x)\right]^{-1} - \left[\frac{\partial v}{\partial s}(s_2,x)\right]^{-1}\right| &\leq L|s_1 - s_2|,
\end{align*}
for all $s_1,s_2\in[\tau - c_1,\tau + c_1]$.
\end{enumerate}
\end{lemma}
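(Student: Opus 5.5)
The statement to prove is Lemma~\ref{lemma::bin-cond-implications}. I will obtain all three parts from the representation $v(s,x)=(1-s)^{-1}\int_s^1 q(t,x)\,\d t$, the quantile derivative $\partial q/\partial s=1/f_{Y\mid X}(q(s,x);x)$, and the derivative formula \eqref{eq::derivative-SQ-proof} from the proof of Theorem~\ref{thm::RRM-correct}. Throughout, $c_1$ will be chosen small enough that $|s-\tau|\le c_1$ forces $|q(s,x)-q(\tau,x)|\le\varepsilon_0$ uniformly in $x$.

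\emph{Choice of $c_1$.} By Condition~\ref{cond::R-bin-Y-density}, the conditional density is at least $\underline f$ on $[q(\tau,x)-\varepsilon_0,q(\tau,x)+\varepsilon_0]$. Hence $F_{Y\mid X}(q(\tau,x)\pm\varepsilon_0;x)$ differs from $\tau$ by at least $\underline f\varepsilon_0$. Take $c_1<\underline f\varepsilon_0$. Then for every $x$ and every $|s-\tau|\le c_1$, the quantile $q(s,x)$ lies in the window where $\underline f\le f\le\overline f$.

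\emph{Part 2 for $q$.} By the inverse function theorem, since $F(\cdot;x)$ is $C^1$ with positive derivative on the window, $\partial q/\partial s=1/f(q(s,x);x)\in[1/\overline f,1/\underline f]$.

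\emph{Part 1.} I write
\[
v(s,x)-q(s,x)=(1-s)^{-1}\int_s^1\{q(t,x)-q(s,x)\}\,\d t .
\]
For the lower bound, restrict the integral to $t\in[s,s+c_1]$, where $q(t,x)-q(s,x)\ge (t-s)/\overline f$. This gives a bound of order $c_1^2/(2\overline f)$, uniform in $x$; shrink $c_1$ if needed so that $s+c_1<1$.

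The upper bound is the delicate step, since Condition~\ref{cond::R-bin-Y-density} controls only a neighborhood of the quantile. Note that $v(s,x)-q(s,x)\le v(\tau+c_1,x)-q(\tau-c_1,x)$ by monotonicity in $s$. I then want $\sup_x v(\tau+c_1,x)<\infty$ and $\inf_x q(\tau-c_1,x)>-\infty$. The second follows from $q(\tau-c_1,x)\ge q(\tau,x)-c_1/\underline f$ together with the fact that $q(\tau,\cdot)$ is Lipschitz (Condition~\ref{cond::bin-SQ}) on a bounded support, hence bounded. For the first, use
\[
v(\tau+c_1,x)\le \frac{1-\tau}{1-\tau-c_1}\,v(\tau,x)-\frac{c_1}{1-\tau-c_1}\,q(\tau,x),
\]
which follows from splitting the tail average at $\tau+c_1$ and using $q(t,x)\ge q(\tau,x)$ for $t\ge\tau$. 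The functions $v(\tau,\cdot)$ and $q(\tau,\cdot)$ are Lipschitz on a bounded $\mathcal X$ (Condition~\ref{cond::bin-covar}), hence bounded, so $v(\tau+c_1,x)$ is bounded uniformly in $x$. This is the main obstacle, and I expect boundedness of the support plus Lipschitz continuity to resolve it.

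\emph{Part 2 for $v$.} Formula \eqref{eq::derivative-SQ-proof} gives $\partial v/\partial s=(v-q)/(1-s)$. Part 1 and $1-s\in[1-\tau-c_1,1]$ then give bounds above and away from zero.

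\emph{Part 3.} Differentiate once more:
\[
\partial_s\bigl[(v-q)/(1-s)\bigr]=\frac{\partial_s v-\partial_s q}{1-s}+\frac{v-q}{(1-s)^2}.
\]
Every term is uniformly bounded by Parts 1 and 2, so $\partial v/\partial s$ is Lipschitz in $s$ with a constant independent of $x$. Here $q$ is only differentiable, not $C^1$, but $v-q$ is still Lipschitz with derivative bounded a.e., and that suffices via absolute continuity. Finally, the reciprocal is Lipschitz because $\partial v/\partial s\ge\underline m_2>0$ and
\[
|1/a-1/b|\le |a-b|/\underline m_2^{2}.
\]
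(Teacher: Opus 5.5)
Your proof is correct and has the same skeleton as the paper's proof. Both use $v-q=(1-s)^{-1}\int_s^1\{q(t,x)-q(s,x)\}\,\mathrm{d}t$ for the Part 1 lower bound. Both use $\partial q/\partial s=1/f_{Y\mid X}(q(s,x);x)$ and $\partial v/\partial s=(v-q)/(1-s)$ for Part 2. For Part 3, both use a bounded-Lipschitz argument together with $|1/a-1/b|\le|a-b|/\underline{m}_2^2$.

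You differ from the paper in two steps that the paper treats loosely:
\begin{itemize}
\item The paper chooses $c_1$ by invoking uniform continuity of $q(s,x)$. You derive it explicitly from the density floor $\underline{f}$.
\item The paper gets the Part 1 upper bound by asserting that $v-q$ is jointly continuous in $(s,x)$ and hence bounded. That does not follow from Condition G-Y2 as stated, which gives only continuity in $s$ for each fixed $x$ and Lipschitz continuity in $x$ at the single level $\tau$. Your bound $v(s,x)-q(s,x)\le v(\tau+c_1,x)-q(\tau-c_1,x)$ instead combines the tail-splitting inequality with boundedness of $q(\tau,\cdot)$ and $v(\tau,\cdot)$ on the bounded support.
\end{itemize}
Your route uses only what G-X, G-Y1 and G-Y2 actually provide, so it is the more self-contained argument.

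One small fix is needed. In the lower bound you apply $q(t,x)-q(s,x)\ge(t-s)/\overline{f}$ for $t$ up to $s+c_1\le\tau+2c_1$. That requires $2c_1<\underline{f}\varepsilon_0$, not just $c_1<\underline{f}\varepsilon_0$; the paper's requirement that $|s-\tau|\le 2c_1$ stay in the window serves exactly this purpose.

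Your caveat in Part 3 that $q$ is merely differentiable is unnecessary. $q(\cdot,x)$ is in fact $C^1$ on the window because $f_{Y\mid X}(\cdot;x)$ is continuous, though your Lipschitz argument works either way.
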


\begin{lemma}
\label{lemma::bin-technical-h}
Suppose the initial estimators $\hat{v}(s,\bar{x}_m)$ satisfy Condition \ref{cond::vhat1}, and the binning weights $\hgamma_m$ satisfy
\[
 \supM\left| \frac{\hgamma_m}{\hat{\pi}_m} - 1\right| \convergeip 0.
\] 
The following results hold, where $r_n$ is the same as in Condition \ref{cond::vhat1}.
\begin{enumerate}
\item \small \itemEq{
\sumgammaM \sup_{\substack{(z,z'): z = v(\tau,\bar{x}_m)\\|z'-z| \lesssim (r_n\vee n^{-1/2})}} \left|\{\hat{h}(z,\bar{x}_m) - h(z,\bar{x}_m)\} - \{\hat{h}(z',\bar{x}_m) - h(z',\bar{x}_m)\}\right| = \op\left(\frac{1}{\sqrt{n}}\right).
}
\item \itemEq{
    \sumgammaM \left[\hat{v}(\tau,\bar{x}_m) - v(\tau,\bar{x}_m)\right]^2 = \op\left(\frac{1}{\sqrt{n}}\right).
}
\item \itemEq{
    \sumgammaM \left|\tau - \hat{h}\circ(\hat{v}(\tau,\bar{x}_m),\bar{x}_m)\right| = \op\left(\frac{1}{\sqrt{n}}\right).
}
\end{enumerate}
\end{lemma}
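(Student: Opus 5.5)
The plan is to treat all three parts as consequences of one uniform localization-and-inversion argument for the estimated ES curve $s\mapsto\hat v(s,\bxm)$. Throughout I would use two facts. First, $\sumgammaM = \sumM \hat\pi_m\,(\hgamma_m/\hat\pi_m)=1+\op(1)$, so it suffices to bound each summand uniformly in $m$ at the stated rate. Second, Lemma~\ref{lemma::bin-cond-implications} gives that $v(\cdot,x)$ has derivative in $[\underline m_2,\overline m_2]$ near $\tau$, uniformly in $x$. Hence the inverse $h(\cdot,x)$ is differentiable near $v(\tau,x)$, with derivative bounded above and below and Lipschitz (part 3 of that lemma).

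Part 2 is immediate. By Condition~\ref{cond::vhat1}(i), $\supM|\hat v(\tau,\bxm)-v(\tau,\bxm)|=\Op(r_n)$, so each squared term is $\Op(r_n^2)=\op(n^{-1/2})$ because $r_n=o(n^{-1/4})$. Multiplying by $\sumgammaM=\Op(1)$ finishes it.

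For Part 1, write $\Delta_m=\hat v(\tau,\bxm)-v(\tau,\bxm)$ and $I_n=\{s:|s-\tau|\le B(r_n+n^{-1/2})\}$.

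\emph{Step (a): localization.} I first show that for $B$ large, with probability tending to one, $\hat h(z',\bxm)\in I_n$ for all $m$ and all $z'$ with $|z'-v(\tau,\bxm)|\lesssim r_n\vee n^{-1/2}$. At the endpoints $s=\tau\pm B(r_n+n^{-1/2})$, the gap $|v(s,\bxm)-v(\tau,\bxm)|\ge \underline m_2 B(r_n+n^{-1/2})$ dominates both $|z'-z|$ and $\sup_{I_n}|\hat v-v|=\Op(r_n)$. Monotonicity of $\hat v$ then traps the generalized inverse inside $I_n$.

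\emph{Step (b): linearization on $I_n$.} On $I_n$, Condition~\ref{cond::vhat1}(ii) gives $\hat v(s,\bxm)=v(s,\bxm)+\Delta_m+\op(n^{-1/2})$ uniformly in $s$ and $m$. Because $v$ is continuous, this also forces any jump of the left-continuous $\hat v$ inside $I_n$ to be $\op(n^{-1/2})$. Consequently $\hat h(z',\bxm)=h(z'-\Delta_m,\bxm)+\op(n^{-1/2})$, using that $h$ is Lipschitz so the $\op(n^{-1/2})$ error transfers.

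\emph{Step (c): Taylor expansion.} Expanding $h$ gives
\[
\hat h(z',\bxm)-h(z',\bxm)=-\Delta_m\,h'(z',\bxm)+O(\Delta_m^2)+\op(n^{-1/2}).
\]
Subtracting the same expression at $z$ leaves $|\Delta_m|\,|h'(z')-h'(z)|+O(\Delta_m^2)$. The first term is at most $\Op(r_n)\cdot L(r_n\vee n^{-1/2})=\op(n^{-1/2})$, and the second is $\Op(r_n^2)=\op(n^{-1/2})$. All bounds are uniform in $m$ and in $z'$, so summing with the weights $\hgamma_m$ yields Part 1.

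For Part 3, note that by definition $\hat h(\hat v(\tau,\bxm),\bxm)=\sup\{s:\hat v(s,\bxm)\le \hat v(\tau,\bxm)\}\ge\tau$. Any excess therefore comes only from a flat stretch $[\tau,\hat s]$ of $\hat v$. Step (a), applied with $z'=\hat v(\tau,\bxm)$, which satisfies $|z'-v(\tau,\bxm)|=\Op(r_n)$, puts $\hat s$ in $I_n$. On that stretch, Condition~\ref{cond::vhat1}(ii) gives
\[
v(s,\bxm)-v(\tau,\bxm)=-\bigl\{[\hat v(s,\bxm)-v(s,\bxm)]-\Delta_m\bigr\}=\op(n^{-1/2}).
\]
The lower bound $\underline m_2$ on $\partial v/\partial s$ then gives $\hat s-\tau=\op(n^{-1/2})$ uniformly in $m$, and weighting by $\hgamma_m$ concludes.

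The main obstacle I expect is step (b): handling the generalized inverse of a discontinuous, only weakly monotone $\hat v$ uniformly over a growing number of bins. The key is that Condition~\ref{cond::vhat1}(ii) controls oscillation, which simultaneously bounds jump sizes and flat-stretch lengths; I would make that explicit with a short sandwich argument comparing $\hat v$ with $v(\cdot,\bxm)+\Delta_m\pm\epsilon n^{-1/2}$.
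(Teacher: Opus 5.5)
Your proof is correct. Parts 2 and 3 match the paper's argument. For Part 3, the paper skips your localization: it picks $G_n\ll g_n\ll n^{-1/2}$ and uses Condition G-V1(ii) plus the lower bound on $\partial v/\partial s$ to show $\hat v_m(\tau+g_n)>\hat v_m(\tau)$ for all $m$ with probability tending to one. That is the same flat-stretch mechanism you use.

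Part 1 takes a genuinely different route.
\begin{itemize}
\item \emph{The paper} follows the classical Bahadur argument. It first proves $\sup|\hat h_m-h_m|=O_p(r_n)$ on the $z$-window. It then uses the defining inequalities $\hat v_m(\hat\xi_m)\le z_m\le\hat v_m(\hat\xi_m+\varepsilon_n)$ at the two random points $\hat\xi_m=\hat h_m(z_m)$ and $\hat\xi_m'=\hat h_m(z_m')$. These sandwich $[z_m-z_m']-[v_m(\hat\xi_m)-v_m(\hat\xi_m')]$ between increments of $\hat v_m(\cdot)-v_m(\cdot)$, which G-V1(ii) kills. Finally it Taylor-expands $v_m$ to return to $\hat h-h$.
\item \emph{You} instead sandwich the whole curve $\hat v(\cdot,\bar x_m)$ on $I_n$ between $v(\cdot,\bar x_m)+\Delta_m\pm\epsilon n^{-1/2}$. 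This gives the explicit uniform shift representation $\hat h(z',\bar x_m)=h(z'-\Delta_m,\bar x_m)+o_p(n^{-1/2})$. Hence $\hat h-h=-h'\Delta_m+o_p(n^{-1/2})$, and the increment bound follows from the Lipschitz continuity of $h'$.
\end{itemize}
Both routes use the same ingredients: G-V1(i)--(ii), monotonicity, and Lemma~\ref{lemma::bin-cond-implications}. Yours is more transparent and proves more, namely a uniform Bahadur representation for $\hat h$ rather than only equicontinuity of $\hat h-h$. It also handles the generalized inverse in a single sandwich. The paper's version stays local, manipulating the inverse only at two points. Your step (a) plays the role of the paper's preliminary rate for $\hat h$.

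One small tightening in step (a). The constant implicit in the $O_p(r_n)$ of Condition G-V1(i) may depend on $B$. So the claim that the gap $\underline m_2B(r_n+n^{-1/2})$ dominates $\sup_{I_n}|\hat v-v|$ for large $B$ is not automatic. Instead, bound $\hat v-v$ at the endpoints $\tau\pm B(r_n+n^{-1/2})$ by $|\Delta_m|+o_p(n^{-1/2})$ using G-V1(ii). Since $\sup_m|\Delta_m|=O_p(r_n)$ with a $B$-free constant, the domination then holds once $B$ is large. The paper's rate argument glosses over the same point.
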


\begin{lemma}\label{lemma::ll-coef}
Recall $S_{0m}$, $S_{1m}$, and $\bm{S}_{2m}$ given in (\ref{eq::ll-Sm}). Under Conditions \ref{cond::bin-bandwidth} and \ref{cond::bin-covar}, the following results hold:
\begin{enumerate}
\item \itemEq{\supM\left |S_{0m}^{-1}\,{(S_{0m} - S_{1m}^T\bm{S}_{2m}^{-1}S_{1m})} - 1\right| = \op(1).}
\item For any fixed $c_2 > 0$,
\[
\Pr\left(\sup_{m=1,\ldots,M}\left\lVert \bh_mS_{1m}^T\bm{S}_{2m}^{-1}\right\rVert  \geq c_2\right)\leq\frac{C_2}{n^3},
\]
where $C_2$ is a constant that may depend on $c_2$.
\item For some $\varepsilon_2 > 0$,
\[
\Pr\left(\inf_{m=1,\ldots,M}  |{\bh_m^{-p}}{S_{0m}}| \leq \varepsilon_2\right) \leq \frac{1}{n^3}
\]
\end{enumerate}
\end{lemma}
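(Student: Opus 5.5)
The plan is to normalise every block by the bin's own scale and then argue bin by bin. Using the fact that $w_{im}=\bm{1}\{X_i\in A_m\}$, the blocks are
\[
S_{0m}=\hat\pi_m,\qquad
\frac{S_{1m}}{\bh_m}=\frac1n\sumn w_{im}\frac{\tX_i-\txm}{\bh_m},\qquad
\frac{\bm{S}_{2m}}{\bh_m^2}=\frac1n\sumn w_{im}\frac{(\tX_i-\txm)(\tX_i-\txm)^T}{\bh_m^2}.
\]
Every summand is bounded by $1$ in norm, because $\lVert X_i-\bxm\rVert\le\bh_m$ on $A_m$. I will compare each normalised block with its expectation and control all $M$ bins at once with a union bound. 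This needs $M\lesssim\ubh^{-p}\le n$, which holds because the bins are disjoint, each contains a ball of radius $\ubh_m$, and $\mathcal{X}$ is bounded.

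\textbf{Part 3 (lower bound on $S_{0m}$).} By Condition~\ref{cond::bin-covar}, $f_X$ is bounded below, and $A_m$ contains a ball of radius $\ubh_m$. Hence
\[
\pi_m:=\Pr(X\in A_m)\ge c\,\ubh_m^{p}\ge c'\bh_m^{p},
\]
where the second inequality uses the ratio bound in Condition~\ref{cond::bin-bandwidth}. Since $\pi_m\lesssim\bh_m^p$ as well, we get $\pi_m\asymp\bh_m^p$. A multiplicative Chernoff bound gives $\Pr(\hat\pi_m\le\pi_m/2)\le\exp(-c\,n\bh_m^p)$. Condition~\ref{cond::bin-bandwidth} implies $n\ubh^p\gg\log n$, so this probability is at most $n^{-5}$ for large $n$. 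A union bound over at most $n$ bins then gives the stated $n^{-3}$, with $\varepsilon_2=c'/2$.

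\textbf{Part 2 (the cross term).} I would write
\[
\bh_m S_{1m}^T\bm{S}_{2m}^{-1}=\Big(\frac{S_{1m}}{\bh_m S_{0m}}\Big)^T\Big(\frac{\bm{S}_{2m}}{\bh_m^2S_{0m}}\Big)^{-1}
\]
and bound the two factors separately.
\begin{enumerate}[(i)]
\item \emph{Eigenvalue lower bound.} The population version $E[(X-\bxm)(X-\bxm)^T/\bh_m^2\mid X\in A_m]$ has its smallest eigenvalue bounded below by a constant $\kappa>0$. This holds because the conditional law of $X$ puts mass $\gtrsim 1$ on the inscribed ball of radius $\ubh_m\asymp\bh_m$, where $f_X$ is bounded below.
\item \emph{Mean of the first factor.} The quantity $E[(X-\bxm)/\bh_m\mid X\in A_m]$ tends to $0$ uniformly in $m$. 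Since $\bxm$ is the geometric centre, this expectation would vanish exactly if $f_X$ were constant on $A_m$. Here I need $f_X$ to be uniformly continuous on $\mathcal{X}$ (implicit in Condition~\ref{cond::bin-covar} with compact support; if it is not available I would add it), so that the oscillation of $f_X$ over a bin of diameter $\bh\to0$ is $o(1)$.
\item \emph{Concentration.} Bernstein's inequality for the vector sums, and matrix Bernstein for the $\bm{S}_{2m}$ sums, normalised by $\hat\pi_m\asymp\bh_m^p$ on the event of Part 3, give deviations of size $\sqrt{\log n/(n\bh^p)}\to0$. These hold with probability at least $1-n^{-5}$ per bin.
\end{enumerate}
On the intersection of these events, the inverse of the second factor has norm at most $2/\kappa$, and the first factor has norm at most $c_2\kappa/2$ for $n$ large. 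The union bound then yields the $C_2/n^3$ tail.

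\textbf{Part 1 (the Schur complement).} This follows from the same events, since
\[
S_{0m}^{-1}\big(S_{0m}-S_{1m}^T\bm{S}_{2m}^{-1}S_{1m}\big)-1
=-\Big(\frac{S_{1m}}{\bh_mS_{0m}}\Big)^T\Big(\frac{\bm{S}_{2m}}{\bh_m^2S_{0m}}\Big)^{-1}\Big(\frac{S_{1m}}{\bh_mS_{0m}}\Big).
\]
The right-hand side is bounded uniformly in $m$ by $\lVert S_{1m}/(\bh_mS_{0m})\rVert^2/\lambda_{\min}$, which is $\op(1)$ by the Part 2 estimates.

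\textbf{Main obstacle.} The hardest step is (ii) in Part 2: showing the first-moment term is uniformly $o(1)$ rather than merely $O(1)$. It requires exploiting that $\bxm$ is the geometric centre together with continuity of $f_X$, and making the resulting Bernstein bounds uniform over a growing number of bins.
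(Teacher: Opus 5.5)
Your proposal is correct and follows essentially the same route as the paper's proof: Bernstein-type concentration of each block with a union bound over the at most polynomially many bins, the geometric-centre cancellation plus regularity of $f_X$ to make the mean of $S_{1m}$ negligible, an inscribed-ball lower bound (using $\ubh_m\asymp\bh_m$) on the smallest eigenvalue of $\bm{S}_{2m}$, and Part~1 deduced from the Part~2 estimates. The differences are cosmetic. You normalise by $S_{0m}$ and use matrix Bernstein and uniform continuity of $f_X$, whereas the paper normalises by $\bh_m^{p+1}$ and $\bh_m^{p+2}$, uses an $\varepsilon$-net, and invokes Lipschitz continuity of $f_X$; as you correctly note, neither form of continuity is actually stated in Condition~\ref{cond::bin-covar}.
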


\begin{lemma}
\label{lemma::proof-U13}
Suppose Condition \ref{cond::bin-Y-new} holds. For any two sequences $a_n,b_n\to 0$, if the quantile estimator $\hat{q}(s,x)$ satisfies
\[
\sup_{\substack{x\in\mathcal{X}\\s:|s-\tau|\leq a_{n}}}|\hat{q}(s,x) - q(s,x)| = \Op(b_n),
\]
then we have
\begin{enumerate}
\item \itemEq{
\sup_{\substack{m=1,\ldots,M\\|s-\tau|\leq a_{n}}} \left|\ddfrac{\sumn w_{im}\kappa_{im}[Y_i - q(s,X_i)]\left[\bm{1}\{Y_i \geq \hat{q}(s,X_i)]\} - \bm{1}\{Y_i \geq q(s,X_i)\}\right]}{\sumn w_{im}} \right|\\ \qquad= \Op(a_{n}^2 + b_n^2);
}
\item \itemEq{
  \sup_{\substack{m=1,\ldots,M\\|s-\tau|\leq a_{n}}} \left|\ddfrac{\sumn w_{im}\kappa_{im}\left(q(s,X_i) - \hat{q}(s,X_i)\right) \left[\bm{1}\{Y_i \geq \hat{q}(s,X_i)\} - \bm{1}\{Y_i \geq q(s,X_i)\}\right]}{\sumn w_{im}} \right|\\ \qquad= \Op(a_{n}^2 + b_n^2);
}
\end{enumerate}
where $w_{im} = \bm{1}\{X_i\in A_m\}$ and $\kappa_{im} = [1 - S_{1m}^T\bm{S}_{2m}^{-1}(\tX_i - \txm)]$; $S_{1m}$, $\bm{S}_{2m}$ are given in \eqref{eq::ll-Sm}.
\end{lemma}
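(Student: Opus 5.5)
The plan is to bound both averages by a deterministic envelope that no longer depends on $s$ or on the bin, and then to control the resulting bin-wise empirical frequencies uniformly over $m$. Two elementary observations drive everything. First, the difference $\bm{1}\{Y_i\geq \hat{q}(s,X_i)\}-\bm{1}\{Y_i\geq q(s,X_i)\}$ is nonzero only when $Y_i$ lies between $q(s,X_i)$ and $\hat{q}(s,X_i)$. On that event $|Y_i-q(s,X_i)|\leq|\hat{q}(s,X_i)-q(s,X_i)|$. Second, on an event of probability tending to one, $\sup_{x,|s-\tau|\leq a_n}|\hat{q}(s,x)-q(s,x)|\leq Cb_n$ for a large constant $C$, by the assumed rate. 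Hence in both parts the summands satisfy
\[
\bigl|w_{im}\kappa_{im}\,\Delta_i\,[\bm{1}\{Y_i\geq \hat{q}\}-\bm{1}\{Y_i\geq q\}]\bigr|\leq Cb_n\,w_{im}|\kappa_{im}|\,\bm{1}\{|Y_i-q(s,X_i)|\leq Cb_n\},
\]
where $\Delta_i$ is either $Y_i-q(s,X_i)$ or $q(s,X_i)-\hat{q}(s,X_i)$. So part 2 reduces to exactly the same bound as part 1.

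Next I would control $\kappa_{im}$ and remove the dependence on $s$. For $X_i\in A_m$ we have $\|\tX_i-\txm\|\leq\bh_m$, so $|\kappa_{im}|\leq 1+\|\bh_mS_{1m}^T\bm{S}_{2m}^{-1}\|$. By Lemma~\ref{lemma::ll-coef}(2) this is at most $1+c_2$ for all $m$ simultaneously, with probability at least $1-C_2n^{-3}$. For the $s$-dependence, $q(s,x)$ is monotone in $s$. Therefore, for all $|s-\tau|\leq a_n$,
\[
\bm{1}\{|Y_i-q(s,X_i)|\leq Cb_n\}\leq \bm{1}\{q(\tau-a_n,X_i)-Cb_n\leq Y_i\leq q(\tau+a_n,X_i)+Cb_n\}=:J_i .
\]
By Lemma~\ref{lemma::bin-cond-implications} (bounded $\partial q/\partial s$) and the density bound in Condition~\ref{cond::bin-Y-new}, $\E[J_i\mid X_i]\lesssim a_n+b_n$ uniformly in $X_i$. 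Both displays are now bounded by $Cb_n\cdot \bigl(\sum_i w_{im}J_i\bigr)/\bigl(\sum_i w_{im}\bigr)$, with no supremum over $s$ left.

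The remaining and main step is the uniform-in-$m$ bound on the bin frequency $\hat{p}_m=\sum_iw_{im}J_i/\sum_iw_{im}$. Conditionally on the covariates, $\hat{p}_m$ is an average of $N_m=n\hat{\pi}_m$ independent Bernoulli variables with means $\lesssim a_n+b_n$. By Lemma~\ref{lemma::ll-coef}(3) and Condition~\ref{cond::bin-bandwidth}, $N_m\gtrsim n\bh^p\gg\log n$ for all $m$ with probability $1-O(n^{-3})$. I would apply Bernstein's inequality with a union bound over the $M\lesssim \bh^{-p}\leq n$ bins, which gives
\[
\supM\hat{p}_m\lesssim_{\P}(a_n+b_n)+\frac{\log n}{n\bh^p}+\sqrt{\frac{(a_n+b_n)\log n}{n\bh^p}}\lesssim_{\P}(a_n+b_n)+\frac{\log n}{n\bh^p}.
\]
Multiplying by $Cb_n$ and using $b_n(a_n+b_n)\leq \tfrac32(a_n^2+b_n^2)$ yields the claim. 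The obstacle is the residual $b_n\log n/(n\bh^p)$. Under Condition~\ref{cond::bin-bandwidth}, $\log n/(n\bh^p)=o(n^{-1/2})$, which is negligible relative to $a_n^2+b_n^2$ in every application, where $a_n\asymp n^{-1/4}$. For full generality, I would first try to absorb it by showing $\log n/(n\bh^p)\lesssim a_n+b_n$ in the regimes used later. If that fails, I would state the bound as $\Op(a_n^2+b_n^2+b_n\log n/(n\bh^p))$ and check that this is all the later proofs require.
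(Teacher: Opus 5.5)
Your argument is correct and follows essentially the paper's proof. Both bound $|\kappa_{im}|$ via Lemma~\ref{lemma::ll-coef}, control the indicator difference through $\sup|\hat q-q|\le Cb_n$ on a high-probability event, remove the supremum over $s$ with a monotone envelope between $q(\tau-a_n,\cdot)$ and $q(\tau+a_n,\cdot)$, and finish with a conditional concentration inequality and a union bound over bins. The one difference is that you treat part 1 the way the paper treats part 2, pulling out $|\hat q-q|$ and bounding a band frequency with Bernstein, which gives the slightly sharper $b_n(a_n+b_n)$; the paper instead keeps the weight $Y_i-q(s_-,X_i)$, uses Hoeffding, and bounds a truncated first moment by $(a_n+b_n)^2$.

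The residual $b_n\log n/(n\bh^p)$ you flag is genuine: the stated rate does not hold for arbitrary $a_n,b_n\to0$. The paper handles it as in your first option, proving the lemma only for $a_n=r_n$ and $b_n=g_{1n}$, where $\log n/(n\bh^p)\le r_n^2$ absorbs it. Note that in the applications $a_n$ is $r_n$ (or $Br_n$), not $n^{-1/4}$ as you wrote.
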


We comment on these lemmas.
Lemmas \ref{lemma::bin-approx-moment-X} and \ref{lemma::ll-coef} are about the covariates under the binning mechanism. 
Lemma \ref{lemma::bin-cond-implications} gives some technical implications on the data-generating process derived from the conditions in Section~\ref{subsec::asy_linear}. 
In addition, Lemma \ref{lemma::bin-technical-h} and \ref{lemma::proof-U13} are more technical. 
In particular, the results in \ref{lemma::bin-technical-h} are similar to, but stronger than the examples given by standard functional delta method \citep[Chapter 20]{van2000asymptotic}. 
For a fixed quantile level $\tau$, Lemma \ref{lemma::proof-U13} is the same as Lemma A.4 in \citet{olma2021nonparametric} and Lemma A.3 in \citet{kato2012weighted}. Our result is stronger in uniformity over $s$.

\subsection{Proof of Theorem \ref{thm::asymptotic-RRM-bin}}

\label{subsec::proof-main}
To simplify the notations, in the following proof we define $v_m(\alpha) = v(\alpha,\bxm )$, and $\hat{v}_m(\alpha) = \hat{v}(\alpha,\bxm )$; correspondingly we write ${h}_m(z) = h(z,\bxm )$ and $\hat{h}_m(z) = \hat{h}(z,\bxm )$ for the inverse ES function. When there is no confusion, we shall write $v_m = v(\tau,\bxm)$ without the index to refer to the targeting $\tau$th ES. 
In our proof, the number of bins $M = M_n$ increases with the sample size, though we often omit the subscript.

\begin{proof}[Proof of Theorem \ref{thm::asymptotic-RRM-bin}]
In this proof, we work with the following shifted i-Rock objective function:
\begin{equation}
 L_n(\delta) = \sumM \hgamma_m \int_{0}^1\; \left[\rho_\tau\left(\hat{v}_{m}(\alpha) - v_{m}(\tau) - \bxm \delta/\sqrt{n} \right) -  \rho_\tau\left(\hat{v}_{m}(\alpha ) - v_m(\tau )\right)\right]\,\d\alpha.
\label{eq::proof-rock-Ln}
\end{equation}
It follows that $\hat{\delta} = n^{1/2}(\widehat{\beta} - \beta)$ minimizes $L_n(\delta)$, where $\widehat{\beta}$ is the i-Rock estimator in (\ref{eq::estimator-rock-binning}). Therefore, it suffices to study the asymptotic properties of $\hat{\delta}$.
To this end, we first show that the function $L_n(\delta)$ in (\ref{eq::proof-rock-Ln}) converges (pointwise) in probability to a quadratic function of $\delta$. Then we apply the convexity argument in \citet{pollard1991asymptotics} to derive the asymptotic properties of $\hat{\delta}$. We define $\Delta_m(\delta) = \hat{v}_m(\tau) - v_m(\tau) - n^{-1/2}\bxm ^T\delta$.

By Knight's identity \citep{knight1998limiting}, 
\[
\rho_\tau(w-v) - \rho_\tau(w) = -v(\tau - \bm{1}\{w \leq 0\}) + \int_{0}^{v}(\bm{1}\{w \leq t\} - \bm{1}\{w \leq 0\}) \,\d t,
\]
for any $w$ and $v$, therefore
\begin{eqnarray*}
\rho_\tau(w-v_1)-\rho_\tau(w-v_2) 
&=&
[\rho_\tau(w-v_1) - \rho_\tau(w)] -
[\rho_\tau(w-v_2) - \rho_\tau(w)]
\\
&=& (v_2-v_1)(\tau - \bm{1}\{w \leq 0\}) + \int_{v_2}^{v_1}(\bm{1}\{w \leq t\} - \bm{1}\{w \leq 0\}) \,\d t.
\end{eqnarray*}
Taking $w = \hat{v}_m(\alpha)- \hat{v}_m(\tau)$, $v_1 = -\Delta_m(\delta)$, and $v_2 = v_m(\tau) - \hat{v}_m(\tau)$ in the above displayed equation, we obtain:
\begin{eqnarray*}
&&\int_{0}^1 \rho_\tau\left(\hat{v}_m(\alpha) - v_{m}(\tau) - \bxm ^T\delta/\sqrt{n}\right)\,\d\alpha -  \int_{0}^1\rho_\tau\left(\hat{v}_{m}(\alpha ) - v_{m}(\tau)\right)\,\d\alpha\\
&=&-n^{-1/2}\tilde{x}^T\delta\int_{0}^1 (\tau - \bm{1}\{\hat{v}_m(\alpha)\leq \hat{v}_m(\tau)\}) \d\alpha\\
&&\;+\; \int_{0}^1\int_{v_m - \hat{v}_m}^{-\Delta_m(\delta)}(\bm{1}\{\hat{v}_m(\alpha) \leq \hat{v}_m(\tau) + t\} - \bm{1}\{\hat{v}_m(\alpha) \leq \hat{v}_m(\tau)\})\,\d t\,\d\alpha\\
&=& -n^{-1/2}\bxm ^T\delta[\tau - \hat{h}_m(\hat{v}_m)] +\;\int_{v_m - \hat{v}_m}^{-\Delta_m(\delta)}[\hat{h}_m(\hat{v}_m + t) - \hat{h}_m(\hat{v}_m)]\,\d t,
\end{eqnarray*}
where the last equality follows from the definition of $\hat{h}$ in (\ref{eq::def-h}), and by exchanging the order of integration.
Therefore, summing over $m = 1,\ldots,M$ in the above equation gives the following decomposition for $L_n(\delta)$ (defined in (\ref{eq::proof-rock-Ln})):
\begin{eqnarray}
L_n(\delta)
&=&
\underbrace{-n^{-1/2}\sumM \hgamma_m\bxm ^T\delta[\tau - \hat{h}_m(\hat{v}_m)]}_{A_n(\delta)} + 
\sumM \hgamma_m\int_{v_m - \hat{v}_m}^{-\Delta_m(\delta)}[\hat{h}_m(\hat{v}_m + t) - \hat{h}_m(\hat{v}_m) ]\,\d t\nonumber\\
&=&
A_n(\delta)+
\underbrace{\sumM  \hgamma_m \int_{v_m - \hat{v}_m}^{-\Delta_m(\delta)}[h_m(\hat{v}_m + t) - h_m(\hat{v}_m)]\,\d t}_{B_n(\delta)}\nonumber\\
&&\;+\;
\underbrace{\sumM  \hgamma_m \int_{v_m - \hat{v}_m}^{-\Delta_m(\delta)}\left[\{\hat{h}_m(\hat{v}_m + t) - h_m(\hat{v}_m + t)\} - \{\hat{h}_m(\hat{v}_m)-h_m(\hat{v}_m)\}\right]\,\d t}_{C_n(\delta)}\nonumber\\
&\triangleq& A_n(\delta) + B_n(\delta) + C_n(\delta).
\label{eq::proof-rock-ABC}
\end{eqnarray}


For any fixed $\delta = O(1)$, we shall show that both $A_n(\delta)$ and $C_n(\delta)$ are $\op(n^{-1})$.
For $A_n(\delta)$, note $\bxm ^T\delta$ is uniformly bounded over $m$, hence
\[
|n\,A_n(\delta)| \lesssim \sqrt{n}\left(\sumgammaM |\tau - \hat{h}_m(\hat{v}_m)| \right) = \op(1),
\]
from Lemma \ref{lemma::bin-technical-h}.
For $C_n(\delta)$, we first define
\[
R_n = \supM\max\left\{\frac{|\bxm ^T\delta|}{\sqrt{n}},\;|\hat{v}_m(\tau) - v_m(\tau)|\right\},
\] 
and it follows from Condition \ref{cond::vhat1} that $R_n= \Op(r_n\vee n^{-1/2})$. By taking the supremum within each the integration in $C_n(\delta)$, we have
\begin{eqnarray*}
&&|n\,C_n(\delta)|\\
&\leq&\sqrt{n}\,\left(
\sumgammaM |\bxm ^T\delta| \times\;2\sup_{|s| \leq R_n} \left|\{\hat{h}_m(v_m + s) - h(v_m + s)\} - \{\hat{h}_m(v_m) - h_m(v_m)\}\right|\right)\\
&\lesssim_{\P}&
\sqrt{n}\left(\sumgammaM \sup_{|s| \lesssim r_n\vee n^{-1/2}} \left|\{\hat{h}_m(v_m + s) - h_m(v_m + s)\} - \{\hat{h}_m(v_m) - h_m(v_m)\}\right|\right)\\
&=&\op(1),
\end{eqnarray*}
where the last inequality follows from Lemma \ref{lemma::bin-technical-h}.

Next we turn to the convergence of $B_n(\delta)$, where we first give a linear approximation for $h_m(\hat{v}_m + t) - h_m(\hat{v}_m)$ in (\ref{eq::proof-rock-ABC}). Note the derivative for the inverse function $h_m(z) = h(z,\bxm)$ in (\ref{eq::def-h}) is:
\begin{equation}
h'_m(z) = \left[\frac{\partial{v_m(s)}}{\partial s}{\Bigg|_{s=h_m(z)}}\right]^{-1} = \frac{1-h_m(z)}{v_m(h_m(z)) - q_m(h_m(z))}.
\label{eq::prof-rock-hprime}
\end{equation}
By using the first order Taylor-expansion and the mean value theorem, there exists a $\xi_m$ between $\hat{v}_m$ and $\hat{v}_m + t$ such that
\begin{eqnarray}
|h_m(\hat{v}_m + t) - h_m(\hat{v}_m ) - th_m'(v_m)| &=& |t[h_m'(\xi_m)-h_m'(v_m)]|\nonumber\\
&\leq& |t|\times L\left( |\xi_m - \hat{v}_m| + |\hat{v}_m - v_m|\right)\nonumber\\
&\leq & L\; (t^2 + |\hat{v}_m - v_m| \times |t|),\label{eq::Taylor-h-proof-main1}
\end{eqnarray}
since $|\xi_m - \hat{v}_m| \leq |t|$, where $L$ is the Lipschitz constant in Lemma \ref{lemma::bin-cond-implications}.
Therefore, $B_n(\delta)$ can be approximated as follows:
\begin{eqnarray*}
&&\left|B_n(\delta)  - \frac{1}{2}\sumgammaM h_m'(v_m) \left\{\Delta_m^2(\delta) -[\hat{v}_m- v_m]^2\right\}\right| \\
&=&
\left| \sumgammaM \int_{v_m - \hat{v}_m}^{-\Delta_m(\delta)}[h_m(\hat{v}_m + t) - h_m(\hat{v}_m) - th'_m(v_m)]\,\d t\right|\\
&\lesssim&
\left| \sumgammaM \int_{v_m - \hat{v}_m}^{-\Delta_m(\delta)}(t^2 +  |\hat{v}_m - v_m|\times |t|) \,\d t\right|\\
 & \lesssim& \frac{1}{\sqrt{n}}\sumgammaM |\hat{v}_m - v_m|^2 + \frac{1}{n}
\sumgammaM |\hat{v}_m - v_m| +\op\left(\frac{1}{n}\right) \\
&=& 
 \op\left(\frac{1}{n}\right),
\end{eqnarray*}
where the second inequality follows from (\ref{eq::Taylor-h-proof-main1}), and the last equality holds from Lemma \ref{lemma::bin-technical-h}. Therefore, $B_n(\delta)$ can be approximated by a function of $\Delta_m^2(\delta)$.

We now show that the loss function $L_n$ is approximately a quadratic function in $\delta$. Let
\[
D_{1n} = \frac{1}{1-\tau}\left[\sumgammaM h_m'(v_m)\bxm \bxm ^T\right], \quad \text{and}\quad u_n = \frac{\sqrt{n}}{1-\tau}\sumgammaM h_m'(v_m)\bxm (\hat{v}_m - v_m).
\]
Collecting the results for $A_n(\delta)$, $B_n(\delta)$ and $C_n(\delta)$ into (\ref{eq::proof-rock-ABC}), we have shown that for any fixed $\delta\in\mathbb{R}^{p+1}$,
\begin{eqnarray}
n\cdot L_n(\delta) &=& \frac{n}{2}\sumgammaM h_m'(v_m)\{\Delta_m^2(\delta) - [\hat{v}_m - v_m]^2\} + \op(1)\nonumber\\
&=& 
\frac{1}{2}\delta^TD_{1n}\delta - \delta^Tu_n  + \op(1),
\label{eq::proof-rock-quadratic-conv}
\end{eqnarray}
where the last equality follows by expanding $\Delta_m(\delta) = \hat{v}_m(\tau) - v_m(\tau) - n^{-1/2}\bxm ^T\delta$.
Since the i-Rock loss function $L_n(\delta)$ is convex, standard convexity argument (see e.g., \citet{hjort2011asymptotics} and \citet{pollard1991asymptotics}) shows that the convergence in (\ref{eq::proof-rock-quadratic-conv}) is uniform in $\delta$ over any compact subset of $\mathbb{R}^{p+1}$. 
Furthermore, the calculation of $h'_m$ in (\ref{eq::prof-rock-hprime}) and Lemma \ref{lemma::bin-approx-moment-X} shows
\[
D_{1n} \convergeip D_1 = \E\left[\frac{XX^T}{v(\tau,X)- q(\tau,X)}\right],
\] 
since $v(\tau,x)- q(\tau,x)$ is bounded by Lemma \ref{lemma::bin-cond-implications}.
Therefore, (\ref{eq::proof-rock-quadratic-conv}) implies that for any compact set $\mathcal{B}\subset\mathbb{R}^{p+1}$,
\begin{equation}
\label{eq::proof-rock-quadratic-conv2}
n\cdot \sup_{\delta\in\mathcal{B}} \left|\,L_n(\delta) - Q_n(\delta)\right| =  \op(1),
\end{equation}
where $Q_n(\delta) = \frac{1}{n}(\frac{1}{2}\delta^TD_1\delta - \delta^Tu_n) $. This shows that $L_n(\delta)$ can be uniformly approximated by a quadratic function in $\delta$.



Finally, we show the convergence of $\hat{\delta}$, which establishes the asymptotic properties of the i-Rock estimator.
As a function of $\delta$, $Q_n(\cdot)$ in (\ref{eq::proof-rock-quadratic-conv2}) has a unique minimizer
\[
\tilde{\delta} = D_1^{-1}u_n,
\]
since $D_1$ is positive definite.
Given Condition \ref{cond::bin-linear} and (\ref{eq::proof-rock-quadratic-conv2}), we apply the Basic Corollary in \citet{hjort2011asymptotics} to conclude that the minimizers of $L_n(\delta)$ and $Q_n(\delta)$ are asymptotically equivalent, i.e.,
\[
\hat{\delta} = \tilde{\delta} +\op(1) = D_{1}^{-1}\left[\sqrt{n}\sumM \frac{\hgamma_m\bxm }{v_m(\tau) - q_m(\tau)}[\hat{v}_m(\tau) - v_m(\tau)]\right] + \op(1).
\]
The proof is now complete by noting that $\hat{\delta} = n^{1/2}(\widehat{\beta}-\beta)$.

\end{proof}

\subsection{Proof of Theorem \ref{thm::example-RRM-SQ}}

\label{subsec::proof-eg}

To prove Theorem \ref{thm::example-RRM-SQ}, it entails to show that all conditions of Theorem \ref{thm::asymptotic-RRM-bin} apply to our specific construction of the initial estimator in (\ref{eq::ll-SQ}). We break the main technical requirements  into three Propositions below, the proof of which can be found in Appendix~\ref{append::Prop345}. In our proofs here, $\hat{v}(s,\bxm)$ refers specifically to the estimator constructed in (\ref{eq::ll-SQ}), and the weight $\hgamma_m$ refers to the one in (\ref{eq::ll-gweights}). Furthermore, we fix the sequence $r_n$ to be the one defined in Proposition \ref{prop::condI-1} below; We shall verify later in the proof of Theorem \ref{thm::example-RRM-SQ} that $r_n$ indeed satisfies the requirements in Theorem \ref{thm::example-RRM-SQ}.

\begin{proposition}
\label{prop::condL}
Under the conditions of Theorem \ref{thm::example-RRM-SQ}, we have
\[
\sqrt{n}\,\sumM \left[\frac{\hgamma_m \bxm }{v(\tau,\bar{x}_m)-q(\tau,\bar{x}_m)}\{\hat{v}(\tau,\bar{x}_m) - v(\tau,\bar{x}_m)\}\right] \converged \mathrm{N}\left(0,\;\Omega_1\right),
\]
where $\Omega_1$ is defined in Theorem \ref{thm::example-RRM-SQ}. 
\end{proposition}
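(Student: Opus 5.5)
The plan is to reduce the weighted sum to a normalized sum of i.i.d. terms plus negligible remainders, then apply the Lindeberg--Feller CLT. The starting point is the closed form \eqref{eq::ll-SQ}. Since $\hgamma_m = S_{0m} - S_{1m}^T\bm{S}_{2m}^{-1}S_{1m}$ is exactly the reciprocal of the $(1,1)$ entry of $[n^{-1}\bm{X}_m^T\bm{W}_m\bm{X}_m]^{-1}$, block inversion gives the identity
\[
\hgamma_m\,\hat v(\tau,\bxm) = \frac{1}{n}\sumn w_{im}\kappa_{im}\hat Z_i(\tau),
\]
with $\kappa_{im} = 1 - S_{1m}^T\bm{S}_{2m}^{-1}(\tX_i-\txm)$ as in Lemma~\ref{lemma::proof-U13}. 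The same identity holds with $\hat Z_i$ replaced by any vector in the column space of $\bm{X}_m$, which returns its intercept. Hence the target sum equals $n^{-1}\sumn \sumM w_{im}\kappa_{im}\bxm\,(v_m-q_m)^{-1}\{\hat Z_i(\tau) - v(\tau,\bxm)\}$, where $v_m - q_m$ denotes $v(\tau,\bxm)-q(\tau,\bxm)$.

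Next I would decompose $\hat Z_i(\tau) - v(\tau,\bxm)$ into three pieces. The first is the oracle score $Z_i - v(\tau,X_i)$, where $Z_i$ uses the true quantile $q(\tau,X_i)$. The second is the quantile-estimation error $\hat Z_i - Z_i$. The third is the bias $v(\tau,X_i) - v(\tau,\bxm)$.

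For the quantile piece, write $\hat Z_i - Z_i = (1-\tau)^{-1}\{(Y_i-q)[\bm 1(Y_i\ge\hat q)-\bm 1(Y_i\ge q)] + (q-\hat q)\bm 1(Y_i\ge\hat q)\} + (\hat q - q)$. The cross terms with indicator differences are $\Op(g_{1n}^2) = \op(n^{-1/2})$ uniformly in $m$ by Lemma~\ref{lemma::proof-U13} with $a_n=0$, $b_n=g_{1n}$. The remaining linear term is $(1-\tau)^{-1}(\hat q - q)[\tau - \bm 1(Y_i<q)]$, which Condition~\ref{cond::bin-Qt}(ii) makes $\op(n^{-1/2})$ bin-uniformly. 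To use that condition, I would expand $\kappa_{im}$ as $1$ plus $\bh_m S_{1m}^T\bm{S}_{2m}^{-1}$ times $(\tX_i-\txm)/\bh_m$, and control the prefactor by Lemma~\ref{lemma::ll-coef}. Summing over bins with total weight $\sum_m\hat\pi_m = 1$ keeps these pieces $\op(n^{-1/2})$.

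For the bias piece, the local-linear fit annihilates linear functions of $\tX_i - \txm$. Under Condition~\ref{cond::bin-Aux}(ii) the bias is therefore exactly zero. Under (i) it is $O(\bh^2)$, and the condition $\bh^{4-p}=o(1/\log n)$ together with Condition~\ref{cond::bin-bandwidth} should give $\bh^2 = o(n^{-1/2})$. This bandwidth bookkeeping is fiddly, and I expect it to be the main obstacle, together with the uniform control of $\kappa_{im}$.

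The leading term is then $n^{-1}\sumn \xi_{n,i}$ with
\[
\xi_{n,i} = \sumM w_{im}\kappa_{im}\bxm (v_m-q_m)^{-1}\{Z_i - v(\tau,X_i)\}.
\]
Conditional on the covariates these terms are independent and mean zero, because $E[Z\mid X] = v(\tau,X)$. For the variance I would replace $\kappa_{im}\bxm$ by $X_i$ and $v_m-q_m$ by $v(\tau,X_i)-q(\tau,X_i)$ using bin diameter $\to0$, Lipschitz continuity of $v$ and $q$, and $\kappa_{im}=1+\op(1)$ uniformly. By Lemma~\ref{lemma::bin-approx-moment-X} and the computation in Corollary~\ref{coro::SQ-fixed-tau}, which gives $\var(Z\mid X=x) = \sigma_\tau^2(x)$, the conditional variance converges to $\Omega_1$. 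The Lyapunov condition follows from the $2+\delta_0$ moment in Condition~\ref{cond::bin-Y-new} and the boundedness in Lemma~\ref{lemma::bin-cond-implications}. The Lindeberg--Feller CLT, applied conditionally and then unconditionally by dominated convergence, then yields the stated limit $\mathrm{N}(0,\Omega_1)$.
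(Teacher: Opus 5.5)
Your reduction of the target to $n^{-1/2}\sumn\sumM w_{im}\kappa_{im}\bxm\zeta_m^{-1}\{\hat Z_i(\tau)-v(\tau,\bxm)\}$, with $\zeta_m=v(\tau,\bxm)-q(\tau,\bxm)$, is correct. Your treatment of the quantile-estimation error is also correct and coincides with the paper's Claim 2: Lemma~\ref{lemma::proof-U13} for the two indicator-difference terms, Condition G-Q(ii) with Lemma~\ref{lemma::ll-coef} for the linear term, then $\sumM\hat\pi_m=1$.

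\textbf{The step that fails is the bias piece.} You treat $v(\tau,X_i)-v(\tau,\bxm)$ as a smoothing bias of order $\bar h^2$ under Condition G-A2(i) and hope for $\bar h^2=o(n^{-1/2})$. That relation is not available. Condition G-A1 gives $\bar h^p\gtrsim\underline{h}^p\gg\log n/\sqrt n$, so for $p\ge 2$ one actually has $\sqrt n\,\bar h^2\to\infty$, and for $p=1$ nothing forces $\bar h\ll n^{-1/4}$. The missing observation is that there is no bias at level $\tau$ at all. Under model~\eqref{eq::SQ-model-linear}, $v(\tau,x)=x^T\beta$, so $v(\tau,X_i)-v(\tau,\bxm)=(X_i-\bxm)^T\beta$ is exactly linear in $\tX_i-\txm$. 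Since
\[
\sumn w_{im}\kappa_{im}(\tX_i-\txm)=nS_{1m}-n\bm{S}_{2m}\bm{S}_{2m}^{-1}S_{1m}=0,
\]
the term vanishes identically, with no bandwidth condition and no use of Condition G-A2. This is the local-linear calculation behind \eqref{eq::condL-tot-CLT}. Condition G-A2 is needed only in Proposition~\ref{prop::condI-2}, for $s\neq\tau$, where $v(s,\cdot)$ need not be linear. There the bias is $O(|s-\tau|\bar h^2)=O(r_n\bar h^2)$, and that is exactly what $\bar h^{4-p}=o(1/\log n)$ controls.

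With this fix, the rest of your argument is valid but takes a different route to the CLT than the paper. The paper splits $\kappa_{im}=1-S_{1m}^T\bm{S}_{2m}^{-1}(\tX_i-\txm)$.
\begin{itemize}
\item The part with $1$ gives summands $[Z_i(\tau)-v(\tau,X_i)]\sumM w_{im}\bxm/\zeta_m$. These depend only on $(X_i,Y_i)$, so they are genuinely i.i.d. and the unconditional Lindeberg--Feller theorem applies directly.
\item The remaining part is removed in Claim~\ref{claim1} by a conditional second-moment bound. This reduces to $\E[\sumM S_{1m}^T\bm{S}_{2m}^{-1}S_{1m}]\to0$ by dominated convergence.
\end{itemize}
You instead keep $\kappa_{im}$ in the leading term, so the summands are only conditionally independent given the design. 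You then need the uniform bound $\sup_{i,m}w_{im}|\kappa_{im}-1|=\op(1)$ from Lemma~\ref{lemma::ll-coef} in both the variance and the Lyapunov computations. You also need a conditional-to-unconditional passage, for example via conditional characteristic functions along subsequences plus bounded convergence. Your route saves the separate remainder estimate of Claim~\ref{claim1}; the paper's route avoids conditioning altogether. Both lead to the limit $\mathrm{N}(0,\Omega_1)$.
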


\begin{proposition}
\label{prop::condI-1}
Let
\[
r_n = \sqrt{\frac{\log n}{n\ubh^p}}.
\]
Under the condition of Theorem \ref{thm::example-RRM-SQ}, we have
\[
  \supM \left|\hat{v}(\tau,\bxm) - v(\tau,\bxm)\right| = \Op\left(r_n\right).
\]
\end{proposition}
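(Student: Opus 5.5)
The plan is to reduce $\hat{v}(\tau,\bxm)-v(\tau,\bxm)$ to a ratio of weighted bin sums, split it into a noise term, a bias term and a quantile-estimation term, and bound each uniformly in $m$. Write $v_m=v(\tau,\bxm)$. First I would use the partitioned-inverse formula on \eqref{eq::ll-Sm}, which gives $e_1^T[n^{-1}\bm{X}_m^T\bm{W}_m\bm{X}_m]^{-1}=\hgamma_m^{-1}(1,\,-S_{1m}^T\bm{S}_{2m}^{-1})$. With $\kappa_{im}=1-S_{1m}^T\bm{S}_{2m}^{-1}(\tX_i-\txm)$, this yields
\[
\hat{v}(\tau,\bxm)=\frac{\sumn w_{im}\kappa_{im}\hat{Z}_i(\tau)}{\sumn w_{im}\kappa_{im}},\qquad \sumn w_{im}\kappa_{im}=n\hgamma_m .
\]
Hence $\hat v(\tau,\bxm)-v_m=(n\hgamma_m)^{-1}\sumn w_{im}\kappa_{im}\{\hat Z_i(\tau)-v_m\}$. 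By Lemma~\ref{lemma::ll-coef}, with probability $1-O(n^{-2})$ uniformly in $m$, the weights $\kappa_{im}$ are bounded for $X_i\in A_m$ and $\hgamma_m\gtrsim \bh_m^p\asymp\ubh^p$. These facts also show each bin contains order $n\ubh^p$ points.

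Next I would write $Z_i(\tau)$ for $\hat Z_i(\tau)$ with $\hat q$ replaced by the true $q$, and split $\hat Z_i-v_m=\{Z_i-v(\tau,X_i)\}+\{v(\tau,X_i)-v_m\}+\{\hat Z_i-Z_i\}$. This gives three terms $T_{1m},T_{2m},T_{3m}$.

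\textbf{Term $T_{3m}$ (quantile estimation).} Expand $\hat Z_i-Z_i$ into three pieces:
\begin{itemize}
\item $(1-\tau)^{-1}(Y_i-q)\big[\bm{1}\{Y_i\ge\hat q\}-\bm{1}\{Y_i\ge q\}\big]$;
\item $(1-\tau)^{-1}(q-\hat q)\big[\bm{1}\{Y_i\ge\hat q\}-\bm{1}\{Y_i\ge q\}\big]$;
\item $(\hat q-q)(1-\tau)^{-1}\big[\tau-\bm{1}\{Y_i\le q\}\big]$, using $1-\bm{1}\{Y_i\ge q\}/(1-\tau)=(1-\tau)^{-1}[\tau-\bm{1}\{Y_i\le q\}]$.
\end{itemize}
The first two pieces are $\Op(g_{1n}^2)=\op(n^{-1/2})$ uniformly in $m$, by Lemma~\ref{lemma::proof-U13} with $a_n\to0$ trivially and $b_n=g_{1n}$ from Condition~\ref{cond::bin-Qt}(i). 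For the third piece, write $\kappa_{im}=1-(\bh_mS_{1m}^T\bm{S}_{2m}^{-1})(\tX_i-\txm)/\bh_m$. Then it is a bounded combination of the $j=0$ and $j=1$ sums in Condition~\ref{cond::bin-Qt}(ii), so it is $\op(n^{-1/2})$ uniformly. Since $r_n\gg n^{-1/2}$, this term is negligible.

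\textbf{Term $T_{2m}$ (bias).} Because $\sumn w_{im}\kappa_{im}(\tX_i-\txm)=0$ exactly, the linear part of $v(\tau,X_i)-v_m$ cancels. Under Condition~\ref{cond::bin-Aux}(ii) the term vanishes entirely. Under Condition~\ref{cond::bin-Aux}(i), a second-order Taylor expansion gives $O(\bh^2)$, and I would check that $\bh^{4-p}=o(1/\log n)$ together with the rate in Condition~\ref{cond::bin-bandwidth} makes $\bh^2=O(r_n)$.

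\textbf{Term $T_{1m}$ (noise).} This is the main obstacle. Conditionally on the covariates, $\sumn w_{im}\kappa_{im}\{Z_i-v(\tau,X_i)\}$ is a sum of about $n\ubh^p$ independent mean-zero terms with bounded coefficients, but $Z_i$ has only $2+\delta_0$ moments (Condition~\ref{cond::bin-Y-new}). I would truncate $Y_i^+$ at level $\lambda_n=n^{1/(2+\delta_0)+\varepsilon}$.
\begin{itemize}
\item The tail part is handled by Markov's inequality and the moment bound.
\item For the truncated part, apply Bernstein's inequality within each bin. The variance proxy is $n\ubh^p$ and the range is $\lambda_n$. Take the deviation level $\sqrt{n\ubh^p\log n}$ and a union bound over $M\lesssim\ubh^{-p}\le n$ bins.
\end{itemize}
The Bernstein range term is controlled when $\lambda_n\sqrt{\log n}\ll\sqrt{n\ubh^p}$, which is exactly what the exponent $\min\{1/2,\,1-2/(2+\delta_0)-\varepsilon_1\}$ in Condition~\ref{cond::bin-bandwidth} provides. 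Dividing by $n\hgamma_m\gtrsim n\ubh^p$ then gives $\supM|T_{1m}|=\Op(\sqrt{\log n/(n\ubh^p)})=\Op(r_n)$. Combining the three terms proves the proposition.
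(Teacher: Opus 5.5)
Your plan follows the paper's proof in all essentials:
\begin{itemize}
\item the same representation $\hat v(\tau,\bxm)-v(\tau,\bxm)=(n\hgamma_m)^{-1}\sumn w_{im}\kappa_{im}\{\hat Z_i(\tau)-v(\tau,\bxm)\}$;
\item the same three-piece split of $\hat Z_i-Z_i$, with two pieces handled by Lemma~\ref{lemma::proof-U13} and the centered piece by Condition G-Q(ii) after rewriting $\kappa_{im}$ through $\bh_mS_{1m}^T\bm{S}_{2m}^{-1}$;
\item the same truncation at $n^{1/(2+\delta_0)+\varepsilon}$, followed by a conditional Bernstein bound and a union bound over $M\le n$ bins, with the range condition $\lambda_n r_n\to0$ supplied by Condition G-A1.
\end{itemize}
You handle the truncated-away part in probability rather than through the paper's almost-sure Mack--Silverman argument, and that works: $\Pr(\max_i\{Y_i-q(\tau,X_i)\}>\lambda_n)\lesssim n\lambda_n^{-2-\delta_0}\to0$, and its conditional mean is of order $\lambda_n^{-1-\delta_0}=o(n^{-1/2})$.

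The one step you should change is $T_{2m}$, which is identically zero rather than merely small.
\begin{itemize}
\item Under model~\eqref{eq::SQ-model-linear}, $v(\tau,x)=x^T\beta$ exactly. So $v(\tau,X_i)-v(\tau,\bxm)=(X_i-\bxm)^T\beta$ is linear in $\tX_i-\txm$, and it is annihilated by $\sumn w_{im}\kappa_{im}(\tX_i-\txm)=0$.
\item This is why the paper's decomposition has no binning-bias term at $s=\tau$. Condition G-A2 is needed only for $s\neq\tau$, in Proposition~\ref{prop::condI-2}.
\item Your fallback claim that G-A2(i) and G-A1 give $\bh^2=O(r_n)$ is false in general. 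It would require $n\bh^{4+p}\lesssim\log n$. For example, $p=1$ and $\bh\asymp1/\log n$ satisfy both conditions, yet $n\bh^{5}\to\infty$.
\item G-A2(i) is calibrated to give $r_n\bh^2=o(n^{-1/2})$. That is, it controls a bias of size $|s-\tau|\,\bh^2$, which vanishes at $s=\tau$.
\end{itemize}
There is also a harmless sign slip: your centered piece should be $(\hat q-q)(1-\tau)^{-1}[\bm{1}\{Y_i<q\}-\tau]$. It does not matter, since you only bound it in absolute value.
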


\begin{proposition}
\label{prop::condI-2}
Under the condition of Theorem \ref{thm::example-RRM-SQ}, we have for any fixed $B > 0$, 
\[    
    \sup_{\substack{m=1,\ldots,M\\|t|\leq B\cdot (r_n + n^{-1/2})}} \left|[\hat{v}(\tau + t,\bar{x}_m) - v(\tau + t,\bar{x}_m)] - [\hat{v}(\tau,\bar{x}_m) - v(\tau,\bar{x}_m)]\right| = \op\left(n^{-1/2}\right),
    \]
    where $r_n$ is given in Proposition \ref{prop::condI-1}.
\end{proposition}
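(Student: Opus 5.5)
We need to prove Proposition prop::condI-2: the uniform equicontinuity of the bin-wise local linear ES estimator in $s$ near $\tau$, at rate $o_P(n^{-1/2})$, uniformly over bins and over $|t|\le B(r_n+n^{-1/2})$.

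The plan is to start from the closed form \eqref{eq::ll-SQ}. Write $\hat v(s,\bxm)-v(s,\bxm)$ as a weighted average over $X_i\in A_m$ with weights $\kappa_{im}/\sum_i w_{im}\kappa_{im}$, where $\kappa_{im}=1-S_{1m}^T\bm S_{2m}^{-1}(\tX_i-\txm)$; these weights are bounded uniformly in $m$ with probability $1-O(n^{-3})$ by Lemma~\ref{lemma::ll-coef}. Set $\hat Z_i(s)-v(s,\bxm)=T_{1i}(s)+T_{2i}(s)+T_{3i}(s)$, with the following pieces:
\begin{itemize}
\item $T_{1i}(s)=Z_i(s)-v(s,X_i)$, the oracle Neyman score centered at its conditional mean, where $Z_i(s)$ uses the true $q$;
\item $T_{2i}(s)=\hat Z_i(s)-Z_i(s)$, the quantile estimation effect;
\item $T_{3i}(s)=v(s,X_i)-v(s,\bxm)$, the binning bias.
\end{itemize}
Each piece will be controlled for the increment between $s=\tau+t$ and $s=\tau$.

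For $T_3$, the local linear fit removes the linear part exactly. Under Condition~\ref{cond::bin-Aux}(ii) the remainder vanishes for $|s-\tau|\le\varepsilon_2n^{-1/4}$, and $r_n=o(n^{-1/4})$ keeps $s$ inside that range. Under Condition~\ref{cond::bin-Aux}(i) the remainder is a quadratic term $\bh^2$ times the second derivative in $x$, and the increment in $s$ is at most $L_2|t|\bh^2\lesssim (r_n+n^{-1/2})\bh^2$. Using $\bh^{4-p}=o(1/\log n)$ and $r_n^2=\log n/(n\ubh^p)$, this should be $o(n^{-1/2})$.

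For $T_2$, decompose
$$\hat Z_i-Z_i=(1-s)^{-1}\Big[(Y_i-q)\{\bm 1(Y_i\ge\hat q)-\bm 1(Y_i\ge q)\}+(q-\hat q)\{\bm 1(Y_i\ge\hat q)-\bm1(Y_i\ge q)\}+(q-\hat q)\{\bm 1(Y_i\ge q)-(1-s)\}\Big].$$
Lemma~\ref{lemma::proof-U13} applied with $a_n=n^{-1/4}$ would only give $O_P(n^{-1/2})$ for the first two terms, so instead I will apply it with $a_n\asymp r_n+n^{-1/2}$ and $b_n=g_{1n}$. This gives $O_P(r_n^2+g_{1n}^2)=o_P(n^{-1/2})$, since both rates are $o(n^{-1/4})$. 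The bound holds for each $s$ separately, so the increments are negligible as well. The third term is exactly the quantity controlled by Condition~\ref{cond::bin-Qt}(ii) with $j=0,1$; combining the two $j$ terms through the $\kappa_{im}$ weights, whose slope part is bounded by $\bh_m^{-1}$, gives $o_P(n^{-1/2})$ uniformly in $s$. The $(1-s)^{-1}$ prefactor changes by $O(|t|)$, and this is harmless.

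The main obstacle is $T_1$: I need
$$\sup_{m,|t|\le \delta_n}\Big|\frac{\sum_i w_{im}\kappa_{im}[T_{1i}(\tau+t)-T_{1i}(\tau)]}{\sum_i w_{im}\kappa_{im}}\Big|=o_P(n^{-1/2}),\qquad \delta_n=B(r_n+n^{-1/2}).$$
The denominator is of order $n\bh^p$ by Lemma~\ref{lemma::ll-coef}(3), so it suffices to show that the numerator is $o_P(n^{1/2}\bh^p)$. The summands are mean zero given $X$. Their conditional variance is $\lesssim|t|$, because $s\mapsto (Y-q(s,x))\bm 1(Y\ge q(s,x))/(1-s)$ is Lipschitz in $s$ with an $L_2$ bound driven by $|q(\tau+t,x)-q(\tau,x)|\lesssim |t|$. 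The numerator therefore has standard deviation of order $\sqrt{n\bh^p\delta_n}$.

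To control it I will use a maximal inequality: Bernstein with truncation of $Y^+$ at level $n^{1/(2+\delta_0)}$ via Condition~\ref{cond::bin-Y-new}(ii), combined with a grid of $s$ values of mesh $n^{-2}$ and monotonicity of the indicator to interpolate between grid points, followed by a union bound over the $M\lesssim \bh^{-p}$ bins. This should bound the numerator by
$$\sqrt{n\bh^p\delta_n\log n}+n^{1/(2+\delta_0)}\log n.$$
Dividing by $n^{1/2}\bh^p$, the first term is $\sqrt{\delta_n\log n/\bh^p}$, which is $o(1)$ because $\delta_n\lesssim r_n=\sqrt{\log n/(n\bh^p)}$ up to constants and Condition~\ref{cond::bin-bandwidth} gives $\bh^{p}\gg \log n\,n^{-1/2}$; checking this exponent bookkeeping is the delicate point. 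The truncation term is handled by the $1-2/(2+\delta_0)-\varepsilon_1$ exponent in Condition~\ref{cond::bin-bandwidth}. Collecting $T_1$, $T_2$ and $T_3$ proves the proposition.
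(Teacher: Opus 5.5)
Your three-way split matches the paper's ($C$, $B_q$, $B_{np}$). Your handling of $T_2$ is essentially Claim~\ref{claim5}: Lemma~\ref{lemma::proof-U13} with $a_n\asymp r_n$, Condition~\ref{cond::bin-Qt}(ii), and the bound on $\bh_mS_{1m}^T\bm{S}_{2m}^{-1}$ from Lemma~\ref{lemma::ll-coef}. Your handling of $T_3$ via Condition~\ref{cond::bin-Aux} is essentially Claim~\ref{claim6}.

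The gap is in $T_1$: the exponent bookkeeping you flag as delicate does not close. Condition~\ref{cond::bin-bandwidth} permits $\bh^p$ as small as a slowly diverging multiple of $n^{-a}\log n$, where $a=\min\{1/2,\,1-2/(2+\delta_0)-\varepsilon_1\}$. Your two terms fail as follows:
\begin{itemize}
\item With a conditional variance $\lesssim|t|$, your first term is $\sqrt{\delta_n\log n/\bh^p}\asymp(\log n)^{3/4}n^{-1/4}\bh^{-3p/4}$. This is $o(1)$ only if $\bh^p\gg n^{-1/3}\log n$, i.e.\ only if $a\le 1/3$.
\item Your envelope term $n^{1/(2+\delta_0)}\log n/(n^{1/2}\bh^p)$ is $o(1)$ only if $a\le 1/2-1/(2+\delta_0)$.
\end{itemize}
Neither holds in general. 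For $\delta_0>2$ and small $\varepsilon_1$ one has $a=1/2$. Then with $\bh^p=c_nn^{-1/2}\log n$ and $c_n\to\infty$ slowly, your first term is of order $c_n^{-3/4}n^{1/8}\to\infty$.

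What is missing is that the increment is of size $|t|$ pointwise, not merely in $L_2$. Its variance is therefore $O(t^2)$, not $O(|t|)$; your own Lipschitz remark already gives this. After multiplying by $(1-s)$ it also has no heavy tail at all. The paper writes $(1-s)[Z_i(s)-v(s,X_i)]-(1-\tau)[Z_i(\tau)-v(\tau,X_i)]=u_{4i}(s)-\E_i[u_{4i}(s)]+u_{5i}(s)$, with
\begin{itemize}
\item $u_{4i}(s)=\{q(s,X_i)-Y_i\}\bm{1}\{q(\tau,X_i)\le Y_i\le q(s,X_i)\}$,
\item $u_{5i}(s)=\{q(\tau,X_i)-q(s,X_i)\}\{\tau-\bm{1}(Y_i\le q(\tau,X_i))\}$.
\end{itemize}
Both are bounded by $|q(s,X_i)-q(\tau,X_i)|\lesssim|t|$, so no truncation is needed. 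Hoeffding's inequality plus a union bound over bins makes the numerator $\Op(\delta_n\sqrt{n\bh^p\log n})$. Dividing by $n^{1/2}\bh^p$ gives $\log n/(n^{1/2}\bh^p)\to0$, which is exactly what Condition~\ref{cond::bin-bandwidth} guarantees.

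Uniformity in $s$ comes from two pieces:
\begin{itemize}
\item Monotonicity of $u_{4i}(s)$ in $s$ lets you sandwich at the endpoint $s_+=\tau+Br_n$, where $\E_i[u_{4i}(s_+)]=O(r_n^2)=o(n^{-1/2})$.
\item For $u_{5i}$, a Lipschitz grid argument suffices.
\end{itemize}
The leftover from the prefactor is the $\tau$-level oracle term (the paper's $C(\tau,m)$) times $\{(1-\tau)/(1-s)-1\}$. This is $\Op(r_n)\cdot O(r_n)=\op(n^{-1/2})$, by the bound on $C(\tau,m)$ obtained in Proposition~\ref{prop::condI-1}. With this in place of your $T_1$ step, the rest of your argument goes through.
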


The proof of Theorem \ref{thm::example-RRM-SQ} is relatively straightforward with these Propositions, and we now give the details. 

\begin{proof}[Proof of Theorem \ref{thm::example-RRM-SQ}]
Under the conditions of Theorem \ref{thm::example-RRM-SQ}, the binning mechanism satisfies:
\[
\supM \text{diam}(A_m) \lesssim \bh = o(1),\quad\text{and}\quad\supM \left| \frac{\hgamma_m}{\hat{\pi}_m} - 1\right| = \op(1),
\]
which follows from Lemma \ref{lemma::ll-coef}. It then suffices to check Conditions \ref{cond::bin-linear} and \ref{cond::vhat1}. 

Proposition \ref{prop::condL} directly implies Condition \ref{cond::bin-linear}.
From Condition \ref{cond::bin-bandwidth}, we have 
\[
n^{-1/2} \ll r_n = \sqrt{\ddfrac{\log n}{n\ubh^p}} \ll n^{-1/4},
\]
therefore the sequence $r_n$ constructed in Proposition \ref{prop::condI-1} can be used in Condition \ref{cond::vhat1}. Next we check Condition \ref{cond::vhat1}. 

The second requirement in Condition \ref{cond::vhat1} follows from Proposition \ref{prop::condI-2}. Moreover, from Proposition \ref{prop::condI-1} and \ref{prop::condI-2} we have
\begin{eqnarray*}
&&\supMS |\hat{v}(s,\bxm) - v(s,\bxm)| \\
&\leq& \supM |\hat{v}(\tau,\bxm) - v(\tau,\bxm)| \\
&&\; +\;
\sup_{\substack{m=1,\ldots,M\\|t|\leq B\cdot r_n}} \left|[\hat{v}(\tau + t,\bar{x}_m) - v(\tau + t,\bar{x}_m)] - [\hat{v}(\tau,\bar{x}_m) - v(\tau,\bar{x}_m)]\right|\\
& = & \Op(r_n) + \op(n^{-1/2}).
\end{eqnarray*}
Hence the first requirement in Condition \ref{cond::vhat1} also holds. Since the monotonicity of $\hat{v}(s,x)$ (with respect to $s$) is assumed, we have checked all requirements of Theorem \ref{thm::asymptotic-RRM-bin}. The proof is now complete.

\end{proof}

\subsection{Proof of Propositions \ref{prop::condL}, \ref{prop::condI-1} and \ref{prop::condI-2}}\label{append::Prop345}

Here we prove the three Propositions used in the proof Theorem \ref{thm::example-RRM-SQ}. We fix some notations used in the proof. Recall $S_{0m}$, $S_{1m}$ and $\bm{S}_{2m}$ from (\ref{eq::ll-Sm}); and note $S_{0m} = n^{-1}\sumn w_{im} = \hat{\pi}_m$. 
For the weight of each bin in (\ref{eq::estimator-rock-binning}), we set
\[
\hgamma_m =S_{0m} - S_{1m}^T\bm{S}_{2m}^{-1}S_{1m},
\]
as in (\ref{eq::ll-gweights}).
Using the block matrix inverse, our estimator $\hat{v}(s,\bxm)$ in (\ref{eq::ll-SQ}) can be further simplified as:
\begin{equation}
\hat{v}(s,\bxm) = (n\hgamma_m)^{-1}\left[ \sumn w_{im}\hat{Z}_i(s) - S_{1m}^T\bm{S}_{2m}^{-1}\sumn (\tX_i - \txm)w_{im}\hat{Z}_i(s) \right],
\label{eq::ll-SQ-2}
\end{equation}
where $\hat{Z}_i(s)$ is defined in (\ref{eq::def-Zi}).
Furthermore, let $\tilde{v}(s,\bxm)$ be the oracle estimator where we know $q(s,x)$ and $Z_i(s)$, i.e.,
\begin{equation}
\tilde{v}(s,\bxm) = (n\hgamma_m)^{-1}\left[ \sumn w_{im}{Z}_i(s) - S_{1m}^T\bm{S}_{2m}^{-1}\sumn (\tX_i - \txm)w_{im}{Z}_i(s) \right].
\label{eq::ll-SQ-oracle}
\end{equation}

\subsubsection{Proof of Proposition \ref{prop::condL}}

\begin{proof}
We rely on the decomposition that 
\begin{eqnarray}
[\hat{v}(\tau,\bxm) - v(\tau,\bxm)] 
&=& [\hat{v}(\tau,\bxm) - \tilde{v}(\tau,\bxm)] + [\tilde{v}(\tau,\bxm) - v(\tau,\bxm)]\nonumber\\
&=&[\hat{v}(\tau,\bxm) - \tilde{v}(\tau,\bxm)] \nonumber \\
&&\;\;-\; (n\hgamma_m) ^{-1} \left[S_{1m}^T\bm{S}_{2m}^{-1}\sumn (\tX_i - \txm  )w_{im}[{Z}_i(\tau) - v(\tau,X_i)] \right]\nonumber\\
&&\;\; +\; (n\hgamma_m)^{-1}\left[ \sumn w_{im}[{Z}_i(\tau) - v(\tau,X_i)]  \right] \label{eq::condL-tot-CLT},
\end{eqnarray}
where the last equality follows from standard local-linear calculation \citep{fan2018local} since $v(\tau,x)$ is linear in $x$.

It suffices to consider the aggregation of the three terms in the decomposition above.
First, we give two claims below; and we verify them one by one at the end of this proof. In what follows, we define $\zeta_m = v(\tau,\bxm) - q(\tau,\bxm)$.

\begin{claim}\label{claim1}
\begin{equation}
\label{eq::claim1-condL}
\sqrt{n}\sumM \left\{\frac{\hgamma_m}{\zeta_m }\bxm  \left[ \sumn \frac{w_{im}}{n\hgamma_m}S_{1m}^T\bm{S}_{2m}^{-1}(\tX_i - \txm  ) [{Z}_i(\tau) - v(\tau,X_i)]  \right]\right\} = \op(1).
\end{equation}
\end{claim}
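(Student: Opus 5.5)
The plan is to rewrite the left side of \eqref{eq::claim1-condL} as a single centered sum over observations, and then bound its second moment conditionally on the covariates. The factor $n\hgamma_m$ cancels against $\hgamma_m$, and each $X_i$ belongs to exactly one bin $m(i)$, so the left side equals
\[
\frac{1}{\sqrt{n}}\sumn \frac{\bar{x}_{m(i)}}{\zeta_{m(i)}}\,\bigl[S_{1m(i)}^T\bm{S}_{2m(i)}^{-1}(\tX_i-\tilde{x}_{m(i)})\bigr]\,\varepsilon_i,
\qquad \varepsilon_i = Z_i(\tau)-v(\tau,X_i).
\]

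The key observation is that the coefficients $c_i = \bar{x}_{m(i)}\zeta_{m(i)}^{-1} S_{1m(i)}^T\bm{S}_{2m(i)}^{-1}(\tX_i-\tilde{x}_{m(i)})$ are functions of the covariates only. This holds because $S_{1m}$, $\bm{S}_{2m}$ and the bins are deterministic given $X_1,\ldots,X_n$. Moreover, $\E[\varepsilon_i\mid X_i]=0$, since $\E[Z_i(\tau)\mid X_i=x]=v(\tau,x)$ by the definition of the ES. The $\varepsilon_i$ are also conditionally independent given the covariates.

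Conditionally on $\{X_i\}$, the sum therefore has mean zero and conditional variance
\[
\frac{1}{n}\sumn c_ic_i^T\,\E[\varepsilon_i^2\mid X_i].
\]
By Condition \ref{cond::bin-Y-new}(ii), $\sup_x\E[\varepsilon_i^2\mid X=x]$ is bounded. By Lemma \ref{lemma::bin-cond-implications}, $\zeta_m\geq \underline{m}_1>0$. In addition, $\|\bar{x}_m\|$ is bounded since $\mathcal{X}$ is bounded. Hence the conditional variance is at most a constant times
\[
\frac1n\sumn \|S_{1m(i)}^T\bm{S}_{2m(i)}^{-1}(\tX_i-\tilde{x}_{m(i)})\|^2 \leq \sup_m \|\bh_m S_{1m}^T\bm{S}_{2m}^{-1}\|^2 \cdot \sup_{i,m}\frac{\|\tX_i-\tilde{x}_m\|^2 w_{im}}{\bh_m^2}.
\]
The second factor is at most $1$ by the definition of $\bh_m$. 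For the first factor, Lemma \ref{lemma::ll-coef}(2) gives $\Pr(\sup_m\|\bh_mS_{1m}^T\bm{S}_{2m}^{-1}\|\geq c_2)\leq C_2n^{-3}$ for every fixed $c_2>0$, so the first factor is $\op(1)$. Conditional Chebyshev (applied componentwise), followed by dominated convergence to remove the conditioning, then shows that the sum is $\op(1)$.

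The main obstacle is to argue that $S_{1m}^T\bm{S}_{2m}^{-1}$ is genuinely small rather than merely bounded, since boundedness alone would only give $\Op(1)$. Lemma \ref{lemma::ll-coef}(2) is exactly what supplies this smallness, for arbitrary $c_2$. If that lemma only delivered boundedness, I would fall back on the scaling $S_{1m}=O_P(\bh_m\cdot\bh_m^{p}\cdot \bh_m)$ for a roughly symmetric bin around its geometric center, together with $\bm{S}_{2m}\asymp \bh_m^{p+2}$ from Condition \ref{cond::bin-bandwidth}. This would give $\bh_mS_{1m}^T\bm{S}_{2m}^{-1}=O_P(\bh_m)+O_P((n\bh^p)^{-1/2}\log n)=\op(1)$, uniformly by a union bound over the $M\lesssim \bh^{-p}$ bins.

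A secondary point is that $\bm{S}_{2m}$ must be invertible; when it is not, the pseudo-inverse convention applies. Lemma \ref{lemma::ll-coef} ensures invertibility with probability at least $1-O(n^{-3})$, so this event can be discarded.
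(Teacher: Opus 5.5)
Your proposal is correct and follows essentially the paper's argument. You rewrite the left side as $n^{-1/2}\sum_{i=1}^n V_{in}\{Z_i(\tau)-v(\tau,X_i)\}$ with coefficients depending only on the covariates, control its second moment given the covariates, and use Lemma~\ref{lemma::ll-coef} to make the variance factor vanish, exactly as the paper does.

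The only difference is in the last step. The paper uses the exact identity $n^{-1}\sum_{i=1}^n w_{im}\{S_{1m}^T\bm{S}_{2m}^{-1}(\tX_i-\txm)\}^2 = S_{1m}^T\bm{S}_{2m}^{-1}S_{1m} = S_{0m}-\hgamma_m$, whose sum over bins is at most $\sum_{m=1}^M S_{0m}=1$ and is $\op(1)$ by Lemma~\ref{lemma::ll-coef}(1), so dominated convergence applies directly to the unconditional second moment. You instead bound the same quantity by $\sup_m\|\bh_m S_{1m}^T\bm{S}_{2m}^{-1}\|^2$ via Lemma~\ref{lemma::ll-coef}(2), and remove the conditioning with a conditional Chebyshev bound capped at one; both routes are valid.
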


\begin{claim}\label{claim2}
\begin{equation}
\label{eq::claim2-condL}
\sqrt{n} \sumM \left\{ \frac{\hgamma_m}{\zeta_m}\bxm \left[\hat{v}(\tau,\bxm) -\tilde{v}(\tau,\bxm) \right]\right\}
= \op(1).
\end{equation}
\end{claim}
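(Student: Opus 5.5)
The statement to prove is Claim~\ref{claim2}, and the plan is a direct decomposition. Write $\hat Z_i(\tau)-Z_i(\tau)$ explicitly. With $\hat q_i=\hat q(\tau,X_i)$ and $q_i=q(\tau,X_i)$, a short algebraic rearrangement gives
\[
(1-\tau)\{\hat Z_i-Z_i\} = (Y_i-q_i)\left[\bm{1}\{Y_i\geq \hat q_i\}-\bm{1}\{Y_i\geq q_i\}\right] + (q_i-\hat q_i)\left[\bm{1}\{Y_i\geq \hat q_i\}-\bm{1}\{Y_i\geq q_i\}\right] + (\hat q_i-q_i)\left[\bm{1}\{Y_i< q_i\}-\tau\right].
\]
By (\ref{eq::ll-SQ-2})--(\ref{eq::ll-SQ-oracle}),
\[
\hat v(\tau,\bxm)-\tilde v(\tau,\bxm)=(n\hgamma_m)^{-1}\sumn w_{im}\kappa_{im}(\hat Z_i-Z_i),
\]
with $\kappa_{im}=1-S_{1m}^T\bm{S}_{2m}^{-1}(\tX_i-\txm)$ as in Lemma~\ref{lemma::proof-U13}. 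The factor $\hgamma_m$ in the claim then cancels against $(n\hgamma_m)^{-1}$. The quantity in (\ref{eq::claim2-condL}) therefore becomes $\sqrt n\sumM \zeta_m^{-1}\bxm\, n^{-1}\sumn w_{im}\kappa_{im}(\hat Z_i-Z_i)$, which I will write as $\sqrt n\sumM \hat\pi_m\zeta_m^{-1}\bxm\, T_m$, where $T_m$ is the bin-average $\sumn w_{im}\kappa_{im}(\hat Z_i-Z_i)/\sumn w_{im}$.

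The three pieces of $T_m$ are handled separately. For the first two, I apply Lemma~\ref{lemma::proof-U13} with $a_n=0$ (only $s=\tau$ is needed) and $b_n=g_{1n}$ from Condition~\ref{cond::bin-Qt}(i). This bounds those bin-averages by $\Op(g_{1n}^2)$ uniformly in $m$, and $g_{1n}^2=o(n^{-1/2})$ because $g_{1n}=o(n^{-1/4})$. Next, $\bxm$ is bounded, $\zeta_m^{-1}$ is bounded by Lemma~\ref{lemma::bin-cond-implications}, and $\sumM\hat\pi_m=1$. So these two contributions are at most $\sqrt n\cdot\op(n^{-1/2})=\op(1)$.

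For the third piece, expand $\kappa_{im}$ as
\[
\kappa_{im}=1-\left(\bh_m S_{1m}^T\bm{S}_{2m}^{-1}\right)\frac{\tX_i-\txm}{\bh_m}.
\]
The resulting bin-averages are exactly the $j=0$ and $j=1$ quantities in Condition~\ref{cond::bin-Qt}(ii), up to the sign of $\tau-\bm{1}$. That condition makes them $\op(n^{-1/2})$ uniformly in $m$. The prefactor $\bh_m S_{1m}^T\bm{S}_{2m}^{-1}$ is bounded uniformly in $m$ with probability tending to one by Lemma~\ref{lemma::ll-coef}(2). Summing with weights $\hat\pi_m$ as before again gives $\op(1)$.

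The main obstacle is getting the pieces to match the uniform-in-$m$ forms required by Lemma~\ref{lemma::proof-U13} and Condition~\ref{cond::bin-Qt}(ii). One reason is that Condition~\ref{cond::bin-Qt}(ii) is stated with $\bm{1}\{Y_i\leq q\}$ rather than the strict inequality; this is immaterial because the conditional distribution is continuous. The other is that the weights $\kappa_{im}$ are random and depend on the bin. Once the problem is reduced to uniform bounds times a probability-one bounded prefactor, the aggregation over the growing number of bins is trivial because $\sumM\hat\pi_m=1$.
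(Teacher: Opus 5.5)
Your proposal is correct and follows essentially the same route as the paper. You use the same three-term split of $(1-\tau)\{\hat Z_i - Z_i\}$, where your third piece is the paper's $u_{2i}$. That piece is handled by Condition~\ref{cond::bin-Qt}(ii) together with Lemma~\ref{lemma::ll-coef}, and the two indicator-difference pieces by Lemma~\ref{lemma::proof-U13}. The only cosmetic difference is that you pass the $\kappa_{im}$-weighted bin averages directly to Lemma~\ref{lemma::proof-U13}; the paper first uses the signs $u_{1i}\le 0$, $u_{3i}\ge 0$ and $w_{im}|\kappa_{im}| = 1+\op(1)$ before invoking it.
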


Claims \ref{claim1} and \ref{claim2} together show the first two terms in Equation (\ref{eq::condL-tot-CLT}) are asymptotically negligible when aggregated over the bins. In particular, they show that using our initial estimator is asymptotically equivalent to using the oracle estimator from~\eqref{eq::ll-SQ-oracle} in Proposition \ref{prop::condL}.
In what follows, the proof is given in three steps.  In the first step, we give our main argument, which establishes a Central Limit Theorem type result; This step shows the desired asymptotic normality in Proposition \ref{prop::condL}. In the next two steps, we verify Claims \ref{claim1} and \ref{claim2} separately.

\paragraph*{Step 1: A CLT-type result}
We give the asymptotic analysis for the aggregation of the last term in (\ref{eq::condL-tot-CLT}) over the bins, given by
\begin{eqnarray*}
&&\frac{1}{\sqrt{n}}\sum_{m=1}^M \left\{\frac{1}{\zeta_m }\bxm  \left[ \sumn w_{im}[{Z}_i(\tau) - v(\tau,X_i)] \right]\right\}\\
&=&
\frac{1}{\sqrt{n}}\sumn \underbrace{[{Z}_i(\tau) - v(\tau,X_i)] \left[\sumM \frac{w_{im}}{\zeta_m }\bxm    \right]}_{O_{in}},
\end{eqnarray*}
which holds by exchanging the order of summation. Our arguments are always made by conditional on the design $X_i$ first.

Since $\zeta_m $ and $\bxm  $ are deterministic, the term $\sumM (\zeta_m)^{-1}w_{im} \bxm $ in each $O_{in}$ only depends on the bin which $X_i$ falls into. For fixed $n$ and $M$, the random vectors $O_{in}$ are independent with mean $0$ across $i=1,\ldots,n$; and we apply the multivariate Lindeberg-Feller Central Limit Theorem for triangular arrays (E.g., Theorem 2.27 of \citet{van2000asymptotic}) in our proof below.

We check the first Lindeberg conditions here. In our setting it suffices to show:
\begin{equation}
\E\lVert O_{in}\rVert^2 \bm{1}\{\lVert O_{in}\rVert \geq \varepsilon \sqrt{n}\} \to 0,
\label{eq::Lindeberg-CLT}
\end{equation}
for all fixed $\varepsilon > 0$ as $n\to \infty$. Since $\bxm  $ is uniformly bounded, we have that
\[
\E\lVert O_{in}\rVert^{2+\delta_0} \lesssim \E|Z_i(\tau) - v(\tau,X_i)|^{2+\delta_0} \lesssim \E[|q(\tau,X)|^{2+\delta_0}] +  \E\left[ |Y^{+}|^{2+\delta_0}\right] <\infty,
\]
which follows from Condition \ref{cond::bin-Y-new}.
Furthermore note $|x|^2\bm{1}\{|x| \geq a\} \leq a^{-\delta}|x|^{2+\delta}$, the Lindeberg condition (\ref{eq::Lindeberg-CLT}) then follows from the Markov inequality.

Next we calculate the variance of each $O_{in}$. Parallel to the one-sample case in Corollary \ref{coro::SQ-fixed-tau} of Chapter 2, we have that
\begin{eqnarray}
\var[Z_i(\tau) - v(\tau,X_i) \mid X_i=x] 
&=& \frac{\var(Y\mid X = x, Y \geq q(\tau,x)) + \tau[v(\tau,x) - q(\tau,x)]^2}{1-\tau}\nonumber \\
&\triangleq& \sigma_\tau^2(x) \label{eq::proof-def-sigma}.
\end{eqnarray}
Therefore from $O_{in}$ in the beginning of Step 1, we have
\begin{eqnarray*}
\var(O_{in}) &=& \E_X\left\{\left[\sum_{m=1}^M\frac{w_{im}}{\zeta_m ^2}\bxm  \bxm  ^T\right]\E_{Y\mid X} \left[Z_i(\tau) - v(\tau,X_i)\right]^2 \right\}\\
&=& 
\E_X\left\{\sigma_\tau^2(X)\left[\sum_{m=1}^M\frac{w_{im}}{\zeta_m ^2}\bxm  \bxm  ^T\right]\right\}\\
&\to& 
\E_X\left\{\frac{\sigma_\tau^2(X)}{[v(\tau,X) - q(\tau,X)]^2}XX^T\right\},
\end{eqnarray*}
as $n\to\infty$, which follows from Lemma \ref{lemma::bin-approx-moment-X}. Hence, it follows that
\[
\frac{1}{\sqrt{n}}\sumn O_{in} \converged \mathrm{N}\left(0,\;\Omega_1\right),
\]
by the Lindeberg CLT, where $\Omega_1$ is given in Theorem \ref{thm::example-RRM-SQ}.

Together with Claims \ref{claim1} and \ref{claim2}, we have proved that 
\[
\frac{1}{\sqrt{n}}\sum_{m=1}^M \left\{\frac{\hgamma_m}{\zeta_m }\bxm  \left[ \hat{v}(\tau,\bxm) - v(\tau,\bxm)\right]\right\}\converged \mathrm{N}\left(0,\;\Omega_1\right),
\]
from the decomposition in (\ref{eq::condL-tot-CLT}). Therefore Proposition \ref{prop::condL} holds.

\paragraph*{Step 2: Verification of Claim \ref{claim1}}
 The left hand side of (\ref{eq::claim1-condL}) can be written as:
\begin{eqnarray}
\frac{1}{\sqrt{n}}\sumn \left\{ [{Z}_i(\tau) - v(\tau,X_i)] \underbrace{\left[ \sumM \frac{w_{im}}{\zeta_m }\bxm  S_{1m}^T\bm{S}_{2m}^{-1}(\tX_i - \txm  ) \right]}_{V_{in}}\right\}
\label{eq::goal-CLT-remainder1},
\end{eqnarray}
by re-arranging the summation.


We use the Markov inequality to bound (\ref{eq::goal-CLT-remainder1}). To this end, we calculate the variance for each term of (\ref{eq::goal-CLT-remainder1}). Note that $V_{in}$ depends on the covariates but not the response, by conditioning on $X$ first we have:
\[
\E\left\lVert\frac{1}{\sqrt{n}}\sumn [Z_i(\tau) - v(\tau,X_i)]V_{in}\right\rVert^2 
=\frac{1}{n}\E\left(\sumn \sigma_\tau^2(X_i)\lVert V_{in}\rVert^2\right) \lesssim \frac{1}{n}\sumn \E\lVert V_{in}\rVert^2\\,
\]
since $\sigma_\tau^2(x)$ in (\ref{eq::proof-def-sigma}) is bounded. Following the above displayed equation, we can further expand the variance of $V_{in}$ as:
\begin{eqnarray}
\frac{1}{n} \sumn\E\lVert V_{in}\rVert^2
&=& 
\frac{1}{n}\E\left[\sumn\sumM \frac{w_{im}\lVert\bxm  \rVert^2 }{\zeta_m ^2}\lVert S_{1m}^T\bm{S}_{2m}^{-1}(\tX_i - \txm  )\lVert^2\right]\nonumber\\
&\lesssim&
\E\left\{
\sumM S_{1m}^T\bm{S}_{2m}^{-1}\underbrace{\left[n^{-1}\sumn w_{im}(\tX_i - \txm  )(\tX_i - \txm  )^T\right]}_{\bm{S}_{2m}}\bm{S}_{2m}^{-1}S_{1m}
\right\}\nonumber\\
&=&\E\left[\sumM S_{1m}^T\bm{S}_{2m}^{-1}S_{1m}\right]\label{eq::DCT-remainder2}\\
&=&o(1)\nonumber,
\end{eqnarray}
where the definition of $\bm{S}_{2m}$ is in (\ref{eq::ll-Sm}), and the convergence to the $o(1)$ term in the end follows from the Dominated Convergence theorem as outlined below. First, the term inside the expectation of (\ref{eq::DCT-remainder2}) is bounded by
\[
\sumM S_{1m}^T\bm{S}_{2m}^{-1}S_{1m} \leq \sumM S_{0m}  = n^{-1}\sumM \sumn \bm{1}\{X_i\in A_m\}= 1,
\]
since $\hgamma_m  = S_{0m} - S_{1m}^T\bm{S}_{2m}^{-1}S_{1m} \geq 0$. Second, we have from Lemma \ref{lemma::ll-coef} that
\[
\left|\sumM S_{1m}^T\bm{S}_{2m}^{-1}S_{1m}\right| \leq \sumM S_{0m}\left|1-\frac{\hgamma_m }{S_{0m}}\right| = \op(1).
\]
The convergence in expectation of (\ref{eq::DCT-remainder2}) then follows.

Therefore, Claim \ref{claim1} holds by applying the Markov inequality to (\ref{eq::goal-CLT-remainder1}).

\paragraph*{Step 3: Verification of Claim \ref{claim2}}

By the construction of $\hat{v}$ and $\tilde{v}$ in (\ref{eq::ll-SQ-2}) and (\ref{eq::ll-SQ-oracle}), we have
\[
\hat{v}(\tau,\bxm) -\tilde{v}(\tau,\bxm) 
= 
(n\hgamma_m)^{-1}  
\left[ \sumn w_{im}[1-S_{1m}^T\bm{S}_{2m}^{-1}(\tX_i - \txm  )][\hat{Z}_i(\tau) - Z_i(\tau)] \right].
\]
Similar to the proof of Lemma 1 in \citet{olma2021nonparametric}, we consider the following decomposition of $\hat{Z}_i(\tau) - Z_i(\tau)$:
\begin{eqnarray}
(1-\tau)[\hat{Z}_i(\tau) - Z_i(\tau)] 
&=& 
[Y_i - q(\tau,X_i)]\left\{\bm{1}[Y_i \geq \hat{q}(\tau,X_i)] - \bm{1}[Y_i \geq q(\tau,X_i)]\right\}\nonumber\\
&&\;\; +\, (q(\tau,X_i) - \hat{q}(\tau,X_i))\cdot (\tau - \bm{1}[Y_i < q(\tau,X_i)])\nonumber\\
&&\;\; + \,(q(\tau,X_i) - \hat{q}(\tau,X_i))\cdot \left\{\bm{1}[Y_i \geq \hat{q}(\tau,X_i)] - \bm{1}[Y_i \geq q(\tau,X_i)]\right\}\nonumber\\
&\triangleq& u_{1i}(\tau) + u_{2i}(\tau) + u_{3i}(\tau),
\label{eq::NO-smallu}.
\end{eqnarray}
where we sometimes omit the index $\tau$ in this proof.
Using the above two displayed equations, and by re-arranging the order of summation in (\ref{eq::claim2-condL}) of Claim 2, we have
\begin{eqnarray}
&&\sqrt{n} \sumM \left\{ \frac{\hgamma_m}{\zeta_m}\bxm \left[\hat{v}(\tau,\bxm) -\tilde{v}(\tau,\bxm) \right]\right\}\nonumber\\
&=&
\frac{1}{\sqrt{n}}\sumn \left\{ [\hat{Z}_i(\tau) - Z_i(\tau)]\underbrace{\left[ \sumM \frac{w_{im}}{\zeta_m }[1 - S_{1m}^T\bm{S}_{2m}^{-1}(\tX_i - \txm  )] \bxm \right]}_{\kappa_i}\right\}\nonumber\\
&=&\frac{1}{\sqrt{n}(1-\tau)} \left(\sumn  u_{1i}\kappa_i   + \sumn  u_{2i}\kappa_i   + \sumn  u_{3i}\kappa_i  \right)
\nonumber\\
&\triangleq& U_{1n} + U_{2n} + U_{3n}, \label{eq::goal-Neyman-remainder}
\end{eqnarray}
where $U_{jn} = [\sqrt{n}(1-\tau)]^{-1}\sumn u_{ji}\kappa_i$ and $u_{ji}$ is defined in (\ref{eq::NO-smallu}). To check Claim 2, it suffices to consider the three terms in (\ref{eq::goal-Neyman-remainder}) separately. 

We consider $U_{2n}$ first. 
We consider $U_{2n}$ first. Separating $\kappa_i$ into two sums for the terms $1$ and $S_{1m}^T\bm{S}_{2m}^{-1}(X_i - \bxm)$, we have
\begin{eqnarray*}
&&\lVert(1-\tau)\sqrt{n} U_{2n} \rVert\\
&\lesssim&
\lVert \sumn u_{2i}\kappa_i\rVert 
\\
&\leq&
\left\lVert\sumM\sumn \left( \frac{ w_{im}}{\zeta_m}u_{2i}\bxm \right)\right\rVert + 
\left\lVert \sumM\sumn\left[ \frac{ w_{im}}{\zeta_m}  S_{1m}^T\bm{S}_{2m}^{-1}(X_i - \bxm) \cdot u_{2i}\bxm \right]\right\rVert
\\
&\lesssim&
 \supM \left|\frac{\sumn w_{im} u_{2i}}{n\hat{\pi}_m}\right|\cdot \sumM\frac{n\hat{\pi}_m\lVert \bxm\rVert}{|\zeta_m |} \\
 &&\;\; + \;
 \supM \left\lVert \ddfrac{\sumn w_{im} \left[\frac{X_i - \bxm  }{\bh_m }\right]u_{2i}}{n\hat{\pi}_m}\right\rVert 
 \cdot \sumM\frac{ n\hat{\pi}_m\cdot\lVert \bxm\rVert\cdot \lVert \bh_m \cdot S_{1m}^T\bm{S}_{2m}^{-1}\rVert }{|\zeta_m |} \\
 &\lesssim& 
 \op(n^{-1/2})\, \sumM(n\hat{\pi}_m) + \op(n^{-1/2})\,\sumM \left[ n\hat{\pi}_m\op(1)
 \right],
\end{eqnarray*}
where we have used the fact that $\lVert \bxm\rVert/|\zeta_m|$ is bounded; in the last inequality, the $\op(n^{-1/2})$ terms follow from Condition \ref{cond::bin-Qt}, and the $\op(1)$ term uses the bound of $\lVert \bh_m \cdot S_{1m}^T\bm{S}_{2m}^{-1}\rVert$ in Lemma \ref{lemma::ll-coef}. Noting that $\sumM\hat{\pi}_m = 1$, we conclude that $U_{2n} = \op(1)$.

Next we consider $U_{1n}$ and $U_{3n}$. Noting that $w_{im}| 1 - S_{1m}^T\bm{S}_{2m}^{-1} (\tX_i - \txm)| = 1+\op(1)$ uniformly as in Lemma \ref{lemma::ll-coef}, we have from (\ref{eq::goal-Neyman-remainder}) that
\begin{eqnarray*}
\lVert (1-\tau)\sqrt{n}U_{1n} \rVert &=& \left\lVert\sumn u_{1i}\kappa_i\right\rVert\\
&\leq& \sumn\sumM \frac{w_{im}|u_{1i}|\cdot \lVert \bxm\rVert}{|\zeta_m| } [1+\op(1)]\\
&\lesssim_{\P}&\supM\left[\frac{\sumn w_{im}|u_{1i}|}{n\hat{\pi}_m}\right]\cdot \sumM n\hat{\pi}_m\\
&=& n\cdot \supM\left[\frac{-\sumn w_{im}u_{1i}}{n\hat{\pi}_m}\right],
\end{eqnarray*}
since $\sumM \hat{\pi}_n = 1$, and $u_{1i} \leq 0, \forall i = 1,\ldots,n$. Similarly, since $u_{1i} \geq 0, \forall i = 1,\ldots,n$, 
\[
\lVert (1-\tau)\sqrt{n}U_{3n} \rVert
\;\;\lesssim_{\P}\;
n\cdot \supM\left[\frac{\sumn w_{im}u_{3i}}{n\hat{\pi}_m}\right]. 
\]
Therefore, it follows directly from Lemma \ref{lemma::proof-U13} that 
$U_{1n} = \Op(n^{1/2}g_{1n}^2)$ and $U_{3n} = \Op(n^{1/2}g_{1n}^2)$. 

Substituting $U_{1n}$, $U_{2n}$ and $U_{3n}$ into the decomposition in \eqref{eq::goal-Neyman-remainder}, we have 
\[
\sqrt{n} \sumM \left\{ \frac{\hgamma_m}{\zeta_m}\bxm \left[\hat{v}(\tau,\bxm) -\tilde{v}(\tau,\bxm) \right]\right\} = \Op(n^{1/2}g_{1n}^2) +  \op(n^{-1/2}) = \op(n^{-1/2}),
\]
from Condition \ref{cond::bin-Qt}. Hence Claim 2 holds. The proof of Proposition \ref{prop::condL} is now complete.

\end{proof}

\subsubsection{Proof of Proposition \ref{prop::condI-1}}
We define some additional notations. Let
\begin{equation}
\label{eq::def-kappaik}
\kappa_{im} = [1 - S_{1m}^T\bm{S}_{2m}^{-1}(\tX_i - \txm)],
\end{equation}
and let 
\begin{equation}
 A_{0m}= \sup_{i=1,\ldots,n}|w_{im}\kappa_{im}|,\quad A_{1m}  = \sumn w_{im}|\kappa_{im}|,\quad A_{2m} = \sumn w_{im}\kappa_{im}^2.
\label{eq::def-A1A2A3-kappa}
\end{equation}

\begin{proof}

Following the same calculation in (\ref{eq::condL-tot-CLT}),  for each bin $A_m$ we have:
\begin{eqnarray}
\left[\hat{v}(\tau,\bxm) - v(\tau,\bxm) \right]
&=& \left[\hat{v}(\tau,\bxm) - \tilde{v}(\tau,\bxm) \right] \nonumber\\
&&\;\;+\;\hgamma_m^{-1}S_{0m}\left[ \dfrac{\sumn w_{im}[{Z}_i(\tau) - v(\tau,X_i)]  [1 - S_{1m}^T\bm{S}_{2m}^{-1}(\tX_i - \tilde{x}_m)]}{\sumn w_{im}} \right]\nonumber\\
&\triangleq&\frac{S_{0m}}{\hgamma_m}\left[B_q(\tau,m)+ C(\tau,m)\right]
\label{eq::SQ-decomposition-s},
\end{eqnarray}
where $B_q(\tau,m) = S_{0m}^{-1}\hgamma_m[\hat{v}(\tau,\bxm) - \tilde{v}(\tau,\bxm) ]$ corresponds to the bias term that originates from using the estimated quantile in $\hat{Z}_i(\tau)$, and $C(\tau,m)$ has mean zero. 
Under the conditions of Theorem \ref{thm::example-RRM-SQ}, we give the following claims, which we verify at the end of the proof.

\begin{claim}\label{claim3}
\[
\supM \left| B_q(\tau,m)\right| =  \Op(g_{1n}^2) + \op\left(n^{-1/2}\right),
\]
where $g_{1n}$ is in Condition \ref{cond::bin-Qt}.
\end{claim}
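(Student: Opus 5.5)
We write $B_q(\tau,m)$ explicitly and treat its pieces separately, each uniformly in $m$. From (\ref{eq::ll-SQ-2}), (\ref{eq::ll-SQ-oracle}) and $S_{0m}=\hat{\pi}_m$,
\[
B_q(\tau,m)=\frac{\hgamma_m}{S_{0m}}\big[\hat{v}(\tau,\bxm)-\tilde{v}(\tau,\bxm)\big]
=\frac{\sumn w_{im}\kappa_{im}[\hat{Z}_i(\tau)-Z_i(\tau)]}{\sumn w_{im}},
\]
with $\kappa_{im}$ as in (\ref{eq::def-kappaik}). Substituting the decomposition (\ref{eq::NO-smallu}), $(1-\tau)[\hat{Z}_i-Z_i]=u_{1i}+u_{2i}+u_{3i}$, gives
\[
(1-\tau)B_q(\tau,m)=T_{1m}+T_{2m}+T_{3m},\qquad T_{jm}=\frac{\sumn w_{im}\kappa_{im}u_{ji}}{\sumn w_{im}}.
\]
It then suffices to bound $\supM|T_{jm}|$ for $j=1,2,3$.

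\textbf{The terms $T_{1m}$ and $T_{3m}$.} These are exactly the two quantities in Lemma \ref{lemma::proof-U13}, evaluated at the single level $s=\tau$. Condition \ref{cond::bin-Qt}(i) gives $\sup_{x}|\hat{q}(\tau,x)-q(\tau,x)|=\Op(g_{1n})$. We therefore apply Lemma \ref{lemma::proof-U13} with $b_n=g_{1n}$ and with $a_n=n^{-1/2}$, say; any $a_n\to0$ with $a_n^2=o(n^{-1/2})$ would do, and the supremum over $|s-\tau|\le a_n$ includes $s=\tau$. This yields
\[
\supM|T_{1m}|+\supM|T_{3m}|=\Op(a_n^2+g_{1n}^2)=\Op(g_{1n}^2)+\op(n^{-1/2}).
\]

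\textbf{The term $T_{2m}$.} Here $u_{2i}=(q-\hat{q})(\tau,X_i)\,(\tau-\bm{1}\{Y_i<q(\tau,X_i)\})$. Since $Y\mid X$ is continuous, this agrees almost surely with the expression $[\hat{q}-q][s-\bm{1}\{Y_i\le q\}]$ in Condition \ref{cond::bin-Qt}(ii) at $s=\tau$, up to an overall sign. We write
\[
\kappa_{im}=1-\big(\bh_m S_{1m}^T\bm{S}_{2m}^{-1}\big)\big[(\tX_i-\txm)/\bh_m\big],
\]
which splits $T_{2m}$ into two pieces:
\begin{enumerate}[(i)]
\item the $j=0$ average from Condition \ref{cond::bin-Qt}(ii);
\item the inner product of the $j=1$ average with the vector $\bh_m S_{1m}^T\bm{S}_{2m}^{-1}$.
\end{enumerate}
Condition \ref{cond::bin-Qt}(ii) makes both averages $\op(n^{-1/2})$ uniformly in $m$. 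Lemma \ref{lemma::ll-coef}(2) gives $\supM\lVert\bh_m S_{1m}^T\bm{S}_{2m}^{-1}\rVert=\Op(1)$; its probability bound $C_2n^{-3}$ is summable, so this holds with probability tending to one. Hence $\supM|T_{2m}|=\op(n^{-1/2})$, and combining the three bounds proves the claim.

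\textbf{Main obstacle.} The delicate point is uniformity over the growing number of bins $M$ in the $u_1$ and $u_3$ terms, where the indicator differences have to be controlled bin by bin. This is delegated entirely to Lemma \ref{lemma::proof-U13}. The only checks needed are that its hypothesis holds at the single level $\tau$, and that the weights $\kappa_{im}$ there are the same ones appearing in $B_q$, which they are by construction.
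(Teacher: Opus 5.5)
Your proposal is correct and follows essentially the same route as the paper: the same three-term decomposition via (\ref{eq::NO-smallu}), the split of $\kappa_{im}$ combined with Condition~\ref{cond::bin-Qt}(ii) and Lemma~\ref{lemma::ll-coef} for the $u_{2i}$ term, and Lemma~\ref{lemma::proof-U13} for the $u_{1i}$ and $u_{3i}$ terms. The only difference is cosmetic: the paper first pulls out $\sup_m A_{0m}$, which is bounded by $2$ with high probability by Claim~\ref{claim4}, and then bounds the unweighted absolute averages, whereas you apply the lemma directly with $\kappa_{im}$ inside, exactly as the lemma is stated.
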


\begin{claim}\label{claim4}
\[
\Pr\left(\supM A_{0m}\geq 2 \right)\lesssim\frac{1}{n^3},\quad  A_{1m}\leq nS_{0m},\;\;\text{and}\quad A_{2m} \leq nS_{0m},
\]
where the quantities $A_{jm}$ $(j=1,2,3)$ are defined in (\ref{eq::def-A1A2A3-kappa}).
\end{claim}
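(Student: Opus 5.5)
The statement is elementary once $\kappa_{im}$ is written out in terms of the blocks $S_{0m}$, $S_{1m}$, $\bm{S}_{2m}$ from \eqref{eq::ll-Sm}. I would handle $A_{2m}$ first, deduce $A_{1m}$ from it by Cauchy--Schwarz, and treat $A_{0m}$ separately using Lemma~\ref{lemma::ll-coef}.

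\emph{The bound on $A_{2m}$.} Expand the square and use $\sumn w_{im}=nS_{0m}$, $\sumn w_{im}(\tX_i-\txm)=nS_{1m}$ and $\sumn w_{im}(\tX_i-\txm)(\tX_i-\txm)^T=n\bm{S}_{2m}$. This gives
\[
A_{2m}=n\left[S_{0m}-2S_{1m}^T\bm{S}_{2m}^{-1}S_{1m}+S_{1m}^T\bm{S}_{2m}^{-1}\bm{S}_{2m}\bm{S}_{2m}^{-1}S_{1m}\right]=n\left(S_{0m}-S_{1m}^T\bm{S}_{2m}^{-1}S_{1m}\right)=n\hgamma_m .
\]
Since $\bm{S}_{2m}$ is positive semidefinite, so is $\bm{S}_{2m}^{-1}$. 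Hence $S_{1m}^T\bm{S}_{2m}^{-1}S_{1m}\geq 0$ and $A_{2m}\leq nS_{0m}$.

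\emph{The bound on $A_{1m}$.} By Cauchy--Schwarz,
\[
A_{1m}=\sumn w_{im}|\kappa_{im}|\leq \Big(\sumn w_{im}\Big)^{1/2}A_{2m}^{1/2}\leq (nS_{0m})^{1/2}(nS_{0m})^{1/2}=nS_{0m}.
\]

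\emph{The bound on $A_{0m}$.} For $X_i\in A_m$ we have $\lVert \tX_i-\txm\rVert\leq \bh_m$. Therefore
\[
|w_{im}\kappa_{im}|\leq 1+\lVert \bh_m S_{1m}^T\bm{S}_{2m}^{-1}\rVert\cdot\lVert(\tX_i-\txm)/\bh_m\rVert\leq 1+\lVert \bh_m S_{1m}^T\bm{S}_{2m}^{-1}\rVert .
\]
It follows that $\{\supM A_{0m}\geq 2\}\subseteq\{\supM\lVert \bh_mS_{1m}^T\bm{S}_{2m}^{-1}\rVert\geq 1\}$. Part~2 of Lemma~\ref{lemma::ll-coef} with $c_2=1$ bounds the probability of the latter event by $C_2/n^3$. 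The supremum over $m$ is already inside that lemma, so no union bound over the growing number of bins is needed.

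\emph{Main obstacle.} The only delicate point is the identity for $A_{2m}$ when $\bm{S}_{2m}$ is singular and $\bm{S}_{2m}^{-1}$ means the Moore--Penrose pseudo-inverse. Two facts handle it. First, $\bm{S}_{2m}^{-1}\bm{S}_{2m}\bm{S}_{2m}^{-1}=\bm{S}_{2m}^{-1}$ holds for the pseudo-inverse. Second, the pseudo-inverse of a positive semidefinite matrix is again positive semidefinite. So the algebra goes through verbatim, and the argument does not require $S_{1m}$ to lie in the range of $\bm{S}_{2m}$ (although it does, because the full Gram matrix is positive semidefinite).
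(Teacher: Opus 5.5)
Your proposal is correct and matches the paper's proof step for step. The paper uses the same expansion of $\kappa_{im}^2$ to get $A_{2m}=n(S_{0m}-S_{1m}^T\bm{S}_{2m}^{-1}S_{1m})\leq nS_{0m}$, the same Cauchy--Schwarz step for $A_{1m}$, and the same reduction of $\{\supM A_{0m}\geq 2\}$ to $\{\supM\lVert\bh_mS_{1m}^T\bm{S}_{2m}^{-1}\rVert\geq 1\}$ via part~2 of Lemma~\ref{lemma::ll-coef}; your remark on the Moore--Penrose case is a correct detail the paper leaves implicit.
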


Claim \ref{claim3} shows that the bias in the initial ES estimator is asymptotically negligible, Claim \ref{claim4} is also useful but more technical.
Following Claims \ref{claim3} and \ref{claim4}, the proof proceeds in 5 steps. In steps 1 through 3, we establish the main argument that:
\[
\supM |C(\tau,m)| = \Op(r_n);
\]
In step 1, we give a truncation argument similar to \citet{mack1982weak}; In step 2, we derive exponential inequalities for the truncated process; Step 3 gives some auxiliary calculations that completes the proof. In the final two steps we verify Claims \ref{claim3} and \ref{claim4}.

\paragraph*{Step 1: Truncation}
For any sequence $ b_n > 0$ that satisfies: (i) $\sum_{n=1}^{\infty} b_n^{-2-\delta_0} < +\infty$ for $\delta_0$ in Condition \ref{cond::bin-Y-new}; (ii) $b_n$ is monotonically increasing; and (iii) $b_n\to+\infty,$ we define the truncated variable:
\[
Z_i^{(B)}(\tau) = \frac{[Y_i - q(\tau,X_i)]\bm{1}\{0 \leq Y_i - q(\tau,X_i) \leq b_n\}]}{1-\tau} + q(\tau,X_i),
\]
and the corresponding truncated process:
\[
C^{(B)}(\tau,m) = (nS_{0m})^{-1}\sumn \left\{w_{im}\left[Z_i^{(B)}(\tau) - v(\tau,X_i)\right][1 - S_{1m}^T\bm{S}_{2m}^{-1}(\tX_i - \txm)]\right\}.
\]
We shall give the precise choice of $b_n$ in step 2 below.
For sufficiently large $n$, in the following we show that $C^{(B)}(\tau,m)$ is equivalent to $C(\tau,m)$ with probability one. 

Comparing $C^{(B)}(\tau,m)$ with $C(\tau,m)$ in (\ref{eq::SQ-decomposition-s}), 
we see that $C^{(B)}(\tau,m)\neq C(\tau,m)$ only when $Y_i - q(\tau,X_i) \geq b_n$ for some $i=1,\ldots,n$; and we can calculate this probability:
\[
\Pr\left( [Y_i - q(\tau,X_i)] \geq b_n\right)\leq \frac{\E[Y_i - q(\tau ,X_i)]^{2+\delta_0}\bm{1}[Y_i \geq q(\tau,X_i)]}{b_n^{2+\delta_0}} \lesssim b_n^{-2-\delta_0},
\]
from Chebyshev's inequality and Condition \ref{cond::bin-Y-new}. 
Following Proposition 1 of \citet{mack1982weak}, under our choice of $b_n$ we have
\[
\Pr\left(\liminf_{n\to\infty}\left\{\supM |C^{(B)}(\tau,m) - C(\tau,m)| = 0\right\}\right) = 1,
\] 
where $\liminf_{n\to\infty}$ denotes the limit infimum for a sequence of events.


\paragraph*{Step 2: Exponential inequality}

Here we derive exponential tail bounds for the centered truncated sequence $C^{(B)}(\tau,m) - \E[C^{(B)}(\tau,m)]$. 
Note $C^{(B)}(\tau,m)$ does not have mean zero after truncation. With $\kappa_{im}$ in (\ref{eq::def-kappaik}), we write
\[
C^{(B)}(\tau,m) = (nS_{0m})^{-1}\sumn \left\{w_{im}\kappa_{im}\left[Z_i^{(B)}(\tau) - v(\tau,X_i)\right]\right\}.
\]
For a small enough $\varepsilon_3 > 0$, let the truncation threshold satisfy
\begin{equation} 
b_n \asymp n^{\frac{1}{2+\delta_0} + \varepsilon_3},
\label{eq::def-bn-truncation}
\end{equation}
where $\delta_0$ is in Condition \ref{cond::bin-Y-new}; it is easy to check that this choice of $b_n$ satisfies the requirements in Step 1 of the proof.

We apply Bernstein inequality on the truncated and centered process; to this end, we compute some key quantities below. Conditional on the covariates $X$, we have
\[
\sup_{x\in\mathcal{X}}\,\var\left[Z_i^{(B)}(\tau) - v(\tau,X_i)\;\Bigg|\; X_i = x\right]
\leq C_1 < +\infty,
\]
for some constant $C_1$, which follows from Condition \ref{cond::bin-Y-new}; hence
\[
\sumn \var\left[ w_{im}\kappa_{im}[Z_i^{(B)}(\tau) - v(\tau,X_i)]\;\Bigg|\; X\right] \leq C_1 A_{2m} \leq nC_1S_{0m}.
\]
Furthermore, each of summands in $C^{(B)}(\tau,m)$ can be bounded by
\[
\left|w_{im}\kappa_{im}[Z_i^{(B)}(\tau) - v(\tau,X_i)]\right| \lesssim A_{0m}b_n.
\]
Refer to Claim \ref{claim4} for the properties of $A_{0m}$ and $A_{2m}$.

Now, a direct application of the (conditional on $X$) Bernstein inequality (e.g., Theorem 2.8.4 of \citet{vershynin2018high}) and a union bound gives
\begin{eqnarray}
&&\Pr\left( \supM\left|C^{(B)}(\tau,m) - \E[C^{(B)}(\tau,m)]\right|  \geq M_1 r_n \;\Bigg|\; X\right)\nonumber \\
&\leq&
\sumM \Pr\left( \left|\sumn w_{im}\kappa_{im}\left[Z_i^{(B)}(\tau) - \E[Z_i^{(B)}(\tau)]\right]\right|  \geq M_1\cdot nS_{0m}r_n \;\Bigg|\; X\right) \nonumber\\
&\leq&
 2\exp\left\{\log n -\frac{(M_1^2 nr_n^2/2)\cdot\inf_{m}S_{0m}}{C_1 + (M_1 b_n r_n/3)\cdot\sup_{m}A_{0m}}\right\}\nonumber\\
 &\lesssim&
 2\exp\left\{\log n -\frac{(M_1^2 nr_n^2/2)\cdot\inf_{m}S_{0m}}{C_1(1+ \sup_{m}A_{0m})}\right\}\label{eq::Bernstein-2},
\end{eqnarray}
for sufficiently large $n$, where the $\log n$ factor comes from $M \lesssim \bh^{-p} \leq n$ under Condition \ref{cond::bin-bandwidth}; the last inequality follows since 
\[
b_nr_n = \sqrt{\ddfrac{\log n}{n^{1-2/(2-\delta_0) - 2\varepsilon_3}\,\bh^p}} \to 0,
\]
under Condition \ref{cond::bin-bandwidth}, with $b_n$ in (\ref{eq::def-bn-truncation}) and $r_n$ in Proposition \ref{prop::condI-1}

Here we give the unconditional tail bound from the conditional one in (\ref{eq::Bernstein-2}). Let $\Gamma$ denote the event that $\sup_m A_{0m} \leq 2$ and $\inf_{m} |\bh_m^{-p}S_{0m}|\geq \varepsilon_2$ for some $\varepsilon_2 > 0$; With Lemma \ref{lemma::ll-coef} and Claim \ref{claim4}, we have $\Pr(\Gamma^c) \lesssim n^{-3}$. 
With the law of total expectation applied to (\ref{eq::Bernstein-2}), the unconditional tail bound is:
\begin{eqnarray}
&&\Pr\left( \supM|C^B(\tau,m) - \E[C^B(\tau,m)]|  \geq M_1 r_n\right) \nonumber\\ 
&\leq& \E_X\left[2\exp\left\{\log n -\frac{(M_1^2 nr_n^2/2)\cdot\inf_{m}S_{0m}}{C_1(1+ \sup_{m}A_{0m})}\right\}\cdot \bm{1}\{\Gamma\}\right]+ \E\left[\exp\{\log n\} \cdot \bm{1}\{\Gamma^c)\right] \nonumber\\
&\lesssim& \E_X\left[2\exp\left\{\log n -\frac{\log n\cdot M_1^2\varepsilon_2/2}{ 3C_1}\right\}\right]+ n\Pr\left(\Gamma^c\right) \nonumber\\
&\lesssim&
 \frac{1}{n},
\label{eq::tail-centered-truncated}
\end{eqnarray}
for sufficiently large $M_1$ since $C_1$ and $\varepsilon_2$ are fixed.

\paragraph*{Step 3: Final calculations}

Noting that 
\begin{equation*}
\left| C(\tau,m) - C^{(B)}(\tau,m) \right| =  
(nS_{0m})^{-1}\sumn w_{im}|\kappa_{im}|\left\{ \frac{[Y_i - q(\tau,X_i)]\bm{1}\{b_n\leq Y_i - q(\tau,X_i)\}}{1-\tau}\right\},
\end{equation*}
the expectation of $C^{(B)}$ can be bounded by
\begin{eqnarray*}
&&\E\left[|C(\tau,m) - C^{(B)}(\tau,m)| \mid X = x\right] \\
&=& 
(nS_{0m})^{-1}\sumn \frac{w_{im}|\kappa_{im}|}{1-\tau} \E\left\{[Y_i - q(\tau,X_i)]\bm{1}[ Y_i - q(\tau,X_i) \geq b_n]\Bigm| X=x \right\}\\
&\leq& 
(nS_{0m})^{-1}\sumn \frac{w_{im}|\kappa_{im}|}{1-\tau} \E\left\{\frac{[Y_i - q(\tau,X_i)]^{2+\delta_0}\bm{1}[ Y_i\geq q(\tau,X_i)]}{b_n^{1+\delta_0}}\Bigm| X=x \right\}\\
&\lesssim& (nS_{0m})^{-1}A_{1m}\cdot b_n^{-1-\delta_0}\\
&\leq& b_n^{-1-\delta_0},
\end{eqnarray*}
where the second inequality follows from Chebyshev's inequality, the third inequality follows from the moment bound in Condition \ref{cond::bin-Y-new} and the last inequality from Claim \ref{claim4}.
Taking expectation again with expect to $X$ we obtain:
\[
\supM \E\left[|C^{(B)}(\tau,m) - C(\tau,m)|\right] \lesssim b_n^{-1-\delta_0} \lesssim r_n,
\]
from the choice of $b_n$ in (\ref{eq::def-bn-truncation}) and $r_n$ in Proposition \ref{prop::condI-1}.
 
Combining the above expectation bounds with the results of steps 1 and 2, we have
\[
\supM |C(\tau,m)| = \Op\left( r_n \right).
\]
Together with Claim \ref{claim3}, we've proved that
\[
\supM |B_q(\tau,m) + C(\tau,m)| = \Op(r_n),
\]
since $g_{1n}^2 \ll n^{-1/2}\ll r_n$ in Proposition \ref{prop::condI-1}.
Furthermore, note from Lemma \ref{lemma::ll-coef} we have $\hgamma_m^{-1}S_{0m} = 1 +\op(1) $ uniformly over $m=1,\ldots,M$. The conclusion of Proposition \ref{prop::condI-1} then holds from the decomposition (\ref{eq::SQ-decomposition-s}).

\paragraph*{Step 4: Verification of Claim \ref{claim3}}
We follow the same decomposition of $\hat{Z}_i(\tau) - Z_i(\tau)$ as in (\ref{eq::NO-smallu}) in the proof of Proposition \ref{prop::condL}. From the definition of $B_q$ in (\ref{eq::SQ-decomposition-s}) we have
\begin{eqnarray}
B_q(\tau,m) 
&=&
\frac{1}{(1-\tau)}\left\{(nS_{0m})^{-1}\sumn  [u_{1i}(\tau) + u_{2i}(\tau) +u_{3i}(\tau)] w_{im}\kappa_{im}\right\}\nonumber\\
&\triangleq&
\frac{1}{(1-\tau)}\left[U_{1n}(\tau,m) + U_{2n}(\tau,m) + U_{3n}(\tau,m)\right],\label{eq::decomposition-U123-tau}
\end{eqnarray}
where $u_{ji}(\tau)$ is defined in (\ref{eq::NO-smallu}). We consider the three terms separately.

We consider $U_{2n}(\tau,m)$ first. By separating $\kappa_{im}$ in (\ref{eq::def-kappaik}) into two parts, we have:
\begin{eqnarray*}
\supM |U_{2n}(\tau,m)|
&\leq&
\supM \left| \frac{\sumn w_{im}[S_{1m}^T\bm{S}_{2m}(\tX_i - \txm)]u_{2i}(\tau)}{nS_{0m}}\right|
\\
&&\;\;+\; 
\supM \left| \frac{\sumn w_{im}u_{2i}(\tau)}{nS_{0m}}\right| \\
&\leq&
\supM \left\lVert \ddfrac{\sumn w_{im} \left[\frac{\tX_i - \txm  }{\bh_m }\right]u_{2i}(\tau)}{nS_{0m}}\right\rVert \cdot \supM \lVert \bh_mS_{1m}^T\bm{S}_{2m}^{-1} \rVert \\
&&\;\;+\;\op(n^{-1/2}) 
\\
&=&
\op(n^{-1/2}),
\end{eqnarray*}
which follows from Condition \ref{cond::bin-Qt} and Lemma \ref{lemma::ll-coef}.

For $U_{1n}$, we have
\begin{eqnarray*}
\supM |U_{1n}(\tau,m)| 
&\leq& 
\supM \frac{\sumn w_{im}\kappa_{im}|u_{1i}(\tau)|}{nS_{0m}} \\
&\leq&
\supM A_{0m}\cdot \supM \frac{\sumn w_{im}|u_{1i}(\tau)|}{\sumn w_{im}} \\
 &=& \Op(g_{1n}^2),
\end{eqnarray*}
which follows from Claim \ref{claim4} and Lemma \ref{lemma::proof-U13}.
Similarly, 
\[
\supM |U_{3n}(\tau,m)| = \Op(g_{1n}^2).
\]
Combining the results with $U_{2n}(\tau,m)$, we have verified
\[
B_q(\tau,m) = \Op(g_{1n}^2 ) + \op(n^{-1/2}),
\] 
hence Claim \ref{claim3} holds. The proof is now complete.

\paragraph*{Step 5: Verification of Claim \ref{claim4}}
We check the conditions for $A_{0m}$, $A_{1m}$, and $A_{2m}$ separately.
For $A_{2m}$, standard algebra gives
\begin{align*}
\quad A_{2m} &= \sumn w_{im} + S_{1m}^T\bm{S}_{2m}^{-1}\left[\sumn w_{im}(\tX_i - \txm)(\tX_i - \txm)^T\right]\bm{S}_{2m}^{-1}S_{1m} \\
&\;\;\;\;-\; 2S_{1m}^T\bm{S}_{2m}^{-1}\sumn w_{im}(\tX_i - \txm)\\
&= nS_{0m}  - nS_{1m}^T\bm{S}_{2m}^{-1}S_{1m}\\
&\leq nS_{0m} ,
\end{align*}
similar to how we obtain (\ref{eq::DCT-remainder2}). For $A_{1m}$, Cauchy–Schwartz inequality gives
\[
A_{1m}\leq \sqrt{A_{2m} \sumn w_{im}} \leq nS_{0m}.
\]
For $A_{0m}$, note
\begin{eqnarray*}
\Pr\left(\supM A_{0m} \geq 2\right) 
&\leq& \Pr\left(\sup_{\substack{i=1,\ldots,n\\m=1,\ldots,M}} \left|w_{im}S_{1m}^T\bm{S}_{2m}^{-1}(\tX_i - \txm)\right|\geq 1\right)\\
 &\leq& \Pr\left(\supM \left\lVert \bh_m\cdot  S_{1m}^T\bm{S}_{2m}^{-1}\right\rVert\geq 1\right)\\
 &\lesssim& \frac{1}{n^3},
\end{eqnarray*}
 which follows from Lemma \ref{lemma::ll-coef}. We have verified Claim \ref{claim4}, and hence the proof of Proposition \ref{prop::condI-1} is complete.

\end{proof}

\subsubsection{Proof of Proposition \ref{prop::condI-2}}
\begin{proof}
For each $s \in (0,1)$, the decomposition in (\ref{eq::SQ-decomposition-s}) gives
\begin{eqnarray}
\left[\hat{v}(s,\bxm) - v(s,\bxm) \right]
&=&\frac{S_{0m}}{\hgamma_m}\left[B_q(s,m)+ C(s,m)\right]\nonumber\\
&&\;\;+\;\hgamma_m^{-1}S_{0m}\left[\dfrac{ \sumn w_{im}\kappa_{im}[v(s,X_i) - v(s,\bxm)]  }{\sumn w_{im}} \right]\nonumber\\
& = &\frac{S_{0m}}{\hgamma_m}\left[B_q(s,m)+ C(s,m) + B_{np}(s,m)\right]\label{eq::decomposition-BBC},
\end{eqnarray}
where the additional term $B_{np}(s,m)$ corresponds to the non-parametric binning bias; In (\ref{eq::SQ-decomposition-s}), this bias does not exist because $v(\tau,x)$ is linear in $x$. Following the above three-term decomposition, 
\begin{eqnarray*}
&&[\hat{v}(s ,\bxm) - v(s,\bxm)] - [\hat{v}(\tau,\bxm) - v(\tau,\bxm)]\\
&=& \frac{S_{0m}}{\hgamma_m}\left\{[B_q(s,m) - B_q(\tau,m)] + [B_{np}(s,m) - B_{np}(\tau,m)] + [C(s, m) - C(\tau,m)]\right\}.
\end{eqnarray*}

Recall that 
\[
r_n = \sqrt{\frac{\log n }{n\ubh^p}},
\]
from Proposition \ref{prop::condI-1}.
We give two claims below, the verification of which is at the end of the proof.
\begin{claim}\label{claim5}
    \[
\supMS \left| B_q(s,m)\right| = \op\left(n^{-1/2}\right).
\]
\end{claim}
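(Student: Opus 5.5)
The plan is to repeat the argument used for Claim~\ref{claim3}, now uniformly over $s$ in the shrinking window $|s-\tau|\leq B\cdot r_n$. By the definition of $B_q$ in \eqref{eq::SQ-decomposition-s} at a general level $s$, together with the decomposition \eqref{eq::NO-smallu} written at $s$ instead of $\tau$, we have
\[
B_q(s,m) = \frac{1}{1-s}\left[U_{1n}(s,m)+U_{2n}(s,m)+U_{3n}(s,m)\right],\qquad U_{jn}(s,m) = (nS_{0m})^{-1}\sumn w_{im}\kappa_{im}u_{ji}(s),
\]
where $u_{1i}(s),u_{2i}(s),u_{3i}(s)$ are the three pieces in \eqref{eq::NO-smallu} with $\tau$ replaced by $s$. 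Since $r_n\to 0$, the factor $(1-s)^{-1}$ is bounded uniformly over the window. It therefore suffices to show that $\sup_{m,\,|s-\tau|\leq Br_n}|U_{jn}(s,m)| = \op(n^{-1/2})$ for $j=1,2,3$. Throughout we use that $Br_n \leq n^{-1/4}$ for large $n$, because $r_n\ll n^{-1/4}$ under Condition~\ref{cond::bin-bandwidth}. Hence the window lies inside the range $|s-\tau|\leq n^{-1/4}$ over which Condition~\ref{cond::bin-Qt} is imposed.

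\emph{Step 1 (the term $U_{2n}$).} Split $\kappa_{im}=1-S_{1m}^T\bm{S}_{2m}^{-1}(\tX_i-\txm)$ into its two parts, exactly as in Step~4 of the proof of Proposition~\ref{prop::condI-1}.
\begin{itemize}
\item The part coming from the constant $1$ is bounded by Condition~\ref{cond::bin-Qt}(ii) with $j=0$.
\item The part coming from $S_{1m}^T\bm{S}_{2m}^{-1}(\tX_i-\txm)$ is bounded by Condition~\ref{cond::bin-Qt}(ii) with $j=1$, multiplied by $\supM\lVert\bh_mS_{1m}^T\bm{S}_{2m}^{-1}\rVert=\op(1)$ from Lemma~\ref{lemma::ll-coef}.
\end{itemize}
Both suprema in Condition~\ref{cond::bin-Qt}(ii) are already uniform in $|s-\tau|\leq n^{-1/4}$. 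This gives $\op(n^{-1/2})$ uniformly over $m$ and over the window.

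\emph{Step 2 (the terms $U_{1n}$ and $U_{3n}$).} By Claim~\ref{claim4}, the event $\{\supM A_{0m}\leq 2\}$ has probability at least $1-O(n^{-3})$. On this event,
\[
|U_{jn}(s,m)|\leq 2\,\frac{\sumn w_{im}|u_{ji}(s)|}{\sumn w_{im}},\qquad j=1,3.
\]
We then use the sign structure $u_{1i}\leq 0$ and $u_{3i}\geq 0$, which lets the absolute values be removed inside the sum. Next we apply Lemma~\ref{lemma::proof-U13} with $a_n=B(r_n+n^{-1/2})$ and $b_n=g_{1n}$. The lemma's hypothesis is supplied by Condition~\ref{cond::bin-Qt}(i), because the window $|s-\tau|\leq a_n$ sits inside $|s-\tau|\leq n^{-1/4}$. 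The lemma yields a bound of order $\Op(a_n^2+g_{1n}^2)$, uniformly in $m$ and $s$.

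\emph{Main obstacle.} The delicate point is that this bound must be $o(n^{-1/2})$. This requires both $r_n^2 = o(n^{-1/2})$ and $g_{1n}^2=o(n^{-1/2})$.
\begin{itemize}
\item The first holds because $r_n=\sqrt{\log n/(n\ubh^p)}\ll n^{-1/4}$, which follows from the lower bound on $\ubh^p$ in Condition~\ref{cond::bin-bandwidth}.
\item The second is exactly the requirement $g_{1n}=o(n^{-1/4})$ in Condition~\ref{cond::bin-Qt}.
\end{itemize}
One also has to check that Lemma~\ref{lemma::proof-U13} genuinely delivers uniformity in $s$, not just pointwise control at $\tau$. That uniformity is precisely what distinguishes the lemma from its fixed-$\tau$ analogues in \citet{olma2021nonparametric}, and it is what the argument relies on here. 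Combining Steps~1 and~2 then gives Claim~\ref{claim5}.
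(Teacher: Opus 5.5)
Your proposal is correct and follows the paper's own argument essentially step for step. It uses the same decomposition $B_q(s,m)=(1-s)^{-1}[U_{1n}+U_{2n}+U_{3n}]$, Condition~\ref{cond::bin-Qt}(ii) with Lemma~\ref{lemma::ll-coef} for $U_{2n}$, and Claim~\ref{claim4} with Lemma~\ref{lemma::proof-U13} for $U_{1n}$ and $U_{3n}$. Your bookkeeping is actually tighter than the paper's outline: the paper records only $\Op(g_{1n}^2)$ for $U_{1n},U_{3n}$, but Lemma~\ref{lemma::proof-U13} yields $\Op((r_n+g_{1n})^2)$ on this window, so your explicit check that $r_n^2=o(n^{-1/2})$ is exactly what is needed.
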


\begin{claim}\label{claim6}
    \[
\supMS \left| B_{np}(s,m)\right| = \op\left(n^{-1/2}\right),
\]
\end{claim}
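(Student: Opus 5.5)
The plan is to prove Claim~\ref{claim6} by controlling the non-parametric binning bias
\[
B_{np}(s,m) = \frac{\sumn w_{im}\kappa_{im}[v(s,X_i) - v(s,\bxm)]}{\sumn w_{im}}
\]
uniformly over $m$ and $|s-\tau|\leq B(r_n+n^{-1/2})$. I would treat the two alternatives of Condition~\ref{cond::bin-Aux} separately.

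\textbf{Case (ii): piecewise linearity.} Here $v(s,x)=x^T\beta_n^{(m)}(s)$ on $A_m$ for $|s-\tau|\leq \varepsilon_2 n^{-1/4}$. This range contains $|s-\tau|\lesssim r_n+n^{-1/2}$ because $r_n\ll n^{-1/4}$. Hence $v(s,X_i)-v(s,\bxm)=(\tX_i-\txm)^T c_m(s)$ for some vector $c_m(s)$. The local-linear weights annihilate linear functions exactly:
\[
\sumn w_{im}\kappa_{im}(\tX_i-\txm) = n\bigl[S_{1m}-\bm{S}_{2m}\bm{S}_{2m}^{-1}S_{1m}\bigr] = 0.
\]
So $B_{np}(s,m)\equiv 0$ in this case, and there is nothing more to prove.

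\textbf{Case (i): smooth $v$.} First I would do a second-order Taylor expansion of $v(s,\cdot)$ around $\bxm$. The linear term vanishes by the same annihilation identity. The remainder is
\[
\tfrac12(\tX_i-\txm)^T H_s(\xi_i)(\tX_i-\txm), \qquad H_s = \partial^2 v(s,\cdot)/\partial x\partial x^T.
\]
Next I would write $H_s(\xi_i)=H_\tau(\xi_i)+[H_s(\xi_i)-H_\tau(\xi_i)]$ and bound each part.
\begin{itemize}
\item By the Lipschitz-in-$s$ condition, the second part has norm at most $L_2|s-\tau|$. Using $\|\tX_i-\txm\|\le\bh$ and the bound $(nS_{0m})^{-1}A_{1m}\le1$ from Claim~\ref{claim4}, its contribution is $O(\bh^2 r_n)=o(n^{-1/2})$.
\item The $H_\tau$ part needs $s$-free control. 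Because $v(\tau,x)$ is linear in $x$ under model~\eqref{eq::SQ-model-linear}, $H_\tau\equiv 0$. So this term vanishes identically.
\end{itemize}
Thus $\sup|B_{np}(s,m)|\lesssim_{\P} \sup_m A_{0m}\,\bh^2 (r_n+n^{-1/2})$. On the event of Claim~\ref{claim4}, $\sup_m A_{0m}\le 2$. Hence $\sup|B_{np}(s,m)|=O_P(\bh^2 r_n)=o(n^{-1/2})$.

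\textbf{Main obstacle.} The delicate step is checking that the $s$-dependence does not spoil the rate. That reduces to $\bh^2 r_n=o(n^{-1/2})$, i.e.\ $\bh^{4}\log n/(n\ubh^p)\cdot n\to 0$, which is $\bh^{4-p}\log n\to0$ after using $\ubh\asymp\bh$ from Condition~\ref{cond::bin-bandwidth}. This is exactly the requirement $\bh^{4-p}=o(\{\log n\}^{-1})$ in Condition~\ref{cond::bin-Aux}(i). I would also make sure the Taylor remainder's intermediate point $\xi_i$ stays in $\mathcal{X}$, which holds if the bins are convex; otherwise I would use the integral form of the remainder along segments inside $\mathcal{X}$.
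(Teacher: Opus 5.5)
Your proposal is correct and follows essentially the same route as the paper. In case (ii) the local-linear weights annihilate the within-bin linear ES function exactly. In case (i) the gradient term of the second-order Taylor expansion cancels, and the Hessian, which vanishes at $s=\tau$ because $v(\tau,\cdot)$ is linear, is bounded by $L_2|s-\tau|$; this gives $\Op(r_n\bh^2)=\op(n^{-1/2})$ under Condition~\ref{cond::bin-Aux}(i).

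Using the deterministic bound $A_{1m}\le nS_{0m}$ instead of the paper's $1+\op(1)$ bound on $|\kappa_{im}|$ is a minor and slightly cleaner variation, not a different argument.
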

for any fixed $B > 0$, if any one of the requirements in Condition \ref{cond::bin-Aux} holds.

Claims \ref{claim5} and \ref{claim6} show that the bias terms are uniformly (over $s$) negligible; They are stronger than those in the proof of Proposition \ref{prop::condI-1}. 
Following Claims \ref{claim5} and \ref{claim6}, the proof proceeds in 5 steps. In steps 1 to 3, we establish the main argument:
\[
\supMS \left|C(s,m) - C(\tau,m)\right| = \op\left( n^{-1/2}\right). 
\]
In steps 4 and 5 we verify Claims \ref{claim5} and \ref{claim6}.

\paragraph*{Step 1: Decomposition}

We use the following decomposition of $Z_i(s)$ defined in (\ref{eq::def-Zi}):
\begin{eqnarray*}
&&(1-s)[Z_i(s) - v(s,X_i)] - (1-\tau)[Z_i(\tau) - v(\tau,X_i)]\\
&=&\underbrace{(q(s,X_i) - y_i)\bm{1}[q(\tau,X_i)\leq y_i \leq q(s,X_i)]}_{u_{4i}(s)} \\
&&\;\;-\; \underbrace{\left\{ (1-s)[v(s,X_i) - q(s,X_i)] - (1-\tau)[v(\tau,X_i) - q(s,X_i)]\right\}}_{\E_i[u_{4i}(s)]}\\
&&\;\;+\, \underbrace{[q(\tau,X_i) - q(s,X_i)]\{\tau-\bm{1}[y_i\leq q(\tau,X_i)]\}}_{u_{5i(s)}}\\
&\triangleq& u_{4i}(s) - \E_i[u_{4i}(s)] + u_{5i}(s),
\end{eqnarray*}
where we define $\E_i[\cdot]$ as the conditional expectation given $X = X_i$; and note $\E_i[u_{5i}(\tau,s)] = 0$. Therefore from the definition of $C(s,m)$ in (\ref{eq::SQ-decomposition-s}) and $\kappa_{im}$ in (\ref{eq::def-kappaik}), we have
\begin{equation}
\begin{aligned}
(1-s)C(s,m) - (1-\tau)C(\tau,m) &= 
\underbrace{(nS_{0m})^{-1}\sumn w_{im}\kappa_{im}\{u_{4i}(s) - \E_i[u_{4i}(s)]\}}_{U_{4n}(s,m)}\\
&\;\;+\; \underbrace{(nS_{0m})^{-1}\sumn w_{im}\kappa_{im}u_{5i}(s)}_{U_{5n}(s,m)}.
\label{eq::decomposition-U4U5}
\end{aligned}
\end{equation}

Furthermore, we separate $\kappa_{im}$ into:
\[
\kappa_{im} = \kappa_{im}\bm{1}[\kappa_{im} \geq 0] + \kappa_{im}\bm{1}[\kappa_{im} < 0] \triangleq \kappa^{(+)}_{im} - \kappa^{(-)}_{im},
\]
and correspondingly we define $U_{4n}(s,m) = U_{4n}^{(+)}(s,m)  + U_{4n}^{(-)}(s,m) $
\begin{equation*}
\begin{gathered}
U_{4n}^{(+)}(s,m) = (nS_{0m})^{-1}\sumn w_{im}\kappa^{(+)}_{im}\{u_{4i}(s) - \E_i[u_{4i}(s)]\}, \\
U_{4n}^{(-)}(s,m) = (nS_{0m})^{-1}\sumn w_{im}\kappa^{(-)}_{im}\{u_{4i}(s) - \E_i[u_{4i}(s)]\}.
\end{gathered}
\end{equation*}
In the following we consider $U_{4n}^{(+)}(s,m)$, $U_{4n}^{(-)}(s,m)$ and $U_{5n}(s,m)$ separately.

\paragraph*{Step 2: Bound for $U_{4n}$}

Let $s_+ = \tau + Br_n$, it suffices to consider the convergence of $U_{4n}^{(+)}(s,m)$ over the 
 over $s\in[\tau,s_+]$. The result for $s < \tau$ and/or $U_{4n}^{(-)}(s,m)$ follows analogously.

We use a monotonicity argument to show the uniformity over $s$. Since $u_{4i}(s)$ is monotonically increasing in $s$, we have the sandwich-type bound for $U_{4n}^{(+)}$:
\[
\ddfrac{\sumn w_{im}\kappa_{im}^{(+)}\left\{u_{4i}(\tau) - \E_i[u_{4i}(s_+)]\right\}}{nS_{0m}} \leq U_{4n}^{(+)}(s,m)\leq \ddfrac{\sumn w_{im}\kappa_{im}^{(+)}\left\{u_{4i}(s_+) - \E_i[u_{4i}(\tau)]\right\}}{nS_{0m}},
\]
which holds for all $s\in[\tau,s_+]$. Noting that $u_{4i}(\tau) = 0$, using the monotonicity argument in \citet[Theorem 19.1]{van2000asymptotic} gives
\begin{equation}
\sup_{\substack{m=1,\ldots,M\\s\in[\tau,s_+]}} |U_{4n}^{(+)}(s,m)| \leq \sup_{m=1,\ldots,M} |U_{4n}^{(+)}(s_+,m)| + \sup_{m=1,\ldots,M} \left|(nS_{0m})^{-1}\sumn w_{im}\kappa_{im}^{(+)}\E_i[u_{4i}(s_+)]\right|.
\label{eq::C-asymp-continuity-1}
\end{equation}
Next we bound the two terms separately.

For the first term in (\ref{eq::C-asymp-continuity-1}), note each of the summand in $U_{4n}^{(+)}(s_{+},m)$ is bounded from (\ref{eq::decomposition-U4U5}), and 
\[
0\leq \sumn |w_{im}\kappa_{im}^{(+)}u_{4i}(s_+)|^2 \leq (\underline{f}^{-1}|\tau - s_+|)^2\sumn w_{im}\kappa^2_{im} \leq  A_{2m}\underline{f}^{-2}B^2 r_n^2,
\] 
where $A_{2m}$ is in (\ref{eq::def-A1A2A3-kappa}) and $\underline{f}$ from Condition \ref{cond::R-bin-Y-density}.
 For any $\varepsilon_4 > 0$, application of the (conditional on $X$) Hoeffding's inequality and the union bound gives
\begin{eqnarray*}
\Pr\left(\sup_{m=1,\ldots,M}|U_{4n}^{(+)}(s_+,m)| \geq \varepsilon_4 n^{-1/2} \Bigg| X\right) 
&\leq&
\sum_{m=1}^M 2\exp\left\{-\frac{2n\varepsilon_4^2\cdot \inf_{m}S_{0m}}{\underline{f}^{-2}B^2r^2_n\cdot \sup_{m}A_{2m}}\right\}.
\end{eqnarray*}
Since $r_n^{-1} \gg \log(n)$ under Condition \ref{cond::bin-bandwidth}, we can obtain the unconditional tail bound, which implies
\[
\sup_{m=1,\ldots,M}|U_{4n}^{(+)}(s_+,m)|  = \op\left(n^{-1/2}\right),
\]
similar to how we obtain (\ref{eq::tail-centered-truncated}).

Next we consider the conditional expectation on the right hand side of (\ref{eq::C-asymp-continuity-1}). From (\ref{eq::decomposition-U4U5}), each $\E_i[u_{4i}(s_+)]$ is bounded as
\[
|\E_i[u_{4i}(s_+)]| \lesssim |q_i(s) - q_i(\tau)|^2 \leq \underline{f}^{-2} B^2r_n^2;
\]
Hence
\begin{eqnarray*}
\left|(nS_{0m})^{-1}\sumn w_{im}\kappa_{im}^{(+)}\E_i[u_{4i}(s_+)]\right| 
&\lesssim& (nS_{0m})^{-1}\sumn w_{im}|\kappa_{im}| B^2r_n^2\\
&\leq& (nS_{0m})^{-1}A_{1m} r_n^2\\
&= & \Op(r_n^2),
\end{eqnarray*}
where $A_{1m}$ and its property are in Claim \ref{claim4} of Proposition \ref{prop::condI-1}.

We now conclude from (\ref{eq::C-asymp-continuity-1}) that 
\[
\supMS |U_{4n}^{(+)}(s,m)| = \op\left(n^{-1/2}\right),
\]
since $r_n^2 = o(n^{-1/2})$ under Condition \ref{cond::bin-bandwidth}.

\paragraph*{Step 3: Bound for $U_{5n}$}

For any $s,s'\in(0,1)$, from the decomposition in (\ref{eq::decomposition-U4U5}) we have
\begin{eqnarray*}
|U_{5n}(s,m) - U_{5n}(s',m)| 
&=& (nS_{0m})^{-1} \max\{\tau,1-\tau\}\sumn w_{im}|\kappa_{im}| |q(s,X_i) - q_i(s',X_i)|\\
&\lesssim & \frac{A_{1m}}{nS_{0m}} |s - s'|\\
&\leq& |s-s'|,
\end{eqnarray*}
since $q(s,X_i)$ is Lipschitz continuous in $s$, and we use the bound for $A_{1m}$ in Claim \ref{claim4} of Proposition \ref{prop::condI-1}. 

We use a discretization argument to show the uniform convergence over $s$. Define 
\[
\tau - Br_n = s_0 < s_1< \ldots,s_J = \tau + Br_n,
\]
as an equally-spaced grid, such that $s_{j+1} - s_j \asymp n^{-1}$; therefore there are $J  \lesssim n$ sub-intervals. 
Similar to (\ref{eq::C-asymp-continuity-1}), we have
\begin{eqnarray*}
\supMS |U_{5n}(s,m)| 
&\leq& \sup_{\substack{m=1,\ldots,M\\j=0,\ldots,J}} |U_{5n}(s_j,m)| + \sup_{\substack{m=1,\ldots,M\\j=0,\ldots,J\\s\in I_j}} |U_{5n}(s,m) - U_{5n}(s_j,m)|\\
& \leq& \sup_{\substack{m=1,\ldots,M\\j=0,\ldots,J}} |U_{5n}(s_j,m)|+  \supM \left|\frac{A_{1m}}{n^2S_{0m}}\right|,
\end{eqnarray*}
where the last term is from the beginning of step 3.

Next we apply Bernstein inequality for the discretized $U_{5n}(s_j,m)$. For each $|s-\tau| \leq Br_n$, from (\ref{eq::decomposition-U4U5}) and the Lipschitz continuity of $q(s,X_i)$ (over $s$):
\begin{gather*}
\E[u_{5i}(s)\mid X] = 0,\qquad |w_{im}\kappa_{im}u_{5i}(s)| \leq C_{51}A_{0m}Br_n,\\
\sumn \var[w_{im}\kappa_{im}u_{5i}(s)\mid X] \leq C_{52}(Br_n)^2A_{2m},
\end{gather*}
where $C_{51}$ and $C_{52}$ are two constants, and $A_{0m}$ and $A_{2m}$ are in (\ref{eq::def-A1A2A3-kappa}).
For small enough $\varepsilon_5 > 0$, we apply the (conditional on $X$) Bernstein inequality as in (\ref{eq::Bernstein-2}), which shows that:
\begin{eqnarray*}
&&\Pr\left( \sup_{m,j}|U_{5n}(s_j,m)| \geq \varepsilon_5 n^{-1/2} \Bigg| X\right) \\
&\leq&
2\sum_{m=1}^M\sum_{j=0}^J \exp\left\{-\frac{n\varepsilon_5^2S^2_{0m}}{C_{52}B^2r_n^2A_{2m} + n^{1/2}\varepsilon S_{0m} C_{51}A_{0m}Br_n/3}\right\}\\
&\lesssim& 2\exp\left\{2 \log n - \frac{\varepsilon_5^2\cdot \inf_{m} S_{0m}}{B^2r_n^2 + n^{-1/2}Br_n\cdot \sup_{m}A_{0m}}\right\}.
\end{eqnarray*}
Similar to how we obtain (\ref{eq::tail-centered-truncated}), we can show that the corresponding unconditional probability is $o(1)$, which implies that:
\[
\supMS U_{5n}(s,m) = \op\left( n^{-1/2}\right).
\]

Therefore, with the decomposition in (\ref{eq::decomposition-U4U5}), we have established that 
\[
\supMS \left|C(s,m) - C(\tau,m)\right| = \op\left( n^{-1/2}\right),
\] 
from steps 1 through 3. Using Claims \ref{claim5}, \ref{claim6} and Equation (\ref{eq::decomposition-BBC}), we would complete the proof of Proposition \ref{prop::condI-2}.
{}
\paragraph*{Step 4: Verification of Claim \ref{claim5}}

With Condition \ref{cond::bin-Qt} and Lemma \ref{lemma::proof-U13}, the proof here is a simple extension of step 4 in the proof of Proposition \ref{prop::condI-1}. We only give an outline here. Using same decomposition used in (\ref{eq::decomposition-U123-tau}), we have
\begin{eqnarray*}
B_q(s,m) 
&=&
\frac{1}{(1-s)}[U_{1n}(s,m) + U_{2n}(s,m) + U_{3n}(s,m)].
\end{eqnarray*}

For $U_{2n}$, similar to step 4 in the proof of Proposition \ref{prop::condI-1}, we have:
\begin{eqnarray*}
\supMS |U_{2n}(s,m)|
&\leq&
\supMS \left\lVert \ddfrac{\sumn w_{im} \left[\frac{\tX_i - \txm  }{\bh_m }\right]u_{2i}(s)}{nS_{0m}}\right\rVert \cdot \supM \lVert \bh_mS_{1m}^T\bm{S}_{2m}^{-1} \rVert\\
&&\;\;+\;\supMS \left| \frac{\sumn w_{im}u_{2i}(s)}{nS_{0m}}\right| \\
&=&
\op(n^{-1/2}),
\end{eqnarray*}
which follows from Condition \ref{cond::bin-Qt}.

For $U_{1n}(s,m)$, it follows verbatim to part 4 of Proposition \ref{prop::condI-1} that:
\[
\supMS |U_{1n}(\tau,m)| 
 = \Op(g_{1n}^2),\quad \supMS |U_{3n}(\tau,m)| 
 = \Op(g_{1n}^2),
\]
from the uniform convergence (over $s$) in Lemma \ref{lemma::proof-U13}.

Noting that $g_{1n}^2 \ll n^{-1/2}$, we have verified Claim \ref{claim5}.

\end{proof}

\paragraph*{Step 5: Verification of Claim \ref{claim6}}

We check Claim \ref{claim6} under two scenarios separately. First, consider the case where the second requirement in Condition \ref{cond::bin-Aux} holds.
Then $v(s,x)$ is piece-wise linear in all the bins, for all $s \lesssim n^{-1/4} \ll r_n$. Therefore the same calculations in (\ref{eq::SQ-decomposition-s}) apply, and the non-parametric bias does not exist, i.e.,
\[
\supMS |B_{np}(s,m)| = 0,
\]
which follows from the nature of local-linear estimation \citep{fan2018local}.

In the following, we consider the case when only the first requirement in Condition \ref{cond::bin-Aux} holds. 
With a slight abuse of notation, we write $v(s,\tX_i) = v(s,X_i)$, $v(s,\txm) = v(s,\bxm)$. Let $v'_x$ denote the $p$-dimensional gradient vector with respect to covariates $x$ without the intercept term, and $\bm{v}''_{xx}$ be the $p$ by $p$ Hessian matrix. With Taylor expansion at each $\tilde{x}_m$,
\[
v(s,\tX_i) - v(s,\txm) = (\tX_i - \txm )^Tv'_x(s,\tilde{x}_m) +  \frac{1}{2}(\tX_i - \txm)^T \bm{v}''_{xx} (s,\hat{x}_{im}) (\tX_i - \txm),
\]
for some $\hat{x}_{im}$ in between $\tilde{x}_m$ and $\tX_i$. 

Note $B_{np}$ is the linear combination of $v(s,X_i) - v(s,\tilde{x}_m)$ as in (\ref{eq::decomposition-BBC}); We plug in the two terms in the above displayed equation into (\ref{eq::decomposition-BBC}) separately. First, the first-order terms sum up to exactly 0:
\[
\sumn w_{im}(\tX_i - \txm)^Tv'_x(s,\tilde{x}_m)[1-S_{1m}^T\bm{S}_{2m}^{-1}(\tX_i - \txm)] = 0,
\]
due to standard local-linear calculation \citep{fan2018local}.  

Second, note that $\bm{v}''_{xx}(\tau,x) = 0$ for all $x$ due to the linearity of $\tau$-th ES, therefore the first item in Condition \ref{cond::bin-Aux} implies $\lVert\bm{v}''_{xx}(s,x)\rVert_{2}\leq L_2|s-\tau|$ uniformly for all $x$. Hence
\begin{eqnarray*}
&&\frac{1}{2nS_{0m}}\sumn w_{im}(\tX_i - \txm)^T[\bm{v}''_{xx}(s,\hat{x}_{im})](\tX_i - \txm)[1-S_{1m}^T\bm{S}_{2m}^{-1}(\tX_i - \txm)]\\
&\leq&\left[\frac{L_2|s-\tau|}{2nS_{0m}}\sumn w_{im}\lVert\tX_i - \txm\rVert^2\right]\left|1-S_{1m}^T\bm{S}_{2m}^{-1}(\tX_i - \txm)\right|\\
&\lesssim& |s-\tau|\left| \frac{\sumn w_{im}\lVert \tX_i - \txm \rVert^2}{\sumn w_{im}}\right|(1+\op(1))\\
&\lesssim& |s-\tau|\bar{h}^2(1+\op(1)),
\end{eqnarray*}
where the first inequality owns to the operator norm bound for $\bm{v}''_{xx}$, and the $\op(1)$ terms are uniform in $m$ and independent of $s$ due to Lemma \ref{lemma::ll-coef}.

Combining the previous two displayed equations, we obtain
\[
\supMS|B_{np}(s,m)| = \Op\left(r_n\bar{h}^2\right)  = \op\left(n^{-1/2}\right),
\]
since 
\[
r_n\bh^2 = \sqrt{\frac{\bh^4\log n}{n \ubh^p}}  \ll \frac{1}{\sqrt{n}},
\]
under the first requirement of Condition \ref{cond::bin-Aux}.

Therefore, Claim \ref{claim6} holds under either requirement of Condition \ref{cond::bin-Aux}.

{\color{black}
\subsection{Proof of Theorem 4.3}
Similar to Theorem 4.1, we represent the weighted i-Rock estimator with bounded weights in the following Proposition. 
\begin{proposition}\label{prop::weighted_iRock_linearization}
Under the conditions in Theorem 4.1, the weighted i-Rock estimator in (25) with bounded weights $\omega_m$ satisfies
    \begin{align*}
        \left(\widehat{\beta} - \beta\right) &= \left(\sum_{m=1}^M \frac{\hgamma_m \omega_m \bar{x}_m \bar{x}_m^T}{v(\tau,\bar{x}_m)-q(\tau,\bar{x}_m)} \right)^{-1}\\
        &~~\times \sumM \left[\frac{\hgamma_m \omega_m \bxm }{v(\tau,\bar{x}_m)-q(\tau,\bar{x}_m)}\{\hat{v}(\tau,\bar{x}_m) - v(\tau,\bar{x}_m)\}\right]\\
&~~+ \op(n^{-1/2}). 
    \end{align*}
\end{proposition}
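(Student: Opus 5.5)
The plan is to rerun the proof of Theorem~\ref{thm::asymptotic-RRM-bin} with the bin weights $\hgamma_m$ replaced everywhere by $\hgamma_m\omega_m$. Since the $\omega_m$ are deterministic, bounded, and bounded away from zero on the relevant range, every step should carry over. Concretely, I would work with the shifted objective
\[
L_n^{\omega}(\delta)=\sumM \hgamma_m\omega_m\int_0^1\left[\rho_\tau\left(\hat v_m(\alpha)-v_m(\tau)-\bxm^T\delta/\sqrt n\right)-\rho_\tau\left(\hat v_m(\alpha)-v_m(\tau)\right)\right]\d\alpha,
\]
which is convex in $\delta$ and is minimized at $\hat\delta=\sqrt n(\widehat\beta-\beta)$. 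Applying Knight's identity exactly as before gives the decomposition $L_n^{\omega}=A_n^{\omega}+B_n^{\omega}+C_n^{\omega}$, where each term is the weighted analogue of the corresponding term in \eqref{eq::proof-rock-ABC}.

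The remainder terms $A_n^{\omega}$ and $C_n^{\omega}$ are handled by pulling out $\sup_m|\omega_m|\le \bar\omega<\infty$. This gives $|nA_n^{\omega}|\le\bar\omega\sqrt n\sumgammaM|\tau-\hat h_m(\hat v_m)|$, and the analogous bound holds for $C_n^{\omega}$. Both are therefore $\op(1)$ by parts 1 and 3 of Lemma~\ref{lemma::bin-technical-h}. For $B_n^{\omega}$, the Taylor bound \eqref{eq::Taylor-h-proof-main1} does not involve the weights, so the error terms are again controlled by part 2 of Lemma~\ref{lemma::bin-technical-h} times $\bar\omega$. This yields
\[
nL_n^{\omega}(\delta)=\tfrac12\delta^TD^{\omega}_{1n}\delta-\delta^Tu_n^{\omega}+\op(1),
\]
where $D^{\omega}_{1n}=(1-\tau)^{-1}\sumgammaM\omega_m h_m'(v_m)\bxm\bxm^T$ and $u^{\omega}_n=\sqrt n(1-\tau)^{-1}\sumgammaM\omega_m h_m'(v_m)\bxm(\hat v_m-v_m)$. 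By \eqref{eq::prof-rock-hprime}, $h_m'(v_m)/(1-\tau)=\{v_m(\tau)-q_m(\tau)\}^{-1}$, so these are exactly the matrix and vector appearing in the statement.

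To finish, I would invoke the convexity lemma of \citet{pollard1991asymptotics} and the Basic Corollary of \citet{hjort2011asymptotics}. These require three ingredients.
\begin{itemize}
\item[(a)] $D^{\omega}_{1n}$ has a positive definite limit, or at least eigenvalues bounded away from zero in probability. If $\omega_m=\omega(\bxm)$ with $\omega(\cdot)$ Lipschitz and bounded below, Lemma~\ref{lemma::bin-approx-moment-X} gives $D_{1n}^{\omega}\to E[\omega(X)XX^T/\{v-q\}]\succ0$. Otherwise, a lower bound $\omega_m\ge\underline\omega>0$ gives $D^{\omega}_{1n}\succeq\underline\omega D_{1n}$, which suffices.
\item[(b)] $u_n^{\omega}=\Op(1)$, the weighted analogue of Condition~\ref{cond::bin-linear}.
\item[(c)] The resulting representation, $\hat\delta=(D^{\omega}_{1n})^{-1}u^{\omega}_n+\op(1)$, which is the claim once we keep $D^{\omega}_{1n}$ rather than its limit.
\end{itemize}

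The main obstacle is (b). Condition~\ref{cond::bin-linear} is stated only for the unweighted aggregation, so (b) must be re-derived. For the initial estimator \eqref{eq::ll-SQ}, this means repeating the proof of Proposition~\ref{prop::condL} with $\bxm$ replaced by $\omega_m\bxm$. Step 1 (the Lindeberg CLT), Claim~\ref{claim1}, and Claim~\ref{claim2} use only the boundedness of $\bxm/\zeta_m$, so they go through with bounded $\omega_m$ and produce a limiting covariance $E[\omega^2\sigma_\tau^2XX^T/(v-q)^2]$.

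The other delicate point is that the weights in \eqref{eq::weight-iRock} are evaluated at $x_m$ rather than $\bxm$, and depend on $\sigma^2_\tau$. I would treat them as the deterministic values $\omega(\bxm)$, using Lipschitz continuity of $v,q$ from Condition~\ref{cond::bin-SQ}, and would assume $\sigma^2_\tau$ is bounded away from zero and infinity so that the weights stay bounded.
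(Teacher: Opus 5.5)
Your proposal follows the same route as the paper. The paper also reruns the proof of Theorem~\ref{thm::asymptotic-RRM-bin} with $\hgamma_m$ replaced by $\hgamma_m\omega_m$, uses that the $\omega_m$ are deterministic and bounded to obtain $nL_n^{(\omega)}(\delta)=\tfrac12\delta^T\tilde D_{1n}\delta-\delta^T\tilde u_n+\op(1)$, and reads off the minimizer. Your two extra ingredients, a weighted analogue of Condition~\ref{cond::bin-linear} and a non-degeneracy condition such as $\omega_m\ge\underline{\omega}>0$ so that the weighted matrix is invertible, are left implicit in the paper but are needed for the Hjort--Pollard step, since neither follows from boundedness of $\omega_m$ together with Conditions~\ref{cond::vhat1} and~\ref{cond::bin-linear}.
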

Proposition~\ref{prop::weighted_iRock_linearization} implies that the weighted i-Rock estimator is
asymptotically equivalent to the weighted least squares in (19) with weights $\omega_m/\{v(\tau,\bxm) - q(\tau,\bxm)\}$. The proof of Proposition~\ref{prop::weighted_iRock_linearization} follows closely the proof of Theorem 4.1. 
We define the weighted shifted i-Rock loss function as 
\begin{equation}
 L_n^{(\omega)}(\delta) = \sumM \hgamma_m \omega_m \int_{0}^1\; \left[\rho_\tau\left(\hat{v}_{m}(\alpha) - v_{m}(\tau) - \bxm \delta/\sqrt{n} \right) -  \rho_\tau\left(\hat{v}_{m}(\alpha ) - v_m(\tau )\right)\right]\,\d\alpha.
\label{eq::proof-rock-Ln-weighted}
\end{equation}
Since $\omega_m, m=1,\ldots,M$ are deterministic and bounded, we have 
\begin{equation*}
    n L_n^{(\omega)}(\delta) = \frac{1}{2} \delta^T \tilde D_{1n} \delta - \delta^T \tilde u_n + o_p(1),
\end{equation*}
where 
\begin{align*}
    \tilde D_{1n} &= \sum_{m=1}^M \frac{\hgamma_m \omega_m \bar{x}_m \bar{x}_m^T}{v(\tau,\bar{x}_m)-q(\tau,\bar{x}_m)}, \\
    \tilde u_n &= \sumM \left[\frac{\hgamma_m \omega_m \bxm }{v(\tau,\bar{x}_m)-q(\tau,\bar{x}_m)}\{\hat{v}(\tau,\bar{x}_m) - v(\tau,\bar{x}_m)\}\right].
\end{align*}

\begin{corollary}\label{corollary2}
    Under the conditions in Theorem 4.1, the weighted i-Rock estimator in (25) with optimal weights $\omega_m$ in (26) satisfies
    \begin{align*}
        \left(\widehat{\beta} - \beta\right) &= D_2^{-1} \sumM \left[\frac{\hgamma_m  \bxm }{\sigma^2_\tau(\bxm)}\{\hat{v}(\tau,\bar{x}_m) - v(\tau,\bar{x}_m)\}\right]\\
&~~+ \op(n^{-1/2}),
    \end{align*}
    where $D_2 = E\{XX^T/\sigma^2_\tau(X)\}$. 
\end{corollary}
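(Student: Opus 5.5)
The plan is to obtain Corollary~\ref{corollary2} directly from Proposition~\ref{prop::weighted_iRock_linearization} by substituting the weights \eqref{eq::weight-iRock}. Two things have to be handled: the weights must be admissible for the proposition, and the bin-level weights must be related to the population matrix $D_2$.

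\textbf{Step 1: the weights are admissible.} Write $\omega_m = \zeta_m/\sigma^2_\tau(\bxm)$ with $\zeta_m = v(\tau,\bxm)-q(\tau,\bxm)$, evaluated at the bin centers. These weights are deterministic, since they depend only on the known design of the bins and on population quantities. They are also bounded above and away from zero:
\begin{itemize}
\item Lemma~\ref{lemma::bin-cond-implications} gives $\underline{m}_1\le \zeta_m\le \overline{m}_1$.
\item By \eqref{eq::proof-def-sigma}, $(1-\tau)\sigma^2_\tau(x)\ge \tau\zeta^2(x)\ge \tau\underline{m}_1^2>0$.
\item $\sigma^2_\tau(x)$ is bounded above uniformly by the moment bound in Condition~\ref{cond::bin-Y-new}.
\end{itemize}
So the proposition applies.

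\textbf{Step 2: substitute.} Inside the proposition, $\omega_m/\zeta_m = 1/\sigma^2_\tau(\bxm)$. The representation therefore reads
$$\widehat\beta-\beta = \tilde D_{1n}^{-1}\sum_m \frac{\hgamma_m\bxm}{\sigma^2_\tau(\bxm)}\{\hat v(\tau,\bxm)-v(\tau,\bxm)\} + \op(n^{-1/2}),$$
with
$$\tilde D_{1n}=\sum_m \frac{\hgamma_m \bxm\bxm^T}{\sigma^2_\tau(\bxm)}.$$

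\textbf{Step 3: replace $\tilde D_{1n}$ by $D_2$.} Apply the first part of Lemma~\ref{lemma::bin-approx-moment-X} with $g(x)=xx^T/\sigma^2_\tau(x)$. This gives $\tilde D_{1n}\convergeip D_2$. The score sum is $\Op(n^{-1/2})$, which follows the same way as Proposition~\ref{prop::condL}: that argument only needs bounded weights, and here the weights $\zeta_m\omega_m/\zeta_m$ are bounded. Hence replacing $\tilde D_{1n}^{-1}$ by $D_2^{-1}$ costs only $\op(1)\cdot\Op(n^{-1/2})=\op(n^{-1/2})$. Positive definiteness of $D_2$ follows from that of $D_1$ (Condition~\ref{cond::bin-covar}), because the two weight functions are both bounded above and below.

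\textbf{Main obstacle.} Lemma~\ref{lemma::bin-approx-moment-X} requires $g$ to be Lipschitz, so $x\mapsto\sigma^2_\tau(x)$ must be Lipschitz. This is not stated among the assumptions. The first thing I would try is to derive it from Condition~\ref{cond::bin-SQ} (Lipschitz $q(\tau,\cdot)$ and $v(\tau,\cdot)$) together with Lipschitz continuity of the tail second moment $x\mapsto E[(Y-q)^2\bm{1}\{Y\ge q\}\mid X=x]$, adding that as a mild regularity assumption if needed. Alternatively, I would use only continuity of $\sigma^2_\tau$ and redo the bin-approximation argument with uniform continuity on the compact set $\mathcal{X}$, since $\sup_m \mathrm{diam}(A_m)\to0$.
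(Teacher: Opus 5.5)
Your argument is the paper's own route. Corollary~\ref{corollary2} is just Proposition~\ref{prop::weighted_iRock_linearization} with $\omega_m/\{v(\tau,\bxm)-q(\tau,\bxm)\}=1/\sigma^2_\tau(\bxm)$, followed by replacing $\tilde D_{1n}$ with its limit $D_2$; the paper supplies no more detail than that, and the $\Op(n^{-1/2})$ rate of the $\sigma_\tau^{-2}$-weighted score you invoke is exactly its Proposition~\ref{prop::condL_weighted}.

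The caveats you flag are real but are shared by the paper rather than flaws in your proof. Lipschitz (or at least continuous) dependence of $\sigma^2_\tau(\cdot)$ on $x$ is not implied by the stated conditions and has to be assumed for Lemma~\ref{lemma::bin-approx-moment-X}. The weighted-score rate, like your appeal to Condition~\ref{cond::bin-Y-new}, genuinely uses the Theorem~\ref{thm::example-RRM-SQ} setting, since Condition~\ref{cond::bin-linear} only controls the unweighted score.
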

Recognizing that $\omega_m, m=1,\ldots,M$ are deterministic, we have the following proposition as a modified version of Proposition~\ref{prop::condL}. 
\begin{proposition}\label{prop::condL_weighted}
    Under the conditions in Theorem 4.2, we have
    \begin{equation*}
        \sqrt{n} \sumM \left[\frac{\hgamma_m  \bxm }{\sigma^2_\tau(\bxm)}\{\hat{v}(\tau,\bar{x}_m) - v(\tau,\bar{x}_m)\}\right] \overset{d}{\to} N(0, D_2),
    \end{equation*}
    where $D_2$ is defined in Corollary~\ref{corollary2}. 
\end{proposition}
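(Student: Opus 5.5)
The plan is to mirror Step 1 of the proof of Proposition~\ref{prop::condL}, replacing the weight $\zeta_m^{-1}=\{v(\tau,\bxm)-q(\tau,\bxm)\}^{-1}$ by $\{\sigma^2_\tau(\bxm)\}^{-1}$. First, invoke the decomposition \eqref{eq::condL-tot-CLT} of $\hat v(\tau,\bxm)-v(\tau,\bxm)$ into three pieces: the quantile-estimation term $\hat v-\tilde v$, the local-linear slope term involving $S_{1m}^T\bm{S}_{2m}^{-1}$, and the oracle average $(n\hgamma_m)^{-1}\sumn w_{im}[Z_i(\tau)-v(\tau,X_i)]$. Lemma~\ref{lemma::bin-cond-implications} together with Condition~\ref{cond::bin-Y-new} gives $\sigma^2_\tau(x)$ bounded above and away from zero; I will further assume it is Lipschitz in $x$, which is implicit in calling the weights \eqref{eq::weight-iRock} bounded and deterministic. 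Under that assumption, $\bxm/\sigma^2_\tau(\bxm)$ is uniformly bounded, exactly as $\bxm/\zeta_m$ was.

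Second, re-verify Claims~\ref{claim1} and~\ref{claim2} with the new weights. For Claim~\ref{claim1}, the variance bound \eqref{eq::DCT-remainder2} used only the boundedness of $\lVert\bxm\rVert/\zeta_m$, so it goes through unchanged. For Claim~\ref{claim2}, the terms $U_{1n},U_{2n},U_{3n}$ were bounded via Condition~\ref{cond::bin-Qt} and Lemma~\ref{lemma::proof-U13}, again using only the bounded deterministic multiplier. Both remainders are therefore $\op(1)$ after multiplication by $\sqrt n$.

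Third, apply the Lindeberg--Feller CLT to $n^{-1/2}\sumn O_{in}$ with $O_{in}=[Z_i(\tau)-v(\tau,X_i)]\sumM w_{im}\bxm/\sigma^2_\tau(\bxm)$. The Lindeberg condition follows from the $(2+\delta_0)$-moment bound exactly as before. Conditioning on $X$ and using \eqref{eq::proof-def-sigma}, the variance is
\[
\var(O_{in})=\E_X\Big\{\sigma^2_\tau(X)\sumM w_{im}\frac{\bxm\bxm^T}{\sigma^4_\tau(\bxm)}\Big\}.
\]
By the second part of Lemma~\ref{lemma::bin-approx-moment-X}, with $g(x)=xx^T/\sigma^4_\tau(x)$ bounded Lipschitz and $h=\sigma^2_\tau$ integrable, this converges to $\E\{XX^T/\sigma^2_\tau(X)\}=D_2$. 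Combining the three pieces yields $N(0,D_2)$. The factor $\hgamma_m/(n\hgamma_m)$ cancels exactly as in Proposition~\ref{prop::condL}, so no ratio $\hgamma_m/\hat\pi_m$ enters.

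The main obstacle is the regularity of $\sigma^2_\tau(x)$ needed to apply Lemma~\ref{lemma::bin-approx-moment-X}. Continuity in $x$ of $\var(Y\mid X=x,Y\ge q(\tau,x))$ is not among the stated conditions, so I would either add it as a mild assumption or derive it from the Lipschitz properties of $q(\tau,\cdot)$ and $v(\tau,\cdot)$ plus a uniform-in-$x$ second tail moment. If only continuity (rather than Lipschitz) is available, I would redo the lemma's argument with $\bh\to0$ and dominated convergence, which suffices for bounded continuous $g$.
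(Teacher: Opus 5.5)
Your proposal is correct and follows the paper's route. The paper justifies this proposition only by noting that it is Proposition~\ref{prop::condL} rerun with the deterministic, bounded multipliers $\omega_m/\{v(\tau,\bxm)-q(\tau,\bxm)\}=1/\sigma^2_\tau(\bxm)$: Claims~\ref{claim1} and~\ref{claim2} then need only boundedness, and the Lindeberg CLT yields variance $\E\{XX^T/\sigma^2_\tau(X)\}=D_2$, exactly as you argue.

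You are also right that passing to this variance limit through Lemma~\ref{lemma::bin-approx-moment-X} requires $\sigma^2_\tau(\cdot)$ to be continuous (or Lipschitz) in $x$. That is not among the stated conditions, and the paper leaves it implicit; the same assumption is needed for the limit $D_2$ in Corollary~\ref{corollary2}.
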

Propositions~\ref{prop::condI-1}, ~\ref{prop::condI-2}, and~\ref{prop::condL_weighted} lead to the desired  result in Theorem 4.3. 

\subsection{Weighted ES regression approaches}
\subsubsection{Efficiency comparison}
The main theoretical conclusion is that
the following estimators are asymptotically equivalent: (1) the optimal M-estimator of the joint approach in \cite{dimitriadis2022efficiency}, (2) the optimally weighted two-step approach in \cite{barendse2020efficiently}, and (3) the optimally weighted i-Rock approach proposed in (21) and (22) of our main manuscript. 
While all three weighted approaches achieve the same efficiency, the weighted i-Rock approach does not require modeling of the quantile functions, while the other two approaches rely on the linear quantile assumption. 
As shown in  \cite{dimitriadis2022efficiency}, these estimators achieve semi-parametric efficiency for a class of models specified in Theorem 5.2 of \cite{dimitriadis2022efficiency}.

Here, we provide additional details for the comparison of these approaches for the ES regression estimation.

\begin{enumerate}
    \item[(1)] \textbf{Joint approach:}\\
We start with the joint quantile and ES regression model proposed in \cite{dimitriadis2019joint}. The joint approach assumes linear quantile and ES function, i.e., 
\begin{equation}\label{eq::joint_linear}
    q_{[Y\mid X]}(\tau) = X^T \eta^0,
    \quad v_{[Y\mid X]}(\tau) = X^T \beta^0,
\end{equation}
and consider the following joint quantile and ES estimator
\[
 (\hat{\eta},\hat{\beta})  =  \min_{\eta,\beta} \sumn \ell_i(\eta,\beta; G_1,G_2),
  \]
where the joint loss function is
  \begin{equation}
  \ell_i(\eta,\beta; G_1,G_2) = \rho_{\tau}\left(G_1(Y_i) - G_1(x_i^T\eta)\right) + G'_2(x_i^T\beta)\left[Z_i(\eta) - x_i^T\beta\right] + G_2(x_i^T\beta),
  \label{eq::joint-loss}
  \end{equation} 
  and 
  \begin{equation}
      Z_i(\eta) = (1-\tau)^{-1}(Y_i - x_i^T \eta)\mathbbm{1}(Y_i\geq x_i^T \eta) + x_i^T \eta.
  \label{eq::def-Zi}
  \end{equation}
Under this framework, \cite{dimitriadis2022efficiency} provides a specific choice of $G_1$ and $G_2$ functions, namely,
\begin{equation}\label{eq::G2}
\begin{aligned}
    G_1(s)& = 0,\\
    G_2(s)& = \frac{c_1(X^T \eta^0 - s)}{\sqrt{\tau m_1(X)}} \arctan\left(\frac{\sqrt{\tau}(X^T \eta^0 - s)}{\sqrt{m_1(X)}}\right) \\
    &~~+ s\frac{\pi c_1(1+c_2)}{2\sqrt{\tau m_1(X)}} - \frac{c_1}{2\tau}\log\{m_1(X) + \tau m_2^2(X)\},
\end{aligned}
\end{equation}
so that the M-estimator attains the efficiency bound
\begin{equation}\label{eq::asy-var-joint}
\text{AVar}_{\text{M-joint}} = \left[E\left\{\frac{(1-\tau)XX^T}{m_1(X) + \tau m_2^2(X)}\right\}\right]^{-1},
\end{equation}
where $m_1(X) = \var(Y\mid Y \geq q(\tau, X), X)$, $m_2(X) = v(\tau, X) - q(\tau, X)$. 

Let $f$ denote the conditional density function of $Y$ given $X$.
For a more restricted class of conditional distributions, namely, 
\begin{equation}\label{eq::cond1}
    m_1(X) \propto m_2^2(X), \quad f(q(\tau,X)) \propto m_2^{-1}(X),
\end{equation}
the M-estimator with $G_1, G_2$ specified in~\eqref{eq::G2} attains optimal semi-parametric efficiency, the same as the Z-estimation. 
A representative example that satisfies condition~\eqref{eq::cond1} is the class of linear location–scale models. 
When some of the conditions (e.g., \eqref{eq::cond1}) are violated, the Z-estimator can beis more efficient than the M-estimator, which creates an efficiency gap. 
The asymptotic variance attainable by of the Z-estimator is
\begin{equation}\label{eq::asy-var-joint-Z}
\text{AVar}_{\text{Z-joint}} \propto \left[E\left\{\frac{XX^T}{\frac{(1-\tau)\tau m_1(X) m_2^2(X)}{m_1(X) + \tau^3 m_2^2(X)}}\right\}\right]^{-1}.
\end{equation}

\item[(2)] \textbf{Two-step approach:}\\
We consider  the weighted two-step approach proposed by \cite{barendse2020efficiently} under the linear assumption in~\ref{eq::joint_linear}, namely, 
\begin{align}
\hat{\eta}  &=  \min_\eta \sumn \rho_\tau(Y_i - x_i^T\eta),\label{eq::quantile_regression}\\
\hat{\beta}  &=  \min_\beta \sumn w_i \left[Z_i(\hat{\eta}) - x_i^T\beta\right]^2\label{eq::weighted_two_step}. 
\end{align}
The optimal weights proposed in~\cite{barendse2020efficiently} are 
\begin{equation}\label{eq::weight_two_step}
    w_i = \frac{1}{m_1(x_i) + \tau m_2^2(x_i)}, 
\end{equation}
and  the weighted two-step estimator is asymptotically normal
with asymptotic variance 
\begin{equation}\label{eq::asy-var-two-step}
\text{AVar}_{\text{two-step}} = \left[E\left\{\frac{(1-\tau)XX^T}{m_1(X) + \tau m_2^2(X)}\right\}\right]^{-1},
\end{equation}
which is the same as \eqref{eq::asy-var-joint}.
\item[(3)] \textbf{i-Rock approach:}\\
From Theorem 4.3 of the main manuscript, we know that the optimally weighted i-Rock estimator in (21) and (22) has the same asymptotic variance as \eqref{eq::asy-var-joint}.
\end{enumerate}

\subsubsection{Empirical results}

\textbf{Numerical algorithms:}\\
In practice, all three weighted approaches need an estimation of the weights. 
\begin{enumerate}
    \item[(1)] For the joint approach, \cite{dimitriadis2022efficiency} proposes to estimate $\hat{\eta}$ and plug it into the weights $G_2$. However, the joint loss function is non-convex and non-differentiable, thus solving the optimization problem can be computationally challenging. 

\item[(2)] For the weighted two-step estimation, \cite{barendse2020efficiently} does not offer an explicit numerical approach. To estimate the weights in~\eqref{eq::weight_two_step}, we require both the quantile and ES estimators to start with, and follow a three-step implementation: (1) obtain linear quantile regression estimator $\hat \eta$ from~\eqref{eq::quantile_regression}, (2) fit unweighted least-squares with pseudo-responses, i.e.,~\eqref{eq::weighted_two_step} with $w_i = 1$, denoted as $\tilde \beta$, and (3) obtain final ES estimator $\hat \beta$ by fitting weighted least-squares with pseudo-responses in~\eqref{eq::weighted_two_step} using weights 
\begin{equation}
    w_i = \frac{1}{(x_i^T \tilde \beta - x_i^T\hat \eta)^2}.
\end{equation}

\item[(3)] For the weighted i-Rock approach, we could directly use a plug-in estimator for the optimal weights in (22) of main manuscript from the initial quantile and ES estimators that are not based on linearity assumptions. 
\end{enumerate}
\begin{figure}[tbh!]
\centering 
\begin{subfigure}[b]{0.4\textwidth}
         \centering
         \includegraphics[width = \textwidth]{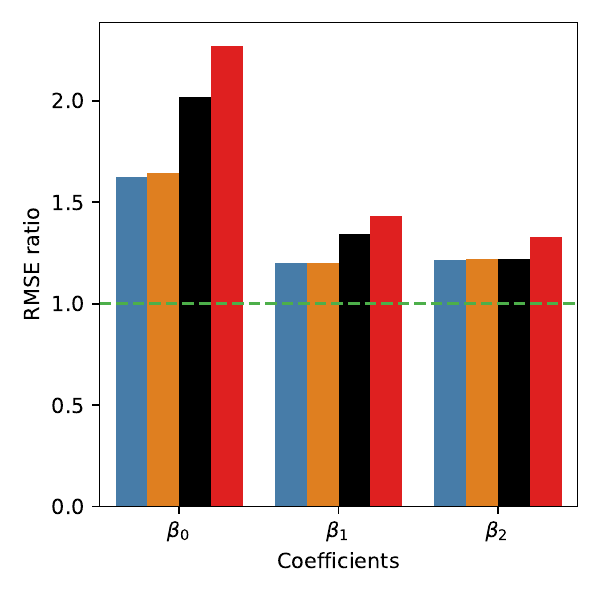}
         \caption{$n = 5000, \tau = 0.8$}
     \end{subfigure}
     \begin{subfigure}[b]{0.4\textwidth}
         \centering
         \includegraphics[width = \textwidth]{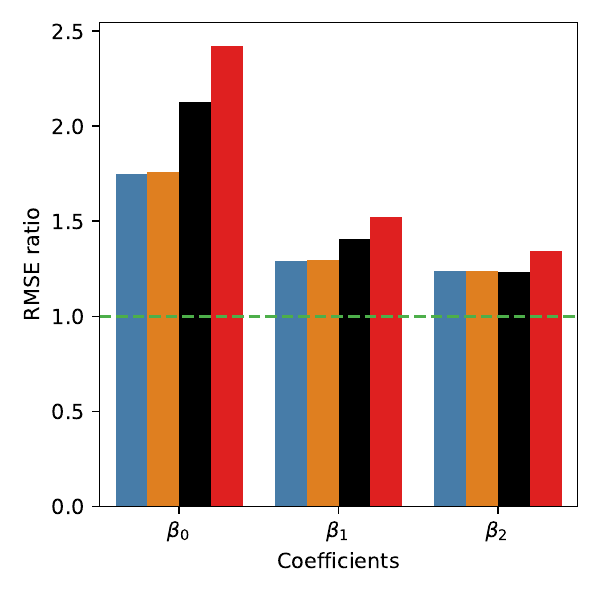}
         \caption{$n = 10000, \tau = 0.8$}
     \end{subfigure}
     \begin{subfigure}[b]{0.4\textwidth}
         \centering
         \includegraphics[width = \textwidth]{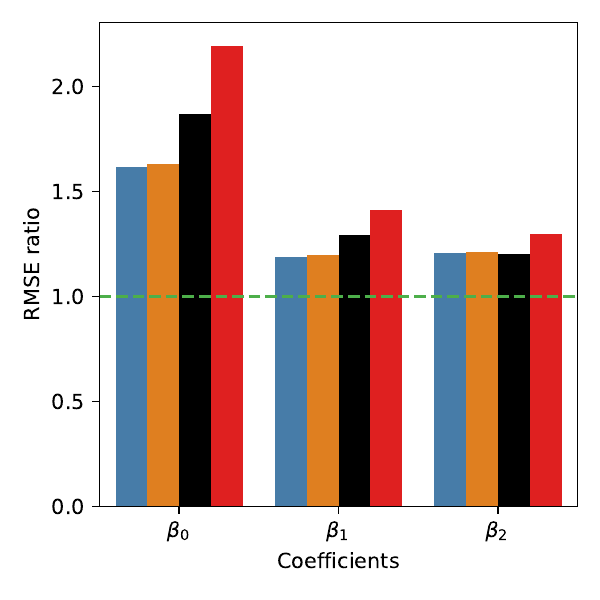}
         \caption{$n = 5000, \tau = 0.9$}
     \end{subfigure}
     \begin{subfigure}[b]{0.4\textwidth}
         \centering
        \includegraphics[width = \textwidth]{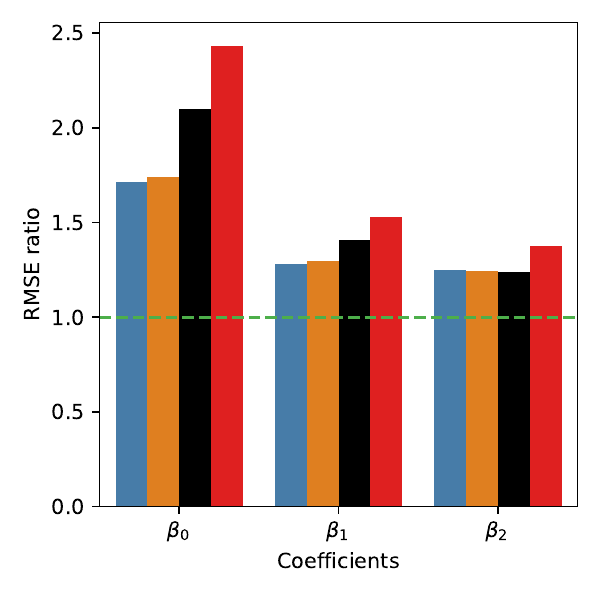}
         \caption{$n = 10000, \tau = 0.9$}
     \end{subfigure}
     \begin{subfigure}[b]{\textwidth}
         \centering
         \includegraphics[width = \textwidth]{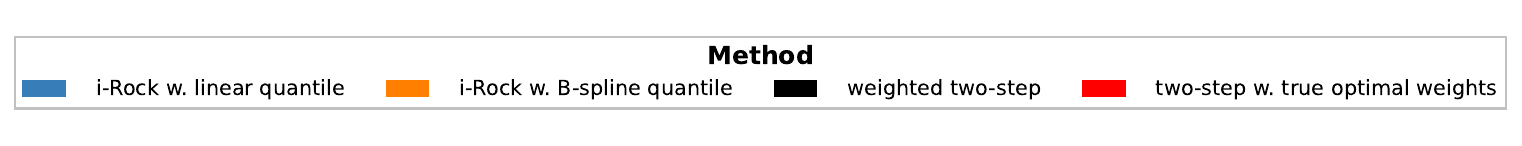}
     \end{subfigure}
     \caption{
     RMSE ratio of the (unweighted) two-step estimator over the (unweighted) 
     i-Rock estimator (with linear or B-spline quantile function estimation) or weighted two-step estimator (with estimated weights or true weights) under the linear heteroscedastic model (Case 5.1) at various quantile levels and sample sizes. The RMSE ratio greater than one indicates better efficiency for the estimator under consideration than the unweighted two-step estimator.}
     \label{fig::weighted_case1}
\end{figure}
\begin{figure}[tbh!]
    \centering
    \begin{subfigure}[b]{\textwidth}
    \centering
\includegraphics[width=0.9\linewidth]{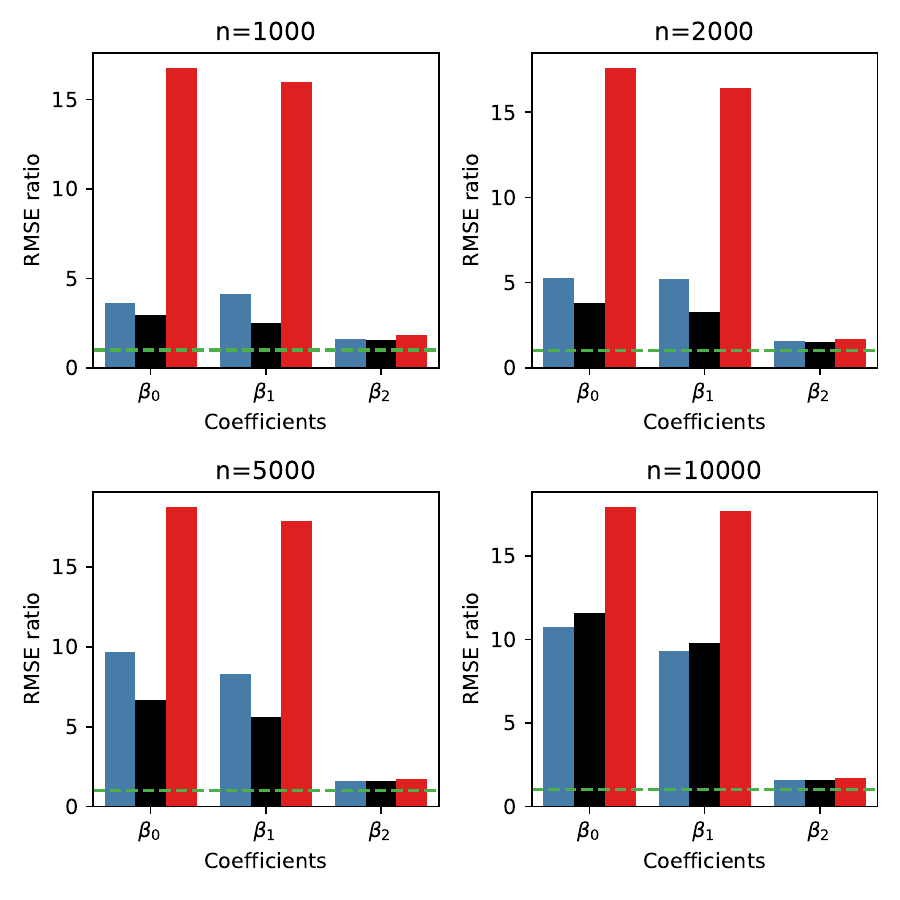}
    \end{subfigure}
    \begin{subfigure}[b]{\textwidth}
         \centering
    \includegraphics[width = \textwidth]{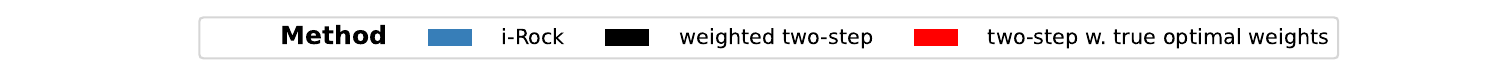}
     \end{subfigure}
     \caption{RMSE ratio of the (unweighted) two-step estimator over the (unweighted) 
     i-Rock estimator (with linear or B-spline quantile function estimation) or weighted two-step estimator (with estimated weights or true weights) under Case 5.3 at various quantile levels and sample sizes. The RMSE ratio greater than one indicates better efficiency for the estimator under consideration than the unweighted two-step estimator.}
     \label{fig::weighted_case3}
\end{figure}

\textbf{Numerical comparisons:}\\
While the three approaches can all achieve optimal efficiency, the weights need to be estimated from data. 
The i-Rock approach, without further weighting, incorporates implicit weights that are usually well correlated with the optimal weights. 
To demonstrate the finite-sample efficiency of the (unweighted) i-Rock approach, we compare it with the weighted two-step (with true and estimated weights) and unweighted two-step approaches. 
We consider the  two data-generating mechanisms given as Cases 5.1 and 5.3 in the main paper.
 
The results are summarized in Figures~\ref{fig::weighted_case1} and \ref{fig::weighted_case3}. In general, the i-Rock and the weighted two-step estimators exhibit superior efficiency relative to the unweighted two-step method. The optimal weighted two-step estimator with true weights attains the highest efficiency, consistent with theoretical expectations, though it is infeasible in practice. On the other hand, the weighted two-step procedure based on estimated weights can achieve greater efficiency than the i-Rock approach when the sample size is sufficiently large ($n=10,000$ in Figure \ref{fig::weighted_case3})  to ensure reliable weight estimation. However, its performance deteriorates when the available data are limited, as substituting the estimated weights $\hat{w}_m$ may induce instability in the weighted two-step estimator.
}

\section{Proof of technical lemmas}\label{append::lemma}
\subsection{Proof of Lemma~\ref{lemma::SQ-monotonicity}}
\begin{proof}
Under Condition \ref{cond::univariate}, the ES process $v_s = \E[Y\mid Y\geq q_s]$ is continuous in $s$, and in particular
\[
\frac{\partial v_s}{\partial s} = \frac{v_s - q_s}{1-s} > 0,\quad s\in[\tau_L,\tau_U],
\]
 which indicates $v_s$ is strictly increasing in $s$.

Next we show that the sample ES $\hat{v}_s\leq \hat{v}_t$ for $\tau_L \leq s < t\leq \tau_U$.
Without loss of generality, we can assume $\hat{q}_s < \hat{q}_t$, where $\hat{q}$ is the sample quantile; otherwise $\hat{v}_s = \hat{v}_t$. Let  $m_1 = \sum_{i=1}^n \bm{1}\{Y_i\geq \hat{q}_t\}$ and $m_2 = \sum_{i=1}^n \bm{1}\{Y_i\geq \hat{q}_s\}$; by the choice of sample quantiles $\hat{q}_s$, we have $m_2 \geq m_1 > 0$. Hence
\begin{eqnarray*}
\hat{v}_t - \hat{v}_s &=& \frac{\sum_{i=1}^nY_i \cdot \bm{1}\{Y_i\geq \hat{q}_t\}}{m_1} - \frac{\sum_{i=1}^nY_i \cdot \bm{1}\{Y_i\geq \hat{q}_s\}}{m_2}
\\
&=& \frac{(m_2-m_1)\sum_{i= 1}^nY_i \cdot \bm{1}\{Y_i\geq \hat{q}_t\} - m_1\sum_{i= 1}^nY_i \cdot \bm{1}\{\hat{q}_t > Y_i\geq \hat{q}_s\}  }{m_1m_2}\\
&\geq & \frac{\hat{q}_t(m_2-m_1) \sum_{i=1}^n \bm{1}\{Y_i\geq \hat{q}_t\} - m_1\hat{q}_t  \sum_{i=1}^n \bm{1}\{ \hat{q}_t > Y_i\geq \hat{q}_s\}}{m_1m_2}
\\
&\geq & 0,
\end{eqnarray*} 
where the equality in the penultimate inequality holds if and only if $m_1 = m_2$.
Therefore, $\hat{v}_s$ is non-decreasing with respect to $s$.

From its monotonicity, the one-sided limit of $\hat{v}_s$ from either the left or right exists. To show the continuity from the left, note that the quantile function $\hat{q}_s$ is left-continuous over $s\in(0,1)$, thus for any $s\in(\tau_L,\tau_U)$,
\begin{gather*}
\lim_{\varepsilon\to0+ }\sum_{i=1}^n\bm{1}\{Y_i\geq \hat{q}_{s - \varepsilon}\} = \lim_{\varepsilon\to0+ }\sum_{i=1}^n\bm{1}\{Y_i\geq \hat{q}_{s} - \varepsilon\} = \sum_{i=1}^n\bm{1}\{Y_i\geq \hat{q}_{s}\} > 0,\\
\lim_{\varepsilon\to0+ }\sum_{i=1}^n Y_i \bm{1}\{Y_i\geq \hat{q}_{s - \varepsilon}\} = \sum_{i=1}^nY_i\bm{1}\{Y_i\geq \hat{q}_{s}\}.
\end{gather*}
Since $\hat{v}_s$ is the ratio of the above displayed equations, we conclude that $\hat{v}_s$ is also continuous from the left.
\end{proof}

\subsection{Proof of Lemma~\ref{lemma::sto-diff-univariate}}

\begin{proof}

Define a class of functions $\mathcal{F} = \{y\mapsto \psi(y,\theta,s) : \theta\in [q_L,q_U], s \in [\tau_L,\tau_U] \}$. We shall show that $\mathcal{F}$ is a Donsker class of functions.
First, note that $\mathcal{F} = \mathcal{F}_1\times\mathcal{F}_2 \triangleq \{fg: f\in\mathcal{F}_1,g\in\mathcal{F}_2\}$, where
\[
\mathcal{F}_1 = \{z\mapsto z- v_s: s \in [\tau_L,\tau_U]\},\quad \mathcal{F}_2 = \{z\to \bm{1}[z\geq \theta]: \theta \in [q_L,q_U]\}.
\]
Since $\mathcal{F}_1$ contains only linear functions and $\mathcal{F}_2$ contains only indicator functions of half lines, it is clear that both $\mathcal{F}_1$ and $\mathcal{F}_2$ are VC classes of functions, and therefore $\mathcal{F}$ also satisfy the uniform entropy condition. (See e.g. Example 19.19 of \citet{van2000asymptotic}.) Next, let $F(y) = [|y| + |v_{\tau_U}| + |v_{\tau_L}|]\bm{1}\{y\geq q_{L}\}$, we can easily verify that $\sup_{f\in\mathcal{F}} |f(z)| \leq F(z)$ and $\E[F^2] < +\infty$ under Condition \ref{cond::univariate}, i.e., $F$ is a suqare-integrable envelope function for $\mathcal{F}$. Therefore, we conclude that $\mathcal{F}$ is Donsker, which follows from Lemma 19.14 of \citet{van2000asymptotic}.

Let $\mathbb{T} = [q_L,q_U]\times [\tau_L,\tau_U]$ be the product space equipped with the semimetric $\rho((\theta,s),(\theta',s')) = \{\E[\psi(Y^*;\theta,s) - \psi(Y^*;\theta',s')]^2\}^{1/2}$. As a consequence of Donskerness, the stochastic process $\Gn[\psi(Y^*;\theta,s)]$ indexed by $(\theta,s)$ is stochastically equi-continuous on $(\mathbb{T},\rho)$, and that $(\mathbb{T},\rho)$ is totally bounded.

Similar to Lemma 19.24 in \citet{van2000asymptotic}, define the map
\[
\begin{array}{cccccc}
g: &\ell^{\infty}(\mathbb{T})&\times &\ell^{\infty}([\tau_L,\tau_U]) &\mapsto &\mathbb{R}\\
&z(\cdot,\cdot)&\times &v(\cdot)&\mapsto&
\sup_{s\in[\tau_L,\tau_U]}|z(v(s),s) - z(q_s,s)|.
\end{array}
\]
First, it is easy to verify that $g(\cdot,\cdot)$ is continuous (with respect to the product metric on $\ell^{\infty}(\mathbb{T})\times\ell^{\infty}([\tau_L,\tau_U])$) at $(z_0,v_0)$, as long as $z_0(\cdot,\cdot)$ is uniformly continuous over $(\mathbb{T},\rho)$.
Second, by its Donskerness, $\Gn(\psi(Y^*;\theta,s))\converged \mathbb{G}_{\infty}(\theta,s)$ in $\ell^{\infty}(\mathbb{T})$, where almost all sample paths of the limit $\mathbb{G}_{\infty}(\theta,s)$ is uniformly continuous on $(\mathbb{T},\rho)$. Third, by assumption we have $\tilde{q}_s \convergeip q_s$ on $\ell^{\infty}([\tau_U,\tau_L])$, which implies that the bivariate process $[\Gn(\psi(Y^*;\theta,s)),\tilde{q}_s]$ also converges weakly. Hence by the continuous mapping theorem,
\begin{eqnarray*}
\sup_{s\in[\tau_L,\tau_U]}\left|\Gn[\psi_{(\tilde{q}_s,s)}] - \Gn[\psi_{(q_s,s)}]\right|
&=&g\circ\left\{\Gn[\psi(Y^*;\theta,s)],\tilde{q}_s\right\} \\
&=&g\circ\left\{\Gn[\psi(Y^*;\theta,s)],\tilde{q}_s\right\}- g\circ\left\{\mathbb{G}_{\infty}[\psi(\theta,s)],q_s\right\} \\
 &\converged& 0,
\end{eqnarray*}
since $g\left\{\mathbb{G}_{\infty}[\psi(\theta,s)],q_s\right\} = 0$. Weak convergence to a constant then implies convergence in probability, which concludes the proof.
\end{proof}



\subsection{Proof of Lemma~\ref{lemma::bin-approx-moment-X}}
\begin{proof}

First we prove the second statement. By the Lipschitz continuity of $g(\cdot)$, we have
\begin{eqnarray*}
&&\left\lVert \E\left[ \sumM \bm{1}\{X\in A_m\} g(\bar{x}_m )h(X) \right] - \E[g(X)h(X)] \right\rVert \\
&\leq& \E\left[ \sumM \bm{1}\{X\in A_m\} \cdot \lVert g(\bxm ) - g(X)\rVert \cdot |h(X)|\right] \\
&\lesssim& \supM\text{diam}(A_m)\cdot \E\left[ |h(X)|\right] \\
&=& \op(1),
\end{eqnarray*}
where the last equality follows from the binning conditions in  Lemma \ref{lemma::bin-approx-moment-X} as well as the absolute integrability of $h$.

For the first claim, we first show the convergence when $\hgamma_m $ is replaced by $\hat{\pi}_m$, where $\hat{\pi}_m$ is given in Lemma \ref{lemma::bin-approx-moment-X}. By re-arranging the summation we have
 \[
\sumM \hat{\pi}_m  g(\bxm ) = \frac{1}{n}\sumM\sumn \bm{1}\{X_i\in A_m \} g(\bxm ) = \frac{1}{n}\sumn\underbrace{\sumM  \bm{1}\{X_i\in A_m \} g(\bxm )}_{U_{i}^{(n)}},
 \]
where $U_i^{(n)}$ depends on the sample size through binning. For each fixed $n$, $U_{i}^{(n)}$ are $i.i.d.$ across $i=1,\ldots,n$, and $\E[|U_i^{(n)}|] < +\infty$  since $g(\cdot)$ is bounded. The Law of Large Numbers gives
\[
\sumM \hat{\pi}_m  g(\bxm ) -\E[U_i^{(n)}] \convergeip 0.
\] 
Furthermore, using the second statement of the lemma, we have $\E[U_{i}^{(n)}] \to \E[g(X)]$, thus
\begin{equation}\label{eq::integration-bins-1}
\sumM \hat{\pi}_m  g(\bxm )\convergeip \E[g(X)].
\end{equation}

Next, we show the first statement of the lemma holds with $\hgamma_m$. Under the conditions on $\hgamma_m$ in Lemma \ref{lemma::bin-approx-moment-X}, we have
\[
\left| \sumM \hgamma_m g(\bxm ) - \sumM \hat{\pi}_m  g(\bxm ) \right| \leq  \sumM \left|\frac{\hgamma_m  - \hat{\pi}_m }{\hat{\pi}_m }\right|\hat{\pi}_m |g(\bxm )| = \op(1),
\] 
since $|g(\cdot)|$ is bounded.
The proof is now complete by combining the above displayed equation with (\ref{eq::integration-bins-1}). 

\end{proof}

\subsection{Proof of Lemma~\ref{lemma::bin-cond-implications}}
\begin{proof}
We prove the three items separately. 

For the first statement, note that $|v(s,x) - q(s,x)|$ is a continuous function in $s$ and $x$ under Condition \ref{cond::bin-SQ}; therefore the upper bound holds since continuous functions are always bounded on compact intervals. We consider the lower bound. By the uniform continuity of $q(s,x)$ on compact intervals, there is a constant $c_1 > 0$ such that $|s-\tau| \leq 2c_1$ implies $|q(s,x) - q(\tau,x)| \leq \varepsilon_0$ for all $x$, where $\varepsilon_0$ is defined in Condition \ref{cond::R-bin-Y-density}. Without loss of generality we assume $c_1 < \varepsilon_0/2$. For each $|s-\tau| < c_1,$ we have
\begin{eqnarray*}
v(s,x) - q(s,x) &=& \dfrac{\int_{s}^1 q(\alpha,x) - q(s,x)\,\d\alpha}{1-s} \\
&\geq& \ddfrac{ \inf_{x,|\alpha - \tau|\leq 2c_1}\left[\frac{\partial q(\alpha,x)}{\partial s}\right]\cdot \int_{\tau + c_1}^{\tau + 2c_1} |\alpha - s|\,\d\alpha}{1-s}\\
&\geq& \ddfrac{1}{\sup_{x,|y-q(\tau,x)|\leq \varepsilon_0}f_{Y\mid X}(y;x)}\cdot \frac{c_1^2}{2(1-s)}.
\end{eqnarray*}
The lower bound in the first statement hence follows from Condition \ref{cond::R-bin-Y-density}.

For the second statement, the derivative of $q(s,x)$ is bounded because $f_{Y\mid X}(y;x)$ is bounded in Condition \ref{cond::R-bin-Y-density}. For the derivative of $v(s,x)$, note
\begin{equation}
\frac{\partial v(s,x)}{\partial s} = \frac{v(s,x) - q(s,x)}{1-s};
\label{eq::SQ-derivative-continuous}
\end{equation}
the boundedness of the above derivative then follows from item 1 of Lemma \ref{lemma::bin-cond-implications}.

Finally we prove the third statement. Since the second statement of Lemma \ref{lemma::bin-cond-implications} implies both $v(s,x)$ and $q(s,x)$ are uniformly (in $x$) Lipschitz continuous in $s\in[\tau-c_1,\tau+c_1]$. Therefore, the derivative $\partial v(s,x)/\partial s$ is also uniformly (over both $x$ and $s\in[\tau-c_1,\tau+c_1]$) Lipschitz continuous from (\ref{eq::SQ-derivative-continuous}). Furthermore, the Lipschitz continuity of $[\partial v(s,x)/\partial s]^{-1}$ follows since $\partial v(s,x)/\partial s$ is uniformly bounded away from $0$ and $+\infty$.
\end{proof}

\subsection{Proof of Lemma \ref{lemma::bin-technical-h}}
\begin{proof}
We need to check items 1, 2 and 3 in the lemma separately. As a preliminary result, note that
\[
\left|\sumM (\hgamma_m -\hat{\pi}_m)\right| \leq \sumM  \hat{\pi}_m \left|\frac{\hgamma_m -\hat{\pi}_m}{\pi_m}\right| = \op(1),
\]
under the conditions of the lemma. Therefore $\sumgammaM = \Op(1)$.

To check item 2, it follows that
\[
\sqrt{n}\sumgammaM \left[\hat{v}_m(\tau) - v_m(\tau)\right]^2 \leq \Op(\sqrt{n}\,r_n^2),
\]
which is a direct consequence of Condition \ref{cond::vhat1}. 

Next we check item 3 in Lemma \ref{lemma::bin-technical-h}. From the monotonicity and (left-)continuity of $\hat{v}(s,\bxm )$ we have $\tau \leq  \hat{h}_m[\hat{v}_m(\tau)] < \tau + g_n$, for any $g_n > 0$ satisfying $\hat{v}_m(\tau + g_n) > \hat{v}_m(\tau)$. Therefore it suffices to show that there exists a sequence $0 < g_n \ll n^{-1/2}$, such that
\[
\inf_{m=1,\ldots,M}[\hat{v}_m(\tau + g_n) - \hat{v}_m(\tau)] > 0,
\]
with high probability; the above displayed inequality means the functions $\hat{v}_m(\cdot)$ are not flat near $\tau$. 
Note for any $ 0 < g_n \ll n^{-1/2}$, we shall have
\begin{eqnarray*}
\inf_{m=1,\ldots,M}[\hat{v}_m(\tau + g_n) - \hat{v}_m(\tau)] 
&\geq& 
\inf_{m}[v_m(\tau + g_n) - v_m(\tau)] \\
&&\;- 
\underbrace{\sup_{\substack{m=1,\ldots,M\\s:|s-\tau| \lesssim n^{-1/2}}}\left|[\hat{v}_m(s) - v_m(s)] - [\hat{v}_m(\tau) - v_m(\tau)]\right|}_{\Op(G_n)}\\
& \geq & 
g_n \cdot \left[\inf_{\substack{m=1,\ldots,M\\|s-\tau| \leq g_n}} v_m'(s)\right] - \Op(G_n),
\end{eqnarray*}
where $G_n \ll n^{-1/2}$ as in the second requirement of Condition \ref{cond::vhat1}. By Lemma \ref{lemma::bin-cond-implications}, $v_m'(s)$ is uniformly bounded from below; therefore by choosing any $g_n$ such that $G_n \ll g_n \ll n^{-1/2}$, the last displayed inequality is positive with probability tending to 1. Item 3 in Lemma \ref{lemma::bin-technical-h} thus takes hold.

\def\hxi{\hat{\xi}}
Finally we check item 1 in Lemma \ref{lemma::bin-technical-h}. Our proof follows the classical treatment in \citet{bahadur1966note}. We first show that $\hat{h}_m(z)$ converge uniformly at a rate of $r_n$, which is given in Lemma \ref{lemma::bin-technical-h}. For each $s$ in a shrinking neighbourhood of $\tau$, and for any fixed $C_1 > 0$, it follows from the definition of $\hat{h}$ in (\ref{eq::def-h}) that
\begin{gather*}
\hat{h}_m[v_m(s)] < s - C_1r_n \;\Rightarrow\; v_m(s) \leq \hat{v}_m(s - C_1r_n),\\
\hat{h}_m[v_m(s)] > s + C_1r_n \;\Rightarrow\; v_m(s) \geq \hat{v}_m(s + C_1r_n),
\end{gather*}
which shows that
\begin{gather*}
\sup_{m=1,\ldots,M}|\hat{h}_m[v_m(s)] - h_m[v_m(s)]| > C_1r_n \\
\Downarrow \\
\sup_{\substack{m=1,\ldots,M\\|u| < C_1r_n}}|\hat{v}_m(s + u) - v_m(s + u)| \geq C_1 r_n\,\inf_{|u| \leq C_1r_n} v_m'(s+u).
\end{gather*}
Note $v_m'(\cdot)$ is uniformly bounded in Lemma \ref{lemma::bin-cond-implications}, hence for sufficiently large $C_1$, the probability of the right hand side of the above displayed equation converges to zero by Condition \ref{cond::vhat1}.
Adding uniformity with respect to $s$, we have that
\begin{equation}
\sup_{\substack{m=1,\ldots,M\\|z - v_m(\tau)|\leq C_2\, (r_n+n^{-1/2}) }}|\hat{h}_m(z) - h_m(z)| = \Op(r_n) = \op(1),
\label{eq::bin-rate-inverse}
\end{equation}
for some $C_2 > 0$; we can use the range $|z - v_m(\tau)|\leq C_2\, (r_n+n^{-1/2}) $ since $h_m$ is uniformly (over $m$) Lipschitz continuous by Lemma \ref{lemma::bin-cond-implications}.

Next we consider the asymptotic equi-continuity of $\hat{h}_m$. Let $z_m = v_m(\tau)$, and fix a $z_m'$ such that $|z_m' - z_m| \leq C_2(n^{-1/2} + r _n)$. Define $\hxi_m = \hat{h}_m(z_m)$, $\hxi_m' = \hat{h}_m(z_m')$. 
Fixing $n$, from the monotonicity and (left-)continuity of $\hat{v}_m$ we have:
\[
\hat{v}_m(\hxi_m) \leq z_m \leq \hat{v}_m(\hxi_m + \varepsilon_n) ,\quad \hat{v}_m(\hxi_m') \leq z_m' \leq \hat{v}_m(\hxi_m' + \varepsilon_n),
\]
for any $\varepsilon_n > 0$; See \citet[Chapter 19]{van2000asymptotic}. Letting $\Delta_m(\cdot) = \hat{v}_m(\cdot) - v_m(\cdot)$, the first set of inequalities above on the left implies
\[
\Delta_m(\hxi_m) \leq [z_m - v_m
(\hxi_m)] \leq \Delta_m(\hxi_m + \varepsilon_n) + v_m(\hxi_m + \varepsilon_n) - v_m(\hxi_m).
\]
Re-arranging the above displayed inequalities gives
\begin{equation}
\begin{aligned}
\Delta_m(\hxi_m) - \Delta_m(\hxi_m' + \varepsilon_n) - \eta_{k}(z_m') \leq [z_m - z_m'] &- [v_m(\hxi_m) - v_m(\hxi_m')] \\
&\leq \Delta_m(\hxi_m + \varepsilon_n) - \Delta_m(\hxi_m') +  \eta_{k}(z_m'),
\end{aligned}
\label{eq::bound-hk-asymp-equi-conti}
\end{equation}
where 
\[
\eta_{k}(z_m') = \max\left\{|v_m(\hxi_m +\varepsilon_n) - v_m(\hxi_m)|,|v_m(\hxi_m' +\varepsilon_n) - v_m(\hxi_m')|\right\}.
\]

We derive the desired asymptotic equi-continuity of $\hat{h}_m$ from (\ref{eq::bound-hk-asymp-equi-conti}). To this end, we bound its left and right hand sides separately. An application of the results in (\ref{eq::bin-rate-inverse}) shows that both $\hxi_m$ and $\hxi_m'$ converges in probability towards $\tau$ uniformly over $m$. It then follows from the Lipschitz continuity of $v_m$ in Lemma \ref{lemma::bin-cond-implications} that
\[
\sup_{\substack{m=1,\ldots,M\\|z_m' - v_m(\tau)| \leq C_2(n^{-1/2} + r_n)}}  |\eta_m(z_m')| = \Op(\varepsilon_n).
\] 
In addition, from (\ref{eq::bin-rate-inverse}) we have
\[
\sup_{m=1,\ldots,M}|\hxi_m - \tau| = \Op(r_n), \quad \sup_{\substack{m=1,\ldots,M\\|z'_m - v_m(\tau)|\leq C_2(n^{-1/2} + r_n)}}|\hxi_m' -\tau | = \Op(r_n + n^{-1/2}),
\]
Then, by choosing $\varepsilon_n =o(n^{-1/2} \wedge r_n)$ we have 
\[
\sup_{\substack{m=1,\ldots,M\\|z_m' - v_m(\tau)| \leq C_2(n^{-1/2} + r_n)}} |\Delta_m(\hxi_m) - \Delta_m(\hxi_m' + \varepsilon_n)| = \op\left(n^{-1/2}\right),
\]
 from the second statement of Condition \ref{cond::vhat1}. The right hand side of Equation (\ref{eq::bound-hk-asymp-equi-conti}) can be bounded with the same argument, hence we have
 \begin{equation}
\sup_{\substack{m=1,\ldots,M\\|z_m' - v_m(\tau)| \leq C_1(n^{-1/2} + r_n)}}\left |[z_m - z_m'] - [v_m(\hxi_m) - v_m(\hxi_m')]\right| = \op\left(n^{-1/2}\right),
\label{eq::bound-hk-asymp-equi-conti2}
\end{equation}
by our choice of $\varepsilon_n$.

Finally, we connect Equation (\ref{eq::bound-hk-asymp-equi-conti2}) with the desired asymptotic equi-continuity of $\hat{h}_m$. Let $\xi_m = h_m(z_m)$, $\xi_m' = h_m(z_m')$, and therefore $z_m - z_m' = v_m(\xi_m) - v_m(\xi_m')$. By the first-order Taylor expansion of $v_m(\cdot)$ we have
\begin{eqnarray*}
\left|[z_m - z_m'] - [v_m(\hxi_m) - v_m(\hxi_m')]\right| 
&=& 
\left|v_m'(\tilde{s}_1)[\xi_m - \hxi_m] - v_m'(\tilde{s}_2)[\xi_m' - \hxi_m']\right| \\
&\geq& 
v_m'(\tilde{s}_1) \left|[\xi_m - \hxi_m] - [\xi_m' - \hxi_m']\right|\\
&&\;\; - \underbrace{\,\supM\left|[v_m'(\tilde{s}_1) - v_m'(\tilde{s}_2)]\cdot[\hxi_m' - \xi_m'] \right|}_{\Op(H_n)}\\
&\geq& c_1\,\left|[h_m(z_m) - \hat{h}_m(z_m)] - [h_m(z_m') - \hat{h}_m(z_m')]\right| \\
&&\;\;-\, H_n,
\end{eqnarray*}
for some $\tilde{s}_1$ in between $\xi_m$ and $\hxi_m$ and some  $\tilde{s}_2$ in between $\xi_m'$ and $\hxi_m'$; the last inequality follows by expanding $\xi_m$ and $\hat{x}_m$, and that $v_m'$ is bounded in Lemma \ref{lemma::bin-cond-implications}. Since both $h_m$ and $v_m'$ is Lipschitz continuous, it follows from (\ref{eq::bin-rate-inverse}) that we can take $H_n = (r_n + n^{-1/2})^2$.
We conclude from the above displayed equation and (\ref{eq::bound-hk-asymp-equi-conti2}) that
\begin{align*}
&\sup_{\substack{m=1,\ldots,M\\z_m = v_m(\tau)\\|z_m' - v_m(\tau)| \leq C_1\cdot (r_n + n^{1/2})}} \left|[h_m(z_m) - \hat{h}_m(z_m)] - [h_m(z_m') - \hat{h}_m(z_m')]\right|\\ 
&= \op\left(n^{-1/2}\right) + \Op\left((r_n + n^{-1/2})^2\right),
\end{align*}
which proves item 1 of Lemma \ref{lemma::bin-technical-h}. The proof is now complete.

\end{proof}




\subsection{Proof of Lemma \ref{lemma::ll-coef}}
\begin{proof}
We first prove the first statement; By definition
\[
\left\lVert S_{0m}^{-1}S_{1m}\right\rVert \leq \frac{\sumn \lVert X_i-\tilde{x}_m\rVert \bm{1}\{X_i\in A_m\}}{\sumn \bm{1}\{X_i\in A_m\}}\leq \bh_m,
\]
for all $m = 1,\ldots,M$.
Therefore
\[
\left| 1- \frac{\hgamma_m}{S_{0m}}\right| = \left|S_{0m}^{-1} S_{1m}^T\bm{S}_{2m}^{-1}S_{1m}\right| \leq  \lVert \bh_m\bm{S}_{2m}^{-1}S_{1m} \rVert,
\]
and hence the first statement follows from the second statement. It suffices to show
\[
\sup_m\lVert \bh_m\bm{S}_{2m}^{-1}S_{1m} \rVert = \op(1).
\]

For the second statement, we first give a uniform probability order bound for $\lVert S_{1m}\rVert_2$, where
\[
n\cdot S_{1m} = \sumn (X_i - \tilde{x}_m)\bm{1}\{X_i\in A_m\} \in \mathbb{R}^p.
\] 
We apply the covering argument to show the convergence of the $\ell_2$ norm.
For any $\alpha\in\mathbb{R}^p$, with $\lVert \alpha\rVert  = 1$, we have
\begin{eqnarray}
\left| \E\left( \alpha^TS_{1m}\right)\right |
&=& 
\left| \alpha^T \int_{z\in A_m}(z - \tilde{x}_m ) f_X(z) \d z\right| \nonumber \\
&\leq& f_X(\tilde{x}_m)\cdot \left|\alpha^T \int_{z\in A_m} (z-\tilde{x}_m)\d z\right|\nonumber\\
&&\;\;+\,
\alpha^T \int_{z\in A_m} |f_X(z) - f_X(\tilde{x}_m)|\cdot\left\lVert z-\tilde{x}_m\right\rVert \d z\nonumber\\
&=& 0 + O\left(\bar{h}_m^{p+2}\right) \label{eq::kernel-expect},
\end{eqnarray}
uniformly over $m$, where the last inequality owns to $\tilde{x}_m$ being the geometric center of $A_m$, as well as the Lipschitz continuity of $f_X$. 
Similarly, we have the following uniform bound for variance
\begin{eqnarray*}
\var\left( \alpha^TS_{1m}\right)
&\leq&
\E[(\alpha^T (X_i - \tilde{x}_m)\bm{1}\{X_i\in A_m\})^2]\\
&=& O\left(\bar{h}_m^{p+2}\right).
\end{eqnarray*}
Furthermore, note the boundedness of $\lVert X_i - \tilde{x}_m\rVert \leq \bar{h}_m$ when $X_i \in A_m$. Application of the Bernstein's inequality \citep[Theorem 2.8.4]{vershynin2018high} gives for any $\varepsilon > 0$,
\begin{eqnarray*}
&&\Pr\left(n\cdot |\alpha^TS_{1m} | \geq  n\bar{h}_m^{p+1}\varepsilon\right) \\
&\leq&
\Pr\left( |\E(\alpha^TS_{1m}) | \geq \bar{h}_m^{p+1}\varepsilon\right)  + \Pr\left( n\cdot|\alpha^TS_{1m} - \E(\alpha^TS_{1m})| \geq n\bar{h}_m^{p+1}\varepsilon\right) \\
&\leq&
 2\exp\left\{-\frac{n^2\bar{h}_m^{2p+2}\varepsilon^2/2}{nO(\bar{h}_m^{p+2}) + n\bar{h}_m^{p+2}\varepsilon/3}\right\} \\
 &=& 2\exp\left\{-C_1n\bar{h}_m^{p}\varepsilon^2\right\},
\end{eqnarray*}
for some constant $C_1 > 0$ whenever $n$ is sufficiently large. 
With the standard covering argument, see e.g., \citet[Chapter 4]{vershynin2018high}, we have
\[
\lVert S_{1m}\rVert = \sup_{\alpha} \alpha^TS_{1m} \leq 2\sup_{j=1,\ldots,J} \alpha_t^TS_{1m},
\]
where $\{\alpha_j\}$ forms a $1/2$-net in the unit $p$-dimensional sphere and the covering number $J \leq 2^p$. Using a union bound over $m$ and $t$ gives
\begin{eqnarray*}
\Pr\left(\sup_{m=1,\ldots,M}\left\lVert \frac{n\cdot S_{1m}}{ n\bar{h}_m^{p+1}}\right\rVert \geq 2\varepsilon \right) 
&=& 2 \sum_{m=1}^M\sum_{j=1}^J \exp\{-C_1n\bar{h}_m^{p}\varepsilon^2\}\\
&\leq& 2\exp\left\{\log M + \log J - C_1 n\varepsilon^2\cdot \inf_{k}\bar{h}_m^p\right\}\\
&\lesssim& \frac{1}{n^3},
\end{eqnarray*}
for sufficiently large $n$ under Condition \ref{cond::bin-bandwidth}, which implies
\[
\sup_{m=1,\ldots,M}\left\lVert \frac{S_{1m}}{\bh_m^{p+1}}\right\rVert = \op\left(1\right).
\]

Next, we prove an analogous result for the operator norm of $\bm{S}_{2m}^{-1}$. Basic matrix algebra gives
\begin{equation}
\lVert \bm{S}_{2m}^{-1}\rVert_{op} =\left(\min_{\alpha\in\mathbb{R}^p} \alpha^T \bm{S}_{2m} \alpha\right)^{-1},
\label{eq::S2-inverse}
\end{equation}
and hence it suffices to bound the right hand side.
For any fixed $\alpha$ in the $p$-dimensional unit sphere, we have, using similar to the derivation in (\ref{eq::kernel-expect}):
\begin{gather*}
\E\left( n\cdot\alpha^T\bm{S}_{2m}\alpha\right) \geq m_1\cdot n\ubh_m^{p+2},\\
\var\left( n\cdot\alpha^T\bm{S}_{2m}\alpha\right) = O\left(n\bh_m^{p+4}\right),
\end{gather*}
for some constant $C_2$; the expectation is lower bounded since the $A_m$ covers a ball with radius $\ubh_m$. With Bernstein's inequality, similar to that used for $S_{1m}$, we have for any $\varepsilon > 0$: 
\[
\Pr\left( n\cdot|\alpha^T\bm{S}_{2m}\alpha - \E(\alpha^T\bm{S}_{2m}\alpha) | \geq n\bh_m^{p+2}\varepsilon \right) 
\leq  2\exp\left\{-C_2 n\bar{h}_m^{p}\varepsilon^2\right\},
\]
and hence for any sufficiently large $M_2 > 0$,
\begin{eqnarray*}
&&\Pr\left(\frac{ n\bh_m^{p+2}}{n\cdot\alpha^T\bm{S}_{2m}\alpha }\geq M_2 \right) \\
&=&
\Pr\left(n\cdot\alpha^T\bm{S}_{2m}\alpha \leq \frac{1}{M_2}  n\bh_m^{p+2} \right)  \\
&\leq&  
\Pr\left(n\cdot |\alpha^T\bm{S}_{2m}\alpha - \E(\alpha^T\bm{S}_{2m}\alpha) | \geq n\cdot\E[\alpha^T\bm{S}_{2m}\alpha]  - \frac{1}{M_2}n\bh_m^{p+2} \right) \\
&\leq &\Pr\left(n\cdot |\alpha^T\bm{S}_{2m}\alpha - \E(\alpha^T\bm{S}_{2m}\alpha) | \geq m_1  n\bh_m^{p+2}/2 \right) \\
&\leq& 2\exp\{-C_2n\bh_m^{p}m_1^2/4\},
\end{eqnarray*}
since $\bh_m/\ubh_m$ is uniformly bounded and the expectation is bounded from below.

Next, applying the same covering argument again, and note the relationship from (\ref{eq::S2-inverse}), we have that 
\begin{eqnarray*}
\Pr\left(\sup_{m=1,\ldots,M}\left\lVert \bh_m^{p+2} \bm{S}_{2m}^{-1}\right\rVert_{op} \geq M_2 \right) 
&\lesssim& \exp\{\log M + p\log 2 -C_2n\ubh^{p}m_1^2/4\} \lesssim \frac{1}{n^3},
\end{eqnarray*}
implying
\[
\sup_{m=1,\ldots,M}\left\lVert \bh_m^{p+2} \bm{S}_{2m}^{-1}\right\rVert_{op} =  \Op(1). 
\]
Therefore the second statement follows by combining  the norm bounds for $S_{1m}$ and $\bm{S}_{2m}^{-1}$. In particular,
\begin{eqnarray*}
&&\Pr\left(\sup_m \lVert \bh_m \bm{S}_{2m}^{-1}S_{1m}\rVert \geq \frac{1}{2}\right)\\
&\leq& \Pr\left(\sup_m \left\lVert (\bh_m^{p+1})^{-1}S_{1m}\right\rVert \geq \frac{1}{2M_2}\right) + \Pr\left(\sup_m \left\lVert \bh_m^{p+2} \bm{S}_{2m}^{-1}\right\rVert_{op} \geq M_2\right)\\
&\lesssim& \frac{1}{n^3}.
\end{eqnarray*}

For the third statement, we give a bound for
\[
S_{0m} = n^{-1}\sumn\bm{1}[X_i\in A_m],
\]
similar to what we did in the second statement. Note that
\[
\E[S_{0m}] = \Pr(X\in A_m) \gtrsim \ubh_m^p,\quad \var[S_{0m}] \leq \Pr(X\in A_m) \lesssim \bh_m^p,
\]
since the density of $X$ is bounded. Therefore for small enough $\varepsilon_0 > 0$, an application of Bernstein's inequality gives
\begin{eqnarray*}
\Pr\left( S_{0m} \leq \varepsilon_0 \ubh_m^{p} \right) 
&\leq&
\Pr\left( S_{0m} - \E[S_{0m}] \leq -\varepsilon_0 \ubh_m^{p}/2 \right) \\
&\leq&
 \exp\left\{-C_3 n\bh_m^p \varepsilon_0^2\right\},
\end{eqnarray*}
for some constant $C_3 > 0$. Taking a union bound with all $m = 1,\ldots,M$ shows
\[
\Pr\left(\inf_m \frac{S_{0m}}{\bh_m^p} \leq \varepsilon_0 \right) \leq \sumM \exp\left\{-C_3n\bh_m^p\varepsilon_0^2\right\} \leq \frac{1}{n^3},
\] 
for sufficiently large $n$ with the bandwidth in Condition \ref{cond::bin-bandwidth}. The proof is now complete.

\end{proof}

\subsection{Proof of Lemma \ref{lemma::proof-U13}}
\begin{proof}
We only give detailed proof for item 1; the conclusion for item 2 holds similarly and we only give an outline. We prove the conclusion specifically for $a_n = r_n$ in Proposition \ref{prop::condI-1} and $b_n = g_{1n}$ as in Condition \ref{cond::bin-Qt}.
We use the same notations in (\ref{eq::NO-smallu}) and define
\begin{gather*}
u_{1i}(s) = [Y_i - q(s,X_i)]\left[\bm{1}\{Y_i \geq \hat{q}(s,X_i)] - \bm{1}\{Y_i \geq q(s,X_i)\}\right],\\
u_{3i}(s) = \left(q(s,X_i) - \hat{q}(s,X_i)\right)\cdot \left[\bm{1}\{Y_i \geq \hat{q}(s,X_i)\} - \bm{1}\{Y_i \geq q(s,X_i)\}\right];
\end{gather*}
Correspondingly the left-hand sides in Lemma \ref{lemma::proof-U13} can be written as
\[
U_{1n}(s,m) = \frac{\sumn w_{im}\kappa_{im}u_{1i}(s)}{\sumn w_{im}},\quad U_{3n}(s,m)=\frac{\sumn w_{im}\kappa_{im}u_{3i}(s)}{\sumn w_{im}}.
\]
Moreover, let
\[
R_q = \supMS |\hat{q}(s,X_i) - q(s,X_i)| = \Op(g_{1n}),
\] 
as in Condition \ref{cond::bin-Qt}.

We consider the decomposition:
\begin{eqnarray*}
|U_{1n}(s,m)| &\lesssim_{\P}& (nS_{0m})^{-1} \sumn w_{im}[y_i - q(s,X_i)]\bm{1}\{q(s,X_i)\leq y_i \leq \hat{q}(s,X_i)\}\\
&&\;\; + (nS_{0m})^{-1}\sumn w_{im}[q(s,X_i) - y_i]\bm{1}\{\hat{q}(s,X_i)\leq y_i \leq q(s,X_i)\}\\
&\leq& (nS_{0m})^{-1}\sumn w_{im} [y_i - q(s,X_i)]\bm{1}\{q(s,X_i)\leq y_i\leq q(s,X_i) + R_q\}\\
&&\;\; + (nS_{0m})^{-1}\sumn w_{im}[q(s,X_i) - y_i]\bm{1}\{q(s,X_i) - R_q \leq y_i \leq q(s,X_i)\}\\
&\triangleq& U^{(+)}_{1n}(s,m) + U^{(-)}_{1n}(s,m),
\end{eqnarray*}
where we use a constant to upper bound $|\kappa_{im}|$ (see Claim \ref{claim6} in the proof of Proposition \ref{prop::condI-1}); and the second inequality holds by monotonicity of the indicator functions. By symmetry, hereafter we focus on the term $U^{(+)}_{1n}(s,m)$.

Let $s_{-} = \tau - Br_n$, $s_{+} = \tau + Br_n$. For any $s\in[s_{-},s_{+}]$ we have
\begin{eqnarray*}
0\leq U^{(+)}_{1n}(s,m) 
&\leq&(nS_{0m})^{-1}\sumn w_{im}[y_i - q(s_{-},X_i)]\bm{1}\{q(s_{-},X_i)\leq y_i \leq q(s_{+},X_i) + R_q\}\\
&\triangleq& \bar{U}_{1n}^{(+)}(k),
\end{eqnarray*}
Therefore we can drop the supremum over $s$ by relying on $\bar{U}_{1n}^{(+)}(k)$, we bound its expectation and centered process separately. In what follows we bound the centered empirical process and the expectation of $\bar{U}_{1n}^{(+)}(k)$ separately.

We give a tail bound for $\E[\bar{U}_{1n}^{(+)}(k)] - \bar{U}_{1n}^{(+)}(k)$ using Hoeffding's inequality (conditional on $X$) and a union bound. 
By Condition \ref{cond::R-bin-Y-density}, each summand in $\bar{U}_{1n}^{(+)}(k)$ is bounded by $[y_i - q(s_{-},x)]\bm{1}\{q(s_{-},x)\leq y_i\leq q(s_{+},x) + R_q\} \lesssim ( s_{+} - s_{-}) + R_q $ for all $x$. 
For any $\delta_1 > 0$, there exists a large enough $M_1 > 0$ that
\begin{eqnarray*}
&&\Pr\left( \sup_{m=1,\ldots,M} \left|\bar{U}_{1n}^{(+)}(k) - \E[\bar{U}_{1n}^{(+)}(k)] \right|\geq M_1(g_{1n} + r_n)\Bigg| X\right)\\
&\leq& 2\exp\left\{ \log n - 2nM_1^2\cdot \inf_mS_{0m}\right\} + \delta_1,
\end{eqnarray*}
where the $\delta_1$ comes from the probability that $R_q \geq M_1g_{1n}$. Similar to how we obtain (\ref{eq::tail-centered-truncated}), the following unconditional  tail bound holds from the above displayed conditional bound:
\begin{eqnarray*}
\Pr\left( \sup_{m=1,\ldots,M} \left|\bar{U}_{1n}^{(+)}(k) - \E[\bar{U}_{1n}^{(+)}(k)] \right|\geq M_1(g_{1n} + r_n)\right)
&\lesssim&
2\delta_1.
\end{eqnarray*}

Here we bound the expectation $\E[\bar{U}_{1n}^{(+)}(k)]$. By Condition \ref{cond::bin-Y-new} we have
\begin{eqnarray*}
&&\E[Y_i - q(s_{-},X_i)]\bm{1}\{q(s_{-},X_i)
\leq Y_i\leq q(s_{+},X_i) + R_q\} \\
& =  &
\E[Y_i - q(s_{-},X_i)]\bm{1}\left\{0 \leq [Y_i - q(s_{-},X_i)]\leq R_q + \underline{f}^{-1}|s_{+} - s_{-}|\right\} \\
&=&
O\left((g_{1n} + r_n)^2\right),
\end{eqnarray*}
since $s_+ - s_- \lesssim r_n$ and $R_q = \Op(g_{1n})$.
Therefore
\[
\supM \E[\bar{U}^{(+)}_{1n}(k)] = \Op\left((g_{1n} + r_n)^2\right).
\]

Combining the bounds for the expectation and the centered empirical process, we arrive at
\[
\supMS U_{1n}^{(+)}(s,m) \lesssim_{\P} \supM \bar{U}^{(+)}_{1n}(k)  = \Op\left((g_{1n} + r_n)^2\right).
\]
Repeating the same procedure for $U_{1n}^{(-)}(s,m)$ would complete the proof for the first item of the Lemma.

For $U_{3n}(s,m)$, note 
\[
|U_{3n}(s,m)| \lesssim_{\P} R_q(nS_{0m})^{-1}\sumn w_{im}\bm{1}[q(s,X_i) - R_q \leq y_i \leq q(s,X_i) + R_q];
\]
therefore we can follow the same line of reasoning and establish
\[
\supMS |U_{3n}(s,m)| = \Op\left((g_{1n} + r_n)^2\right).
\]
The proof is now complete.

\end{proof}

\section{Discussions on initial ES estimator for the i-Rock approach}\label{append::init_ES}
\subsection{On the non-linearity of the initial ES estimator}\label{append::subsec::init_est}
To obtain i-Rock estimator, we need initial estimators of the ES process $\{v(s,X), s\in(\tau-\delta \tau, \tau + \delta(1-\tau))\}$. However, if the initial ES estimator at quantile level $\tau$ is linear in $X$ and $\hat{v}(s,x)$ is monotone in $s$, the i-Rock estimator will 
coincide with the initial ES estimator at quantile level $\tau$, as detailed in Proposition~\ref{prop::linear_ES}. With this result, we opt for non-parametric initial ES estimators for the i-Rock approach. 
\begin{proposition}\label{prop::linear_ES}
    Let $\hat v(\alpha,\cdot)$ denotes an initial estimator of $v(\alpha,\cdot)$, for all $\alpha \in [0,1]$.
    If $\hat v(\tau,x) = x^T \xi$ and $\hat v(s,x)$ is increasing in $s$\footnote{When $\hat v(s,x)$ is increasing in $s$, the solution to~\eqref{eq::linear_init} is not unique but the true coefficient $\xi$ is one of the solutions. If $\hat v(s,x)$ is strictly increasing in $s$, then the solution to ~\eqref{eq::linear_init} is unique and equals to $\xi$.}, then
    \begin{equation}\label{eq::linear_init}
         \xi \in \argmin_{u} \sum_{i=1}^n \hat \gamma_i \int_{0}^1 \rho_{\tau} (\hat v(\alpha,x_i) - x_i^T u) d \alpha,
    \end{equation}
    for any user-specified weights $\hat \gamma_i >0,i=1,\ldots,n$.
\end{proposition}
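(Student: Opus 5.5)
The plan is to show that each summand in the objective, viewed as a function of $u$, has a zero subgradient at $u=\xi$. Summing with the nonnegative weights $\hat\gamma_i$ then shows that $\xi$ is a minimizer of the convex objective. This mirrors Step 2 of the proof of Theorem~\ref{thm::RRM-correct}, with the population ES process replaced by the initial estimator $\hat v(\cdot,x_i)$, and it needs no distributional assumptions.

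\emph{Convexity.} First, I would note that $G(u)=\sum_{i=1}^n\hat\gamma_i\int_0^1\rho_\tau(\hat v(\alpha,x_i)-x_i^Tu)\,\d\alpha$ is convex in $u$. This holds because $\rho_\tau$ is convex and the weights are positive, exactly as in Part 1 of Proposition~\ref{prop::m-Rock-loss-discrete}. It therefore suffices to check first-order optimality at $\xi$. Since $G$ may fail to be differentiable, I would use directional derivatives. By the computation in \eqref{eq::derivative-discrete}, for any direction $w$,
\[
\nabla_w G(\xi)=-\sum_{i=1}^n\hat\gamma_i\,x_i^Tw\Big(\tau-\mathrm{Leb}\{\alpha:\hat v(\alpha,x_i)<x_i^T\xi\}-\bm{1}\{x_i^Tw>0\}\,\mathrm{Leb}\{\alpha:\hat v(\alpha,x_i)=x_i^T\xi\}\Big).
\]
It is enough to show that $\nabla_w G(\xi)\ge 0$ for every $w$.

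\emph{Fixing the level sets at $\tau$.} The key step uses monotonicity together with $\hat v(\tau,x_i)=x_i^T\xi$. For each $i$, the set $\{\alpha:\hat v(\alpha,x_i)<x_i^T\xi\}$ is an interval $[0,a_i)$ with $a_i\le\tau$, because every $\alpha\ge\tau$ has $\hat v(\alpha,x_i)\ge x_i^T\xi$. Likewise, $\{\alpha:\hat v(\alpha,x_i)\le x_i^T\xi\}$ is an interval with right endpoint $b_i\ge\tau$. The equality set is then $[a_i,b_i]$, up to endpoints. Write $A_i=a_i$ and $E_i=b_i-a_i$, so that $A_i\le\tau\le A_i+E_i$. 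The $i$-th bracket then equals $\tau-A_i-\bm{1}\{x_i^Tw>0\}E_i$. If $x_i^Tw>0$, this bracket is $\le 0$, and multiplying by $-x_i^Tw$ gives a contribution $\ge 0$. If $x_i^Tw\le 0$, the bracket is $\tau-A_i\ge0$, and again the contribution $-x_i^Tw(\tau-A_i)$ is $\ge0$. Each term is therefore nonnegative, so $\nabla_wG(\xi)\ge0$ for all $w$, and convexity gives $\xi\in\argmin G$.

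\emph{Uniqueness in the strict case.} For the footnote's strictly increasing case, strict monotonicity gives $A_i=\tau$ and $E_i=0$. For $u\ne\xi$, any $i$ with $x_i^T(u-\xi)\neq0$ makes $G(u)>G(\xi)$, since the integrand mass near $\tau$ is strictly moved. If the $x_i$ span the whole space, such an $i$ exists, which yields uniqueness.

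The only delicate point is handling non-differentiability. Working with one-sided directional derivatives, rather than gradients, takes care of this, and the sign argument above is the crux of the proof.
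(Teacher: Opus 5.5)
Your argument for the inclusion is correct and is essentially the paper's. The paper writes the $i$-th term as $E_U\,\rho_\tau(S_i-x_i^Tu)$ with $S_i=\hat v(U,x_i)$ and $U\sim U(0,1)$, notes that $\hat v(\tau,x_i)=x_i^T\xi$ is a $\tau$-quantile of $S_i$ by monotonicity, and concludes that $\xi$ minimizes every term separately. Your inequalities $A_i\le\tau\le A_i+E_i$ say exactly that $x_i^T\xi$ is a $\tau$-quantile of $S_i$, and your termwise sign check of the directional derivatives reproves that a quantile minimizes the expected check loss.

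The step that does not go through is your uniqueness sketch for the footnote, which the paper itself never proves. Under strict monotonicity you correctly get $A_i=\tau$ and $E_i=0$. But then every directional derivative of $G$ at $\xi$ is zero, so first-order information gives no strictness. The claim $G(u)>G(\xi)$ because ``mass near $\tau$ is strictly moved'' is therefore unjustified, and it can fail. Take an intercept-only design with $\hat v(\alpha)=\alpha$ for $\alpha\le\tau$ and $\hat v(\alpha)=\alpha+1$ for $\alpha>\tau$; this is strictly increasing and left-continuous. Then $F(t)=\mathrm{Leb}\{\alpha:\hat v(\alpha)\le t\}=\tau$ for all $t\in[\tau,\tau+1)$, so every $u\in[\tau,\tau+1]$ is a minimizer.

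The fix is to assume in addition that $\hat v(\cdot,x_i)$ is continuous at $\tau$. Then $F_i(t)>\tau$ for $t>x_i^T\xi$ and $F_i(t)<\tau$ for $t<x_i^T\xi$. The identity
\[
g_i(q)-g_i(x_i^T\xi)=\int_{x_i^T\xi}^{q}\{F_i(t)-\tau\}\,\mathrm{d}t,
\]
where $g_i$ is the $i$-th term as a function of $q=x_i^Tu$, then gives a strict increase whenever $q\neq x_i^T\xi$. Combined with your assumption that the $x_i$ span the whole space, this yields uniqueness.
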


\begin{proof}[Proof of Proposition~\ref{prop::linear_ES}]
    Define a random variable $S_i = \hat v(U,x_i)$ where $U$ follows a Uniform distribution between 0 and 1. Suppose $(x_i,y_i)$ are fixed, i.e., the only randomness of $S_i$ comes from $U$, we have
    \begin{equation*}
        \begin{aligned}
            \argmin_{q} \int_{0}^1 \rho_{\tau} (\hat v(\alpha,x_i) - q) d \alpha &= \argmin_{q} E_U(\rho_{\tau} (S_i - q))\\
            &= q_{[S_i]}(\tau)\\
            &= \hat v(\tau,x_i)\\
            &=x_i^T \xi,
        \end{aligned}
    \end{equation*}
    where the first equality follows from the distribution of $S_i$, the second equality follows from that the quantile is the minimizer of the expected quantile loss function, and the third equality follows from the monotonicity of $\hat v$. 
    The conclusion holds since $\xi$ serves as the minimizer of every term in \eqref{eq::linear_init}. 
\end{proof}
\subsection{On the monotonicity of the initial ES estimator}

\label{subsec::monotonicity}

In Condition \ref{cond::vhat1}, we require the initial ES estimator $\hat{v}(s,\bxm)$ to be monotonically increasing in $s\in(0,1)$. 
The monotonicity can be achieved by re-arrangement of a given estimator \citep{chernozhukov2009improving,chernozhukov2010quantile}. Here we provide a technical perspective that suggests monotonicity may not be necessary for Theorem \ref{thm::asymptotic-RRM-bin}.

Even when the initial ES estimators are not monotone, the i-Rock approach still has a clear interpretation as finding the $\tau$th `re-arranged' ES \citep{chernozhukov2009improving}.
To see this, consider the univariate case with no covariate as an illustrative example. Following (\ref{eq::mRock-est-discrete}), the i-Rock approach solves:
\begin{equation}
  \min_{C}\int_0^1\rho_\tau\left(\hat{v}(s) - C\right)\d s \approx \frac{1}{J}\min_C\sum_{j=1}^J\rho_\tau\left(\hat{v}(s_j) - C\right),
  \label{eq::Rock-monotone-illustrate}
\end{equation}
where we discretize the integral above as a grid $s_1,\ldots,s_J\in(0,1)$. The solution to (\ref{eq::Rock-monotone-illustrate}) is approximately the $\tau$-th quantile of the (possible unordered) set $\{\hat{v}(s_1),\ldots,\hat{v}(s_J)\}$.
Operationally, the i-Rock approach gives exactly the $\tau$-th \textit{monotonically re-arranged} superquantile in \citet{chernozhukov2009improving}.

Following this insight, we now demonstrate that the proof of Theorem \ref{thm::asymptotic-RRM-bin} may adapt to situations where $\hat{v}(s,\bxm)$ is not  monotonic.
As we've demonstrated in the proof of Theorem \ref{thm::asymptotic-RRM-bin}, central to the main result is the asymptotic properties of $\hat{h}(\cdot,\tilde{x}_m)$; Without monotonicity, $\hat{h}(\cdot,\tilde{x}_m)$ is defined by
\[
\hat{h}(z,x) := \int_0^1\bm{1}\{\hat{v}(s,x) \leq z\}\,\d s = \sup\{s\in[0,1]: \hat{v}(s,x) \leq z\}.
\]
The functional $\hat{h}(\cdot,\tilde{x}_m)$ is the \textit{monotonized inverse} operator in \citet{chernozhukov2010quantile}; note when $\hat{v}(\cdot,\tilde{x}_m)$ is indeed monotonic, $\hat{h}(\cdot,\tilde{x}_m)$ reduces to the classic inverse operator defined in (\ref{eq::def-h}). 
Corollary 3 of \citet{chernozhukov2010quantile} establishes the Hadamard differentiability of $\hat{h}(\cdot,\tilde{x}_m)$, and shows that its asymptotic property does not rely on the (finite-sample) monotonicity of $\hat{v}(\cdot,\tilde{x}_m)$.
With some technical modification, we expect that their proof can be adapted to our setting, therefore our main result, i.e., Lemma \ref{lemma::bin-technical-h} and Theorem \ref{thm::asymptotic-RRM-bin}, can be established without the monotonicity requirement in Condition \ref{cond::vhat1}.

\begin{remark}
As yet another technical solution, one may pursue the following strategy: first find $J$ equally-spaced grid points $0<\tau_1<\ldots<\tau_J < 1$ that spans the interval $[0,1]$. Then we can estimate the initial ES on the grid with linear interpolations in between the grid points. The monotonicity follows with probability going to $1$ provided that $r_n \ll (\tau_{j+1} - \tau_j) \ll n^{-1/4}$, where $r_n$ is in Condition \ref{cond::vhat1}.
\end{remark}

\subsection{On the bias in the initial ES estimator}
\label{subsec::bias}
Here we illustrate the importance to control the bias in the initial ES estimator. Consider the following example with fixed design in a unit cube $[0,1]^p$ (excluding intercept); and we define the bins as hypercubes with edge length $\bh$ and therefore we have $M \leq \lceil \bh^{-p}\rceil$ total bins. Suppose we use a standard Nadaraya-Watson type estimator for the initial ES in each bin $m$, and the $\tau$-th ES estimator can be represented as $\hat{v}_m(\tau) - v_m(\tau) = B_m + U_m$, where $\E[U_m] = 0$ and $B_m$ is the bias.

We consider the plausibility of Condition \ref{cond::bin-linear}. From the results in \citep{kato2012weighted}, $B_m = \Op(\bh^2)$ and $U_m = \Op((n\bh^p)^{-1/2})$. Because each $\hat{v}_m$ are based on local observations in disjoint bins, $(B_m,U_m)$ is independent across $m = 1,\ldots,M$. Therefore, the aggregation for $U_m$ gives
\[
\sqrt{n}\sum_{m=1}^M U_m = \sqrt{nM}\cdot\Op\left(\frac{1}{\sqrt{nh^{p}}}\right) = \Op(1),
\]
by the Central Limit Theorem, provided that $\sqrt{n}\bh^p \to 0$.
On the other hand, for the bias terms we have
\[
\sqrt{n}\sum_{m=1}^M B_m  = \sqrt{n}M\cdot\Op(\bh^2) = \Op\left(\sqrt{n} h^{2-p}\ \right);
\]
the order of the above sum goes to infinity if $p > 1$, hence the bias dominates in the aggregation of $\hat{v}_m$ in Condition \ref{cond::bin-linear}. Therefore, it is critical to reduce the bias when constructing the initial ES estimator.

\section{Numerical investigations}
\subsection{Implementation}\label{subsec::imple}
\begin{algorithm}[tb]
\caption{
The i-Rock estimation of the $\tau$-th ES regression with discrete covariates.
}
\label{alg::mRock_dis}
\begin{algorithmic}[1]
\setlength{\abovedisplayskip}{3pt} 
\setlength{\belowdisplayskip}{0pt} 
\setlength{\abovedisplayshortskip}{0pt} 
\setlength{\belowdisplayshortskip}{0pt} 

\STATE Input:
$\{(x_m,Y_{mj}):\; j = 1,\ldots, n_m;  \,m = 1,\ldots,M\}$, $\tau$, $\delta = 0.5$ (default). 
\STATE Form an equally-spaced grid over the interval $[\tau - \delta\tau,\,\tau + \delta(1-\tau)]$ as
\[
\tau - \delta\tau = s_0< s_1 <\ldots < s_J  = \tau + \delta(1-\tau).
\]
\FOR{$m=1$ to $M$}
\FOR{$j=0$ to $J$}
\STATE 
\label{alg::step-discrete} 
Obtain the initial ES estimator at level $s_j$, 
\[
\hat{v}_{mj} \gets \ddfrac{\sum_{j=1}^{n_m} Y_{mj}\bm{1}\{Y_{mj} \geq \hat{q}(s,x_m)\}}{(1-s)n/M}
\]
\ENDFOR
\ENDFOR
\STATE Solve the (approximate) optimization problem via quantile regression
\[
\begin{aligned}
\widehat{\theta} \;\;&\gets \;\;\min_{\theta\in\mathbb{R}^{p+1}} \sum_{m=1}^M n_m \int_{\tau - \delta\tau}^{\tau + \delta(1-\tau)}\rho_\tau\left(\hat{v}\left(s, x_m\right) - x_m^T\theta\right)\d s.\\
&\approx  \;\; \min_{\theta\in\mathbb{R}^{p+1}}  \frac{1}{1+J}\sum_{m=1}^M \sum_{j=0}^J 
n_m \rho_\tau\left(\hat{v}_{ij} - x_m^T\theta\right).
\end{aligned}
\]
\end{algorithmic}
\end{algorithm}

\begin{algorithm}[!h]
\caption{
The i-Rock estimation of the $\tau$-th ES regression with continuous or mixed covariates.
}
\setlength{\abovedisplayskip}{3pt} 
\setlength{\belowdisplayskip}{0pt} 
\setlength{\abovedisplayshortskip}{0pt} 
\setlength{\belowdisplayshortskip}{0pt} 
\label{alg::mRock1}
\begin{algorithmic}[1]
\STATE Input: $\{(X_i, Y_i): i=1,\ldots,n\}$, $\tau$, $\delta = 0.5$ (default). 
\STATE \label{alg::step-interval} Form an equally-spaced grid over the interval $[\tau - \delta\tau,\,\tau + \delta(1-\tau)]$ as
\begin{equation*}
    \tau - \delta\tau = s_0< s_1 <\ldots < s_J  = \tau + \delta(1-\tau).
\end{equation*}
\vspace{-12pt}
\STATE Partition each continuous covariate by equally-spaced quantiles with $k = \lceil 0.5 \sqrt{p} \times n^{1/(2p)}\rceil$ bins, and each discrete covariate by distinct values. The disjoint bins $B_m,m=1,\ldots,M$ are created from all combinations of the partitions of the covariates.
\FOR{$m=1$ to $M$}
\STATE Calculate the geometric center $\bar x_i$ of bin $B_m$ by combining the mean of the two ends for continuous covariates and the unique value of the discrete covariates within bin $B_m$. 
\STATE Find the closest observation to the geometric center of bin $B_m$:
\begin{equation*}
    X_{(m)} = \argmin_{x \in B_m} \left\| x - \bar x_m \right\|_2
\end{equation*}
\STATE Let $j_0$ be such that $s_{j_0} \leq \tau - 0.5\delta\tau < s_{j_0 + 1}$.
\FOR{$j=j_0$ to $J$}
\STATE \label{alg::step-qt} Obtain the conditional quantile estimator at level $s_j$,
\[
 \hat{q}^{(m)}(s_j,x),\quad x\in B_m.
\]
\vspace{-12pt}
\STATE \label{alg::step-sq} Obtain the initial ES estimator at level $s_j$ from~\eqref{eq::ll-SQ} where $\bm{{X}}_m$ takes only continuous covariates, with quantile estimates $\hat{q}^{(m)}(s_j,x)$, 
\[
\hat{v}_{mj} \gets \hat{v}\left(s_j,X_{(m)}\right).
\]
\vspace{-12pt}
\ENDFOR
\ENDFOR
\STATE \label{alg::step-winsorization} Fill $\{\hat{v}(s_j,x):s_j \in [\tau - \delta \tau,\tau - 0.5\delta \tau)\}$ with the constant $\hat{v}(\lceil \tau - 0.5\delta \tau\rceil,x)$.
\STATE \label{alg::step-i-rock} Solve the (approximate) optimization problem via quantile regression
\[
\begin{aligned}
\widehat{\theta} \;\;&\gets \;\;\min_{\theta\in\mathbb{R}^{p+1}} \sum_{m=1}^M \hat{\gamma}_m \int_{\tau - \delta\tau}^{\tau + \delta(1-\tau)}\rho_\tau\left\{\hat{v}\left(s, X_{(m)}\right) - X_{(m)}^T\theta\right\}\d s.\\
&\approx  \;\; \min_{\theta\in\mathbb{R}^{p+1}}  \frac{1}{1+J}\sum_{m=1}^M \sum_{j=0}^J\hat{\gamma}_m\rho_\tau\left(\hat{v}_{mj} - X_{(m)}^T\theta\right),
\end{aligned}
\]
where $\hat{\gamma}_m$ is in (D.2) of the online Supplementary Material. 
\end{algorithmic}
\end{algorithm}

For discrete covariates, we provide an implementation of the i-Rock approach summarized in Algorithm~\ref{alg::mRock_dis} based on Section~\ref{sec::discrete} and Corollary~\ref{coro::truncation}. 
For continuous or mixed covariates, we adopt a variant of the formulation in~\eqref{eq::estimator-rock-binning} summarized in Algorithm~\ref{alg::mRock1}, which uses the bin-wise local-linear initial ES estimator introduced in Section~\ref{subsec::examples}.
Below we comment on several aspects of our implementation. 
We partition the covariate space by binning each covariate. Discrete covariates are naturally partitioned according to their distinct values, while continuous covariates are divided using breakpoints at equally spaced quantiles.
For the subsequent experiments, we set the number of bins for each continuous covariate as $k = \lceil 1.6 \sqrt{p} \times \{\sqrt{n}/\log(n)\}^{1/p}\rceil$ in our simulation study. 

In Algorithm \ref{alg::mRock1}, we obtain initial ES estimators only at quantile levels in $[\tau - 0.5\delta \tau, \tau + \delta (1 - \tau)]$, due to Corollary~\ref{coro::truncation} and the use of winsorization \citep{wilcox2005trimming}, which extrapolates the initial ES estimators with a constant into the lower tails.
In Step~\ref{alg::step-winsorization} of Algorithm~\ref{alg::mRock1}, we use $50\%$ left-winsorization, i.e., setting $\{\hat{v}(s,x):s \in [\tau - \delta \tau,\tau - 0.5\delta \tau)\}$ to a constant $\hat{v}(\tau - 0.5\delta \tau,x)$. 
The shrinkage of the estimating interval for initial ES estimators not only significantly reduces the computational costs, but improves statistical accuracy by avoiding the ES estimation at extreme quantile levels that may have high variability. 
In our experiments, we fix $\delta=0.5$ and the number of intervals 
$J = \lceil \sqrt{70n\log(n)}\rceil$.

\subsection{Asymptotic normality check for Section 5}
We provide additional numerical results to Section 5 of the main manuscript. 
Under Model (25) of the main manuscript, we verify that the sampling distribution of the i-Rock estimator $(\hat \beta_0,\hat \beta_1,\hat \beta_2)$ follows a normal distribution very closely.  We test whether $\hat \beta_0/\sigma_0^*$, $\hat \beta_1/\sigma_1^*$, and  $\hat \beta_2/\sigma_2^*$ follow the standard normal distribution, respectively, where $(\sigma_0^*)^2$, $(\sigma_1^*)^2$, and $(\sigma_2^*)^2$ are the theoretical asymptotic variances calculated from (24) of the main manuscript for $\beta_0$, $\beta_1$, and $\beta_2$, respectively. To this end, we 
perform the Kolmogorov–Smirnov (KS) test comparing against the standard normal distribution. 
For the i-Rock estimator with the B-spline quantile function, the p values from the KS test across different quantile levels and sample sizes are summarized in Table~\ref{tab::2dim_KS}.
To verify that these p values are uniformly distributed on $(0,1)$, another KS test was performed for the p values comparing against $U(0,1)$, resulting in a p-value of 0.994. This result empirically validates the asymptotic normality of the i-Rock estimator and its asymptotic variance in (24) of the main manuscript.
\begin{table}[!htp]
    \centering
    \begin{tabular}{cc|ccc}
$\tau$ & $n$ &     $\hat \beta_0/\sigma_0$ &  $\hat \beta_1/\sigma_1$ &  $\hat \beta_2/\sigma_2$ \\
\hline
0.9 & 10000 &   0.018&
 0.661&
 0.167\\
0.9 & 5000 & 0.738&
 0.556&
 0.988 \\
0.8 & 10000 & 0.162&
 0.319&
 0.724\\
0.8 & 5000 & 0.456&
 0.238&
 0.804\\
\end{tabular}
    \caption{The p values from the KS test against the standard normal distribution for the i-Rock estimators divided by their theoretical asymptotic standard error under Model (25).}
    \label{tab::2dim_KS}
\end{table}

Under Model (26), we present the Q-Q plots in Figure~\eqref{fig::2d_qq} to verify the asymptotic normality for the i-Rock estimators with B-spline quantile function. 
 \begin{figure}[t]
    \centering
\includegraphics[scale=0.4]{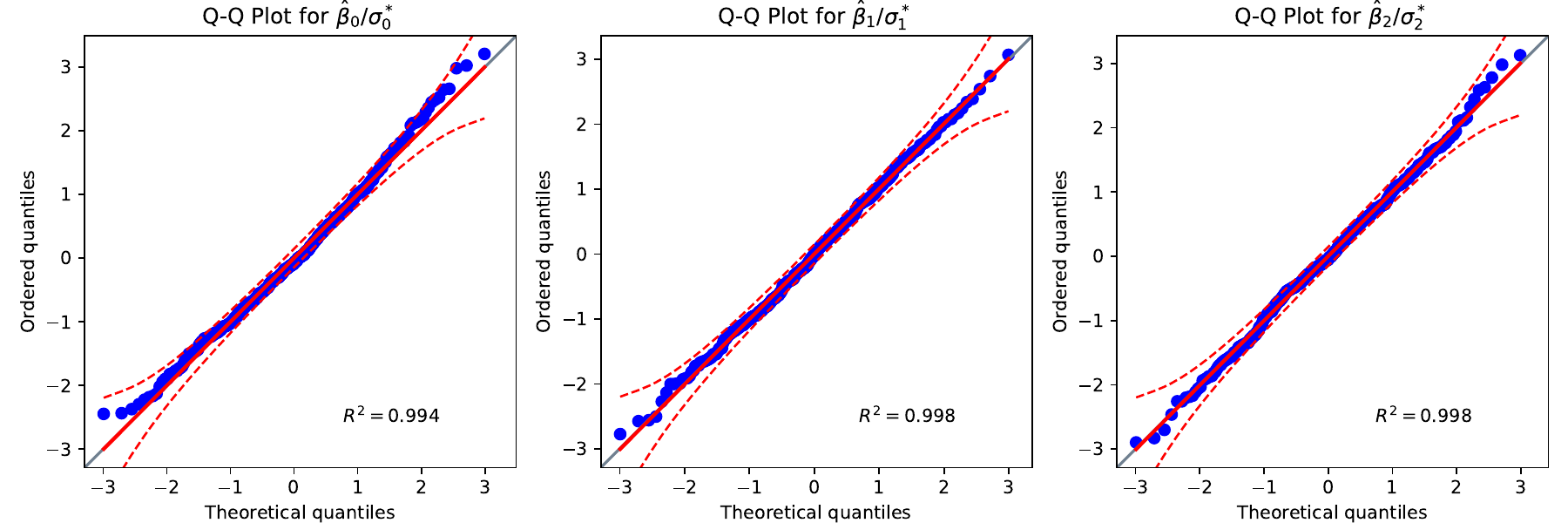}
    \caption{Q-Q plot with 99\% confidence interval (in red dashed curves) under Model (26) for the i-Rock estimators with the B-spline quantile model, normalized by the theoretical asymptotic variances, namely, $\hat \beta_0/\sigma_0^*$ and $\hat \beta_1/\sigma_1^*$, when $\tau=0.9$ and $n=10000$.}
    \label{fig::2d_qq}
\end{figure}

\subsection{Comparisons to quantile average approach}
We designed a numerical comparison with the ``quantile average approach", specifically averaging over a sequence of linear quantile estimators from quantile levels $\tau$ to $1$ with step size 0.002. Here, we compare our proposed estimator with the quantile average approach for the three cases considered in the main manuscript. 

In case 1, we consider a linear heteroscedastic model
\begin{equation}\label{eq::2d_cont_disc}
    Y_i = \{1 + U \} + (2 + 2 U) X_{i,1} + \{3 + 3 U\} X_{i,2}, \quad i=1,\ldots,n,
\end{equation}
where 
$U$ follows $U(0,1)$, $X_{i,1}$ follows $U(0,4)$, 
and $X_{i,2}$ are independently distributed from $\{0,1\}$ with equal probability. Here, both quantile and ES are linear in the covariates. Figure~\ref{fig::2d_cont_disc} shows the relative bias and the RMSE ratio comparisons for $\tau \in \{0.8,0.9\}$ and $n\in\{5000,10000\}$. 
The i-Rock estimators with both the linear and the B-spline quantile regression estimation outperform the two-step estimator in bias and RMSE in all settings. The i-Rock approach performs similarly to the quantile average approach, while the latter exhibits a larger bias.

\begin{figure}[htb]
\centering 
\begin{subfigure}[b]{0.48\textwidth}
         \centering
         \includegraphics[width = \textwidth]{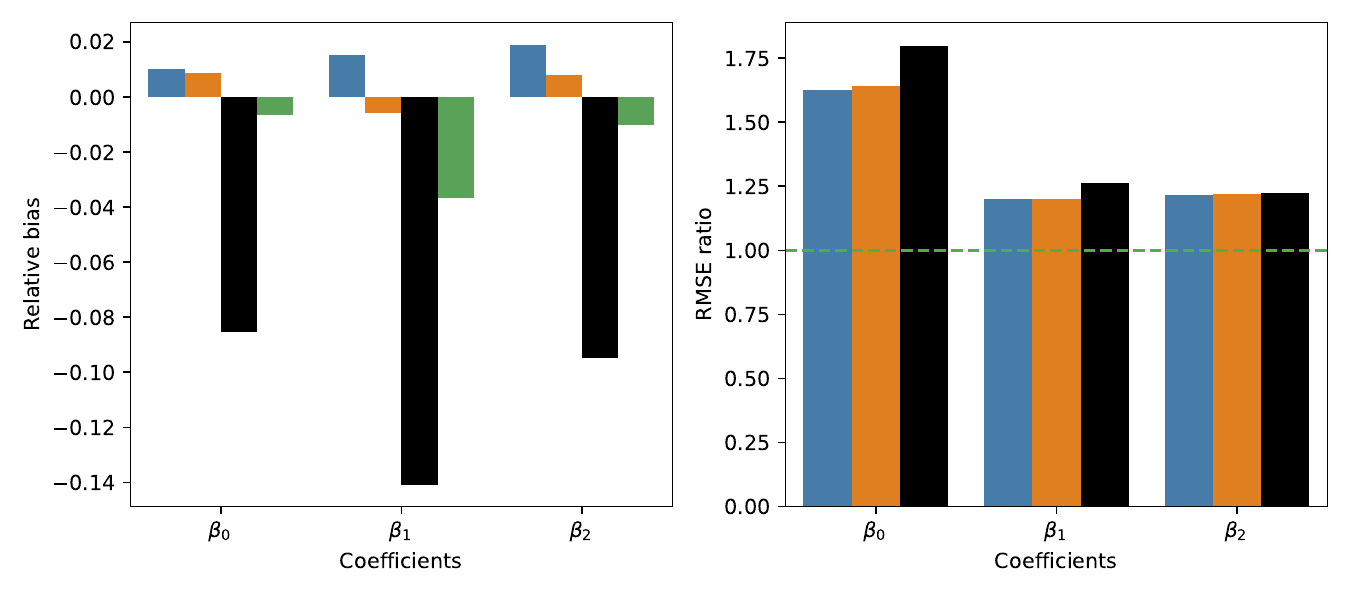}
         \caption{$n = 5000, \tau = 0.8$}
     \end{subfigure}
     \begin{subfigure}[b]{0.48\textwidth}
         \centering
         \includegraphics[width = \textwidth]{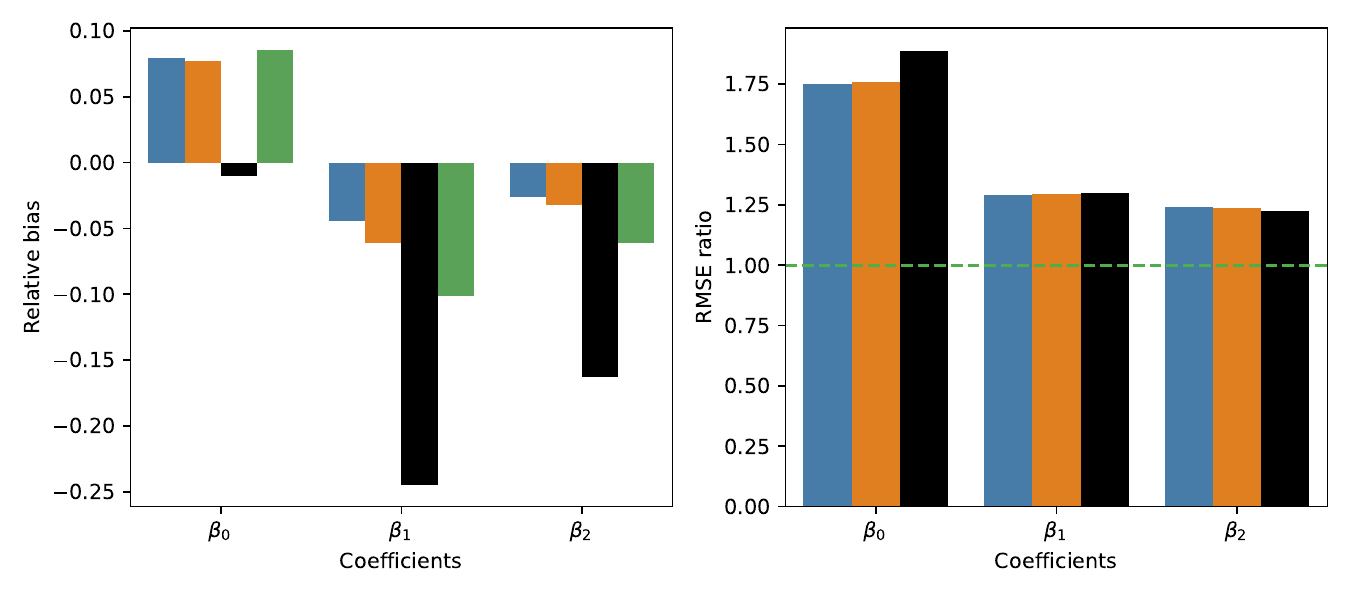}
         \caption{$n = 10000, \tau = 0.8$}
     \end{subfigure}
     \begin{subfigure}[b]{0.48\textwidth}
         \centering
         \includegraphics[width = \textwidth]{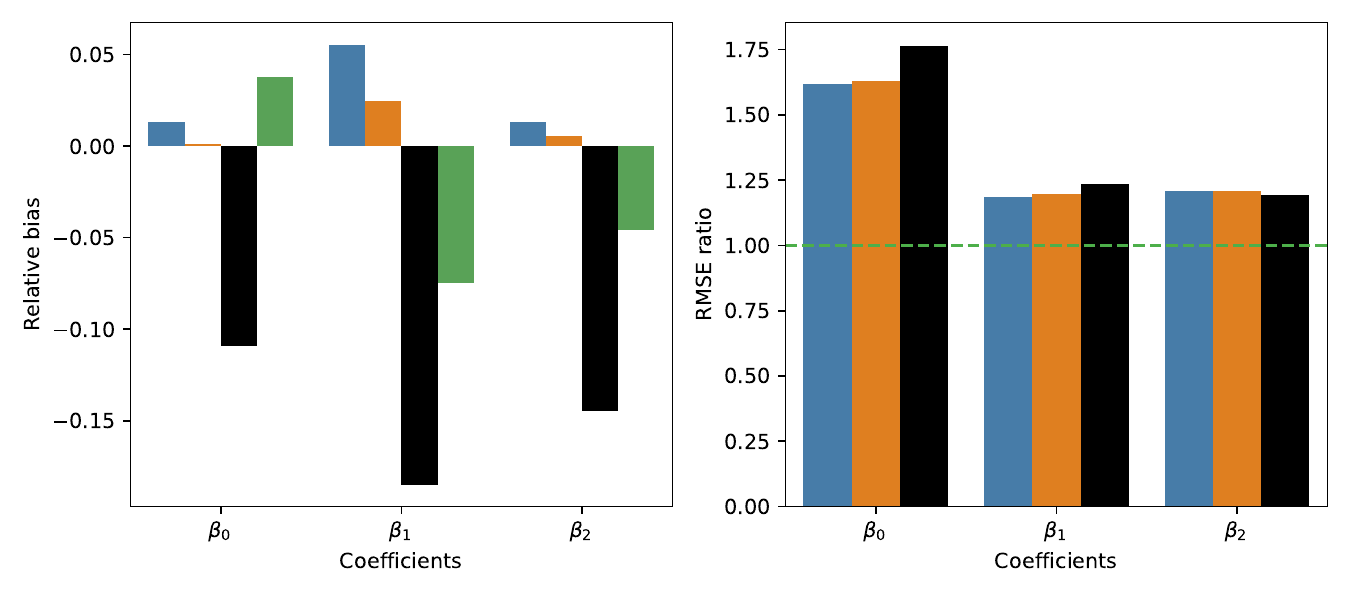}
         \caption{$n = 5000, \tau = 0.9$}
     \end{subfigure}
     \begin{subfigure}[b]{0.48\textwidth}
         \centering
         \includegraphics[width = \textwidth]{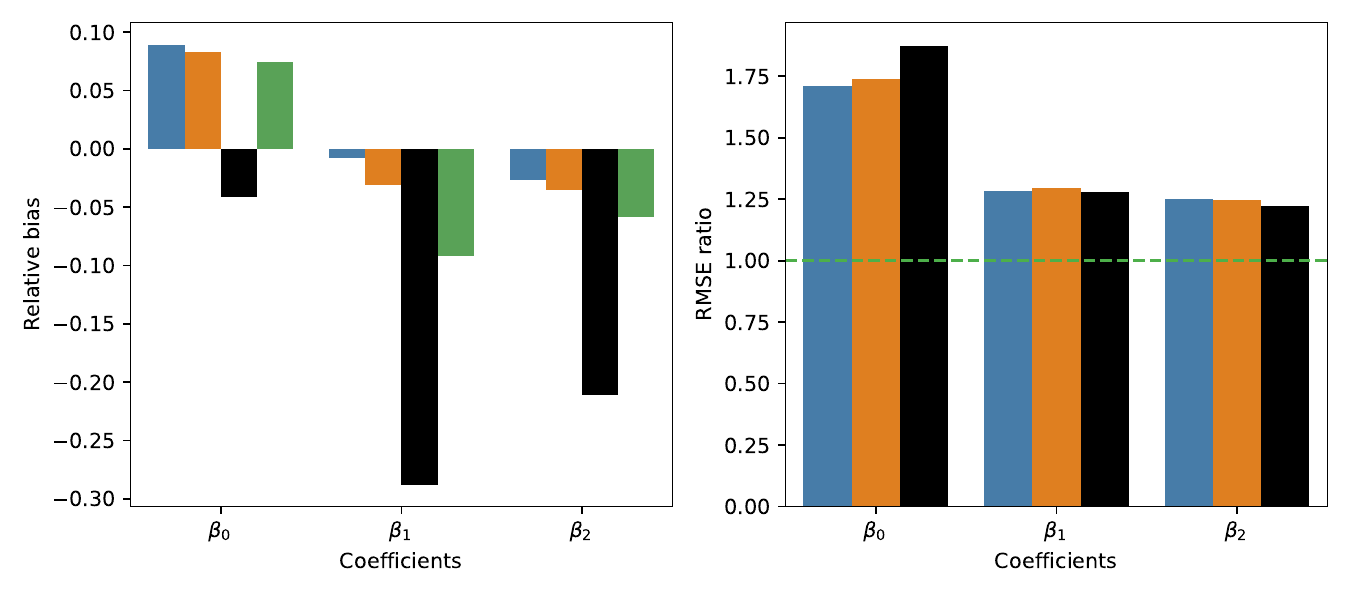}
         \caption{$n = 10000, \tau = 0.9$}
     \end{subfigure}
     \begin{subfigure}[b]{\textwidth}
         \centering
         \includegraphics[width = \textwidth]{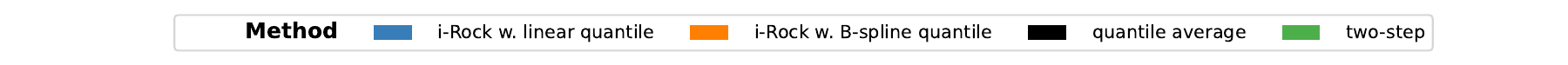}
     \end{subfigure}
\caption{Numerical comparisons of the i-Rock approach (with linear or B-spline quantile function estimation), quantile average approach, and two-step approach under linear heteroscedastic model~\eqref{eq::2d_cont_disc} at various quantile levels and sample sizes.}
\label{fig::2d_cont_disc}
\end{figure}

In case 2, we consider
\begin{equation}\label{eq::sim_2d_nonlinear}
Y_i = -1 + 2 X_{i,1} -3 X_{i,2} + \left(24 X_{i,1}^2 + 12 X_{i,2}^2 + 5\right) (\epsilon_i-\nu_0), \quad i=1,\ldots,n,
\end{equation}
where $(X_{i,1}, X_{i,2})$ is uniformly distributed in a two-dimensional square $[-1,2]^2$,
$\epsilon_i$ follows the skewed-$t_5$ distribution with skewness 2 \citep{theodossiou1998financial} that is independent of the covariates, and $\nu_0$ is the $0.9$-th ES of the distribution of $\epsilon_i$. In this case, the quantiles are non-linear but the $\tau$th ES is linear. 
Figure~\ref{fig:2d_nl} shows the results for $\tau=0.9$, $n = 10000$. The i-Rock with linear quantile function estimation, the quantile average estimator, and the two-step estimator suffer from the misspecification of the quantile function, while the i-Rock with the B-spline quantile function estimation is significantly less biased and more efficient than the rest of the approaches.  

\begin{figure}[tb]
    \centering
     \includegraphics[scale=0.5]{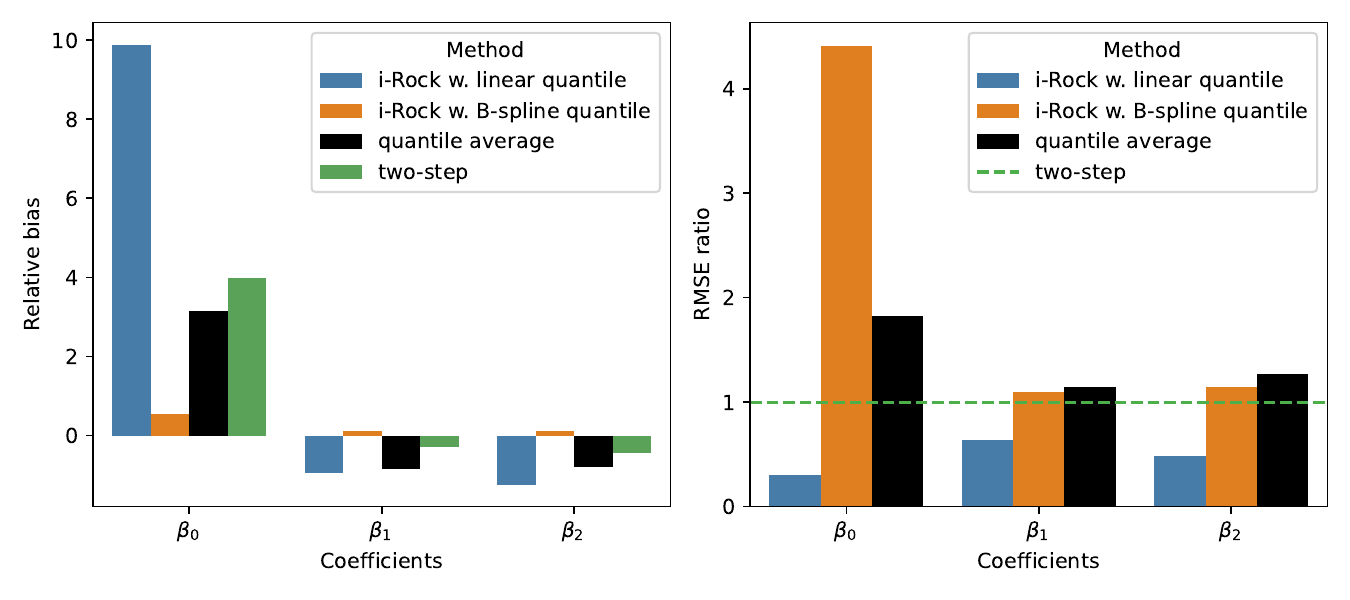}
    \caption{Numerical comparisons of the i-Rock approach (with linear or  B-spline quantile function estimation), quantile average approach, and two-step approach under Model~\eqref{eq::sim_2d_nonlinear} at $\tau=0.9$, $n = 10000$.
    }
\label{fig:2d_nl}
\end{figure}

In case 3, we consider a highly heterogeneous data-generating process
\begin{equation}\label{eq:2d}
    Y_i = \{1 - \log(1 - U)\} + (2 + 2 U) X_{i,1} + \{3 - 30 \log(1-U)\} X_{i,2}, \quad i=1,\ldots,n,
\end{equation}
where $U$ is uniformly distributed on $(0,1)$, and $(X_{i,1},X_{i,2})$ are independently distributed from $\text{binomial}(2, 0.5)$. 
The relative bias and RMSE ratios are summarized in Figure~\ref{fig::2d_disc} at several sample sizes. In this case, the i-Rock estimator 
is significantly more efficient than the quantile average and the two-step estimator, due to the automatic effective weighting schemes of the former. In addition, the quantile average estimator can be significantly more biased than the other two approaches. 

\begin{figure}[htb]
\centering 
     \begin{subfigure}[b]{0.48\textwidth}
         \centering
         \includegraphics[width = \textwidth]{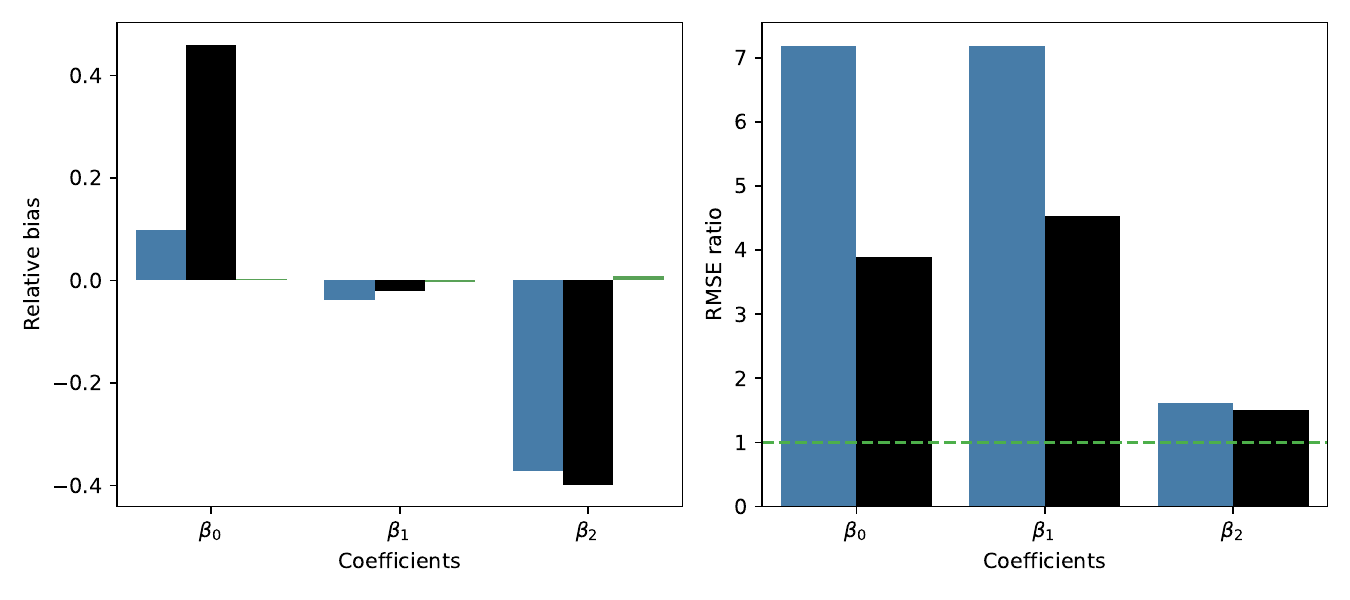}
         \caption{$n = 1000, \tau = 0.9$}
     \end{subfigure}
     \begin{subfigure}[b]{0.48\textwidth}
         \centering
         \includegraphics[width = \textwidth]{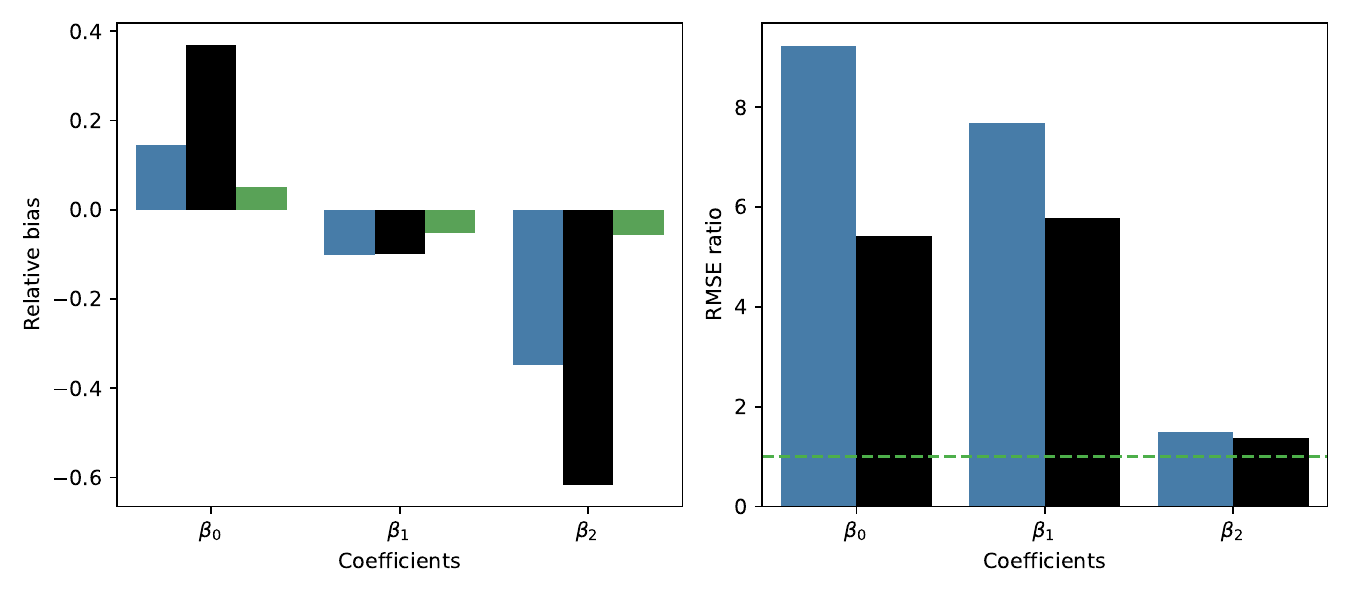}
         \caption{$n = 2000, \tau = 0.9$}
     \end{subfigure}
     \begin{subfigure}[b]{0.48\textwidth}
         \centering
         \includegraphics[width = \textwidth]{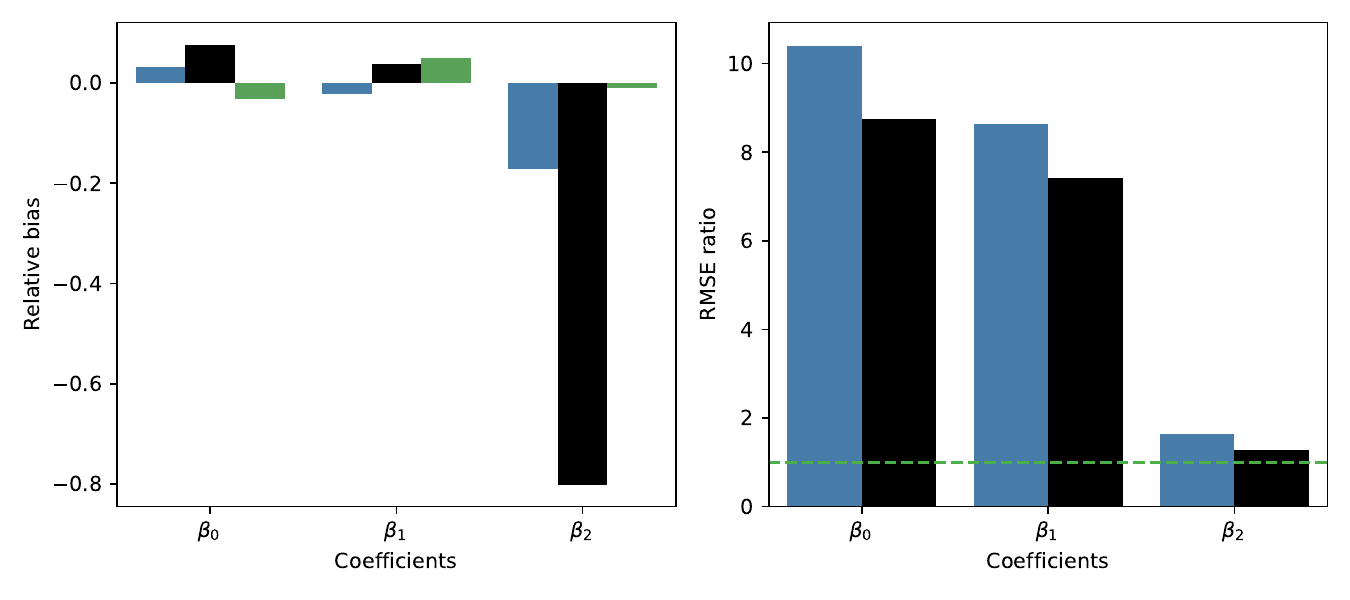}
         \caption{$n = 5000, \tau = 0.9$}
     \end{subfigure}
     \begin{subfigure}[b]{\textwidth}
         \centering
         \includegraphics[width = \textwidth]{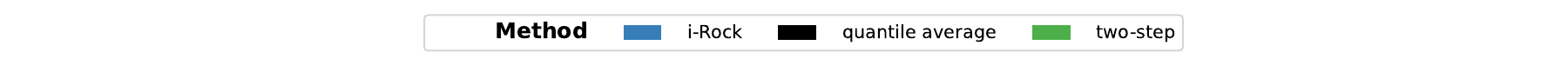}
     \end{subfigure}
\caption{Numerical comparisons of the i-Rock, the quantile average, and the two-step approach under model~\eqref{eq:2d} at various sample sizes and $\tau = 0.9$.}
\label{fig::2d_disc}
\end{figure}

In conclusion, when the quantile is linear in the covariates and the data exhibit little heterogeneity, the quantile average approach is nearly as efficient as the proposed method, albeit with slightly more bias. However, if the quantile is non-linear or the data are highly heterogeneous, the quantile average approach becomes less efficient than the proposed i-Rock approach and may display significant bias.

\subsection{Additional simulation results}
We consider the following case with unbounded covariates,
\begin{equation}\label{eq::gaussian}
    Y_i = \{1 - \log(1 - U)\} +  \{3 - 30 \log(1-U)\} X_i, \quad i=1,\ldots,n,
\end{equation}
where $X_i= |Z_i|$ and $Z_i \sim N(0,1)$. Figure~\ref{fig::gaussian} indicates that the proposed i-Rock approach is still significantly more efficient than the two-step approach with comparable bias, even with unbounded covariates. 


\begin{figure}[htb]
\centering 
         \includegraphics[width = \textwidth]{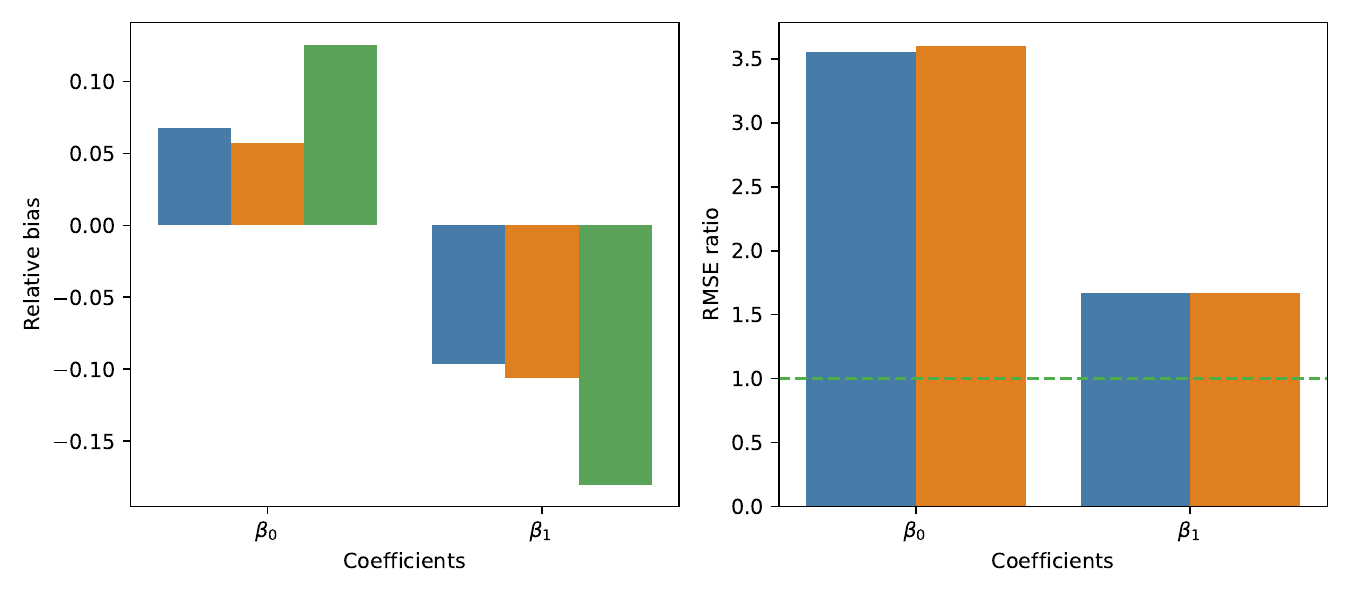}
     \begin{subfigure}[b]{\textwidth}
         \centering
        \includegraphics[width = \textwidth]{fig/legend.pdf}
     \end{subfigure}
\caption{Numerical comparisons of the i-Rock approach (with linear or B-spline quantile function estimation) and two-step estimator under linear heteroscedastic model~\eqref{eq::gaussian} when $n = 10000$ and $\tau = 0.9$.}
\label{fig::gaussian}
\end{figure}
\begin{figure}[tb]
\centering 
\begin{subfigure}[b]{0.48\textwidth}
         \centering
         \includegraphics[width = \textwidth]{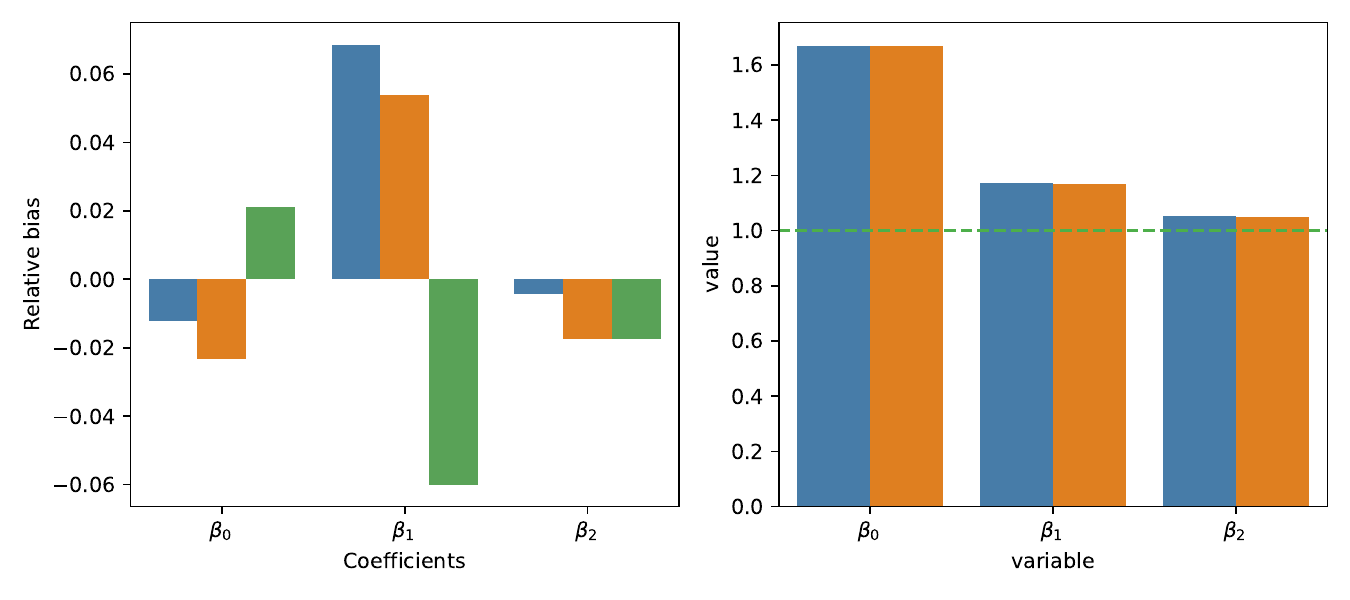}
         \caption{$n = 5000, \tau = 0.8$}
     \end{subfigure}
     \begin{subfigure}[b]{0.48\textwidth}
         \centering
         \includegraphics[width = \textwidth]{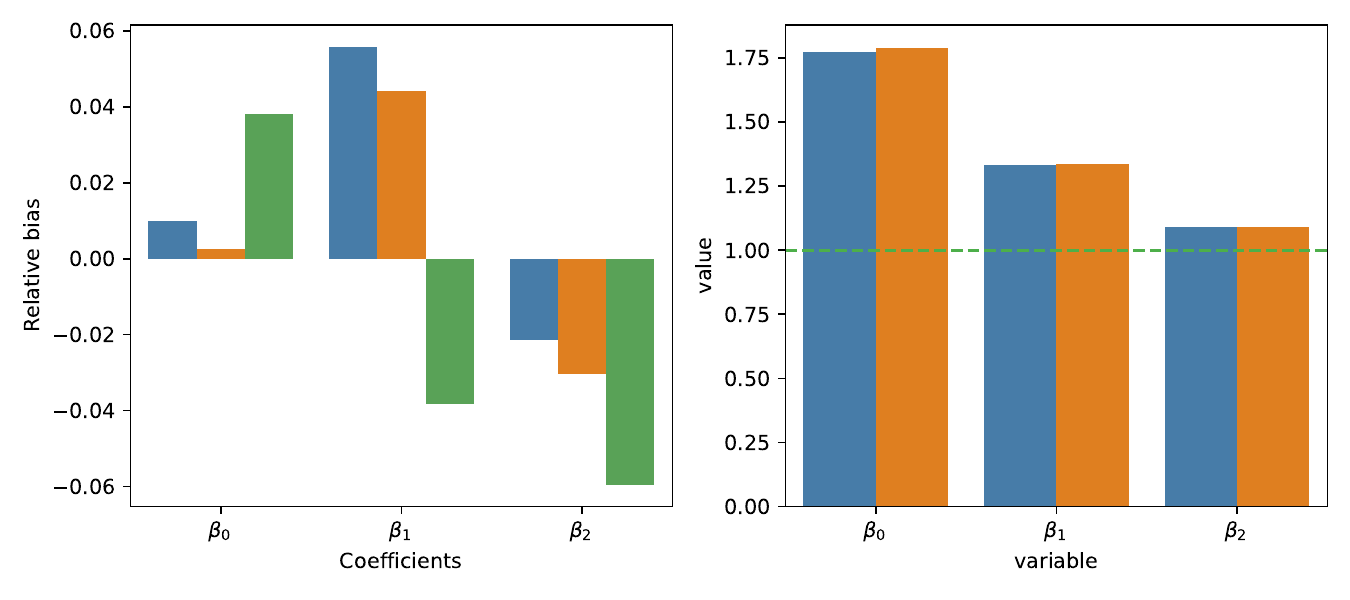}
         \caption{$n = 10000, \tau = 0.8$}
     \end{subfigure}
     \begin{subfigure}[b]{0.48\textwidth}
         \centering
         \includegraphics[width = \textwidth]{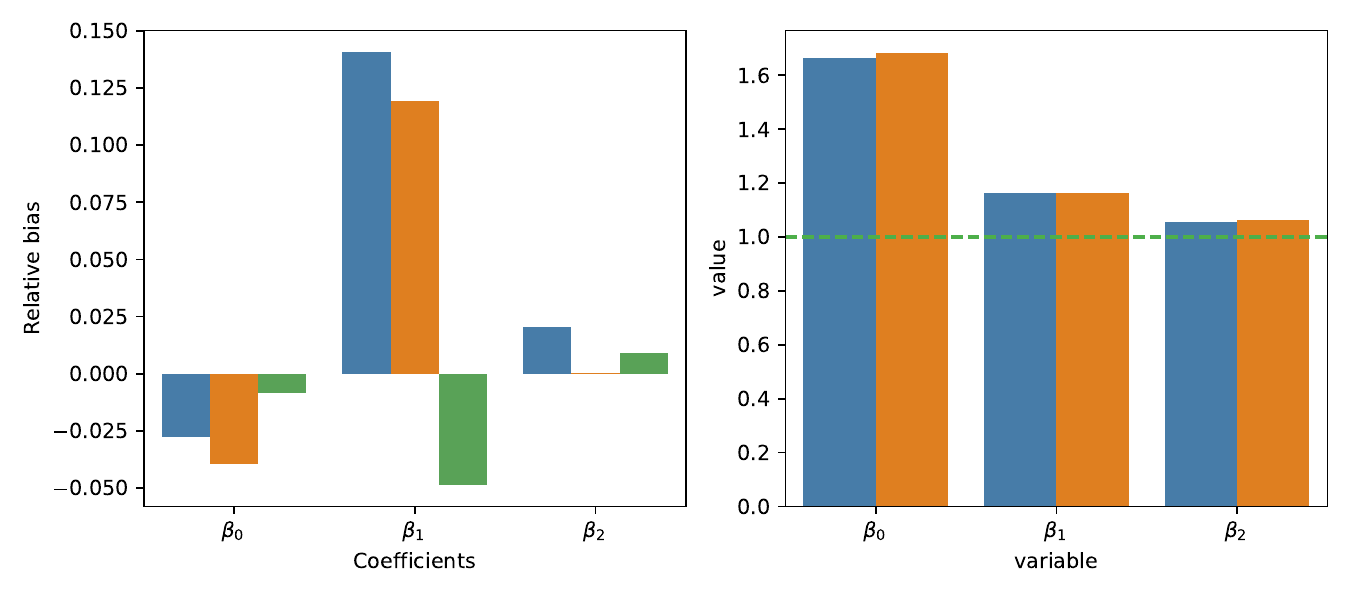}
         \caption{$n = 5000, \tau = 0.9$}
     \end{subfigure}
     \begin{subfigure}[b]{0.48\textwidth}
         \centering
         \includegraphics[width = \textwidth]{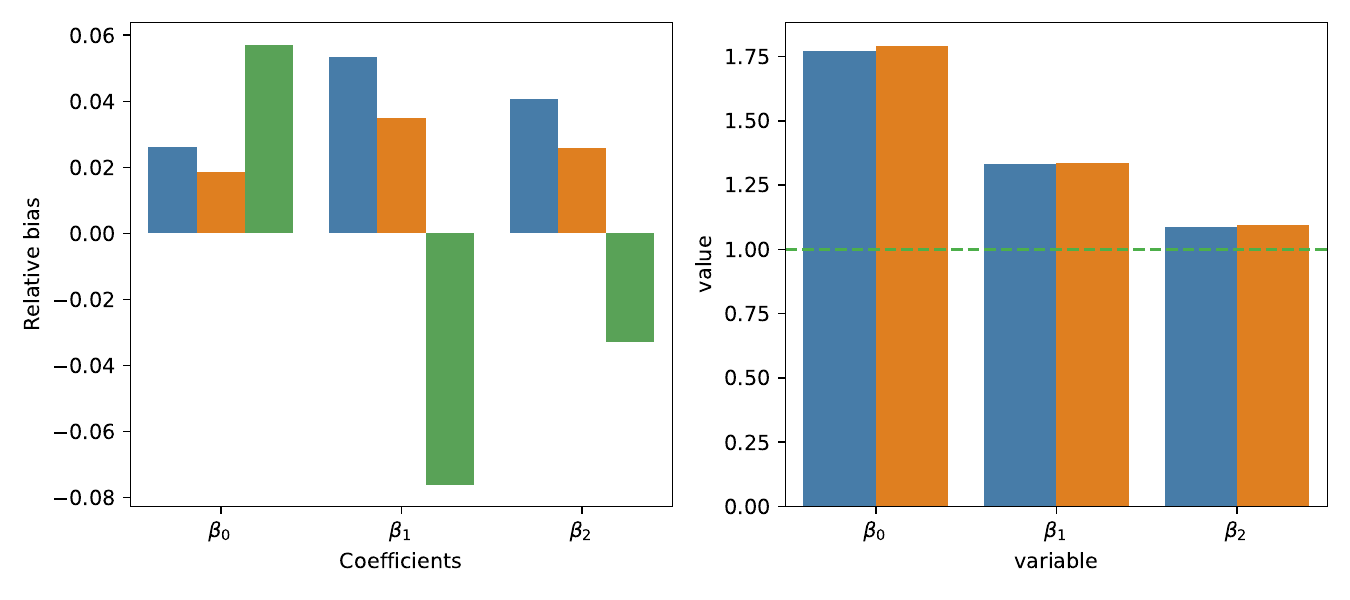}
         \caption{$n = 10000, \tau = 0.9$}
     \end{subfigure}
     \begin{subfigure}[b]{\textwidth}
         \centering
         \includegraphics[width = \textwidth]{fig/legend.pdf}
     \end{subfigure}
\caption{Numerical comparisons of the i-Rock approach (with linear or B-spline quantile function estimation) and two-step estimator under linear heteroscedastic model~\eqref{eq::2d_cont_disc} at various quantile levels and sample sizes.}
\label{fig::2d_cont_disc}
\end{figure}

To show how the proposed approach works with correlated covariates, we generate data as a random sample from a linear heteroscedastic model with two-dimensional covariates, namely,
\begin{equation}\label{eq::2d_cont_disc}
    Y_i = \{1 + U \} + (2 + 2 U) X_{i,1} + \{3 + 3 U\} X_{i,2}, \quad i=1,\ldots,n,
\end{equation}
where 
$U$ follows $U(0,1)$, $X_{i,1}$ given $X_{i,2}$  follows $U(0,3+X_{i,2})$,  
and $X_{i,2}$ are independently distributed from $\{0,1\}$ with equal probability. Figure~\ref{fig::2d_cont_disc} shows the relative bias and the RMSE ratio comparisons for $\tau \in \{0.8,0.9\}$ and $n\in\{5000,10000\}$. The i-Rock estimators with both the linear and the B-spline quantile regression estimation outperform the two-step estimator in bias and RMSE in all settings. The finding is very similar to Case 5.1 in the main manuscript with independent covariates. This suggests that our proposed approach is robust for moderately correlated covariates.

\subsection{Additional results to Section 6}
\begin{figure}[htbp]
    \centering
    \begin{subfigure}[b]{\textwidth}
         \centering
\includegraphics[width=\textwidth]{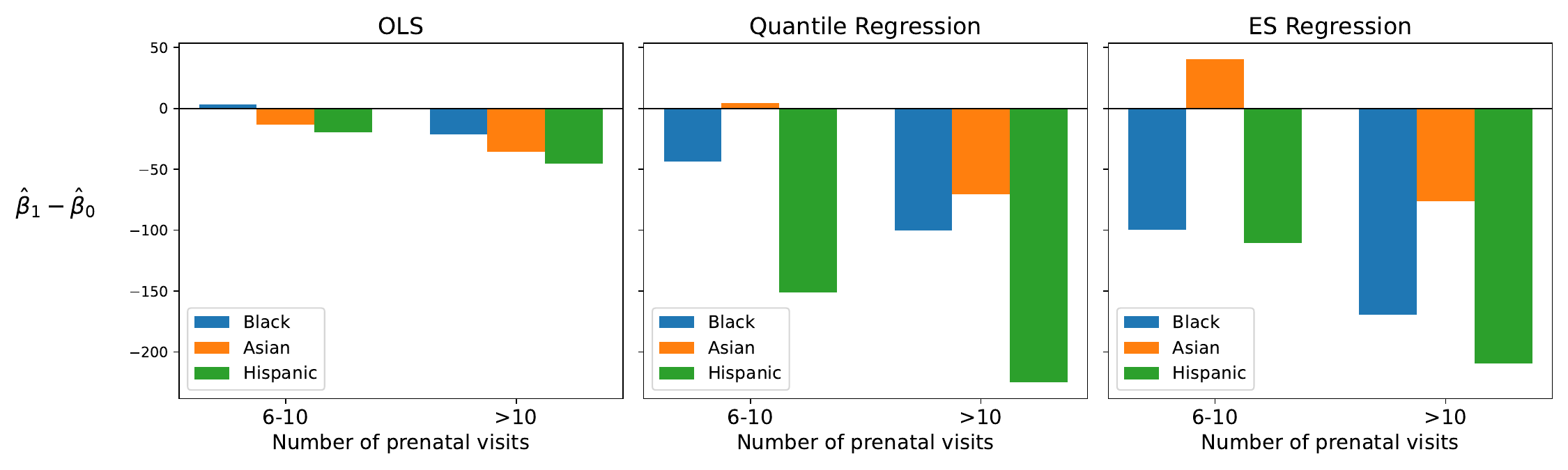}
    \end{subfigure}
    \caption{The quantities shown from part of $\beta_1 - \beta_0$ associated with Equation (27) of main manuscript are the birth weight disparities  of the disadvantaged groups for subgroups defined by the number of prenatal visits: $[6, 10]$ and $>10$, with the subgroup of  $ \leq 5$ prenatal  visits serving as the baseline. We report the results from three regressions, i.e., OLS, 0.05-th quantile regression, 0.05-th ES regression. }
    \label{fig:prenatal_visit_difference_in_difference_three}  
\end{figure}



\begin{table}[tb!]
\centering
\caption{ 
Estimated coefficients for (1) the lower ES regression of birth weight at the quantile level $\tau = 0.05$ using the i-Rock approach, (2) the 0.05-th quantile regression, and (3) the ordinary least squares (OLS). 
The numbers in the parenthesis show the standard errors.}
\label{tab::comparison_OLS}
\renewcommand{\arraystretch}{0.9} 
\resizebox{\textwidth}{!}{
\begin{tabular}{lccc|lccc}
  \toprule
  Covariates & 0.05 ES & 0.05 quantile & OLS &Covariates & 0.05 ES & 0.05 quantile & OLS\\
  \hline
\multirow{2}{*}{\shortstack[l]{\textbf{ (Intercept) }}} &
\multirow{2}{*}{\shortstack{1627.34\\{\tiny( 5.37 )}}}  & \multirow{2}{*}{\shortstack{2086.50\\{\tiny( 9.30 )}}}  & 
\multirow{2}{*}{\shortstack{3268.27\\{\tiny(  1.79 )}}} & \multirow{2}{*}{\shortstack[l]{\textbf{Gestational diabetes}\\{\tiny(baseline: no gestational diabetes)}} } & \multirow{2}{*}{\shortstack{-34.02\\{\tiny( 6.20 )}}} &\multirow{2}{*}{\shortstack{-33.50\\{\tiny( 3.81 )}}}  & \multirow{2}{*}{\shortstack{9.28\\{\tiny( 1.59 )}}}\\
\\ \hline 
\multirow{2}{*}{\shortstack[l]{\textbf{Race $=$ black}\\{\tiny(baseline: white)}} } & \multirow{2}{*}{\shortstack{-249.97\\{\tiny( 7.98 )}}} & \multirow{2}{*}{\shortstack{-231.25\\{\tiny( 3.94)}}}  & \multirow{2}{*}{\shortstack{-186.17\\{\tiny( 1.32 )}}} & \multirow{2}{*}{\shortstack[l]{\textbf{Gestational hypertension}\\{\tiny(baseline: no gestational hypertension)}} } & \multirow{2}{*}{\shortstack{-447.88\\{\tiny( 4.57 )}}} &\multirow{2}{*}{\shortstack{-415.00\\{\tiny( 3.91 )}}}  &  \multirow{2}{*}{\shortstack{-217.27\\{\tiny( 1.46 )}}} 
\\
\\ \hline 
\multirow{2}{*}{\shortstack[l]{\textbf{Race $=$ asian}\\{\tiny(baseline: white)}}} & \multirow{2}{*}{\shortstack{-193.55\\{\tiny( 6.27 )}}} &\multirow{2}{*}{\shortstack{-206.50\\{\tiny( 6.10 )}}}  &  \multirow{2}{*}{\shortstack{-227.26\\{\tiny( 1.98 )}}} & \multirow{2}{*}{\shortstack[l]{\textbf{Cigarettes at 3rd trimester}\\{\tiny(baseline: no cigarette use at 3rd trimester)}} } & \multirow{2}{*}{\shortstack{-174.59\\{\tiny( 5.68 )}}}&\multirow{2}{*}{\shortstack{-201.75\\{\tiny( 8.21 )}}}  &  \multirow{2}{*}{\shortstack{-203.55\\{\tiny( 2.54 )}}} \\
\\ \hline 
\multirow{2}{*}{\shortstack[l]{\textbf{Race $=$ hispanic}\\{\tiny(baseline: white)}} }  & \multirow{2}{*}{\shortstack{-44.65\\{\tiny( 2.77 )}}}& \multirow{2}{*}{\shortstack{-40.25\\{\tiny( 2.41 )}}}  & \multirow{2}{*}{\shortstack{-61.77\\{\tiny( 1.12 )}}} & \multirow{2}{*}{\shortstack[l]{\textbf{ Mother's age $<20$ }\\{\tiny(baseline: age $[20,34]$)}} } & \multirow{2}{*}{\shortstack{-19.98\\{\tiny( 14.25 )}}}&\multirow{2}{*}{\shortstack{-38.50\\{\tiny( 7.90 )}}}  &  \multirow{2}{*}{\shortstack{-78.88\\{\tiny( 2.27 )}}} \\
\\ \hline
\multirow{2}{*}{\shortstack[l]{\textbf{Prenatal visits $\in [6,10]$ }\\{\tiny(baseline: $[0,5]$)}} } & \multirow{2}{*}{\shortstack{437.05\\{\tiny( 6.24 )}}} &\multirow{2}{*}{\shortstack{378.25\\{\tiny( 8.59 )}}}  &  \multirow{2}{*}{\shortstack{102.65\\{\tiny( 1.64 )}}} 
& \multirow{2}{*}{\shortstack[l]{\textbf{ Mother's age $>34$ }\\{\tiny(baseline: age $[20,34]$)}} } & \multirow{2}{*}{\shortstack{-94.45\\{\tiny( 4.31 )}}}&\multirow{2}{*}{\shortstack{-66.75\\{\tiny( 3.13 )}}} & \multirow{2}{*}{\shortstack{-8.38\\{\tiny( 1.11 )}}} \\
\\ \hline
\multirow{2}{*}{\shortstack[l]{\textbf{Prenatal visits $> 10$ }\\{\tiny(baseline: $[0,5]$)}} } & \multirow{2}{*}{\shortstack{832.55\\{\tiny( 5.85 )}}}&\multirow{2}{*}{\shortstack{663.50\\{\tiny( 8.45 )}}} & \multirow{2}{*}{\shortstack{228.33\\{\tiny( 1.60 )}}} & \multirow{2}{*}{\shortstack[l]{\textbf{Receipt of WIC}\\{\tiny(baseline: not receipt of WIC)}}} & \multirow{2}{*}{\shortstack{8.25\\{\tiny( 3.11 )}}}&\multirow{2}{*}{\shortstack{-1.75\\{\tiny( 2.93 )}}} & \multirow{2}{*}{\shortstack{4.85\\{\tiny( 1.06 )}}}
\\
\\ \hline
\multirow{2}{*}{\shortstack[l]{\textbf{Education: $\geq$ college}\\{\tiny(baseline: high school and below)}}}
&
\multirow{2}{*}{\shortstack{63.21\\{\tiny( 4.36 )}}} &\multirow{2}{*}{\shortstack{51.75\\{\tiny( 2.90 )}}} 
&
\multirow{2}{*}{\shortstack{32.08\\{\tiny( 1.21 )}}} 
& \multirow{2}{*}{\shortstack[l]{\textbf{Unmarried}\\{\tiny(baseline: married)}} } & \multirow{2}{*}{\shortstack{-77.56\\{\tiny( 5.99 )}}}&\multirow{2}{*}{\shortstack{-71.50\\{\tiny( 3.78 )}}} & \multirow{2}{*}{\shortstack{-56.91\\{\tiny( 1.05 )}}} \\
\\ \hline
\multirow{2}{*}{\shortstack[l]{\textbf{Education: some college}\\{\tiny(baseline: high school and below)}}}
& \multirow{2}{*}{\shortstack{-4.25\\{\tiny( 4.72 )}}} &\multirow{2}{*}{\shortstack{1.50\\{\tiny( 2.90 )}}} & \multirow{2}{*}{\shortstack{14.08\\{\tiny( 1.14 )}}}  \\ \\
\bottomrule
\end{tabular}
}
\end{table}

\begin{table}[tb]
\centering
\caption{{ 
Estimated coefficients for the lower ES regression of birth weight at different quantile levels $\tau \in 
\{0.01, 0.025, 0.05\}$ using the i-Rock approach and two-step approach. The numbers in the parenthesis show the standard errors.
}}
\label{tab::BW-mRock}
\renewcommand{\arraystretch}{0.9} 
\resizebox{\textwidth}{!}{
\begin{tabular}{l|cccc|cccc}
  \toprule
  Covariates & \multicolumn{8}{c}{Coefficients}\\
  & \multicolumn{4}{c}{I-Rock} & \multicolumn{4}{c}{Two-step}\\
  \multicolumn{1}{r|}{$\tau$}& 0.01 & 0.025 & 0.05 &0.5 & 0.01 & 0.025 & 0.05& 0.5\\
  \hline
\multirow{2}{*}{\shortstack[l]{\textbf{ (Intercept) }}} &
\multirow{2}{*}{\shortstack{1081.29\\{\tiny(14.04)}}}&
\multirow{2}{*}{\shortstack{1344.10\\{\tiny(10.57)}}}&
\multirow{2}{*}{\shortstack{1627.34\\{\tiny(6.33)}}}& 
\multirow{2}{*}{\shortstack{2812.83\\{\tiny(2.78)}}}&
\multirow{2}{*}{\shortstack{1087.19\\{\tiny(13.20)}}}&
\multirow{2}{*}{\shortstack{1340.72\\{\tiny(8.83)}}}&
\multirow{2}{*}{\shortstack{1618.95\\{\tiny(6.19)}}}&
\multirow{2}{*}{\shortstack{2805.71\\{\tiny(2.63)}}}\\
\\ \hline 
\multirow{2}{*}{\shortstack[l]{\textbf{Race $=$ black}\\{\tiny(baseline: white)}} } & 
\multirow{2}{*}{\shortstack{-258.29\\{\tiny(10.91)}}} & 
\multirow{2}{*}{\shortstack{-260.51\\{\tiny(8.56)}}} & 
\multirow{2}{*}{\shortstack{-249.97\\{\tiny(5.52)}}}& 
\multirow{2}{*}{\shortstack{-198.10\\{\tiny(1.54)}}}& 
\multirow{2}{*}{\shortstack{-265.41\\{\tiny(11.21)}}} & 
\multirow{2}{*}{\shortstack{-256.47\\{\tiny(7.45)}}} & 
\multirow{2}{*}{\shortstack{-246.44\\{\tiny(4.81)}}} & 
\multirow{2}{*}{\shortstack{-196.31\\{\tiny(1.46)}}} 
\\
\\ \hline 
\multirow{2}{*}{\shortstack[l]{\textbf{Race $=$ asian}\\{\tiny(baseline: white)}}} & 
\multirow{2}{*}{\shortstack{-161.53\\{\tiny(15.53)}}}& 
\multirow{2}{*}{\shortstack{-180.43\\{\tiny(10.65)}}}& 
\multirow{2}{*}{\shortstack{-193.55\\{\tiny(7.74)}}}& 
\multirow{2}{*}{\shortstack{-214.91\\{\tiny(2.26)}}}& 
\multirow{2}{*}{\shortstack{-153.69\\{\tiny(13.91)}}}& 
\multirow{2}{*}{\shortstack{-168.56\\{\tiny(9.70)}}}& 
\multirow{2}{*}{\shortstack{-182.07\\{\tiny(6.92)}}}& 
\multirow{2}{*}{\shortstack{-214.73\\{\tiny(2.02)}}}\\
\\ \hline 
\multirow{2}{*}{\shortstack[l]{\textbf{Race $=$ hispanic}\\{\tiny(baseline: white)}} } & 
\multirow{2}{*}{\shortstack{-43.93\\{\tiny(9.11)}}}& 
\multirow{2}{*}{\shortstack{-41.54\\{\tiny(5.42)}}}& 
\multirow{2}{*}{\shortstack{-61.77\\{\tiny(3.48)}}}& 
\multirow{2}{*}{\shortstack{-53.46\\{\tiny(1.18)}}}& 
\multirow{2}{*}{\shortstack{-46.06\\{\tiny(8.84)}}}& 
\multirow{2}{*}{\shortstack{-37.22\\{\tiny(5.89)}}}& 
\multirow{2}{*}{\shortstack{-34.49\\{\tiny(3.77)}}}& 
\multirow{2}{*}{\shortstack{-53.46\\{\tiny(1.14)}}}\\
\\ \hline
\multirow{2}{*}{\shortstack[l]{\textbf{Prenatal visits $\in [6,10]$ }\\{\tiny(baseline: $[0,5]$)}} } & 
\multirow{2}{*}{\shortstack{350.53\\{\tiny(11.22)}}}& 
\multirow{2}{*}{\shortstack{435.78\\{\tiny(7.97)}}}& 
\multirow{2}{*}{\shortstack{437.05\\{\tiny(6.40)}}}& 
\multirow{2}{*}{\shortstack{154.48\\{\tiny(3.06)}}}& 
\multirow{2}{*}{\shortstack{344.26\\{\tiny(9.07)}}}& 
\multirow{2}{*}{\shortstack{438.75\\{\tiny(6.79)}}}& 
\multirow{2}{*}{\shortstack{444.79\\{\tiny(6.44)}}}& 
\multirow{2}{*}{\shortstack{161.06\\{\tiny(2.87)}}}
 \\
\\ \hline
\multirow{2}{*}{\shortstack[l]{\textbf{Prenatal visits $> 10$ }\\{\tiny(baseline: $[0,5]$)}} } &
\multirow{2}{*}{\shortstack{854.48\\{\tiny(9.56)}}}&
\multirow{2}{*}{\shortstack{900.25\\{\tiny(8.73)}}}&
\multirow{2}{*}{\shortstack{832.55\\{\tiny(7.11)}}}& 
\multirow{2}{*}{\shortstack{320.53\\{\tiny(2.81)}}}&
\multirow{2}{*}{\shortstack{847.21\\{\tiny(7.68)}}}&
\multirow{2}{*}{\shortstack{902.18\\{\tiny(7.26)}}}&
\multirow{2}{*}{\shortstack{840.83\\{\tiny(6.92)}}}&
\multirow{2}{*}{\shortstack{329.99\\{\tiny(2.93)}}}
\\
\\ \hline
 \multirow{2}{*}{\shortstack[l]{\textbf{Gestational diabetes}\\{\tiny(baseline: no gestational diabetes)}} } & 
 \multirow{2}{*}{\shortstack{-13.16\\{\tiny(18.17)}}}& 
 \multirow{2}{*}{\shortstack{-30.80\\{\tiny(10.13)}}}& 
 \multirow{2}{*}{\shortstack{-34.02\\{\tiny(8.17)}}}& 
 \multirow{2}{*}{\shortstack{-17.07\\{\tiny(2.46)}}}& 
 \multirow{2}{*}{\shortstack{-4.08\\{\tiny(12.95)}}}& 
 \multirow{2}{*}{\shortstack{-16.72\\{\tiny(9.99)}}}& 
 \multirow{2}{*}{\shortstack{-24.61\\{\tiny(8.03)}}}& 
 \multirow{2}{*}{\shortstack{-16.04\\{\tiny(2.57)}}}
 \\ \\ \hline
  \multirow{2}{*}{\shortstack[l]{\textbf{Gestational hypertension}\\{\tiny(baseline: no gestational hypertension)}} } & 
  \multirow{2}{*}{\shortstack{-422.36\\{\tiny(10.67)}}}&
  \multirow{2}{*}{\shortstack{-454.03\\{\tiny(6.38)}}}&
  \multirow{2}{*}{\shortstack{-447.88\\{\tiny(5.38)}}}&
  \multirow{2}{*}{\shortstack{-270.77\\{\tiny(2.35)}}}&
  \multirow{2}{*}{\shortstack{-418.85\\{\tiny(9.26)}}}&
  \multirow{2}{*}{\shortstack{-452.79\\{\tiny(6.67)}}}&
  \multirow{2}{*}{\shortstack{-458.08\\{\tiny(4.69)}}}&
  \multirow{2}{*}{\shortstack{-282.37\\{\tiny(2.15)}}}
  \\ \\ \hline
  \multirow{2}{*}{\shortstack[l]{\textbf{Cigarettes at 3rd trimester}\\{\tiny(baseline: no cigarette use at 3rd trimester)}} } & 
  \multirow{2}{*}{\shortstack{-105.82\\{\tiny(23.73)}}}& 
  \multirow{2}{*}{\shortstack{-153.01\\{\tiny(18.76)}}}& 
  \multirow{2}{*}{\shortstack{-174.59\\{\tiny(12.17)}}}& 
  \multirow{2}{*}{\shortstack{-209.12\\{\tiny(3.77)}}}& 
  \multirow{2}{*}{\shortstack{-106.01\\{\tiny(21.27)}}}& 
  \multirow{2}{*}{\shortstack{-133.00\\{\tiny(14.80)}}}& 
  \multirow{2}{*}{\shortstack{-150.98\\{\tiny(10.68)}}}& 
  \multirow{2}{*}{\shortstack{-210.29\\{\tiny(3.20)}}} \\\\ \hline
  \multirow{2}{*}{\shortstack[l]{\textbf{ Mother's age $<20$ }\\{\tiny(baseline: age $[20,34]$)}} } & 
  \multirow{2}{*}{\shortstack{14.81\\{\tiny(11.99)}}}& 
  \multirow{2}{*}{\shortstack{-8.61\\{\tiny(12.29)}}}& 
  \multirow{2}{*}{\shortstack{-19.98\\{\tiny(8.84)}}}& 
  \multirow{2}{*}{\shortstack{-58.57\\{\tiny(2.59)}}}& 
  \multirow{2}{*}{\shortstack{24.17\\{\tiny(12.74)}}}& 
  \multirow{2}{*}{\shortstack{-3.57\\{\tiny(9.52)}}}& 
  \multirow{2}{*}{\shortstack{-19.85\\{\tiny(7.33)}}}& 
  \multirow{2}{*}{\shortstack{-58.62\\{\tiny(2.69)}}}\\ \\ \hline
  \multirow{2}{*}{\shortstack[l]{\textbf{ Mother's age $>34$ }\\{\tiny(baseline: age $[20,34]$)}} } & 
  \multirow{2}{*}{\shortstack{-109.76\\{\tiny(6.59)}}}& 
  \multirow{2}{*}{\shortstack{-109.05\\{\tiny(4.76)}}}& 
  \multirow{2}{*}{\shortstack{-94.45\\{\tiny(3.74)}}}& 
  \multirow{2}{*}{\shortstack{-29.00\\{\tiny(1.58)}}}& 
  \multirow{2}{*}{\shortstack{-105.82\\{\tiny(6.53)}}}& 
  \multirow{2}{*}{\shortstack{-105.36\\{\tiny(4.13)}}}& 
  \multirow{2}{*}{\shortstack{-93.95\\{\tiny(3.40)}}}& 
  \multirow{2}{*}{\shortstack{-31.48\\{\tiny(1.67)}}} \\ \\ \hline
  \multirow{2}{*}{\shortstack[l]{\textbf{Receipt of WIC}\\{\tiny(baseline: not receipt of WIC)}}} & 
  \multirow{2}{*}{\shortstack{28.80\\{\tiny(10.21)}}}& 
  \multirow{2}{*}{\shortstack{17.59\\{\tiny(7.11)}}}& 
  \multirow{2}{*}{\shortstack{8.25\\{\tiny(4.41)}}}& 
  \multirow{2}{*}{\shortstack{-5.53\\{\tiny(1.13)}}}& 
  \multirow{2}{*}{\shortstack{31.41\\{\tiny(9.87)}}}& 
  \multirow{2}{*}{\shortstack{20.40\\{\tiny(6.78)}}}& 
  \multirow{2}{*}{\shortstack{12.61\\{\tiny(4.59)}}}& 
  \multirow{2}{*}{\shortstack{-4.58\\{\tiny(1.10)}}}\\ \\ \hline
  \multirow{2}{*}{\shortstack[l]{\textbf{Unmarried}\\{\tiny(baseline: married)}} } & 
  \multirow{2}{*}{\shortstack{-72.60\\{\tiny(9.44)}}} & 
  \multirow{2}{*}{\shortstack{-76.23\\{\tiny(6.22)}}} & 
  \multirow{2}{*}{\shortstack{-77.56\\{\tiny(4.00)}}} & 
  \multirow{2}{*}{\shortstack{-56.89\\{\tiny(1.54)}}}& 
  \multirow{2}{*}{\shortstack{-73.41\\{\tiny(9.04)}}} & 
  \multirow{2}{*}{\shortstack{-77.99\\{\tiny(5.97)}}} & 
  \multirow{2}{*}{\shortstack{-78.35\\{\tiny(3.84)}}}& 
  \multirow{2}{*}{\shortstack{-59.81\\{\tiny(1.51)}}} \\\\\hline
  \multirow{2}{*}{\shortstack[l]{\textbf{Education: $\geq$ college}\\{\tiny(baseline: high school and below)}}}
&
\multirow{2}{*}{\shortstack{75.03\\{\tiny(11.98)}}} &
\multirow{2}{*}{\shortstack{74.28\\{\tiny(7.49)}}} &
\multirow{2}{*}{\shortstack{63.21\\{\tiny(4.86)}}} &
\multirow{2}{*}{\shortstack{42.64\\{\tiny(1.21)}}} &
\multirow{2}{*}{\shortstack{77.99\\{\tiny(11.58)}}} &
\multirow{2}{*}{\shortstack{65.96\\{\tiny(7.56)}}} &
\multirow{2}{*}{\shortstack{54.69\\{\tiny(5.23)}}}&
\multirow{2}{*}{\shortstack{40.25\\{\tiny(1.23)}}} \\
\\ \hline
\multirow{2}{*}{\shortstack[l]{\textbf{Education: some college}\\{\tiny(baseline: high school and below)}}}
& \multirow{2}{*}{\shortstack{-11.17\\{\tiny(12.41)}}}
& \multirow{2}{*}{\shortstack{-5.51\\{\tiny(8.64)}}}
& \multirow{2}{*}{\shortstack{-4.25\\{\tiny(5.48)}}}
& \multirow{2}{*}{\shortstack{12.67\\{\tiny(1.39)}}}
& \multirow{2}{*}{\shortstack{-6.68\\{\tiny(10.83)}}}
& \multirow{2}{*}{\shortstack{-11.71\\{\tiny(8.15)}}}
& \multirow{2}{*}{\shortstack{-11.80\\{\tiny(5.93)}}}& 
\multirow{2}{*}{\shortstack{9.68\\{\tiny(1.44)}}}
\\ \\
\bottomrule
\end{tabular}
}
\end{table}   

\begin{figure}[tb]
\includegraphics[scale=0.4]{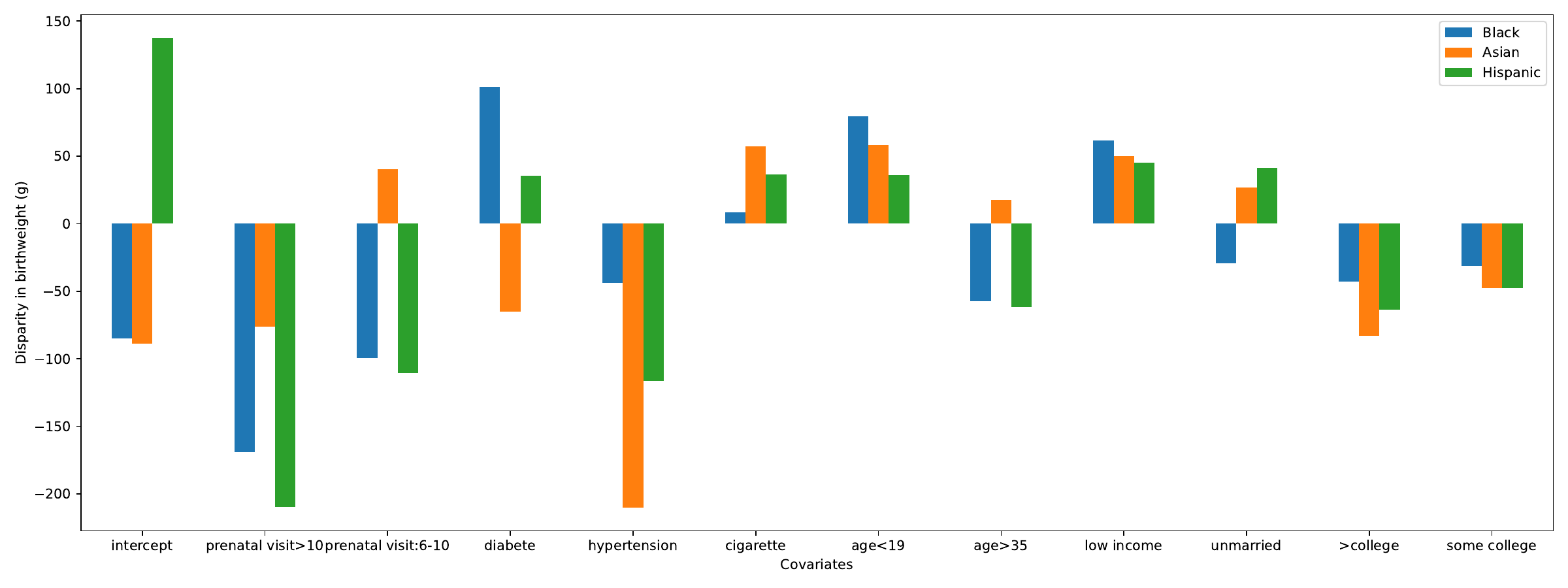}
    \caption{The estimator for $\beta_1 - \beta_0$ across covariates, where $\beta_1$, $\beta_0$ represents the coefficient for lower $0.05$ ES regression of birthweight for the disadvantaged group (i.e., black, asian, and hispanic) and advantaged group (i.e., white), respectively. }
\label{fig::disparity_driver}
\end{figure}

As a complement to the results in Section 6 of the main manuscript, we present a comparison of the ES regression, the quantile regression, and the ordinary least squares regression (OLS),  in Table~\ref{tab::comparison_OLS}. Let us highlight just two additional findings from this exercise. 

First, prenatal visits contribute to birth weight in general, but much more so at the lower tail.  After adjusting for other factors used in the model, frequent prenatal visits ($> 10$ times) improves the average birth weight by 228 grams over the baseline group of infrequent prenatal visits of 0-5 times. If we consider the birth weight in the lowest 5 percentiles, frequent prenatal visits improves the average birth weight by 832 grams. Note that 832 grams for low birth weight babies (around 1625 grams) is a far more significant improvement than a 228-gram change for a typical baby (over 3000 grams of birth weight).  The practical implication of the study results from the ES regression speaks for itself. 

Second, a comparison of the effect sizes over racial subpopulations between the 0.05-th ES, the 0.05-th  quantile, and the mean (through the OLS) yields useful findings. Take a look at the black vs. white populations, the average birth weight difference is 186 grams. The differences are 231 grams at the 0.05-th quantile, and 250 grams at the 0.05-th expected shortfall. They suggest that the birth weight of the black populations (relative to the white population) widens gradually in the lower tail. However, it is a different story for the Asian population vs. the white population. Still, the Asian babies weigh less, but the difference (after adjusting for other factors in the model) is 194 grams at the 0.05-th expected shortfall, 207 grams at the 0.05-th quantile, and 227 grams at the mean, which suggests that the difference narrows gradually in the lower tail.

In Figure~\ref{fig:prenatal_visit_difference_in_difference_three}, we also study the differences in the disparity between each disadvantaged racial groups and
white individuals when the number of prenatal visits is in $[6, 10]$ or $> 10$, as compared to the subgroup of no more than 5 prenatal visits, using OLS, quantile regression, and ES regression. Based on quantile and expected shortfall regressions, we find that disparities between Black and Hispanic individuals relative to White individuals are most pronounced when the number of prenatal visits is high, whereas such disparities are not captured by OLS, particularly for the Black population.

In addition, we present the estimates for $\beta_1-\beta_0$ in Equation (27) element-wise across different covariates in Figure~\ref{fig::disparity_driver}. From the magnitude of the estimates, we observe that the number of prenatal visits and gestational hypertension are two noteworthy factors to the health disparity. 

Similar findings are observed if we fit the ES regression at multiple quantile levels, as shown in Table~\ref{tab::BW-mRock}. In addition, we obtained nearly identical results whether we use the i-Rock estimator or the two-step estimator. Certainly the linear quantile assumption in this example was not found to be violated.

\bibliographystyle{agsm}
\bibliography{SuperQuant}

\end{document}